\DeclareMathOperator{\diam}{{\rm diam}}
\DeclareMathOperator{\e}{{\rm e}}
\DeclareMathOperator{\id}{\mathbf{1}}
\newtheorem{theorem}{Theorem}[section]
\newtheorem{corollary}{Corollary}[theorem]
\newtheorem{lemma}[theorem]{Lemma}
\newtheorem{prop}[theorem]{Proposition}
\title{Efficient Learning of Long-Range and Equivariant Quantum Systems}
\author[1]{Štěpán Šmíd}
\author[1]{Roberto Bondesan}
\affil[1]{Department of Computing, Imperial College London, London SW7 2AZ, United Kingdom}
\begin{document}

\maketitle

\begin{abstract}
    In this work, we consider a fundamental task in quantum many-body physics -- finding and learning ground states of quantum Hamiltonians and their properties. 
    Recent works have studied the task of predicting the ground state expectation value of sums of geometrically local observables by learning from data. For short-range gapped Hamiltonians, a sample complexity that is logarithmic in the number of qubits and quasipolynomial in the error was obtained.
    Here we extend these results beyond the local requirements on both Hamiltonians and observables, motivated by the relevance of long-range interactions in molecular and atomic systems. For interactions decaying as a power law with exponent greater than twice the dimension of the system, we recover the same efficient logarithmic scaling with respect to the number of qubits, but the dependence on the error worsens to exponential. 
    Further, we show that learning algorithms equivariant under the automorphism group of the interaction hypergraph achieve a sample complexity reduction, leading in particular to a constant number of samples for learning sums of local observables in systems with periodic boundary conditions. We demonstrate the efficient scaling in practice by learning from DMRG simulations of $1$D long-range and disordered systems with up to $128$ qubits. 
    Finally, we provide an analysis of the concentration of expectation values of global observables stemming from the central limit theorem, resulting in increased prediction accuracy.
\end{abstract}

\section{Introduction}

The simulation of quantum systems underpins our understanding of nature as well as the discovery of novel critical technologies.
However, this task is notoriously hard. 
While a plethora of clever classical algorithms has been developed, such as density functional theory \cite{parr1994density}, quantum Monte Carlo \cite{RevModPhys.73.33}, and tensor networks \cite{RevModPhys.93.045003}, these methods are still too slow for many practical tasks, such as the discovery of novel catalyst for renewable energy storage \cite{zitnick2020introduction}.
The simulation of quantum systems is also one of the main use cases for quantum computers, and several quantum algorithms with provable advantages have been developed \cite{dalzell2023quantum}. 
However, despite tremendous technological advances in recent years, the current quantum computers are too small and noisy to provide a clear advantage \cite{tindall2023efficient}.

Machine learning (ML) algorithms, such as large language models like Chat-GPT, have become ubiquitous in everyday life, and recent years have also witnessed a wide adoption of ML in science, with successes such as AlphaFold \cite{jumper2021highly}.
Many researchers have also applied ML to accelerate the simulation of quantum systems, e.g.~\cite{gilmer2017neural,Carleo_2017,torlai2018neural,Carrasquilla_2020,Pfau_2020}.
One important difference between applying ML to image or text data and to quantum data is the scale at which data is available. In fact, in contrast to the abundance of images and text on the internet that has fuelled the wide scale application of deep learning to solve tasks in image and natural language processing, data from quantum experiments or simulation is hard to obtain. Experiments can be very expensive and difficult to realise, and the challenges of simulation is precisely the reason to turn to ML in a first place.
Thus, when we try to apply ML to predict  properties of quantum systems we cannot simply rely on the abundance of data, and we instead need to focus on a theoretical understanding of what tasks can be learned efficiently and what models we should use.

One recent strand of research at the intersection of quantum information and statistical learning theory aims at studying the learning complexity of quantum tasks as a new fundamental concept in quantum information \cite{anshu2023survey}.
A particularly important task is learning to predict the expectation value of observables at a new point in parameters space given a dataset of measurements of observables at other values of the parameters. 
This is in fact the setting of predicting the force field at a new position of the nuclei given the energy at a set of previous  coordinates. It is also a relevant setting in the context of variational quantum algorithms, where the learning problem is to predict the expectation value of an observable at new value of the variational parameters, which can save potentially costly experiments, as for example is used in Bayesian optimisation approaches \cite{nicoli2023physicsinformed}.

The recent works \cite{lewis2023improved,onorati2023efficient,onorati2023provably} have studied the problem of predicting the expectation value of observables within a phase of  gapped short range Hamiltonians. They derived efficient ML algorithms that provably achieve a sample complexity logarithmic in the number of qubits for observables that are given by the sum of geometrically local terms.
This is remarkable since computing properties of gapped phases include problems that are NP-hard, proving an advantage of algorithms that learn from data compared to algorithms that do not learn from data \cite{Huang_2022}. Further, in \cite{che2024exponentially} they have considered a related task of predicting arbitrary quantum states smoothly parameterised by a small constant number of parameters. Here they focused primarily on the scaling with respect to the prediction error, and achieved a polynomial sample complexity in the inverse error and the system size. Although the dependence on the prediction error is an important aspect of these learning algorithms, their methods are not applicable to systems with an extensive number of parameters like those ones considered in quantum chemistry.

However, short range interactions are often  an approximation to the long range interactions that most systems in nature experience \cite{defenu2021longrange}. Important examples of long range interacting systems are: electronic systems with Coulomb forces of relevance to quantum chemistry, dipolar molecules \cite{yan2013observation}, Rydberg atom arrays 
used to implement quantum gates
\cite{Saffman_2010}, quantum simulators based on trapped ion crystals \cite{britton2012engineered}, and 
 spin glasses \cite{RevModPhys.58.801}.

Motivated by the recent advances in understanding ML methods for short range interactions and by the importance of long range interactions in applications, in this work we study provably efficient ML algorithms for predicting properties of long range quantum Hamiltonians. 
More specifically, the relevance of this questions for the broad quantum computing field stems from the fact that quantum chemistry, which is described by long-range interacting quantum systems, is likely to be one of the earliest and most impactful applications of quantum computers.
Downstream tasks in drug design however require performing an error-corrected quantum computation 
a billion times \cite{Santagati_2024}.
ML models that predict properties of long-range interacting systems are thus critical to enable impactful applications of quantum computers to drug design.
In this work, we make progress along this line by studying extensions of the generalisation bounds previously derived in \cite{lewis2023improved,onorati2023efficient} to long-range interacting systems.
We note that \cite{Huang_2022} already sketched the extension of their results to long range interacting systems with linear light cone -- a technical condition which will be discussed at length below -- suggesting ML algorithms that use $N=m^{{\cal O}(1/\epsilon)}$ data samples, with $m$ being the number of parameters of the Hamiltonian and $\epsilon$ the accuracy.
In this work, we improve on this scaling and also introduce novel rigorous generalisation bounds for equivariant quantum systems. 
Our definition of equivariance encompasses and extends the notion of invariance, and is particularly powerful for disordered lattice systems with periodic boundary conditions.
This setup is ubiquitous in quantum and classical condensed matter physics, where disorder models impurities, defects or complex environment interactions \cite{Vojta_2019}.
In the companion paper \cite{vsmid2024accurate} we build on the results  for equivariant models we derive here to predict observables within a whole topological phase from data for a single value of the parameters.

We note that the concept of equivariance has been developed extensively in the machine learning literature to deal with symmetries in the data.
However, all the works that apply equivariant ML models to quantum physics problems are heuristic in nature.
A prominent example of those is the use of equivariant graph neural networks to predict force fields for molecular dynamics, where the symmetry is the invariance of the electronic energy under roto-translations of the nuclei \cite{satorras2022en,Batzner_2022}. 
The heuristic nature of these methods means that in practice they can suffer from uncontrollable failure modes, such as failure to generalise across different regions of the conformational space
\cite{fu2023forces}.

\pagebreak
In summary, we make the following novel contributions:
\enlargethispage{5pt}
\begin{itemize}
    \item We prove that a \textit{number of samples logarithmic in the system size} suffices to predict properties of gapped quantum systems with interactions that decay exponentially or as a power law with exponent $\alpha$ greater than twice the system's dimension. Our analysis also allows for predicting $p$-local observables which are not necessarily geometrically local and can depend on the system's parameters -- for example the ground state energy.
    \item We show that ML models that are equivariant under the automorphism group of the interaction hypergraph (i.e.~the hypergraph with vertices the qubits and hyperedges connecting qubits in the support of Hamiltonian terms) achieve a \textit{sample complexity reduction}. This implies a \textit{constant sample complexity} w.r.t. the system size for learning expectations of sums of local observables in systems with periodic boundaries. This was the first such known learning algorithm with constant sample complexity, though later \cite{wanner2024predictinggroundstateproperties} has also achieved this for short-range Hamiltonians even for open boundary conditions, but they suggest that their algorithm is not extendable to the long-range case.
    \item We lift the requirements on the observable having a norm bounded by a constant using a simple modification of the ML algorithm, which allows predictions up until the standard deviation of the expectation values is bounded by a constant. This is substantiated by an analysis of the expectation values of global observables using the central limit theorem.
    \item We provide extensive numerical simulations of one-dimensional systems with up to $128$ qubits, demonstrating the efficient scaling in practice. 
    The corresponding code for these simulations is available at \cite{Smid_Efficient_Learning_of_2024}.

\end{itemize}
In the next section we give a technical summary of our main results.

\subsection{Summary of main results}

\begin{figure}[ht]
    \centering
    \includegraphics[width=\textwidth]{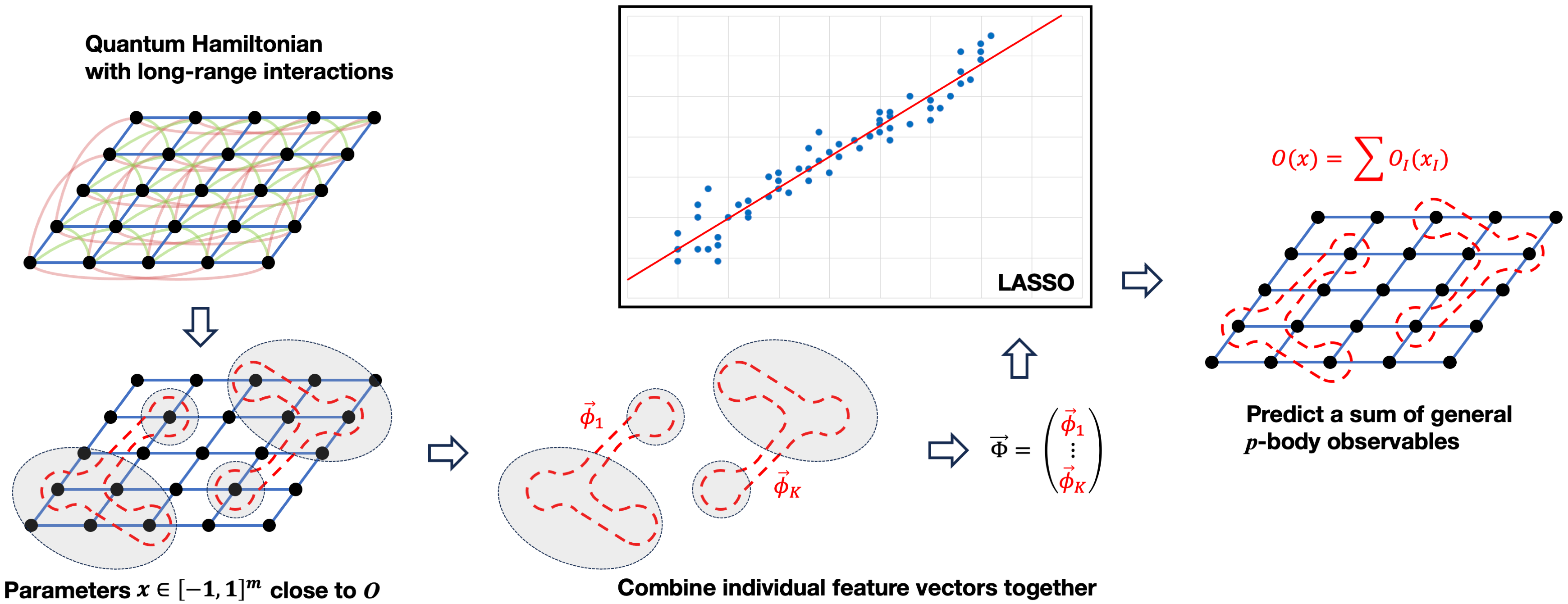}
    \caption{\textbf{Overview of the efficient machine learning algorithm.} Given a vector $x \in [-1,1]^m$ parameterising a quantum Hamiltonian with long-range interactions, it gets separated into neighbourhoods of the summands in the observable $O$, which are then mapped to  features $\vec{\phi}_i$, and concatenated into one full feature vector. This vector is then used as an input of the LASSO model, which is trained to predict $O$ for a new value of $x$.}
    \label{fig:Overview}
\end{figure}

\subsubsection{Machine learning algorithm}
The ML algorithm we study is shown in Figure \ref{fig:Overview}. It is an extension of the algorithm introduced in \cite{lewis2023improved} to the setting of long range interactions.

The input to the ML algorithm is a set of data pairs $\{ (x^{(i)},y^{(i)}) \}_{i=1}^N$ where $x\in [-1,1]^m$ is the set of parameters of the Hamiltonian and $y$ is the expectation of an observable $O$ in the ground state of the system with Hamiltonian described by $x$.
We assume that the Hamiltonian has interactions $h_{I}(x_{I})$ which are supported on a set of qubits $I=(i_1,\dots,i_\ell)$, $\ell\le k$, with $k={\cal O}(1)$, and that interactions decay exponentially with the distance between two qubits or as a power law with exponent $\alpha>2D$. 
We also consider observables $O=\sum_{I\in {\cal S}}O_I$ with ${\cal S}$ containing sets with at most $p={\cal O}(1)$ qubits, not necessarily geometrically local.
We shall normalise $O$ such that the $\ell_1$ norm of the coefficients of the expansion in the Pauli basis is ${\cal O}(1)$, which implies that $\|O\|={\cal O}(1)$.
The data $y^{(i)}$ can be either obtained by direct measurement of $O$ in a quantum experiment or a classical simulation, or by the classical shadow formalism \cite{Huang_2020} which relies on randomised measurements of $N$ ground states and postprocessing.

The ML model is built out of features defined for each operator $O_I$ as follows. 
We select the components $x_J$ of $x$ such that $J$ is within distance $\delta$ to $I$, and the diameter of $J$ is also not larger than $\delta$.
We call $Z_I(x)\in [-1,1]^{m_{I,\delta}}$ this subset of $x$.
Next we construct a feature map $\vec{\phi}_I(x)$ out of $Z_I(x)$.
As in \cite{lewis2023improved}, the theoretical analysis uses a feature map based on a discretisation, while the experiments use random Fourier features with input $Z_I(x)$.
The feature map based on discretisation is a one-hot vector of dimension equal to the number of points in a grid of mesh size $\delta_2$ in $[-1,1]^{m_{I,\delta}}$.
We set $\phi_{I,x'}(x)=1$ for the point $x'$ in this grid that is closest to $Z_I(x)$.

These features are then concatenated over $I$ to form the feature vector $\vec{\Phi}$ (bottom center of Figure \ref{fig:Overview}). A linear model with those features and a $\ell_1$ norm penalisation (LASSO) is learned from the data, and can be used to predict the ground state expectation value of this observable at new values of $x$ within the same gapped phase of the training data.

\subsubsection{Theoretical guarantees}

We prove the following main theorems about the ML model. The setting and notation is as in the previous paragraph. Specifically, we shall consider $k$-local Hamiltonians with interaction terms $h_I$ supported on sets of qubits $I$ decaying either exponentially or as a power law with exponent $\alpha$ with respect to the diameters of $I$. We then want to predict a $p$-local observable $O$ constituting of $|\mathcal{S}|$ few-body terms.
\begin{theorem}[Theorem \ref{thm:sample_complexity_single_obs} informal]\label{thm:main_thm1}
Choose
\begin{align}
    \delta 
    \ge 
    \begin{cases}
    {\cal O}(\log^2(1/\epsilon))
    & \textup{ exp decay, }\\     {\cal O}
    (\epsilon^{-1/(\nu-D)})
    & \textup{ power law, }
    \end{cases}
    \quad 
    \delta_2 = 
    {\cal O}
(\epsilon/\sqrt{m_{I,\delta}})\,,
\end{align}
with $\nu$ the function of $\alpha$ and $D$ given in Proposition \ref{prop:approx_alpha_gt_2d}, where the exponent $1/(\nu-D)$ diverges as $\alpha \to 2D$ from above.
Then the ML model with truncation parameter $\delta$ and a mesh size $\delta_2$ achieves generalisation error at most $\epsilon$ with probability at least $1-\gamma$ if we choose the number of samples as
\begin{align}
    N 
    =
    \mathcal{O}(\log(|{\cal S}|/\gamma))    
    \times 
    \begin{cases}
    2^{\mathcal{O}(\operatorname{polylog}(1/\epsilon))} & \textup{ exp decay},\\
    2^{\mathcal{O}(\epsilon^{-\omega} \log(1/\epsilon))} & \textup{ power law, }
    \end{cases}
    \quad \omega = kD/(\nu-D)\,.
\end{align}
\end{theorem}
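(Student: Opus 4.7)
My plan is to extend the three-step roadmap of \cite{lewis2023improved} from the short-range to the long-range setting. First I decompose $O=\sum_{I\in\mathcal{S}}O_I$ and, for each $I$, seek to approximate $y_I(x):=\operatorname{tr}(O_I\rho(x))$ by a function of only the nearby parameters $Z_I(x)$. These local approximations will then be learned simultaneously by a single LASSO regression on the concatenated feature vector $\vec\Phi(x)$, and the overall sample bound will come from balancing three error contributions (locality, discretization, statistical) against the two free radii $\delta$ and $\delta_2$.

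For the locality step I will invoke Proposition \ref{prop:approx_alpha_gt_2d}: for each $I$ there exists $\tilde f_I$ depending only on $Z_I(x)$ with $|y_I(x)-\tilde f_I(Z_I(x))|\le\epsilon_1(\delta)$, where $\epsilon_1(\delta)$ decays as $\exp(-\mathrm{poly}(\delta^{1/2}))$ in the exponential-decay case and as $\mathcal{O}(\delta^{-(\nu-D)})$ in the power-law case; the choices of $\delta$ in the statement precisely pin $\epsilon_1=\mathcal{O}(\epsilon)$. For the discretization step, spectral-gap perturbation theory controls the Lipschitz constant of $\tilde f_I$ on $[-1,1]^{m_{I,\delta}}$ in terms of $\|O_I\|$ and the gap; partitioning the cube into subcubes of side $\delta_2$ and taking $\vec\phi_I(x)$ to be the one-hot indicator of the active subcube yields a piecewise-constant approximation with error $\mathcal{O}(\delta_2\sqrt{m_{I,\delta}})=\mathcal{O}(\epsilon)$, and setting $w^{\star}_{I,x'}=\tilde f_I(x')$ at each grid point gives a linear representer matching the target up to $\mathcal{O}(\epsilon)$.

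For the statistical step I apply the standard Rademacher-complexity bound for LASSO with $\ell_\infty$-bounded features, namely $N=\mathcal{O}(B^2\log(d/\gamma)/\epsilon^2)$, where $d=|\mathcal{S}|(2/\delta_2)^{m_{I,\delta}}$ is the concatenated feature dimension and $B=\|w^{\star}\|_1$ is the representer's $\ell_1$-norm. The Pauli $\ell_1$-normalization of $O$ gives $\sum_I\|O_I\|=\mathcal{O}(1)$, and summing $|\tilde f_I(x')|$ over all grid cells yields $B=\mathcal{O}((2/\delta_2)^{m_{I,\delta}})$. Substituting the choices of $\delta,\delta_2$ and using $m_{I,\delta}=\mathcal{O}(\delta^{Dk})$ produces $\log d=\log|\mathcal{S}|+\mathcal{O}(\mathrm{polylog}(1/\epsilon))$ for exponential decay and $\log d=\log|\mathcal{S}|+\mathcal{O}(\epsilon^{-\omega}\log(1/\epsilon))$ with $\omega=kD/(\nu-D)$ for the power law, while $\log B^2$ is of the same order. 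Absorbing the residual polynomial $1/\epsilon^2$ and $\log(1/\gamma)$ factors into the dominant $2^{\mathcal{O}(\cdot)}$ term produces the claimed rates, with the $\log(|\mathcal{S}|/\gamma)$ prefactor arising from the $\log|\mathcal{S}|$ contribution to $\log d$ together with the confidence level.

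The hardest part will be balancing $m_{I,\delta}$ against $B$: because $B$ scales exponentially in $m_{I,\delta}$, the choices of $\delta$ and $\delta_2$ must be saturated tightly so that the exponent in the final sample complexity is not over-inflated. In the power-law regime, $m_{I,\delta}$ itself grows polynomially in $\delta$ and $\delta$ must in turn scale as $\epsilon^{-1/(\nu-D)}$ to compensate the slow decay of interactions, so tracking these two coupled polynomials through the LASSO bound is what produces the $2^{\mathcal{O}(\epsilon^{-\omega}\log(1/\epsilon))}$ scaling and fixes the critical exponent $\omega$. The condition $\alpha>2D$ enters only through the hypothesis of Proposition \ref{prop:approx_alpha_gt_2d}, which supplies the needed $\nu>D$.
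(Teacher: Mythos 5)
Your proposal is correct and follows essentially the same route as the paper's proof: locality via Propositions \ref{prop:approx_exp_decay} and \ref{prop:approx_alpha_gt_2d}, discretisation controlled by the gradient bound of Lemma \ref{lemma:bound_gradient} with $\delta_2=\mathcal{O}(\epsilon/\sqrt{|S_{I,\delta}|})$, and the $\ell_1$-constrained (LASSO) generalisation bound with $|S_{I,\delta}|=\mathcal{O}(\delta^{kD})$ driving both the feature count and $\|w'\|_1$, which is exactly how the paper obtains $\omega=kD/(\nu-D)$ and the $\log(|{\cal S}|/\gamma)$ prefactor. The only cosmetic deviations (the $B^2$ versus $r(O)^4$ bookkeeping and the loose description of the exponential-decay locality error) are absorbed by the $2^{\mathcal{O}(\cdot)}$ factors and do not affect the stated rates.
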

We remark that if $|{\cal S}|={\cal O}(1)$, the number of samples does not depend on $n$, while if $|{\cal S}|=n^p$, with $p={\cal O}(1)$, the number of samples grows only logarithmically with $n$.
The dependence on the error $\epsilon$ is quasipolynomial in the case of exponentially decaying interactions, which as expected coincides with the behavior for short range interactions \cite{lewis2023improved,onorati2023efficient}.
This dependence becomes instead exponential for power law interactions with $\alpha>2D$.
The exponent $1/(\nu-D)\to \infty$ as $\alpha\to 2D$ and our results break down for $\alpha\le 2D$.

\begin{proof}[Proof strategy]
    There are two main parts of proving this theorem. 
    
    The first part requires to show that ground state expectation values of few-body non-geometrically-local observables can be accurately approximated only with the parameters in the neighbourhood of their support, where the size of this neighbourhood does not depend on the system size. As we will show, this property will hold true for Hamiltonians with interactions decaying at least as a power law with exponent greater than twice the system's dimension. This property will further hold true even for observables that depend themselves on the parameters of the system. To prove this, one needs to consider the quasi-adiabatic continuation operator \cite{hastings2010locality}, which quantifies the change of the expectation values with respect to the parameters within a topological phase. The action of this operator can then be bounded using Lieb-Robinson bounds \cite{Tran_2021}. Hence one can bound the difference between the expectation values of the full set of parameters and the truncated set of parameters on the neighbourhood to be arbitrarily small for appropriately large neighbourhood, whose size does not depend on the number of qubits.

    The second part then consists of showing that an expectation value of a $p$-local observable constituting of few-body terms obeying the previously proven property can be accurately expressed across the parameter space as a simple linear function, where the unknown coefficients are to be learned, while the features are based on a discretisation grid of the parameter space. The in-built local geometric bias in these features combined with the previously proven property is what allows this model to be accurate. Finally, we would use well-understood sample complexity theory for linear models \cite{mohri2018foundations} to determine how many training samples we would need to learn the unknown coefficients and obtain a constant generalisation error.
\end{proof}

In Theorem \ref{thm:sample_complexity_classical_shadows} we also show that we can learn all observables 
that are sums of at most $p$-body terms and have bounded $\ell_1$ norm in the Pauli basis with a sample complexity as in Theorem \ref{thm:main_thm1}, where  $|{\cal S}|$ is replaced by the cardinality of the set of Pauli operators with weight at most $p$.
In this case we need to prepare $N$ ground states, one per parameter $x^{(i)}$, and then perform  ${\cal O}(\log(n/\delta)/\epsilon^2)$ randomised measurements to compute the classical shadow of the density matrix \cite{Huang_2020}, which allows one to produce with accuracy $\epsilon$ a dataset $\{x^{(i)}, y^{(i), P} \}_{i=1}^N$ for each Pauli $P$. This data is then used to learn a ML model for each observable using Theorem \ref{thm:main_thm1}.

Next we present our results on equivariance.
We define the interaction hypergraph as the hypergraph with vertex set the vertices of the lattice and one hyperedge per interaction.
Its automorphism group $G$ is a subgroup of the permutation group of the qubits that fixes the hyperedge set. It relates the predictions for the observable $O_I$ to that of $O_{gI}$, with $g\in G$, and allows one to reduce the sample complexity by using an equivariant ML model.
The precise definition of equivariance and equivariant weights are in Propositions \ref{prop:equi_ml_model} and \ref{prop:equi_rff}.
\begin{corollary}[Corollary \ref{cor:sample_complexity_equiv}]
    A $G$-equivariant ML model achieves a sample complexity reduction that amounts to replace $|{\cal S}|$ with $|{\cal S}/G|$ in Theorem \ref{thm:main_thm1}.
\end{corollary}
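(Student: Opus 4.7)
The plan is to exploit the fact that $G$-equivariance relates the ML prediction for $O_I$ at parameters $x$ to that for $O_{gI}$ at parameters $g \cdot x$, so that only one independent learning problem needs to be solved per orbit of $G$ acting on ${\cal S}$. From Propositions \ref{prop:equi_ml_model} and \ref{prop:equi_rff}, an equivariant model obeys $f_{gI}(x) = f_I(g^{-1} x)$, and since the true target $\langle O_I \rangle_x$ satisfies the same relation (the ground state of the $G$-invariant Hamiltonian transforms covariantly under the $G$-action on the qubits), restricting the hypothesis class to equivariant models does not exclude the target function. Hence the approximation error analysis from Theorem \ref{thm:sample_complexity_single_obs} transfers verbatim to the equivariant setting on each orbit representative.

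I would then mirror the proof of Theorem \ref{thm:main_thm1}, in which the $\log(|{\cal S}|/\gamma)$ factor arises by combining a per-observable LASSO guarantee with a union bound over $I \in {\cal S}$. In the equivariant setting, pick one representative $I_0$ per orbit $[I] \in {\cal S}/G$ and tie together the LASSO weights $w_{gI_0}$ across $g \in G$ using the equivariance constraint, so that there is effectively one LASSO problem per orbit. Because the locality and decay hypotheses on the interactions are $G$-invariant by definition of the automorphism group of the interaction hypergraph, the approximation bounds used in the proof of Theorem \ref{thm:sample_complexity_single_obs} (in particular the one coming from Proposition \ref{prop:approx_alpha_gt_2d}) apply unchanged to each orbit representative. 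A union bound over the $|{\cal S}/G|$ orbits then replaces the $\log(|{\cal S}|/\gamma)$ factor by $\log(|{\cal S}/G|/\gamma)$.

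Predictions for non-representative $J \in [I]$ are reconstructed by the equivariance relation and summed as $\langle O \rangle_x = \sum_{[I] \in {\cal S}/G} \sum_{J \in [I]} f_J(x)$, so no additional training samples are needed. The main subtlety I expect is to check that the $\ell_1$-regularisation budget $B$ used in the LASSO hypothesis class is preserved under weight-tying. This should work because the tied weights are parameterised by free coefficients indexed by a fundamental domain of the $G$-action on feature coordinates, so the $\ell_1$ norm of the full (symmetrised) weight vector equals, up to orbit-size factors that can be absorbed by rescaling features or the per-orbit budget, the $\ell_1$ norm of the free parameters. Once this bookkeeping is carried out, the remaining generalisation argument is identical to that of Theorem \ref{thm:main_thm1}, producing the claimed replacement $|{\cal S}| \mapsto |{\cal S}/G|$ in the sample complexity.
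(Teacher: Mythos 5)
Your high-level idea --- tie the weights across each orbit of $G$ so that only one set of free parameters per equivalence class in ${\cal S}/G$ remains --- is exactly the mechanism the paper uses (via Proposition \ref{prop:equi_ml_model} and the observation that the true weights $(w')_{I,x'}=f(O_I,x')$ are themselves equivariant). However, the step by which you convert this into the claimed $\log(|{\cal S}|/\gamma)\mapsto\log(|{\cal S}/G|/\gamma)$ replacement is not the right one, and as stated it would not go through. You assert that the $\log(|{\cal S}|/\gamma)$ factor in Theorem \ref{thm:main_thm1} ``arises by combining a per-observable LASSO guarantee with a union bound over $I\in{\cal S}$,'' and you propose to run ``one LASSO problem per orbit'' followed by a union bound over the $|{\cal S}/G|$ orbits. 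Neither half of this matches the actual structure of Theorem \ref{thm:sample_complexity_single_obs}: that theorem trains a \emph{single} LASSO model for the single aggregate observable $O=\sum_{I\in{\cal S}}O_I$, and the only place $|{\cal S}|$ enters is through the feature dimension $A=m_\phi\le|{\cal S}|\,{\cal N}(\epsilon_1)$ inside the $\sqrt{\log(2A)}$ term of the Rademacher-complexity bound of Lemma \ref{lemma:gen_bound}; there is no union bound over $I$. Moreover, a genuine decomposition into independent per-orbit learning problems is not available in this setting, because the dataset supplies only the aggregate labels $y^{(i)}\approx\Tr(O\rho(x^{(i)}))$ and not the per-term values $\Tr(O_{I_0}\rho(x^{(i)}))$ needed to train an orbit representative in isolation. (The union-bound-over-classes mechanism you describe is the one the paper uses for the \emph{classical shadows} version, Corollary \ref{cor:equiv_classical_shadows}, precisely because there per-Pauli labels can be extracted from the shadow data.)

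The correct route, which is what the paper intends, is: keep the single LASSO problem, impose the weight-tying constraint $w_{I,x'}=w_{gI,g^{-1}\cdot x'}$, and observe that the effective number of independent coordinates drops from $|{\cal S}|\,{\cal N}(\epsilon_1)$ to $|{\cal S}/G|\,{\cal N}(\epsilon_1)$; substituting this for $A$ in Lemma \ref{lemma:gen_bound} yields the replacement of $|{\cal S}|$ by $|{\cal S}/G|$ inside the logarithm, with the $\gamma$-dependence untouched. Your worry about the $\ell_1$ budget is legitimate and is something the paper glosses over: under tying, the full weight vector is unchanged in length and its $\ell_1$ norm is still bounded by $B=r(O){\cal N}(\epsilon_1)$ (the free parameters, indexed by a fundamental domain, have $\ell_1$ norm no larger than that of the full vector), so $B$ is preserved without rescaling. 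The point that does require care --- and that your ``absorb orbit-size factors by rescaling'' remark does not resolve --- is that if one reparametrises in terms of pooled features $\tilde\phi(x)_{[I],x''}=\sum_{g}\phi(x)_{gI_0,g^{-1}\cdot x''}$, these are no longer $\{0,1\}$-valued, so $r_\infty$ in Lemma \ref{lemma:gen_bound} can grow with the orbit size; one must instead argue directly on the constrained class (the Massart/Rademacher bound over the reduced set of extreme points) to keep $r_\infty B M$ under control while shrinking $A$. Your proposal identifies the right ingredients but would need this repair to actually produce the stated corollary.
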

This result is particularly powerful if $|{\cal S}|\approx |G|$, in which case the complexity becomes independent of $n$. This is the case for example for the frequently encountered case of models defined on a cubic lattice with periodic boundary conditions and observables with $|{\cal S}|={\cal O}(n)$. 
Note that equivariance reduction applies independently of the details of the Hamiltonian and in particular does not require the Hamiltonian to be translation invariant in the case of periodic boundary conditions.
A similar sample complexity reduction from equivariance is obtained in Corollary \ref{cor:equiv_classical_shadows} for predicting expectation values of all observables that are sums of $p$-body terms.

Physically, we thus expect our results on equivariant models to be most relevant for quantum disordered systems on a lattice, such as
the paradigmatic disordered Heisenberg and Ising models \cite{Luitz_2015}, spin glasses \cite{nishimori2001statistical}, and atomic chains in the presence of impurities, such as Rydberg atoms \cite{Marcuzzi2017}; some of which are discussed further in the section on experiments.

While we assume here exact equivariance of the Hamiltonian, we expect our results to be relevant in the case of approximate equivariance as well.
That is, if the Hamiltonian under consideration is close to that of an equivariant system, in the limit of large number of qubits the equivariant ML models should provide a good description of the observables.
This applies for example to the case where the lattice has open boundary conditions, as we expect that in the thermodynamic limit the system has the same bulk observables as the same system with periodic boundaries.

\subsubsection{Experiments} Another contribution of this paper is to verify numerically the analytical results. For the first time, we provide numerical evidence for the logarithmic scaling of the sample complexity with the number of qubits.
The data is produced by DMRG simulations of disordered short range Heisenberg chains, disordered Ising chains 
with spin-spin interactions decaying as a power law with exponent $\alpha$, and Rydberg atom chains with position displacements with interactions decaying as the $6^\text{th}$ power.
We predict the expectation value of the Hamiltonian in the ground state across a range of couplings.
We point out that due to the short range correlations in the systems, the central limit theorem applies. A naïve normalisation of the observable has an expectation value that concentrates around the average over disorder in the large $n$ limit, leading to an $x$-independent target for the machine learning model and a trivial sample complexity -- a constant model with a single parameter can achieve zero test error in the $n\to\infty$ limit.
We show that scaling and zero-centering the observable so that the Gaussian fluctuations in the large $n$ limit are not suppressed, leads to a non-trivial ML problem also as $n\to\infty$, and this allows us to verify the non-trivial scaling of the sample complexity, see Figures \ref{fig:HeisenbergOpen}, \ref{fig:IsingOpen}, and \ref{fig:Plot-Rydberg}.
We also verify that using an equivariant ML model allows us to reduce the sample complexity from ${\cal O}(\log(n))$ to ${\cal O}(1)$ in Figure \ref{fig:HeisenbergPeriodicRMSE} for the disordered Heisenberg model on a closed chain.
Finally, in Figure \ref{fig:IsingBadAlpha} we show that we can apply successfully the ML derived for $\alpha>2D$ to a case with $\alpha\le 2D$. 
While the numerical data suggest worse, seemingly linear, scaling, we can not definitely conclude the sample complexity to not be logarithmic in the number of qubits. We leave it as an important open problem to derive theoretical guarantees for $\alpha\le 2D$.

\subsection{Outline of the paper}

The paper is structured as follows:
Section \ref{sec:main - Quantum info} presents the results and comments on the proofs that the expectation values of observables $O_I$ depend only on Hamiltonian parameters that are geometrically close to $I$; while Appendix \ref{sec:Theoretical guarantees for approximating observables} presents the detailed proof. This is shown for gapped Hamiltonians with exponentially decaying interactions and power law decaying interactions with exponent $\alpha>2D$.
These sections constitute the bulk of the novel technical contributions of this paper.
Section \ref{sec:main - Machine learning model} then builds on these results and the literature on generalisation bounds to derive rigorous guarantees for the sample complexity of the ML models; together with details in Appendix \ref{sec:Machine learning model and sample complexity bounds} about predicting many observables via classical shadows.
In Section \ref{sec:main - Machine learning model} we also introduce the notion of equivariance and show how it allows one to reduce the sample complexity.
Finally, Section \ref{sec:Experiments}
shows numerical experiments that validate the theoretical findings.
We discuss the normalisation of observables and the details of the experimental setup. We also apply the ML model to Hamiltonians with $\alpha\le 2D$.
Appendix \ref{sec:Technical Lemmas} contains combinatorial identities and bounds used to derive results in Section \ref{sec:main - Quantum info}, Appendix \ref{sec:Lieb Robinson bounds for power law interactions} discusses Lieb-Robinson bounds for power law interactions, and Appendix \ref{sec: Implementation MPO} the DMRG implementations of the considered models.

\section{Quantum information bounds}
\label{sec:main - Quantum info}

\subsection{Setup and notation}
\label{sec:Setup and notation}

We consider a $D$-dimensional lattice with vertex set $\Lambda$ of cardinality $n$. We denote by ${\cal P}(\Lambda)$ the set of subsets of $\Lambda$, and by ${\cal P}_k(\Lambda)$ those $I\in {\cal P}(\Lambda)$ with cardinality $|I|\le k$. 
We denote by $d(i,j)$ the distance between $i,j\in \Lambda$ and
for $I,J \in {\cal P}(\Lambda)$ we define their distance and diameter as
\begin{align}
    d(I,J)=\min_{i\in I, j\in J}d(i,j)
    \,,\quad
    \diam(I) = 1+\max_{i,j\in I}d(i,j)
    \,.
\end{align}

Now we associate a system of qubits to the vertices in $\Lambda$ and a Hamiltonian with $k$-body long range interactions which depend on a set of parameters $x$ as follows:
\begin{align}
    H(x)
    =
    \sum_{I\in {\cal P}_k(\Lambda)}
    h_I(x_I)
    \,,
\end{align}
where $h_I(x_I)$ is supported only on the qubits in $I$ and $x_I\in [-1,1]^{q_I}$.
We denote the dimensionality of $x$ by
\begin{align}
    m
    =
    \sum_{I\in {\cal P}_k(\Lambda)}
    q_I
    \le q_* |{\cal P}_k(\Lambda)|
    =
    q_*
    \sum_{\ell=1}^k \binom{n}{\ell}
    \,,\quad 
    q_*=
    \max_{I\in {\cal P}_k(\Lambda)}
    q_I
    \,.
\end{align}
For example, when $q_I=1$ and $k=2$, we have the family of Hamiltonians
\begin{align}
    H(x)
    =
    \sum_{ij} h_{ij}(x_{ij})
    +
    \sum_{i} h_{i}(x_{i})
    \,.
\end{align}
The case of interactions with short range $r$
can be seen as a special case where $h_{I}=0$ if $\diam(I)>r$.

We are going to study the following machine learning problem. 
Let us denote the ground state of $H(x)$ by $\rho(x)$, which is defined as
\begin{align}
    \rho(x) = \lim_{\beta\to\infty}
    \frac{\e^{-\beta H(x)}}{\Tr(\e^{-\beta H(x)})}\,.
\end{align}
We are given a dataset 
\begin{align}
    S
    =
    \{x^{(i)}, 
    y^{(i)}\}_{i=1}^N\,,\quad
    y^{(i)}=
    \Tr(\rho(x^{(i)}) O)
    +\epsilon
    \,,
\end{align}
where $\epsilon\ge 0$ accounts for errors in measuring the observable.
We then aim at constructing a predictor for $\Tr(\rho(x_*) O)$ at a test point $x_*$, and we ask how many samples do we need to solve this problem accurately.

Unless explicitly stated, we will always consider the operator norm $\|\cdot \|$ over Hermitian operators, defined as the absolute value of the largest eigenvalue. If $A=(A_1,\dots,A_p)$ is a vector of operators, then we denote $\| A \| = \sqrt{\sum_{i=1}^p \|A_i\|^2}$.

We are going to solve this ML problem assuming that:
\begin{itemize}
    \item $k, q_I$ do not depend on $n$.
    \item There is a gap $\gamma$ above the ground state that is uniform over all $x\in [-1,1]^{m}$.
    \item $\| h_I(x_I) \|$ and $\| \partial_{x_i}h_I(x_I) \|$ 
    decay exponentially or as a power law with $\diam(I)$ for all $x_I$. More precise bounds will be discussed below.
    \item The observable $O$ is such that, with $p=\mathcal{O}(1)$:
    \begin{align}
        O = \sum_{I\in {\cal P}_p(\Lambda)}O_I\,.
    \end{align}
\end{itemize}
This is similar to the setup of \cite{lewis2023improved,onorati2023efficient,onorati2023provably}; however, differently from those works, we do not assume that neither $h_I$ nor $O_I$ are geometrically-local.
In the following we will denote by $\id(\cdot)$ the indicator function and by $\delta(\cdot ,\cdot )$ the Kronecker delta.

\subsection{Dependency of observables on Hamiltonian parameters}
\label{sec:Dependency of observables on Hamiltonian parameters}

The goal of this section is to prove that under the assumptions of Section \ref{sec:Setup and notation} the function
\begin{align}
    f(O_I, x)
    =
    \Tr(O_I \rho(x))\,,\quad 
    I\in {\cal P}_p(\Lambda)\,,
\end{align}
depends only on the $x_J$'s in a neighborhood of $I$.
This neighborhood is defined as the $x_J$'s that are within a distance $\delta>0$ from $I$ and are such that $\diam(J)\le \delta$:
\begin{align}
    \label{eq:S_I_delta}
    S_{I,\delta}
    =
    \{
    J\in {\cal P}_k(\Lambda)
    \,|\,
    d(I,J)\le \delta
    \wedge
    \diam(J)\le \delta
    \}\,.
\end{align}
Figure \ref{fig:Sdelta} visualizes $S_{I,\delta}$ for a case in $D=2$.
Note that the complementary set is
\begin{align}
    (S_{I,\delta})^c
    =
    \{
    J\in {\cal P}_k(\Lambda)
    \,|\,
    \left(
    d(I,J)\le \delta
    \wedge
    \diam(J)> \delta
    \right)
    \vee
    d(I,J)>\delta
    \}\,.
\end{align}
\begin{figure}[ht!]
    \centering
    \includegraphics[width=.75\textwidth]{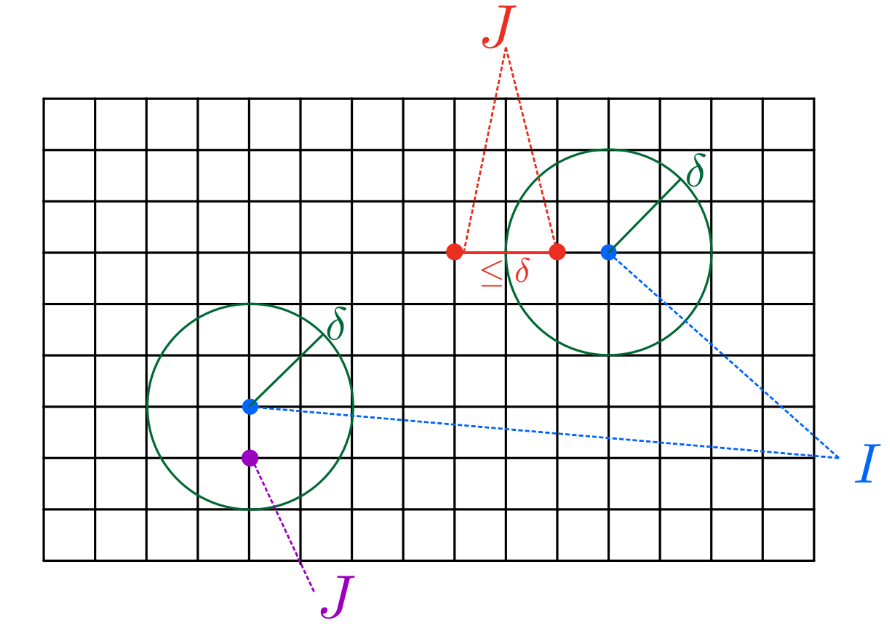}
    \caption{$S_{I,\delta}$ for two choices of $J$, once with cardinality $1$ (purple) and once with $2$ (red).
    Here $D=|I|=2$.
    }
    \label{fig:Sdelta}
\end{figure}

We now define $x'$ such that
\begin{align}
    x'_J
    =
    \begin{cases}
        x_J & \text{if } J \in S_{I,\delta}\,,\\
        0   & \text{if } J \not\in S_{I,\delta}\,.\\
    \end{cases}
\end{align}
There is nothing special about setting $x'$ to $0$ outside $S_{I,\delta}$, and any other fixed value could have been chosen for the results below to follow. We shall also denote by $\chi_S(x)$ the vector $x$ that is set to zero for $J\in S$.
Now we will study under what conditions $f(O_I, x)$ can be approximated by $f(O_I, x')$. First, following \cite{lewis2023improved, onorati2023efficient} we can rewrite this problem as that of bounding the gradient of $f$. Let $x(s) = x's+x(1-s)$, then:
\begin{align}
    &|f(O_I, x')-f(O_I, x)|
    =
    \left|
    \int_0^1\dd s
    \partial_s f(O_I, x(s))
    \right|
    \le
    \int_0^1\dd s
    |\partial_s f(O_I, x(s))|
    \\
    &\le
    \int_0^1\dd s
    \sum_{J\in {\cal P}_k(\Lambda)}
    |(x'_{J}-x_{J})\cdot
    \nabla_{J} f(O_I, x(s))|
    \le 
    q_*
    \int_0^1\dd s
    \sum_{J\not \in S_{I,\delta}}
    \| \nabla_{J} f(O_I, x(s)) \|\\
    &\le 
    q_*
    \int_0^1\dd s
    \sum_{J\in {\cal P}_k(\Lambda)}
    \left( 
    \id\left(d(I,J)\le \delta
    \wedge
    \diam(J)> \delta
    \right)
    +
    \id\left(
    d(I,J)>\delta\right)
    \right)
    \| \nabla_{J} f(O_I, x(s)) \|\,,
\end{align}
where we used the Cauchy–Schwarz inequality and the penultimate inequality follows from $\|x_J\|\le q_J\le q_*$.

Then recall the quasiadiabatic operator or spectral flow.
\begin{lemma}[Corollary 2.8 of \cite{Bachmann_2011}]\label{lemma:quasiadiabatic}
Let $H(s)$ be a Hamiltonian such that 1) $\|H'(s)\|$ is uniformly bounded in $s\in[0,1]$ and 2) the spectrum of $H(s)$ can be decomposed in two parts separated by a spectral gap $\gamma$. Then if $P(s)$ is the projector onto the lower part, it satisfies
\begin{align}
    \frac{\dd}{\dd s}P(s) = i[D(s),P(s)]\,,\quad
    D(s)=\int_{-\infty}^{+\infty}\dd t W_\gamma(t)\e^{itH(s)}H'(s)
    \e^{-itH(s)}
    \,,
\end{align}
where the function $W_\gamma(t)$ satisfies 
$W_\gamma(t) = -W_\gamma(-t)$
and $\sup_t W_\gamma(t) = \frac{1}{2}$.
\end{lemma}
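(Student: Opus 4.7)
The plan is to reduce the operator identity $P'(s) = i[D(s), P(s)]$ to a matching of matrix elements in the spectral basis of $H(s)$, and then to identify $W_\gamma$ as the inverse Fourier transform of a suitably cut-off $-i/\omega$.

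First I would represent the spectral projector by a Riesz contour integral
\[
P(s) = \frac{1}{2\pi i}\oint_\Gamma (z - H(s))^{-1}\, dz,
\]
with $\Gamma$ a simple closed curve separating the two spectral components; the existence of such a $\Gamma$ is guaranteed by hypothesis (2), and the integrand is uniformly well-behaved thanks to hypothesis (1). Differentiating under the integral using the resolvent identity $\partial_s (z - H)^{-1} = (z - H)^{-1} H'(s) (z - H)^{-1}$ yields a double-resolvent expression for $P'(s)$.

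Second I would evaluate both sides in an eigenbasis $\{|n\rangle\}$ of $H(s)$ with eigenvalues $E_n$. A partial-fraction residue computation shows that $\langle n | P'(s) | m \rangle$ vanishes whenever $E_n$ and $E_m$ lie on the same side of the gap and equals $\langle n | H'(s) | m \rangle / (E_n - E_m)$ for cross-gap pairs. Since $P(s)$ is diagonal in this basis with entries $0$ or $1$, the matrix elements of $i[D(s), P(s)]$ also vanish on like-side pairs, and on cross-gap pairs collapse to $\pm i\, \langle n | H'(s) | m \rangle\, \widehat{W_\gamma}(E_m - E_n)$, where $\widehat{W_\gamma}(\omega) = \int W_\gamma(t)\, e^{-i\omega t}\, dt$. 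Matching the two expressions reduces the whole lemma to the single Fourier-analytic requirement
\[
\widehat{W_\gamma}(\omega) = -\frac{i}{\omega} \quad \text{for every } |\omega| \ge \gamma.
\]

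Finally, I would construct such a $W_\gamma$ by inverse Fourier transforming $\omega \mapsto -i/\omega$ multiplied by a smooth even cutoff $\chi_\gamma$ that equals $1$ for $|\omega| \ge \gamma$ and smoothly removes the singularity at the origin. Because $-i/\omega$ is imaginary-odd while $\chi_\gamma$ is real-even, the resulting $W_\gamma$ comes out real and odd, so the symmetry $W_\gamma(t) = -W_\gamma(-t)$ holds for free. The main obstacle is the sharp bound $\sup_t |W_\gamma(t)| \le 1/2$: the cutoff must tame the singularity at the origin without adding extra mass, while simultaneously producing enough time-domain decay for the integral defining $D(s)$ to converge and, later in the paper, to be compatible with Lieb--Robinson bounds. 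This careful tuning of the cutoff is the technical heart of the quasi-adiabatic continuation construction and is the true source of the universal constant $1/2$.
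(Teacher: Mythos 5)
The paper does not prove this lemma at all: it is imported verbatim as Corollary 2.8 of \cite{Bachmann_2011}, so there is no in-paper argument to compare against. Your sketch correctly reproduces the standard quasi-adiabatic continuation proof from that reference: the Riesz contour representation of $P(s)$, the resolvent identity giving $\langle n|P'(s)|m\rangle = \langle n|H'(s)|m\rangle/(E_n-E_m)$ on cross-gap pairs and $0$ on like-side pairs, the observation that $i[D(s),P(s)]$ automatically vanishes on like-side pairs because $P_{mm}-P_{nn}=0$ there, and the reduction to the Fourier condition $\widehat{W}_\gamma(\omega)=-i/\omega$ for $|\omega|\ge\gamma$. (For finite-dimensional Hermitian $H(s)$, which is the setting here, the eigenbasis computation is legitimate.)

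The one genuinely unfinished step is the construction of $W_\gamma$ itself, which you defer to ``careful tuning of the cutoff.'' Two remarks. First, the bound $\sup_t|W_\gamma(t)|\le\tfrac12$ is not delicate: in \cite{Bachmann_2011} one takes a nonnegative even $w_\gamma$ with $\int w_\gamma=1$ and $\widehat{w}_\gamma$ supported in $[-\gamma,\gamma]$, and sets $W_\gamma(t)=\mathrm{sgn}(t)\int_{|t|}^\infty w_\gamma(s)\,\dd s$ (equivalently, your cutoff is $\chi_\gamma=1-\widehat{w}_\gamma$); oddness and the bound by $\int_0^\infty w_\gamma=\tfrac12$ are then immediate. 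Second, the part that actually requires work --- and that the rest of this paper depends on through Lemma \ref{lemma:2.6} and the function $u_{2/7}$ --- is producing such a $w_\gamma$ with compactly supported Fourier transform \emph{and} subexponential time decay $\sim\exp(-a t/\log^2 t)$; this is the infinite-product construction in \cite{Bachmann_2011}, and a generic smooth cutoff would only give rapid polynomial decay, which would not suffice for the quantitative tail bound ${\cal I}_2\le 4G(\gamma t_*)$ used downstream. So the architecture of your proof is right, but the quantitative properties of $W_\gamma$ that this paper actually consumes are exactly the part you have not supplied.
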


Assuming $O_I$ does not depend on $x$ -- we shall discuss in Section \ref{sec:Approximation of the energy} the extension to the case of $O_I$ depending on $x$ -- we have, denoted $\tau_t(A)=\e^{itH(x)}A\e^{-itH(x)}$,
\begin{align}    
    \label{eq:nabla_f}
    &\|\nabla_J
    f(O_I,x)
    \|
    =
    \|
    \Tr(O_I [D_{J}(x),\rho(x)])
    \|
    =
    \|
    \Tr([O_I, D_{J}(x)]\rho(x))
    \|
    \\
    &\le
    \int_{-\infty}^{+\infty}\dd t |W_\gamma(t)|\,
    \|
    \Tr([\tau_t(\nabla_{J}H(x)), O_I] \rho(x))
    \|\le
    \int_{-\infty}^{+\infty}\dd t |W_\gamma(t)|\,
    \|
    [\tau_t(\nabla_{J}h_J(x_J)), O_I] 
    \|
    \,,
\end{align}
where the last inequality follows from Von Neumann's trace inequality: 
if $A,B$ have eigenvalues $\alpha_i, \beta_i$, then
$\Tr(AB)\le \sum_i \alpha_i\beta_i\le \max_i \alpha_i \sum_j \beta_j$.

As next step we are going to bound the quantity 
$\|[\tau_t(\nabla_{J}h_J(x_J)), O_I]  \|$.
Let us assume that $R=d(I,J)>0$.
We introduce an invertible function of $R$, $t_*=t_*(R)$, that we shall determine later, depending on the exact decay of the interactions. 
We distinguish two regimes: one for $|t|<t_*$, where
the commutator is small since the operator $\tau_t(\nabla_{J}h_J(x_J))$ has not spread enough in the region $I$. In this regime we can use a Lieb-Robinson bound \cite{hastings2010locality}. The second regime is $|t|>t_*$, where the Lieb-Robinson bound is vacuous and we can use instead the trivial bound:
\begin{align}
    \|
    [\tau_t(\nabla_{J}h_J(x_J)), O_I] 
    \|
    \le 
    2
    \|
    \tau_t(\nabla_{J}h_J(x_J)) 
    \|
    \,
    \|
    O_I
    \|
    =
    2
    \|
    \nabla_{J}h_J(x_J)
    \|
    \,
    \|
    O_I 
    \|
    \,.
\end{align}
Then we break down the quantity to be estimated as the sum of two terms 
\begin{align}
    &\int_{-t_*}^{+t_*}\dd t |W_\gamma(t)|\,
    \|
    [\tau_t(\nabla_{J}h_J(x_J)), O_I] 
    \|
    \le
    \|
    \nabla_{J}h_J(x_J)
    \|
    \,
    \|
    O_I 
    \|
    \underbrace{
    \frac{\int_{0}^{t_*}\dd t 
    \|
    [\tau_t(\nabla_{J}h_J(x_J)), O_I] 
    \|}{\|
    \nabla_{J}h_J(x_J)
    \|
    \,
    \|
    O_I 
    \|}}_{{\cal I}_1}
    \,,\\
    &
    \left(
    \int_{-\infty}^{-t_*}
    +
    \int_{t_*}^{+\infty}
    \right)
    \dd t |W_\gamma(t)|\,
    \|
    [\tau_t(\nabla_{J}h_J(x_J)), O_I] 
    \|
    \le
    \|
    \nabla_{J}h_J(x_J)
    \|
    \,
    \|
    O_I 
    \|
    \underbrace{
    4\int_{t_*}^{+\infty}\dd t
    |W_\gamma(t)|
    }_{
    {\cal I}_2
    }
    \,,    
\end{align}
so that
\begin{align}
\label{eq:I1_I2}
    \|\nabla_J
    f(O_I,x)
    \|
    \le 
    \|
    \nabla_{J}h_J(x_J)
    \|
    \,
    \|
    O_I 
    \|
    (
    {\cal I}_1
    +
    {\cal I}_2
    )\,.
\end{align}
We can bound ${\cal I}_2$ using the following result on the function $W_\gamma$:
\begin{lemma}[Lemma 2.6 (iv) of \cite{Bachmann_2011}]\label{lemma:2.6}
For $t>0$, let
\begin{align}
    I_\gamma(t)
    =
    \int_t^{\infty}\dd \xi 
    |W_\gamma(\xi)|\,.
\end{align}
Then $I_\gamma(t)\le G(\gamma|t|)$, where
\begin{align}
    G(\xi)
    =
    \frac{1}{\gamma}
    \begin{cases}
        \frac{K}{2}& 0\le \xi \le \xi_*\,,\\
        130\e^2\xi^{10}u_{2/7}(\xi)& 
        \xi >\xi_*\,,
    \end{cases}
    \quad
u_a(\xi) = \exp(-a\tfrac{\xi}{\log^2(\xi)})
\,,
\end{align}
with $K\sim 14708$ and $ 36057<\xi_*<36058$.
\end{lemma}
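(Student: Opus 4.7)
\textbf{Proof proposal for Lemma \ref{lemma:2.6}.} The function $W_\gamma$ is constructed in the spectral-flow literature as $W_\gamma(t) = \gamma^{-1}W(\gamma t)$ for a fixed universal function $W$ with $|W(\xi)| \le 1/2$ everywhere and with a sub-exponential tail bound of the form $|W(\xi)| \le C u_a(\xi)$ for some $a$ slightly larger than $2/7$ once $\xi$ is large. The plan is therefore first to reduce the problem to bounding $J(s) := \int_s^{\infty} |W(\xi)|\,\dd \xi$ uniformly in $s \ge 0$; the target $I_\gamma(t) \le G(\gamma t)$ then follows by the change of variables $\xi = \gamma t'$ in the definition of $I_\gamma$, which produces the overall $1/\gamma$ in the definition of $G$.

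For the small-argument regime $0 \le s \le \xi_*$, I would split $J(s) = \int_s^{\xi_*} |W(\xi)|\,\dd \xi + \int_{\xi_*}^{\infty} |W(\xi)|\,\dd \xi$. The first integral is controlled trivially by $(\xi_* - s)/2 \le \xi_*/2$ via $|W|\le 1/2$, and the second is a fixed finite number obtained from the sub-exponential tail estimate once it becomes useful. Choosing $\xi_*$ as the crossover point where the explicit bound $130\e^2\xi^{10}u_{2/7}(\xi)$ first drops below $1/2$ pins down $\xi_* \in (36057, 36058)$, and adding the two contributions with some numerical slack yields the universal constant $K \sim 14708$.

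For the large-argument regime $s > \xi_*$, the main obstacle is that $u_a$ has no elementary antiderivative, so the tail integral cannot be evaluated in closed form. I would handle this by repeated integration by parts, using the identity $u_a(\xi) = -\frac{\log^2 \xi}{a(1-2/\log \xi)}\,\frac{d}{d\xi}u_a(\xi)$ to trade each integration for a factor of $\log^2 s$ on the boundary term at the cost of a small correction integral of the same form. Performing a bounded number of these iterations, absorbing the $\log\log$ losses by slightly lowering the exponent constant from $a$ down to $2/7$ and bounding the accumulated logarithmic prefactors by $\xi^{10}$ (which is safe for $\xi > \xi_*$ since then $\log^{2k}\xi \ll \xi^{10}$ for modest $k$), produces the stated bound with the prefactor $130\e^2$ tracking the cumulative numerical losses. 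The delicate part is calibrating how many integration-by-parts steps to take so that one lands exactly at exponent $2/7$ with only an $\xi^{10}$ polynomial prefactor rather than something worse.
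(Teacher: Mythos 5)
The paper does not prove this statement at all: it is imported verbatim, with its constants, from Lemma~2.6~(iv) of \cite{Bachmann_2011}, so there is no internal proof to compare against. Judged on its own terms, your sketch has genuine gaps and would not reproduce the lemma as stated. First, in the small-argument regime the constant $K\sim 14708$ is (a bound on) the $L^1$ norm of the weight function: since $W_\gamma$ is odd, $I_\gamma(t)\le I_\gamma(0)=\tfrac12\|W_\gamma\|_1\le K/(2\gamma)$ for \emph{all} $t\ge 0$, which is the whole argument for that branch. Your route --- integrating the pointwise bound $|W|\le 1/2$ over $[s,\xi_*]$ and adding a tail --- yields roughly $\xi_*/2\approx 1.8\times 10^4$, more than twice $K/2\approx 7354$, so it cannot recover the stated constant and misidentifies what $K$ is. Second, your characterisation of $\xi_*$ is numerically false: at $\xi=36057$ one has $130\e^2\xi^{10}u_{2/7}(\xi)\approx \e^{18}\gg 1/2$, so $\xi_*$ is not ``the crossover where the explicit bound first drops below $1/2$''; it is fixed by other requirements in the construction (monotonicity/validity thresholds of the tail estimates).

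Third, and most substantively, the large-argument branch is not obtained by iterated integration by parts on $u_a$. In \cite{Bachmann_2011} one first derives a \emph{pointwise} sub-exponential bound of the form $|W(\xi)|\le c\,\xi^{7/2}u_{2/7}(\xi)$ from the explicit infinite-product construction of $W$, and then integrates the tail using an incomplete-Gamma-type estimate of exactly the kind this paper records (and extends) as Lemma~\ref{lemma:extension_2.5}: $\int_t^\infty \eta^{s}u_a(\eta)\,\dd\eta\le C\,t^{2s+3}u_a(t)$. The exponent $10=2\cdot\tfrac72+3$ comes from there, not from ``accumulated logarithmic prefactors''. The step you flag as delicate --- calibrating the integration-by-parts iterations so as to land at exponent $2/7$ with only a $\xi^{10}$ prefactor --- is precisely the content you would need to supply, and as sketched the ``correction integral of the same form'' is not actually of the same form, so the iteration does not close. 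If you want a self-contained argument, prove the pointwise decay of $W$ from its construction and then invoke Lemma~\ref{lemma:extension_2.5}; otherwise this lemma should simply be cited, as the paper does.
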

Thus ${\cal I}_2 \le 4 G(\gamma t_*)$.
If we assume that
\begin{align}
    \gamma t_*(d(I,J)) > \xi_*
    \Rightarrow
    d(I,J)> t_*^{-1}(\gamma^{-1}\xi_*)\,,
\end{align}
then
\begin{align}
    \label{eq:I_2_gamma_t_star}
    {\cal I}_2
    \le
    \frac{4}{\gamma}
    130\e^2 (\gamma t_*)^{10}u_{2/7}(\gamma t_*)
    \,.
\end{align}
Next, we shall discuss different decays of the long range interactions which lead to different choices of $t_*$ and different bounds for ${\cal I}_1$ and thus of $|f(O_I, x')-f(O_I, x)|$.
We start with exponential decay in Section \ref{sec:main - Exponential decay} and then move to power law in Sections \ref{sec:main - Power law > 2D} -- \ref{sec:main - power law alpha gt 2d plus one}.

\subsubsection{Exponential decay}
\label{sec:main - Exponential decay}

In this section we shall present the results and comment on their proofs for Hamiltonians with interactions decaying exponentially with the distance, while the full derivation is presented in \ref{sec:Exponential decay}. We wish to prove the following proposition:
\begin{prop}[Corollary \ref{cor:delta_eps_exp}]
    Assume the Hamiltonian has interactions decaying as \begin{align}
    \|h_I\|
    \le
    \e^{-\nu \diam(I)}
    P(\diam(I))
    \,, \quad     
    \|\nabla_I h_I\|
    \le
    \e^{-\nu \diam(I)}
    \,,    
\end{align} where $\nu$ is a positive constant and $P$ a positive polynomial. Then we have \begin{align}
    |f(O_I,x')-f(O_I,x)| \leq \epsilon \| O_I \|,
\end{align} where $x' = \chi_{S_{I,\delta}}(x)$ is the truncated vector of parameters, if $0 < \epsilon \leq e^{-1}$ and $\delta \geq c \log^2(\tilde{c}/\epsilon)$, with $c$ and $\tilde c$ being constants specified in Appendix \ref{sec:Exponential decay}.
\end{prop} 
This means that we can approximate $f(O_I,x)$ with a function $f(O_I,\chi_{S_{I,\delta}}(x))$ that is supported only on $\chi_{S_{I,\delta}}(x)$ if we choose $\delta$ as above.

The proof of this result is similar to the case of short range interactions \cite{lewis2023improved,onorati2023efficient} and relies on Lieb-Robinson bounds for exponentially decaying interactions \cite{hastings2010locality} to bound $\mathcal{I}_1$
appearing in \eqref{eq:I1_I2}.
There are however two main differences with respect to the case of short range interactions discussed previously in the literature.
First, we consider a general observable supported on the qubits $I$, while previous works consider only a geometrically local observable. 
Second, as discussed in Section \ref{sec:Dependency of observables on Hamiltonian parameters},  bounding $|f(O_I,x')-f(O_I,x)|$ requires to bound contributions from Hamiltonian terms $J$ such that $d(I,J)\le \delta$ and $\diam(J)>\delta$, which are not present in the short range case.
The bulk of the proof develops combinatorial arguments to make sure that these terms do not change the asymptotic behavior of the short range case.

\subsubsection{Power law with $\alpha\in (2D,2D+1)$}
\label{sec:main - Power law > 2D}

In this section we shall present the results and comment on their proofs for Hamiltonians with interactions decaying polynomially with the distance, while the full derivation is presented in \ref{sec:power law alpha gt 2d}. We wish to prove the following proposition:

\begin{prop}[Corollary \ref{cor:delta_eps_alpha_gt_2D}]
Let $\alpha\in (2D,2D+1)$ and assume that the Hamiltonian has interactions decaying as follows: for all $I$ such that $|I|>1$,
    \begin{align} \|h_I\|
    \le
    g_{|I|}\diam(I)^{-\gamma(|I|)}
    \,,  \quad
    \|\nabla_I h_I\|
    \le
    \diam(I)^{-\gamma(|I|)}
    \,,    \quad 
    \gamma(\ell) = (\ell-2)D+\alpha\,,
    \end{align} 
for some positive constants $g_{\ell}$. Then we have \begin{align}
    |f(O_I,x')-f(O_I,x)| \leq \epsilon \| O_I \|,\end{align} where $x' = \chi_{S_{I,\delta}}(x)$, with $\delta$ now being bounded like 
    $\delta \ge \tilde{c}\max(
    \epsilon^{-1/(\nu-D)},
    \log^{2/\mu}(3C_2/\epsilon))$, with the constants being specified in Appendix \ref{sec:power law alpha gt 2d}.

\end{prop}

Power law interactions with $\alpha\in(2D,2D+1)$ introduce two new features in the scaling of $\delta$ with $\epsilon$ compared to exponential decay: a polynomial scaling with $1/\epsilon$ with exponent $1/(\nu-D)$ and the logarithmic dependency $\log^{2}(1/\epsilon)$ gets replaced with $\log^{2/\mu}(1/\epsilon)$.
We show in Figure \ref{fig:exponents_epsilon} that both $1/(\nu-D)$ and $1/\mu$ are monotonically decreasing with $\alpha$, and as $\alpha\to 2D$ both exponents diverge.

\begin{figure}[ht!]
\centering
    \begin{subfigure}[t]{0.5\textwidth}
        \centering
        \includegraphics[width=.75\textwidth]{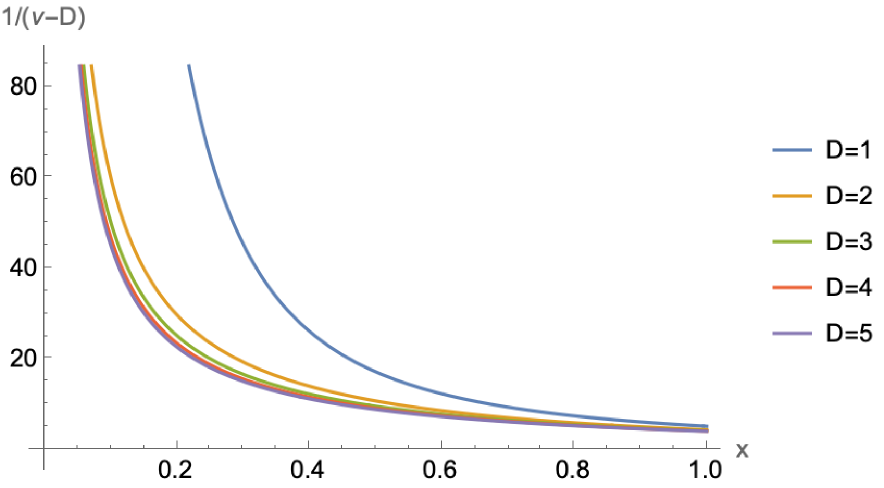}
        \caption{}
    \end{subfigure}%
    ~ 
    \begin{subfigure}[t]{0.5\textwidth}
        \centering
        \includegraphics[width=.75\textwidth]{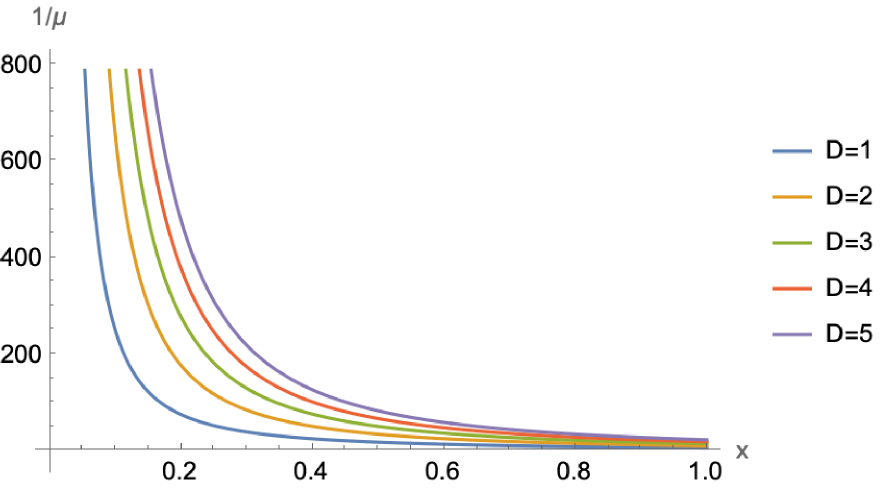}
        \caption{}
    \end{subfigure}
    \caption{
    \textbf{(a)} $1/(\nu-D)$ as a function of $x=\alpha-2D$ for $D=1,\dots,5$.
    \textbf{(b)} $1/\mu$ as a function of $x=\alpha-2D$ for $D=1,\dots,5$.
    }
    \label{fig:exponents_epsilon}
\end{figure}

The proof of this result is similar to that of the case of exponentially decaying interactions and relies on previously derived Lieb Robinson bounds for systems with interactions decaying as a power law \cite{Tran_2021}.
The light cone, which is linear in the case of exponentially decaying or short range interactions, is algebraic for power law interactions.
The core of the proof generalises the bounds to series involving exponentially decaying functions of Appendix \ref{sec:Exponential decay} to the case of power law decay. The summability conditions of the series result in the restrictions to the power law exponent.

\subsubsection{Power law with $\alpha>2D+1$}
\label{sec:main - power law alpha gt 2d plus one}

We now note that the results of Section \ref{sec:main - Power law > 2D} can be applied to Hamiltonians with $\alpha>2D+1$ as well. Indeed if the interactions in the Hamiltonian decay as $1/r^{\alpha}$ with $\alpha>2D+1$ we can use that for $r\ge 1$, $1/r^{\alpha}<1/r^{2D+1-\zeta}$ for any $\zeta>0$ and apply Proposition \ref{prop:approx_alpha_gt_2d} and Corollary \ref{cor:delta_eps_alpha_gt_2D} with $\alpha=2D+1-\zeta$, $\zeta\in (0,1)$:
\begin{align}
    \frac{|f(O_I, x')-f(O_I, x)|}{\|O_I\|}\le \epsilon\,,
\end{align}
if 
\begin{align}
    \delta &\ge \tilde{c}\max(
    \epsilon^{-1/(\nu-D)},
    \log^{2/\mu}(3C_2/\epsilon))|_{\alpha=2D+1-\zeta}\,,\\
    \tilde{c}&=\max((3C_1)^{1/(\nu-D)},(c/b)^{1/\mu})|_{\alpha=2D+1-\zeta}\,.
\end{align}
In Figure \ref{fig:exponents_epsilon_2D_1} we plot the exponents $1/(\nu-D)$ and $1/\mu$ at $\alpha=2D+1-\zeta$ with $\zeta=10^{-12}$.

\begin{figure}[ht!]
\centering
    \begin{subfigure}[t]{0.5\textwidth}
        \centering
        \includegraphics[width=.75\textwidth]{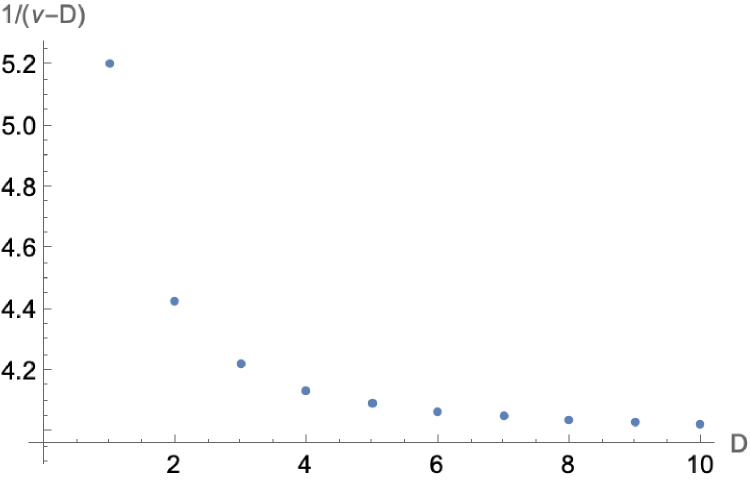}
        \caption{}
    \end{subfigure}%
    ~ 
    \begin{subfigure}[t]{0.5\textwidth}
        \centering
        \includegraphics[width=.75\textwidth]{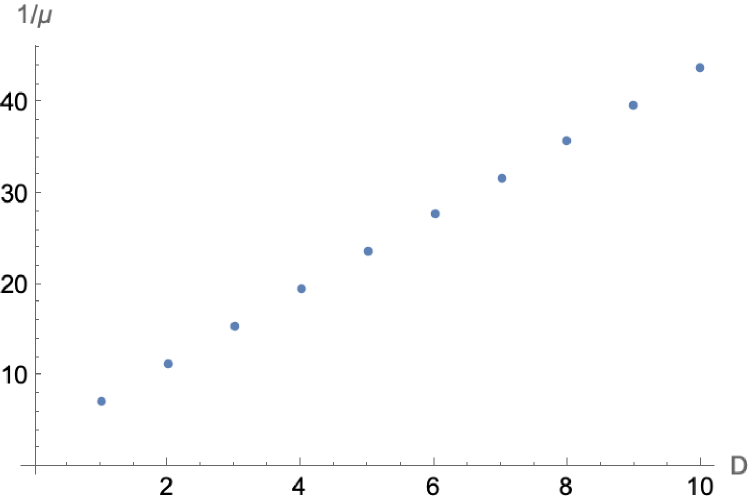}
        \caption{}
    \end{subfigure}
    \caption{
    \textbf{(a)} $1/(\nu-D)$ as a function of $D$ at $\alpha=2D+1-\zeta$ with $\zeta=10^{-12}$.
    \textbf{(b)} $1/\mu$ as a function of $D$ at $\alpha=2D+1-\zeta$ with $\zeta=10^{-12}$.
    }
    \label{fig:exponents_epsilon_2D_1}
\end{figure}

So we can extend the results derived for $\alpha\in(2D,2D+1)$ to all $\alpha>2D$. 
In the following, when we use Corollary \ref{cor:delta_eps_alpha_gt_2D} for all $\alpha>2D$, we will understand that if $\alpha>2D+1$ we should replace $\nu,\mu$ with their value at $2D+1-\zeta$.
A more refined analysis for $\alpha>2D+1$ with improved scaling is possible if we use the stronger Lieb-Robinson bounds in this regime \cite{Kuwahara_2020}. In this case, the dependency on $\alpha$ is such that we recover the short range result as $\alpha\to\infty$.

\subsection{Approximation of general $p$-body observables}

Now we consider general observables of the form
\begin{align}
    O = \sum_{I\in {\cal P}_p(\lambda)} O_I\,,
\end{align}
where $O_I$ is supported on the qubits $I$ as before, and we assume $p=\mathcal{O}(1)$.
Then, if we choose $\delta$ as in corollary
\ref{cor:delta_eps_exp} and \ref{cor:delta_eps_alpha_gt_2D} respectively for exponential and power law decay interactions with exponent $\alpha>2D$, we have
\begin{align}
    &|\Tr(O\rho(x)) - \sum_{I\in {\cal P}_p(\Lambda)}f(O_I,\chi_{S_{I,\delta}}(x))|
    \le
    r(O)
    \epsilon\,,\\
    \label{eq:rO}
    &r(O) = \sum_{I\in {\cal P}_p(\Lambda)} \|O_I\| \,.
\end{align}
So we can approximate general $p$-body observables $O$ provided $r(O)$ is small.
If we decompose $O$ as a sum over Paulis, $O=\sum_{P} \alpha_P P$, then we can bound the difference with $\sum_{P} \alpha_P f(P,\chi_{\text{dom}(P), \delta})$ by $\sum_{P} |\alpha_P| \epsilon$.
In \cite{lewis2023improved} it was shown that if  $\alpha_P$ is non-zero only for geometrically local $P$, then one can bound $\sum_{P}|\alpha_P|\le C \|O\|$.
In the following, we are going to assume that we know explicitly $\|O_I\|$, or a bound of it, which is typically the case in practice where we know what observable we want to measure, so that we can renormalise $O'=O/r(O)$ so that $r(O')=1$ and we achieve the bound:
\begin{align}
    |\Tr(O'\rho(x)) - \sum_{I\in {\cal P}_p(\Lambda)}f(O'_I,\chi_{S_{I,\delta}}(x))|
    \le
    \epsilon\,.
\end{align}

\subsubsection{Approximation of the energy}
\label{sec:Approximation of the energy}
In the previous sections we assumed that the observable $O_I$ does not depend on $x$. This leaves out the important task of predicting the energy of the system, where $O_I = h_I(x_I)$. For this we need to modify the derivations above as follows.
We go back to the computation of the gradient of $f(h_I(x_I),x)$:
\begin{align}
    &\|\nabla_J
    f(h_I,x)
    \|
    =
    \|
    \Tr(\nabla_J h_I \rho(x))
    \|
    +    
    \|
    \Tr(h_I [D_{J}(x),\rho(x)])
    \|
    \,.
\end{align}
We need to bound the first term:
\begin{align}
    \|
    \Tr(\nabla_J h_I \rho(x))
    \|
    \le
    \|
    \nabla_J h_I 
    \|
    =
    \delta_{I,J}
    \|
    \nabla_I h_I
    \|
\end{align}
so that the difference $|f(h_I,x)-f(h_I,x')|$ has the following extra term w.r.t.~that of Section \ref{sec:Dependency of observables on Hamiltonian parameters}, with $R=d(I,J)$:
\begin{align}
\left[\sum_{J\in {\cal P}_k(\Lambda)}
    \id\left(R\le \delta
    \wedge
    \diam(J)> \delta
    \right)
+
    \sum_{J\in {\cal P}_k(\Lambda)}
    \id\left(
    R>\delta\right)
\right]
\delta_{I,J}
    \|
    \nabla_I h_I
    \|=
    \id\left(
    \diam(I)> \delta
    \right)
    \|
    \nabla_I h_I
    \|
    \,.
\end{align}
For exponentially decaying interactions, we can bound this by $\e^{-\nu \delta}$, which is subleading compared to the terms in \eqref{eq:diff_OI_exp}, and thus does not alter the scaling of $\delta$ with $\epsilon$ to ensure that $|f(O_I, \chi_{S_{I,\delta}}(x))-f(O_I, x)|\le \epsilon \|O_I\|$
also in the case of $O_I=h_I(x_I)$.

In the case of power law interactions with $\alpha>2D$, we have from Proposition \ref{prop:approx_alpha_gt_2d} that
\begin{align}
    \id\left(
    \diam(I)> \delta
    \right)
    \|
    \nabla_I h_I
    \|
    \le 
    \id\left(
    \diam(I)> \delta
    \right)
    \diam(I)^{-(|I|-2)D-\alpha}
    \le 
    \delta^{-(|I|-2)D-\alpha}
    \le 
    \delta^{-\alpha}\,.
\end{align}
We assumed $|I|>1$ since for $|I|=1$ the condition $\diam(I)>\delta$ cannot be satisfied.
After some algebra we have, with $x=\alpha-2D$,
\begin{align}
    &(\nu-D-\alpha)16 (2 D+2 x-1) \left(D+x^2+x\right)^2
    =
    -32 D^3 (2 D-1)-72 D^2 (3 D-1) x-(80 D+39) x^5\\
    &-2 (D (32 D+91)+7) x^4-8( D (37 D+8)-1) x^3-4 D (D (32 D+51)-12) x^2-22
   x^6
\end{align}
and $\nu-D-\alpha\le 0$ since it is the ratio of a polynomial with negative coefficients and one with positive coefficients for all $D=1,2,\dots$
So 
$-\alpha\le -\nu +D$ and 
$\delta^{-\alpha}$ is subleading in the bound of Proposition \ref{prop:approx_alpha_gt_2d} and so does not alter the scaling of $\delta$ with $\epsilon$ to ensure that $|f(O_I, \chi_{S_{I,\delta}}(x))-f(O_I, x)|\le \epsilon \|O_I\|$
for $O_I=h_I(x_I)$ also in the case of power law interactions.

For illustration we now compute $r(H)$, with $r(O)$ as in \eqref{eq:rO}, for the case of power law decay with exponent $\alpha$ and $k=2$.
Let us assume that we have a cubic system of side $L$, so that $n=L^D$.
We have, assuming also that the $k=1$ terms to have norm $1$:
\begin{align}
    r(H)&=\sum_{ij}
    \|h_{ij}\|
    +
    \sum_i \|h_i\|
    \le
    \sum_{ij}
    (1+d(i,j))^{-\alpha}
    +
    \sum_i 1
    \\
    &=\sum_i \sum_j 
    (1+d(i,j))^{-\alpha}
    +n
    \le
    \sum_i 
    \int_{\mathbb{R}^D} \dd^D x
    (1+\|x-i\|)^{-\alpha}+n
    \\
    &\le  
    \Omega_D \sum_i 
    \int_{0}^\infty \dd r
    (1+r)^{-\alpha}+n
    \le 
    \frac{\Omega_D}{1-\alpha}\sum_i 1+n
    = 
    \frac{\Omega_D}{1-\alpha}n +n \,.
\end{align}
Here $\Omega_D$ is a constant and 
so $\sum_I\|h_I\|\le C n$ and
 the observable $H/n$ can be approximated by 
$\sum_{I}
    f(h_I(x_I),
    \chi_{S_{I,\delta}}(x)
    )$.
A similar computation can be done for $k>2$ and for
exponentially decaying interactions.

\section{Machine learning model and sample complexity bounds}
\label{sec:main - Machine learning model}

\subsection{Generalisation bounds for single observable}

We assume we are given data $S=\{x^{(i)}, y^{(i)}\}_{i=1}^N$ with
\begin{align}
\label{eq:eps_2}
|y^{(i)} -\Tr(O\rho(x^{(i)}))|\le \epsilon_2\,,\quad 
\end{align}
for an observable as in \eqref{eq:O}:
\begin{align}
    O=\sum_{I\in {\cal S}}O_I\,,\quad 
    {\cal S}\subseteq {\cal P}_p(\Lambda)\,.
\end{align}
The machine learning algorithm learns the weights $w$ of 
\begin{align}
    h(x)
    =
    w\cdot \phi(x)
    =
    \sum_{I\in {\cal S}}\sum_{x'\in X_{I,\delta}} (w)_{I,x'}\phi(x)_{I,x'}
    \,,\quad \phi(x)_{I,x'} = \id(x\in T_{x',I})\,,
\end{align}
with $X_{I,\delta}$ and $T_{x',I}$ as in \eqref{eq:XIdelta} and \eqref{eq:TxIdelta}.
We are going to study the generalisation properties of the predictor obtained by minimising the training error 
\begin{align}
    R_S(h)
    =
    \frac{1}{N}\sum_{i=1}^N |h(x^{(i)}) - y^{(i)}|^2
    \,.
\end{align}
Given the bound on the norm of $w$ in Lemma \eqref{lemma:bounds_m_phi_B}, we consider the optimisation problem
\begin{align}
    \label{eq:lasso}
    \min_{w\in {\cal C}_B} R_S(h_w)
    \,,\quad 
    {\cal C}_B=\{ w\in \mathbb{R}^m\,\,\,|\,\,\,\|w\|_1\le B\}\,.
\end{align}
This algorithm is called LASSO in the statistics literature and has been extensively studied \cite{mohri2018foundations}.

We have the following result which adapts Theorem 1 of \cite{lewis2023improved} to our setting of long range interactions and observables. 

\begin{theorem}\label{thm:sample_complexity_single_obs}
Consider an observable $O$ 
\begin{align}
    O=\sum_{I\in {\cal S}}O_I\,,\quad 
    {\cal S}\subseteq {\cal P}_p(\Lambda)\,.
\end{align}
and a dataset $S=\{x^{(i)}, y^{(i)}\}_{i=1}^N$ with
\begin{align}
|y^{(i)} -\Tr(O\rho(x^{(i)}))|\le \epsilon_2\,.
\end{align}
Choose $\delta$ as 
\begin{align}
    \delta > \max(b^{-1}\xi_*, b^{-1}\eta_*, c\log^2(1/\epsilon_1))
\end{align}
for exponentially decaying interactions, where the constants are as in Proposition \ref{prop:approx_exp_decay} and Corollary \ref{cor:delta_eps_exp} and as
\begin{align}
    \delta > \max((b^{-1}\xi_*)^{1/\mu}, (b^{-1}\eta_*)^{1/\mu}, \tilde{c}\log^{2/\mu}(1/\epsilon_1),
    \tilde{c}\epsilon_1^{-1/(\nu-D)})\,,
\end{align}
for power law decay with $\alpha>2D$, where the constants are as in Proposition \ref{cor:delta_eps_alpha_gt_2D}, Corollary \ref{cor:delta_eps_alpha_gt_2D} , and for $\alpha>2D+1$, $\nu,\mu$ are evaluated at $\alpha=2D+1-\zeta$.
Then form the linear predictor $h_*(x)=w_*\cdot \phi(x)$ with $w_*$ a solution of the optimisation problem \eqref{eq:lasso} such that
\begin{align}
    \label{eq:training_bound_eps3}
    R_S(h_*)
    \le 
    \frac{\epsilon_3}{2}
    +
    \min_{w\in {\cal C}_B} R_S(h_w)\,.  
\end{align} 
If the number of samples is
\begin{align}
\label{eq:ScalingComplexityEpsilonDependence}
    N 
    =
    r(O)^4
    \epsilon_3^{-2}
    {\cal N}(\epsilon_1) 
    \mathcal{O}(\log(|{\cal S}|/\delta))    
    \,,
\end{align}
with 
\begin{align}
    \omega = \frac{kD}{\nu-D}\,,\quad
    {\cal N}(\epsilon)
    =
    \begin{cases}
    2^{\mathcal{O}(\operatorname{polylog}(1/\epsilon))} & \textup{ exp decay},\\
    2^{\mathcal{O}(\epsilon^{-\omega} \log(1/\epsilon))} & \textup{ power law }\alpha>2D\,,
\end{cases}
\end{align}
the following generalisation bound holds with probability at least $1-\gamma$:
    \begin{align}
    \label{eq:gen_bound_eps}
    \mathbb{E}_{(x,y)\sim {\cal D}}|h_*(x)-y|^2
    \le 
    (\epsilon_1+\epsilon_2)^2+\epsilon_3\,.
    \end{align}
\end{theorem}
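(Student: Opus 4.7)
The plan is to combine the local approximation result from Section \ref{sec:Approximation by discretisation} with a standard Rademacher-complexity analysis of LASSO over the $\ell_1$ ball ${\cal C}_B$. First I would introduce the discretised target predictor $g(O,x) = w' \cdot \phi(x)$ with weights $(w')_{I,x'} = f(O_I, x')$ defined in \eqref{eq:target_w}. By Lemma \ref{lemma:bounds_m_phi_B}, the choice of $\delta$ in the theorem hypothesis implies $\|w'\|_1 \le B = r(O){\cal N}(\epsilon_1)$ and $m_\phi \le |{\cal S}|{\cal N}(\epsilon_1)$, so $w'$ is feasible for the LASSO problem \eqref{eq:lasso}. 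By \eqref{eq:diff_f_target} we have $|g(O,x) - \Tr(O\rho(x))| \le \epsilon_1 r(O)$ for every $x$, and combined with \eqref{eq:eps_2} this gives $|g(O,x^{(i)}) - y^{(i)}| \le \epsilon_1 r(O) + \epsilon_2$ on every sample, so both $R_S(h_{w'}) \le (\epsilon_1 r(O)+\epsilon_2)^2$ and $\mathbb{E}|g(O,x)-y|^2 \le (\epsilon_1 r(O)+\epsilon_2)^2$ hold.

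Next I would bound the generalisation gap via Rademacher complexity of the LASSO hypothesis class. Each feature $\phi(x)_{I,x'}$ is an indicator in $[0,1]$, and $\|w\|_1 \le B$ implies $|h_w(x)| \le B$, so the clipped squared loss is $\mathcal{O}(B)$-Lipschitz. The standard bound for linear predictors on an $\ell_1$ ball together with Talagrand's contraction yields, with probability at least $1-\gamma$,
\begin{align}
\sup_{w\in{\cal C}_B}\bigl|\mathbb{E}|h_w(x)-y|^2 - R_S(h_w)\bigr| \le \mathcal{O}\!\left(B^2\sqrt{\tfrac{\log(m_\phi/\gamma)}{N}}\right).
\end{align}
Applying this simultaneously to the learned $h_*$ and the feasible $w'$, combining with \eqref{eq:training_bound_eps3} and the empirical-risk bound of the preceding paragraph gives
\begin{align}
\mathbb{E}_{(x,y)\sim{\cal D}}|h_*(x)-y|^2 \le (\epsilon_1 r(O)+\epsilon_2)^2 + \tfrac{\epsilon_3}{2} + \mathcal{O}\!\left(B^2\sqrt{\tfrac{\log(m_\phi/\gamma)}{N}}\right).
\end{align}

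Finally I would choose $N$ so that the last term is at most $\epsilon_3/2$; solving gives $N = \mathcal{O}(B^4 \epsilon_3^{-2}\log(m_\phi/\gamma))$. Substituting $B = r(O){\cal N}(\epsilon_1)$ and $\log(m_\phi/\gamma) = \mathcal{O}(\log(|{\cal S}|/\gamma) + \log{\cal N}(\epsilon_1))$, and absorbing the $\log{\cal N}(\epsilon_1)$ factor into the ${\cal N}(\epsilon_1)$ prefactor, reproduces \eqref{eq:ScalingComplexityEpsilonDependence}. The main obstacle I anticipate is controlling the Lipschitz constant of the squared loss cleanly: both $h_w$ and the label $y$ must be uniformly bounded (e.g.~by $B$ and by $\|O\|+\epsilon_2$ respectively) to invoke contraction with the right constant, and the $r(O)$ factor appearing in the approximation error needs to be reconciled with the $(\epsilon_1+\epsilon_2)^2$ form stated in \eqref{eq:gen_bound_eps}, which is consistent with the normalisation convention $r(O)=1$ adopted at the end of Section \ref{sec:Approximation by discretisation} (for a non-normalised $O$ one absorbs $r(O)$ into $\epsilon_1$ or into $B$, as already reflected in \eqref{eq:ScalingComplexityEpsilonDependence}).
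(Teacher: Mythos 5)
Your proposal is correct and follows essentially the same route as the paper: the paper's proof also decomposes into (i) feasibility of the discretised target weights $w'$ via Lemma \ref{lemma:bounds_m_phi_B} and the pointwise approximation bound, giving $R_S(h_*)\le \epsilon_3/2+(\epsilon_1+\epsilon_2)^2$, and (ii) a uniform generalisation bound over ${\cal C}_B$ with the $B^2\sqrt{\log(m_\phi/\gamma)/N}$ rate, which the paper imports as a black-box LASSO bound from \cite{mohri2018foundations} (Lemma \ref{lemma:gen_bound}) rather than re-deriving it via Rademacher complexity and contraction as you sketch. The only cosmetic difference is that you invoke two-sided uniform convergence at $w'$, which is unnecessary since the empirical risk of $h_{w'}$ is already bounded pointwise, but this does not affect correctness.
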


The sample complexity depends polynomially on $r(O)$ and logarithmically on $|{\cal S}|$. If we normalise $O$ such that $r(O)=\mathcal{O}(1)$, then the dependency on $n$ will be $\mathcal{O}(\log(n))$ as long as $O$ is the sum of $k$-body interactions with $|{\cal S}|=\mathcal{O}(n^k)$ with $k=\mathcal{O}(1)$. This is the same for both cases of exponentially decaying interactions and power law interactions with $\alpha>2D$. 
However, the dependency on $1/\epsilon$ goes from quasi-polynomial to exponential as we go from exponential to power law interactions. Now we will prove this theorem.
\begin{proof}
The proof relies on the generalisation bound for LASSO.
\begin{lemma}[\cite{mohri2018foundations}]\label{lemma:gen_bound}
    Denote the input space ${\cal X}\subseteq \mathbb{R}^A$ and the output space ${\cal Y}$. We consider a class of linear predictors $h(x) = w\cdot x$ with $\|w\|\le B$. If ${\cal D}$ is the distribution over ${\cal X}\times {\cal Y}$ from which the data $S$ of size $N$ is drawn, then with probability at least $1-\gamma$ we have
    \begin{align}
        \mathbb{E}_{(x,y)\sim {\cal D}}|h(x)-y|^2
        \le R_S(h)
        +2r_\infty B M \sqrt{\frac{2\log(2A)}{N}}+
        M^2 \sqrt{\frac{\log(1/\gamma)}{2N}}
    \end{align}
    with $\|x\|_\infty \le r_\infty$, $|h(x)-y|\le M$ for all $(x,y)\in {\cal X}\times {\cal Y}$, and $R_S$ is the training error.
\end{lemma}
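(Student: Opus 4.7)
The plan is to derive this as a standard Rademacher complexity bound for the $\ell_1$-constrained linear predictor class composed with the squared loss. I would organise the argument in four steps: concentration of the empirical process, symmetrisation, contraction to remove the loss, and a direct computation of the Rademacher complexity of the predictor class.

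First I would define the loss class $\mathcal{L} = \{ (x,y)\mapsto |w\cdot x - y|^2 : \|w\|_1\le B\}$ and apply McDiarmid's bounded differences inequality to $\sup_{\ell\in\mathcal{L}} (\mathbb{E}_{\cal D}\ell - \tfrac{1}{N}\sum_i \ell(x^{(i)},y^{(i)}))$. Since each individual loss is bounded in $[0,M^2]$, changing a single sample changes the supremum by at most $M^2/N$, so McDiarmid yields that with probability at least $1-\gamma$ the supremum is within $M^2\sqrt{\log(1/\gamma)/(2N)}$ of its expectation. This accounts for the last term in the stated bound.

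Next I would bound the expected supremum by the Rademacher complexity of $\mathcal{L}$ via the standard ghost-sample symmetrisation argument, obtaining $\mathbb{E}\sup_{\ell} \bigl(\mathbb{E}_{\cal D}\ell - \widehat{\mathbb{E}}\ell\bigr) \le 2\, \mathfrak{R}_N(\mathcal{L})$. To replace $\mathcal{L}$ by the predictor class $\mathcal{H} = \{x\mapsto w\cdot x : \|w\|_1\le B\}$, I would invoke the Ledoux–Talagrand contraction lemma: the map $u\mapsto (u-y)^2$ is $2M$-Lipschitz on the interval on which $u-y$ takes values in $[-M,M]$, hence $\mathfrak{R}_N(\mathcal{L}) \le 2M\,\mathfrak{R}_N(\mathcal{H})$.

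Finally I would compute $\mathfrak{R}_N(\mathcal{H})$. By duality of $\ell_1$ and $\ell_\infty$,
\begin{align}
\mathfrak{R}_N(\mathcal{H})
= \mathbb{E}_\sigma \sup_{\|w\|_1\le B}\frac{1}{N} w\cdot\Bigl(\sum_{i=1}^N \sigma_i x^{(i)}\Bigr)
= \frac{B}{N}\,\mathbb{E}_\sigma \Bigl\|\sum_{i=1}^N \sigma_i x^{(i)}\Bigr\|_\infty.
\end{align}
The $\ell_\infty$ norm is the maximum over $2A$ linear functionals $\pm e_j\cdot \sum_i \sigma_i x^{(i)}$, each a sub-Gaussian random variable with parameter $\sqrt{\sum_i (x^{(i)}_j)^2}\le r_\infty\sqrt{N}$. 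Massart's finite maximum lemma then gives $\mathbb{E}_\sigma \|\sum_i \sigma_i x^{(i)}\|_\infty \le r_\infty\sqrt{2N\log(2A)}$, so $\mathfrak{R}_N(\mathcal{H})\le B r_\infty \sqrt{2\log(2A)/N}$. Combining the three pieces yields the stated bound, with the factor $2$ from symmetrisation combining with the $2M$ contraction factor to produce the coefficient $2r_\infty B M$ in the middle term.

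The main obstacle I expect is the final Rademacher calculation: the contraction step requires some care because one must verify that the argument $w\cdot x - y$ remains in a range on which the quadratic is $2M$-Lipschitz (which follows from the hypothesis $|h(x)-y|\le M$), and the Massart step requires a correct sub-Gaussian parameter that uses the uniform $\ell_\infty$ bound on $x$. The rest of the argument (McDiarmid, symmetrisation) is routine but must be assembled in the correct order to get all three additive terms aligned with the stated bound.
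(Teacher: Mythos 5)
The paper does not prove this lemma at all: it is imported verbatim from the textbook \cite{mohri2018foundations} (it is their generalisation bound for $\ell_1$-constrained linear regression), so there is no in-paper argument to compare against. Your proposal reconstructs exactly the standard derivation behind that textbook result -- McDiarmid for the high-probability term, symmetrisation to pass to the Rademacher complexity of the loss class, Talagrand contraction using that $u\mapsto u^2$ is $2M$-Lipschitz on the range guaranteed by $|h(x)-y|\le M$, and Massart's lemma over the $2A$ vertices of the dual $\ell_\infty$ ball to get $\mathfrak{R}_N({\cal H})\le B r_\infty\sqrt{2\log(2A)/N}$. The structure and the order of the steps are correct, the McDiarmid increment $M^2/N$ and the sub-Gaussian parameter $r_\infty\sqrt{N}$ are both right, and you correctly read the (underspecified) constraint $\|w\|\le B$ as an $\ell_1$ constraint, which is what the duality step and the $\log(2A)$ dependence require.

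The one concrete issue is your final bookkeeping of constants. Symmetrisation contributes a factor $2$ and contraction contributes the Lipschitz constant $2M$, so assembling your own steps gives a middle term $2\cdot 2M\cdot \mathfrak{R}_N({\cal H})\le 4\,r_\infty B M\sqrt{2\log(2A)/N}$, not the coefficient $2\,r_\infty B M$ you claim these factors ``combine to produce''; $2\times 2M=4M$. This does not threaten anything downstream -- the constant is absorbed into the ${\cal O}(\cdot)$ in the sample complexity of Theorem \ref{thm:sample_complexity_single_obs} -- but as written the last sentence of your argument asserts an identity that your own derivation does not deliver. Either state the bound honestly with the factor $4$ (noting it is immaterial here), or supply the sharper argument that recovers the factor $2$ appearing in the cited statement.
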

To apply this, first we need to bound the training error.
\begin{lemma}[Lemma 15 of \cite{lewis2023improved}]
    Let $w_*$ be a solution of the optimisation problem \eqref{eq:lasso} such that
    \begin{align}
    R_S(h_*)
    \le 
    \frac{\epsilon_3}{2}
    +
    \min_{w\in {\cal C}_B} R_S(h_w)\,,\quad 
    h_*(x)=w_*\cdot \phi(x)\,.  
    \end{align}
    Also assume that 
    \begin{align}
        |f(O, x) - g(O,x)|\le \epsilon_1\,.
    \end{align}
    Then, with $\epsilon_2$ as in \eqref{eq:eps_2},
    \begin{align}
    R_S(h_{*})
    \le 
    \frac{\epsilon_3}{2}
    +
    (\epsilon_1 + \epsilon_2)^2\,.
\end{align}
\end{lemma}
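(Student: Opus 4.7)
The plan is to take the target weights $w'$ from \eqref{eq:target_w} as an explicit feasible competitor in the LASSO problem \eqref{eq:lasso}, and then use the quasi-optimality of $w_*$ together with two triangle-inequality steps to bound $R_S(h_*)$.

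\textbf{Step 1: feasibility of $w'$.} By Lemma \ref{lemma:bounds_m_phi_B} we have $\|w'\|_1 \le B$, so $w' \in {\cal C}_B$. The associated predictor is $h_{w'}(x) = w' \cdot \phi(x) = g(O,x)$, which is exactly the discretised approximation of $f(O,x)=\Tr(O\rho(x))$.

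\textbf{Step 2: pointwise error at the data points.} For each sample $(x^{(i)}, y^{(i)})$, apply the triangle inequality
\begin{align}
|h_{w'}(x^{(i)}) - y^{(i)}|
= |g(O,x^{(i)}) - y^{(i)}|
\le |g(O,x^{(i)}) - f(O,x^{(i)})| + |f(O,x^{(i)}) - y^{(i)}|
\le \epsilon_1 + \epsilon_2,
\end{align}
where the first term is bounded by the hypothesis $|f(O,x) - g(O,x)| \le \epsilon_1$ and the second by the data assumption \eqref{eq:eps_2}. Squaring and averaging over the $N$ samples gives $R_S(h_{w'}) \le (\epsilon_1 + \epsilon_2)^2$.

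\textbf{Step 3: invoke quasi-optimality.} Since $w' \in {\cal C}_B$ by Step 1, it is an admissible point for the minimisation in \eqref{eq:lasso}, so $\min_{w\in {\cal C}_B} R_S(h_w) \le R_S(h_{w'})$. Combining with the assumed quasi-optimality of $w_*$,
\begin{align}
R_S(h_*) \le \tfrac{\epsilon_3}{2} + \min_{w\in {\cal C}_B} R_S(h_w) \le \tfrac{\epsilon_3}{2} + R_S(h_{w'}) \le \tfrac{\epsilon_3}{2} + (\epsilon_1+\epsilon_2)^2,
\end{align}
which is the claimed bound.

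There is no real obstacle here: the only non-trivial ingredient is the feasibility $w' \in {\cal C}_B$, which has already been secured by Lemma \ref{lemma:bounds_m_phi_B}. Everything else is the standard ``compare to the target'' argument used in ERM analysis, combined with two applications of the triangle inequality.
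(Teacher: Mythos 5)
Your proposal is correct and follows the standard ``compare to the feasible target'' argument that underlies Lemma 15 of \cite{lewis2023improved}; the paper itself only cites that result without reproducing the proof, and your three steps (feasibility of $w'$ via Lemma \ref{lemma:bounds_m_phi_B}, the pointwise triangle inequality $|g(O,x^{(i)})-y^{(i)}|\le\epsilon_1+\epsilon_2$, and quasi-optimality of $w_*$) are exactly the intended chain of reasoning.
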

Next we use Lemma \ref{lemma:gen_bound} with ${\cal X}=\{0,1\}^{m_\phi}$, so that $r_\infty=1$ and 
\begin{align}
    A=m_\phi \le 
    |{\cal S}| {\cal N}(\epsilon)\,.
\end{align}
Since $\Tr(O\rho)-\epsilon_2\le y\le \Tr(O\rho)+ \epsilon_2$, we have $|y|\le |\Tr(O\rho)|+|\epsilon_2|$, which can be verified by checking the two cases $y>0$ and $y<0$. Then from Lemma \ref{lemma:bounds_m_phi_B},
\begin{align}
    |h(x)-y|
    \le
    |w\cdot \phi(x)|+|y|
    \le 
    \|w\|_1 \|\phi(x)\|_\infty + \|O\|+\epsilon_2
    \le 
    r(O){\cal N}(\epsilon_1)+r(O)+\epsilon_2
    \le r(O){\cal N}(\epsilon_1)
    \,,
\end{align}
where we used that $\epsilon_2 = \mathcal{O}(1)$ since we consider asymptotics in $1/\epsilon_2$, and so $M=B=r(O){\cal N}(\epsilon_1)$.
Now we want to find $N$ such that the generalisation bound is small
\begin{align}
    G = 2r_\infty B M \sqrt{\frac{2\log(2A)}{N}}+
    M^2 \sqrt{\frac{\log(1/\gamma)}{2N}}
    \le \frac{\epsilon_3}{2}\,.
\end{align}
We have, since ${\cal N}(\epsilon_1)^2={\cal N}(\epsilon_1)$, $2{\cal N}(\epsilon_1)={\cal N}(\epsilon_1)$:
\begin{align}
    G
    &\le
    r(O)^2
    {\cal N}(\epsilon_1)
    \frac{1}{\sqrt{N}}
    \left(
    \sqrt{2}
    \sqrt{\log(|{\cal S}|{\cal N}(\epsilon_1))}+
    \frac{1}{\sqrt{2}}
    \sqrt{\log(1/\gamma)}
    \right)\\
    &\le 
    r(O)^2{\cal N}(\epsilon_1)
    \frac{1}{\sqrt{N}}
    \left(
    \sqrt{\log(|{\cal S}|{\cal N}(\epsilon_1))}+
    \sqrt{\log(1/\gamma)}\right)\,.
\end{align}
So we need 
\begin{align}
    N 
    &= \frac{4}{\epsilon_3^2}
    r(O)^4 {\cal N}(\epsilon_1)
    \left(\sqrt{\log(|{\cal S}|)+\log{\cal N}(\epsilon_1)}+
    \sqrt{\log(1/\gamma)} \right)^2\\
    &=
    r(O)^4 
    \epsilon_3^{-2}
    {\cal N}(\epsilon_1)
    \mathcal{O}(\log(|{\cal S}|/\gamma))    \,.
\end{align}
In the second equality we used that $(a+b)^2\le 2(a^2+b^2)$, which follows from $0\le (a-b)^2=a^2+b^2-2ab$, and that 
${\cal N}(\epsilon_1)\log{\cal N}(\epsilon_1)={\cal N}(\epsilon_1)$.
\end{proof}

Assuming $r(O)=\mathcal{O}(1), \gamma=\mathcal{O}(1)$ in the definition of the data size $N$, as remarked in \cite{lewis2023improved} the training time of LASSO is $\mathcal{O}(m_\phi \log(m_\phi)/\epsilon_3^2)
=
\mathcal{O}(|{\cal S}| N)
$.

\subsection{Sample complexity reduction from equivariance}
\label{sec:SampleComplexityReductionEquivariance}

\subsubsection{Equivariance of observables}
\label{sec:Equivariance of observables}

We first recall given a map $f:{\cal X}\to {\cal Y}$ and a group $G$ acting on ${\cal X}$, ${\cal Y}$,
$f$ is called $G$-equivariant if $f(g\cdot x)=g\cdot f(x)$ for all $g\in G$ and $x$ in the input space ${\cal X}$, where $g\cdot$ denotes the group action.

Given a Hamiltonian 
    \begin{align}
    H(x)
    =
    \sum_{I\in {\cal E}}h_{I}(x_{I})
    \,,
\end{align}
with ${\cal E}\subseteq {\cal P}(\Lambda)$, we define the interaction hypergraph as the triple $\mathcal{I}=(\Lambda,\mathcal{E},h)$, where $h$ is a function that assigns to each $I\in \mathcal{E}$ the interaction term $h_I(x_I)$.
The automorphism group $G=\text{Aut}(\mathcal{I})$ of the interaction hypergraph is the set of permutations of vertices that permute the interactions terms; specifically, $g\in G$ if
\begin{align}
    \hat{g} h_I(x_I) \hat{g}^{-1}
    =
    h_{gI}(x_I)\,,
\end{align}
where $\hat{g}$ is the unitary representation of a permutation $g$ on the Hilbert space of the quantum theory and $gI$ is the permuted set of vertices, namely
if $I=(i_1,\dots,i_\ell)$, $gI=(gi_1,\dots,gi_\ell)$.
For example, for a system with periodic boundary conditions and pairwise interactions
$h_{ij}(x_{ij})=x_{ij}f(d(i,j)) H_{ij}$ with $d(i,j)$ the Euclidean distance and $f$ an arbitrary function, $G$ is the Euclidean group of symmetries of the lattice under which $d(i,j)$ is invariant.
This setting includes nearest neighbor and long-range interactions.

With this definition, we have the following result:
\begin{lemma}\label{lemma:equi_obs}
Consider a Hamiltonian with interaction hypergraph ${\cal I}$.
Then the function $f(O_I,x)=\Tr(O_I\rho(x))$ is $\text{Aut}({\cal I})$-equivariant: 
\begin{align}
f(O_{gI},x)=
f(O_I,g\cdot x)
\,,
\end{align} 
for all $x$ and $g\in \text{Aut}({\cal I})$, where $(g\cdot x)_J=x_{gJ}$. 
\end{lemma}
\begin{proof}
To prove the Lemma, we first we show that the Hamiltonian, seen as a map from $x$ to an operator on the qubits, is $\text{Aut}({\cal I})$-equivariant.
If $g\in \text{Aut}({\cal I})$ is a permutation of the vertices whose action on vertex $i$ is denoted by $gi$, the action on the hyperedge $I=\{i_1,\dots,i_{\ell}\}$ is $gI=\{gi_1,\dots,g i_\ell\}$ and by definition it fixes ${\cal E}=\{ I_0,\dots, I_{m-1} \}$, i.e. permutes the sets $I_0,\dots,I_{m-1}$:
\begin{align}
    g{\cal E}
    =
    \{ gI_0,\dots, gI_{m-1} \}
    =
    {\cal E}\,.
\end{align}
Then we have, denoted by $\hat{g}$ the unitary representation of $g$ on the space of the qubits,
\begin{align}
    \hat{g}H(x)\hat{g}^{-1}
    =
    \sum_{I\in {\cal E}}
    \hat{g}h_{I}(x_{I})\hat{g}^{-1}
    =
    \sum_{I\in {\cal E}}
    h_{gI}(x_{I})
    =
    \sum_{J\in {\cal E}}
    h_{J}(x_{g^{-1}J})
    =
    H(g^{-1}\cdot x)\,,
\end{align}
where we set $J=gI$ and used that $g{\cal E}={\cal E}$ to move the action from $h_{I}$ to its argument.
Further,
\begin{align}
    &Z_\beta(x)
    =
    \Tr(\e^{-\beta H(x)})
    =
    \Tr(\hat{g}\e^{-\beta H(x)}\hat{g}^{-1})
    =
    \Tr(\e^{-\beta H(g^{-1}\cdot x)})
    =Z_\beta(g^{-1}\cdot x)
    \,,\\
    \label{eq:g_rho_g}
    &\hat{g}\rho(x)\hat{g}^{-1}
    =
    \lim_{\beta\to\infty}
    \hat{g}
    \frac{\e^{-\beta H(x)}}{
    Z_\beta(x)
    }
    \hat{g}^{-1}
    =
    \lim_{\beta\to\infty}
    \frac{\e^{-\beta H(g^{-1}x)}}{
    Z_\beta(g^{-1}\cdot x)
    }
    =
    \rho(g^{-1}\cdot x)\,,
\end{align}
so we have the equivariance property of the function $f$ for any $g\in \text{Aut}({\cal I})$:
\begin{align}
    f(O_{gI}, x)
    =
    \Tr(O_{gI} \rho(x))
    =
    \Tr(\hat{g}O_{I}\hat{g}^{-1} \rho(x))
    =
    \Tr(O_{I}\rho(g\cdot x))
    =
    f(O_I, g\cdot x)\,.
\end{align}
\end{proof}
We stress that this notion of equivariance is not the same as invariance of the Hamiltonian, which is a  more restrictive concept and of more limited use.
Also, note that for the case of interactions over edges of a graph, $\text{Aut}({\cal I})$ reduces to the automorphism group of the graph. 
For example, if $H$ is the Hamiltonian for qubits on a closed chain with nearest neighbor interactions, 
\begin{align}
    H(x)
    =
    \sum_{i=0}^{n-1}
    h_{i,i+1}(x_{i,i+1})\,,
\end{align}
with indices understood modulo $n$, $\text{Aut}({\cal I})$ contains the cyclic group of order $n$ corresponding to translations of the chain, whose generator $g$ acts as $gi=i+1\mod n$ and cyclically permutes the entries of $x$ as:
\begin{align}
    g\cdot 
    (x_{0,1},x_{1,2},\dots,x_{n-1,0})
    =
    (x_{1,2}, x_{2,3}, 
    \dots,
    x_{0,1})\,.
\end{align}

\subsubsection{Equivariance of the ML model}

A ML model for the observable $f(O_I,x)$ should also satisfy the equivariance constraint of Lemma \ref{lemma:equi_obs}. In fact we show here that, provided that the model satisfies the equivariance constraints, we can reduce the sample complexity.

\begin{prop}\label{prop:equi_ml_model}
    The model  
    \begin{align}
    h^{O_I}_w(x)
    =
    \sum_{x'\in X_{I,\delta}}
    w_{x'} \phi(x)_{I,x'}\,,\quad
    \phi(x)_{I,x'} = \id(x\in T_{x',I})\,,
    \end{align}
    satisfies the equivariance constraint $h^{O_{gI}}_{\tilde{w}}(x)=h^{O_{I}}_w(g\cdot x)$ 
    of Lemma \ref{lemma:equi_obs}
    if the weights transform as
\begin{align}
\tilde{w}_{g^{-1}\cdot x'}
=
w_{x'}
\,,
\end{align}
for any $x'$ and $g\in \text{Aut}({\cal I})$.
\end{prop}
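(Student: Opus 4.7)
My plan is to verify the equivariance identity by unpacking the definitions, using that an automorphism $g \in \text{Aut}(\mathcal{I})$ acts on $\Lambda$ by an isometry and therefore preserves the combinatorial structures that enter the feature map. The argument splits naturally into three bookkeeping steps (geometry of $S_{I,\delta}$, the discrete grid $X_{I,\delta}$, and the cells $T_{x',I}$) plus a final substitution.

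First, I would observe that $g$ preserves both the distance function on $\Lambda$ and cardinalities, so $d(gI, gJ) = d(I,J)$ and $\diam(gJ) = \diam(J)$. This immediately gives the set identity
\begin{align}
    g \cdot S_{I,\delta} = S_{gI,\delta}\,,
\end{align}
where $g \cdot$ denotes the induced action on $\mathcal{P}_k(\Lambda)$, i.e.\ $J \in S_{I,\delta} \iff gJ \in S_{gI,\delta}$.

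Next I would use the definition $(g \cdot x)_J = x_{gJ}$ from Lemma \ref{lemma:equi_obs} to show that the map $x' \mapsto g^{-1} \cdot x'$ is a bijection $X_{I,\delta} \to X_{gI,\delta}$. Indeed, if $x' \in X_{I,\delta}$, then $(g^{-1} \cdot x')_J = x'_{g^{-1}J}$, and since $g^{-1}J \notin S_{I,\delta}$ precisely when $J \notin S_{gI,\delta}$, the support condition is preserved; the discrete value condition is preserved because $|g^{-1}J| = |J|$, so $x'_{g^{-1}J} \in W_{\delta_2}^{|J|}$. The same calculation handles the cells: $g \cdot x \in T_{x',I}$ means $|(g \cdot x)_J - x'_J| = |x_{gJ} - x'_J| < \delta_2/2$ for all $J \in S_{I,\delta}$. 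Re-indexing by $J' = gJ$ (which ranges over $S_{gI,\delta}$) and recognising $x'_{g^{-1}J'} = (g^{-1} \cdot x')_{J'}$, this is exactly the condition $x \in T_{g^{-1} \cdot x', gI}$. Hence
\begin{align}
    \id(g \cdot x \in T_{x',I}) = \id(x \in T_{g^{-1} \cdot x', gI})\,.
\end{align}

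Finally I would substitute these into the model. Using the bijection $x' \leftrightarrow y = g^{-1} \cdot x'$ and the hypothesis $\tilde w_{g^{-1} \cdot x'} = w_{x'}$,
\begin{align}
    h^{O_I}_w(g \cdot x)
    &= \sum_{x' \in X_{I,\delta}} w_{x'} \id(g \cdot x \in T_{x',I}) \\
    &= \sum_{x' \in X_{I,\delta}} \tilde w_{g^{-1} \cdot x'} \id(x \in T_{g^{-1}\cdot x', gI})
    = \sum_{y \in X_{gI,\delta}} \tilde w_{y} \id(x \in T_{y, gI})
    = h^{O_{gI}}_{\tilde w}(x)\,,
\end{align}
which is the desired equivariance. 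The only place requiring care is the direction of the action ($g$ vs.\ $g^{-1}$) on indices versus on configurations, but since both $S_{I,\delta}$ and the indicator $\id(\cdot \in T_{x',I})$ are defined purely in terms of $d$, $\diam$, and componentwise inequalities, every step is forced by the definitions and no analytic estimate is needed. I do not expect any substantive obstacle beyond keeping these identifications straight.
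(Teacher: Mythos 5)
Your proof is correct and follows essentially the same route as the paper's: establish $gS_{I,\delta}=S_{gI,\delta}$, the induced bijection $X_{I,\delta}\to X_{gI,\delta}$, and the indicator identity for the cells $T_{x',I}$, then substitute using the weight-tying condition. The only cosmetic difference is that you expand $h^{O_I}_w(g\cdot x)$ rather than $h^{O_{gI}}_{\tilde w}(x)$, and you make explicit the (paper-implicit) assumption that $g$ preserves the lattice metric.
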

\begin{proof}
We want to relate 
\begin{align}
    h^{O_{gI}}_{\tilde{w}}(x)
    =
    \sum_{x'\in X_{gI,\delta}}
    \tilde{w}_{x'} \phi(x)_{gI,x'}\,,\quad
    \phi(x)_{gI,x'} = \id(x\in T_{x',gI})\,,
\end{align}
to $h^{O_I}_w(g\cdot x)$. So we first discuss the transformation of the various sets involved.
We have
\begin{align}
    S_{gI,\delta}
    &=
    \{
    J\in {\cal P}_k(\Lambda)
    \,|\,
    d(gI,J)\le \delta
    \wedge
    \diam(J)\le \delta
    \}\\
    &=
    \{
    J\in {\cal P}_k(\Lambda)
    \,|\,
    d(I,g^{-1}J)\le \delta
    \wedge
    \diam(g^{-1}J)\le \delta
    \}    \\
    \label{eq:Sg}
    &=
    \{
    gJ'\in {\cal P}_k(\Lambda)
    \,|\,
    d(I,J')\le \delta
    \wedge
    \diam(J')\le \delta
    \} 
    =
    g S_{I,\delta}
    \,.
\end{align}
In the second line we use invariance of the distance under permutation, $d(I,J)=d(gI,gJ)$, and that as a consequence the diameter is also invariant under permutations.
In the third line we relabelled $g^{-1}J=J'$ and used that ${\cal P}_k(\Lambda)=g{\cal P}_k(\Lambda)$.
Thus
\begin{align}
    X_{gI,\delta}
    &=
    \{ x\in [-1,1]^m \,|\,
    x_J = 0 \text{ if }J\not\in gS_{I,\delta}\,,\quad
    x_J\in W_{\delta_2}^{|J|} \text{ if }J\in gS_{I,\delta}
    \}\\
    &=
    \{ x\in [-1,1]^m \,|\,
    x_{gJ'} = 0 \text{ if }J'\not\in S_{I,\delta}\,,\quad
    x_{gJ'}\in W_{\delta_2}^{|J|} \text{ if }J'\in S_{I,\delta}
    \}\\
    &=
    \{ g^{-1}\cdot x'\in [-1,1]^m \,|\,
    x'_{J'} = 0 \text{ if }J'\not\in S_{I,\delta}\,,\quad
    x'_{J'}\in W_{\delta_2}^{|J|} \text{ if }J'\in S_{I,\delta}
    \}\\
    &=
    g^{-1}\cdot X_{I,\delta}
    \,,
\end{align}
where in the second line we defined $J=gJ'$ and used that $gJ\in gS$ if and only if $J\in S$.
In the third line we used that if $x'=g\cdot x$, then $x'_J = x_{gJ}$. Similarly,
\begin{align}
    T_{x,gI}
    &=
    \{
    x'\in [-1,1]^m \,|\,
    -\frac{\delta_2}{2}
    < x_{J} - x'_{J}<
    +\frac{\delta_2}{2}
    \,,\quad J\in gS_{I,\delta}
    \}\\
    &=
    \{
    x'\in [-1,1]^m \,|\,
    -\frac{\delta_2}{2}
    < x_{gJ'} - x'_{gJ'}<
    +\frac{\delta_2}{2}
    \,,\quad J'\in S_{I,\delta}
    \}\\
    &=\{
    g^{-1}\cdot z'\in [-1,1]^m \,|\,
    -\frac{\delta_2}{2}
    < (g\cdot x)_{J'} - z'_{J'}<
    +\frac{\delta_2}{2}
    \,,\quad J'\in S_{I,\delta}
    \}\\
    &=
    g^{-1}\cdot T_{g\cdot x, I}
    \,.
\end{align}
Here we have used the same manipulations as in $X_{gI,\delta}$ and defined $z'=g\cdot x'$.
Then 
\begin{align}
   \phi(x)_{gI,x'}=\id(x\in T_{x',gI})=
   \id(x\in g^{-1}\cdot T_{g\cdot x',I})
   =
   \id(g\cdot x\in T_{g\cdot x',I})
   =
   \phi(g\cdot x)_{I,g\cdot x'}
\end{align}
and we can prove the result:
\begin{align}
    h^{O_{gI}}_{\tilde{w}}(x)
    &=
    \sum_{x'\in X_{gI,\delta}}
    \tilde{w}_{x'} \phi(x)_{gI,x'}
    =
    \sum_{x'\in g^{-1}\cdot X_{I,\delta}}
    \tilde{w}_{x'}
    \phi(g\cdot x)_{I,g\cdot x'}
    \\
    &=
    \sum_{z'\in X_{I,\delta}}
    \tilde{w}_{g^{-1}\cdot z'}
    \phi(g\cdot x)_{I,z'}    
    =
    h^{O_{I}}_{w}(g\cdot x)
    \,,
    \quad w_{z'}=\tilde{w}_{g^{-1}\cdot z'}\,,
\end{align}
where we called $z'=g\cdot x'$.
\end{proof}
This means that if we have a model
\begin{align}
    h(x)
    =
    w\cdot \phi(x)
    =
    \sum_{I\in {\cal S}}\sum_{x'\in X_{I,\delta}} (w)_{I,x'}\phi(x)_{I,x'}
    \,,\quad \phi(x)_{I,x'} = \id(x\in T_{x',I})\,,
\end{align}
for predicting $\Tr(O\rho(x))$, where
\begin{align}
    O=\sum_{I\in {\cal S}}O_I\,,\quad 
    {\cal S}\subseteq {\cal P}_k(\Lambda)\,,
\end{align}
we can take weights that satisfy the equivariance condition of the Proposition \ref{prop:equi_ml_model},
\begin{align}
 w_{I,x'}=w_{gI,g^{-1}\cdot x'}\,,   
\end{align}
for all $g\in \text{Aut}({\cal I})$.
This is indeed the equivariance of the true weights $(w')_{I,x'}$ of \eqref{eq:target_w}.
This leads to a reduction in the number of learnable parameters and thus a sample complexity reduction.
More precisely, we consider the quotient space ${\cal S}/\text{Aut}({\cal I})$ where two sets $I,J\in {\cal S}$ are equivalent if there is a $g\in \text{Aut}({\cal I})$ such that $I=gJ$. Then equivariance relates all the weights related to sets within an equivalence class of ${\cal S}/\text{Aut}({\cal I})$, and leaves a number of independent weights equal to the cardinality of the space ${\cal S}/\text{Aut}({\cal I})$. We summarise this result in the following corollary.

\begin{corollary}\label{cor:sample_complexity_equiv}
    The sample complexity of Theorem \ref{thm:sample_complexity_single_obs} is reduced by using an equivariant model to
    \begin{align}
    N 
    =
    r(O)^4
    \epsilon_3^{-2}
    {\cal N}(\epsilon_1) 
    \mathcal{O}(\log(|{\cal S}/\text{Aut}(\cal I)|/\gamma))    
    \,.
\end{align}
\end{corollary}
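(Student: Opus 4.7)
The plan is to rerun the LASSO-based generalisation argument of Theorem \ref{thm:sample_complexity_single_obs} inside the subspace of equivariant weight vectors. By Lemma \ref{lemma:equi_obs}, the target weights $(w')_{I,x'}=f(O_I,x')$ already satisfy $w'_{I,x'}=w'_{gI,g^{-1}\cdot x'}$ for all $g\in\text{Aut}({\cal I})$, so restricting the feasible set ${\cal C}_B$ in \eqref{eq:lasso} to $G$-equivariant $w$ preserves both the approximation guarantee of Eq. \eqref{eq:diff_f_target} (via Lemma~5) and the $\ell_1$-norm bound $\|w'\|_1\le B$ from Lemma~\ref{lemma:bounds_m_phi_B}. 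Hence any approximate minimiser $w_*$ in the equivariant class still satisfies the training error bound \eqref{eq:training_bound_eps3}.

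Next I would reparametrise this equivariant class by a single independent scalar per $\text{Aut}({\cal I})$-orbit of pairs $(I,x')$. The group action $g\cdot(I,x')=(gI,g\cdot x')$ is well defined because the proof of Proposition \ref{prop:equi_ml_model} establishes $X_{gI,\delta}=g^{-1}\cdot X_{I,\delta}$, so each orbit is contained in one equivalence class of ${\cal S}/\text{Aut}({\cal I})$ paired with the corresponding grid points. Counting orbit representatives and combining with Lemma \ref{lemma:bounds_m_phi_B}, the number of independent weights is bounded by
\begin{align}
m_\phi^{\text{eq}} \le |{\cal S}/\text{Aut}({\cal I})|\cdot \max_I |X_{I,\delta}|\le |{\cal S}/\text{Aut}({\cal I})|\,{\cal N}(\epsilon_1).
\end{align}
Rewriting the equivariant predictor as $h(x)=\tilde w\cdot\tilde\phi(x)$, where $\tilde\phi(x)_{[I,x']}=\sum_{(J,y')\in\text{orbit}[(I,x')]}\phi(x)_{J,y'}$ and $\tilde w$ carries the common value of $w$ on each orbit, the $\ell_1$ bound $\|\tilde w\|_1\le\|w\|_1\le B$ is immediate, and $\tilde\phi(x)$ remains bounded because, for each fixed $J$, the sets $\{T_{y',J}\}_{y'}$ are essentially disjoint so at most one indicator fires per distinct $J$ appearing in the orbit.

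I would then apply Lemma \ref{lemma:gen_bound} to this reduced LASSO problem with ambient dimension $A=m_\phi^{\text{eq}}$ in place of $m_\phi$. Following the exact chain of inequalities used at the end of the proof of Theorem \ref{thm:sample_complexity_single_obs}, the only place the cardinality $|{\cal S}|$ entered was inside a $\log$ through $\log(2A)$; replacing $A$ by $|{\cal S}/\text{Aut}({\cal I})|\,{\cal N}(\epsilon_1)$ and using ${\cal N}(\epsilon_1)^2={\cal N}(\epsilon_1)$ and $2{\cal N}(\epsilon_1)={\cal N}(\epsilon_1)$ as before gives the stated sample complexity with $|{\cal S}|$ replaced by $|{\cal S}/\text{Aut}({\cal I})|$, producing the generalisation bound \eqref{eq:gen_bound_eps} with probability at least $1-\gamma$.

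The main obstacle I anticipate is justifying the uniform $\ell_\infty$ bound on $\tilde\phi(x)$ cleanly: a priori the orbit sum could exceed one if a single $x$ simultaneously lies in several grid cells $T_{y',J}$ for distinct $(J,y')$ in the same orbit. The fix is that the disjointness of $\{T_{y',J}\}_{y'}$ for fixed $J$ caps each orbit contribution by the number of distinct first coordinates $J$ in that orbit, which is at most $|\text{Aut}({\cal I})|/|\text{Stab}(I)|$; this factor can either be absorbed into the constants $M,B$ (it only affects the bound logarithmically, through the factor $\log(|{\cal S}/\text{Aut}({\cal I})|)$ that already appears) or eliminated by rescaling $\tilde\phi\to\tilde\phi/|\text{orbit}|$ and $\tilde w\to|\text{orbit}|\tilde w$, which leaves $\tilde w\cdot\tilde\phi$ invariant and keeps both norms under control.
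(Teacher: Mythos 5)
Your proposal is correct and follows essentially the same route as the paper: the true weights $(w')_{I,x'}=f(O_I,x')$ inherit the equivariance $w'_{I,x'}=w'_{gI,g^{-1}\cdot x'}$ from Lemma \ref{lemma:equi_obs} and Proposition \ref{prop:equi_ml_model}, so restricting the LASSO feasible set to equivariant weights leaves one independent parameter per $\text{Aut}({\cal I})$-orbit, and the only place the feature count enters the bound of Lemma \ref{lemma:gen_bound} is through $\log(2A)$, which yields $|{\cal S}|\to|{\cal S}/\text{Aut}({\cal I})|$. Your explicit treatment of the $\ell_\infty$ norm of the orbit-pooled feature vector (and the rescaling $\tilde\phi\to\tilde\phi/|\text{orbit}|$, $\tilde w\to|\text{orbit}|\tilde w$ that restores $r_\infty\le 1$ while preserving $\|\tilde w\|_1\le B$) is a point the paper passes over silently, and it is a genuine and correctly resolved detail rather than a deviation in approach.
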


Continuing our example of the $D=1$ periodic chain of Section \ref{sec:Equivariance of observables}, if we consider an observable that is a sum of local terms such that
\begin{align}
    O=\sum_{i=0}^{n-1}O_i\,,\quad 
    O_i = O_{g^i 0}\,,
\end{align}
with $g$ the translation by one site, $gi=i+1\mod n$, we have ${\cal S}=\{0,\dots,n-1\}$ and $|{\cal S}/\text{Aut}({\cal I})|=1$, leading to a $\mathcal{O}(1)$ complexity of learning to predict this observable, using the following ML model
\begin{align}
    h(x)
    &=
    \sum_{i=0}^{n-1}
    \sum_{x'} (w)_{i,x'}\phi(x)_{i,x'}
    =
    \sum_{i=0}^{n-1}
    \sum_{x'}     
    (w)_{0,g^i\cdot x'}
    \phi(x)_{i,x'}
    =
    \sum_{i=0}^{n-1}
    \sum_{x''}     
    (w)_{0,x''}
    \phi(x)_{i,g^{-i}\cdot x''}
    \\
    &=
    \sum_{i=0}^{n-1}
    (w_0 * \Phi(x))_i
    \,,
\end{align}
with $*$ the convolution with filter $w_0$ and $\Phi(x)$ is obtained by concatenating $\{\phi(x)_{i,g^{-i}\cdot x''}\}_{x''}$ over $i$.

\subsubsection{Equivariance of the random feature model}
\label{sec:Equivariance of the random feature model}

We discuss here equivariance of the random feature model that is used in the experiments Section \ref{sec:Experiments}.
For a given $\delta$, we define the set of parameters that influence $f(O_I,x)$ according to the results of Section \ref{sec:Dependency of observables on Hamiltonian parameters}:
\begin{align}
    Z_I(x) = 
    ( x_J\,|\, J\in S_{I,\delta} )
    \in \bigoplus_{J\in S_{I,\delta}} [-1,1]^{q_J}\,.
\end{align}
The random feature model is defined as
\begin{align}
    F^{O_I}(x)
    =
    \sum_{i=1}^R a_i \sigma(b_i^T Z_{I}(x))
    \,,
\end{align}
where $a\in \mathbb{R}^R$ are the learnable parameters and $b_i$'s are random vectors of the same size of $Z_{I}$. $\sigma$ is a non-linearity and thus $F^{O_I}$ can be thought as a two-layer neural network with random weights in the first layer \cite{E_2020}.
\begin{prop}\label{prop:equi_rff}
    The random feature model satisfies the equivariance constraint of Lemma \ref{lemma:equi_obs},
    \begin{align}
        F^{O_{gI}}(x)
        =
        F^{O_I}(g\cdot x)
    \end{align}
    for all values of $a, b$.
\end{prop}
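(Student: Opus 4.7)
The plan is to reduce the equivariance of the full random feature model $F^{O_I}$ to a simpler equivariance of its feature extractor $Z_I$. Since the only $I$-dependent ingredient of $F^{O_I}(x) = \sum_{i=1}^R a_i \sigma(b_i^T Z_I(x))$ is the vector $Z_I(x)$ -- the weights $a, b$ are supplied independently of $I$ -- it will suffice to establish the componentwise identity $Z_{gI}(x) = Z_I(g \cdot x)$, after matching up the index sets of the two sides. Once this identity is in hand, substituting into the definition of $F$ yields $F^{O_{gI}}(x) = F^{O_I}(g \cdot x)$ for any choice of $a$ and $b$.

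To get $Z_{gI}(x) = Z_I(g \cdot x)$, I would invoke the set-level equivariance $S_{gI, \delta} = g \cdot S_{I, \delta}$ already established in \eqref{eq:Sg} inside the proof of Proposition \ref{prop:equi_ml_model}. This furnishes a canonical bijection $S_{I, \delta} \to S_{gI, \delta}$, $J \mapsto gJ$, which I use to identify the index sets (and hence the ambient vector spaces) of $Z_I(x)$ and $Z_{gI}(x)$; since $g$ is an automorphism of the interaction hypergraph, the fibre dimensions $q_J$ and $q_{gJ}$ coincide, so this identification is a genuine isomorphism of the ambient spaces. Under this identification I then compute both sides entrywise: the entry of $Z_{gI}(x)$ at $gJ \in S_{gI, \delta}$ is $x_{gJ}$ by definition of $Z_{gI}$, while the entry of $Z_I(g \cdot x)$ at $J \in S_{I, \delta}$ is $(g \cdot x)_J = x_{gJ}$ by definition of the group action on $x$. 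These match, so $Z_{gI}(x) = Z_I(g \cdot x)$, and plugging into the definition of $F^{O_{gI}}$ gives the claim.

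There is no serious obstacle, only some bookkeeping in matching index sets, which is already handled by the invariance of $d(\cdot, \cdot)$ and $\diam(\cdot)$ under permutations of $\Lambda$ used in proving \eqref{eq:Sg}. It is worth emphasising the contrast with Proposition \ref{prop:equi_ml_model}: there, the discretisation-based model required a nontrivial transformation $\tilde{w}_{g^{-1}\cdot x'} = w_{x'}$ of the learnable weights to achieve equivariance, whereas here the random feature model is equivariant with no transformation of $a, b$ needed, because the full $I$-dependence is carried by $Z_I$ itself and $Z_I$ is already equivariant in the sense $Z_{gI}(x) = Z_I(g \cdot x)$.
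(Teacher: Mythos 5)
Your proposal is correct and follows essentially the same route as the paper: it establishes $Z_{gI}(x)=Z_I(g\cdot x)$ from the set identity $S_{gI,\delta}=gS_{I,\delta}$ of \eqref{eq:Sg} and then substitutes into the definition of $F$, exactly as the paper does. The extra remarks on index-set bookkeeping and the contrast with Proposition \ref{prop:equi_ml_model} are accurate but do not change the argument.
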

\begin{proof}
    From \eqref{eq:Sg}, we get, with $gJ'=J$
    \begin{align}
    Z_{gI}(x) = 
    ( x_J\,|\, J\in S_{gI,\delta} )
    =
    ( x_J\,|\, J\in gS_{I,\delta} )
    =
    ( x_{gJ'}\,|\, J'\in S_{I,\delta} )
    =
    Z_{I}(g\cdot x)\,.
    \end{align}
    Thus for any values of the parameters $a,b$:
    \begin{align}
        F^{O_{gI}}(x)
        =
        \sum_{i=1}^R a_i \sigma(b_i^T Z_{gI}(x))
        =
        \sum_{i=1}^R a_i \sigma(b_i^T Z_{I}(g\cdot x))
        =
        F^{O_{I}}(g\cdot x)\,
        .
    \end{align}
\end{proof}
As in the case of the feature map $\phi(x)$, this means that we can reduce the number of parameters when predicting an observable by associating a set of $a,b$'s to each equivalence class in ${\cal S}/\text{Aut}({\cal I})$.
For example, for the case of a $D=1$ periodic chain and observable
\begin{align}
    O = \sum_{i=0}^{n-1} O_i\,,\quad O_i=O_{g^i 0}
\end{align}
we can use the random feature model with $\mathcal{O}(1)$ parameters:
\begin{align}
    F(x)
    =
    \sum_{i=0}^{n-1}
    F^{O_{g^i 0}}(x)
    =
    \sum_{i=0}^{n-1}
    F^{O_{0}}(g^i\cdot x)
    =
    \sum_{i=0}^{n-1}  
    \sum_{j=1}^R a_{j}       
    \sigma(b_{j}*x)_i
    =
    \sum_{i=0}^{n-1}  
    (a * \Phi(x))_i    
    \,.
\end{align}
Here we have first noted that 
\begin{align}
(b_{j}*x)_i=b_j^T Z_{0}(g^i\cdot x)
=
\sum_{k=1}^{|S_{0,\delta}|}
b_{j,k} x_{k+i\!\!\!\mod n}
\end{align}
is the $i$-th output of the 1D convolution with filter $b_j$ and input $x$, with $\delta$ determining the filter size.
We can think of $j$ as indexing the channel dimension of size $R$ and $i$ the space dimension of size $n$.
The output of the second layer can be also interpreted as a convolution with filter $a$ of size $1$ in the space dimension and $R$ in the channel dimension. $\Phi(x)$ is the concatenation of 
$\{\sigma(b_{j}*x)_i\}_{j=1}^R$ over $i$.
So the model is a two-layer convolutional neural network, with final layer a global pooling over space.

\subsection{Classical shadows}
\label{sec:Classical shadows}

A classical shadow of the density matrix $\rho$ is obtained by averaging repeated random measurements \cite{Huang_2020}.
One selects  uniformly at random whether to measure $X,Y,Z$ for each qubit and stores the associated measurement results as classical data $s_i^t\in \{ 0_X,1_X,0_Y,1_Y,0_Z,1_Z\}$ for the measurement outcome of qubit $i$ at time $t$. 
Here we denote by $0_A,1_A$ the possible states after measurement of the Pauli $A$.
After $T$ measurements, the classical shadow is 
\begin{align}
    \sigma_T(\rho)
    =
    \frac{1}{T}
    \sum_{t=1}^T
    \left(3\ket{s_1^t}\bra{s_1^t}-\id_2 \right)\otimes \cdots \otimes
    \left(3\ket{s_n^t}\bra{s_n^t}-\id_2 \right)\,.
\end{align}
The power of classical shadows is that they can $\epsilon$-approximate the reduced density matrices of an $\mathcal{O}(1)$ subsystem with $T=\mathcal{O}(\log(n)/\epsilon^2)$.
Our first result is the extension of \cite[Corollary 5]{lewis2023improved} to our setting.

\begin{theorem}\label{thm:sample_complexity_classical_shadows}
    Suppose we have data $\{x^{(i)}, \sigma_T(\rho(x^{(i)}) \}_{i=1}^N$ with
\begin{align}
    \label{eq:T_shadow}
    T &= \mathcal{O}(\log(nN/(\gamma/2))/(\epsilon_2')^2)
    \,,\quad
    \epsilon_2'=\epsilon_2/C\,,\\
    N &=
    (\epsilon_3/C^2)^{-2}
    {\cal N}(\epsilon_1/C)
    \mathcal{O}(\log(n/\gamma)) 
    \,,
\end{align}
with $\epsilon_1,\epsilon_2,\epsilon_2,C>0$ and $\gamma\in(0,1)$.
    Then we can learn a predictor $\hat{\rho}(x)$ that achieves
    \begin{align}
        \mathbb{E}_{x\sim {\cal D}}|\Tr(O\rho(x))-\Tr(O\hat{\rho}(x))|^2
    \le 
    (\epsilon_1+\epsilon_2)^2+\epsilon_3\,,
    \end{align}
    with probability at least $1-\gamma$ for all observables $O$ such that 
    \begin{align}
    \label{eq:O_alpha_P}
    O = \sum_{P\in P_k} \alpha_P P
    \,,\quad 
    \sum_{P\in P_k}|\alpha_P|\le C\,,
\end{align}
where $P_k$ is the set of Pauli strings with weight at most $k$. 
\end{theorem}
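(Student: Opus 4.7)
The plan is to reduce the statement to Theorem \ref{thm:sample_complexity_single_obs} applied to each Pauli string $P \in P_k$ separately, and then combine the resulting predictors linearly with coefficients $\alpha_P$. The predictor $\hat{\rho}(x)$ is really just a bookkeeping device: it suffices to define $\Tr(O \hat{\rho}(x)) = \sum_P \alpha_P h_P(x)$ where each $h_P$ is a learned function of $x$. The classical shadow data supplies, for every $P$ and every sample $i$, an estimate $y^{(i),P} = \Tr(P \sigma_T(\rho(x^{(i)})))$ that can be used as the label dataset for $P$.

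First, I would use the classical shadow concentration bound sketched in the preceding lemma, combined with a union bound over the $N$ samples and the (at most $\mathcal{O}(n^k)$ many) Paulis $P \in P_k$. Because $|\Tr(P \sigma_T(\rho)) - \Tr(P\rho)| \le \|P\| \cdot \|\Tr_{\mathrm{supp}(P)^c}(\sigma_T(\rho) - \rho)\|_1$ with $\|P\|=1$, the choice $T = \mathcal{O}(\log(nN/\gamma)/(\epsilon_2')^2)$ with $\epsilon_2' = \epsilon_2/C$ implies, with probability at least $1 - \gamma/2$, the uniform bound $|y^{(i),P} - \Tr(P\rho(x^{(i)}))| \le \epsilon_2'$ for every $i$ and every $P$.

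Second, for each fixed $P \in P_k$ I would apply Theorem \ref{thm:sample_complexity_single_obs} with observable $O = P$ (so $|{\cal S}| = 1$ and $r(P) = 1$), measurement error $\epsilon_2' = \epsilon_2/C$, local-approximation error $\epsilon_1/C$, and training-error parameter $\epsilon_3/C^2$. This yields, with probability at least $1 - \gamma/(2|P_k|)$, a LASSO predictor $h_P$ obeying
\begin{align}
\mathbb{E}_{x \sim {\cal D}}|h_P(x) - \Tr(P\rho(x))|^2 \le (\epsilon_1/C + \epsilon_2/C)^2 + \epsilon_3/C^2 = \frac{(\epsilon_1+\epsilon_2)^2 + \epsilon_3}{C^2}.
\end{align}
A union bound over the $\mathcal{O}(n^k)$ choices of $P$ replaces $\log(1/\gamma)$ by $\log(|P_k|/\gamma) = \mathcal{O}(\log(n/\gamma))$ in the sample complexity, explaining the $\log(n/\gamma)$ factor in the claimed $N$.

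Third, setting $\hat{y}(x) = \sum_P \alpha_P h_P(x)$, the weighted Cauchy--Schwarz inequality gives
\begin{align}
|\hat{y}(x) - \Tr(O \rho(x))|^2 \le \left(\sum_P |\alpha_P|\right)\!\left(\sum_P |\alpha_P|\, |h_P(x) - \Tr(P \rho(x))|^2\right),
\end{align}
and using $\sum_P |\alpha_P| \le C$ together with the per-Pauli expectation bound from step two yields
\begin{align}
\mathbb{E}_{x \sim {\cal D}} |\hat{y}(x) - \Tr(O\rho(x))|^2 \le C \cdot C \cdot \frac{(\epsilon_1+\epsilon_2)^2 + \epsilon_3}{C^2} = (\epsilon_1+\epsilon_2)^2 + \epsilon_3,
\end{align}
which is the claimed generalisation bound. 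The failure probability is at most $\gamma/2 + |P_k| \cdot \gamma/(2|P_k|) = \gamma$. The main obstacle is really just bookkeeping: choosing the rescalings $\epsilon_1 \to \epsilon_1/C$, $\epsilon_2 \to \epsilon_2/C$, $\epsilon_3 \to \epsilon_3/C^2$ consistently so that after the Cauchy--Schwarz step with weight $C$ the factors of $C$ cancel exactly, while simultaneously ensuring both union bounds (over samples and Paulis for shadows, and over Paulis for the LASSO generalisation) fit within the stated $T$ and $N$.
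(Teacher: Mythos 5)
Your proposal is correct and follows essentially the same route as the paper's proof: a union bound over the $N$ samples and the $\mathcal{O}(n^k)$ Paulis for the shadow estimates, a per-Pauli application of Theorem \ref{thm:sample_complexity_single_obs} with the rescalings $\epsilon_1/C$, $\epsilon_2/C$, $\epsilon_3/C^2$ and confidence $\gamma/(2|P_k|)$, and a Cauchy--Schwarz combination using $\sum_P|\alpha_P|\le C$. The only (cosmetic) difference is in the last step, where you apply weighted Cauchy--Schwarz to the sum $\sum_P\alpha_P(h_P-\Tr(P\rho))$ pointwise before taking expectations, while the paper expands the square and applies H\"older's inequality to the expectation of each cross term; both yield the identical final bound.
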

This shows that given classical shadow data we can predict all observables of the form $\sum_{I\in {\cal P}_k}O_I$ 
with a sample complexity similar to that of predicting a single one, presented in Theorem \ref{thm:sample_complexity_single_obs}.
The proof adapts \cite[Corollary 5]{lewis2023improved} to our setting and is presented in Appendix \ref{sec:Classical shadows and prediction of many observables}.

We also extend the results for equivariant Hamiltonians to data obtained from classical shadows. Recall the definition of $\text{Aut}(\mathcal{I})$
from \ref{sec:Equivariance of observables}. We have the following result:
\begin{corollary}
    Under the hypothesis of Theorem \ref{thm:sample_complexity_classical_shadows} and assuming further that ${\cal D}$ is a probability distribution over $x$ that is invariant under $\text{Aut}({\cal I})$, 
    we can predict all observables of the form \eqref{eq:O_alpha_P} with
    \begin{align}
    N 
    =
    (\epsilon_3/C^2)^{-2}
    {\cal N}(\epsilon_1/C)
    \mathcal{O}(\log(|P_k/\text{Aut}({\cal I})|/\gamma)) 
    \,,
\end{align}
where $\text{Aut}({\cal I})$ is the automorphism group of the interaction hypergraph.
\end{corollary}
This result is most powerful if $\mathcal{O}(|P_k/\text{Aut}({\cal I})|)=\mathcal{O}(1)$, since it reduces the complexity from $\mathcal{O}(\log(n))$ to $\mathcal{O}(1)$.
As remarked before, this is the case for example if we consider observables that are sum of geometrically local terms (which is the setting of \cite{lewis2023improved,onorati2023efficient,onorati2023provably}) on a lattice with periodic boundary conditions.

The core of the proof of this result is to show that if
$h^P_*(x)$ is an equivariant model as in Proposition \ref{prop:equi_ml_model}, then
    \begin{align}
        \mathbb{E}_{x\sim {\cal D}}
        \left( |h^P_*(x)-y^P(x)|^2 \right)
        =
        \mathbb{E}_{x\sim {\cal D}}
        \left( |h^{P'}_*(x)-y^{P'}(x)|^2 \right)
        \,,\quad P'=\hat{g}P\hat{g}^{-1}\,,
    \end{align}
where $y^{(P)}(x)$ is the expectation value of $P$ in the classical shadow, 
$y^{(P)}(x)=\Tr(P\sigma_T(\rho(x)))$, and $g\in\text{Aut}(\mathcal{I})$.
In turn this follows by showing the equivariance of the measure on $y^{(P)}(x)$ induced by the classical shadow measurement protocol.
See Appendix \ref{sec:Equivariant classical shadows} for details.

\section{Experiments} 
\label{sec:Experiments}

In this section, we will cover the details and the setup for the numerical experiments carried out and present the results affirming the theoretically guaranteed bounds. We will discuss implementation of the simulations for different systems and of the machine learning model. For most of the experiments, our main focus will be on demonstrating the predicted efficient scaling of the algorithm. We will also cover an important phenomenon observed for practical usability of the algorithm regarding normalisation of the observables and concentration of expectation values in Section \ref{sec:ShrinkingPhenomenon}.

\subsection{Setup}

The simulations of the quantum systems were typically obtained using exact diagonalisation for up to $20$ sites, while larger systems were simulated using tensor networks and the Density Matrix Renormalisation Group (DMRG) method \cite{Schollw_ck_2011}. The tensor network calculations were carried out using the Python library \textsc{TeNPy} \cite{TeNPy}. Note that DMRG is a variational method, and so the resulting state is just an approximation of the true ground state, hence there can be possible issues with convergence and small numerical errors. The practical applicability of DMRG is the reason why we focus on one dimensional systems for the numerical experiments; either with open or periodic boundary conditions. For DMRG, the SVD cut-off was set to $10^{-9}$ while the convergence criterion for the ground energy was a relative change of $10^{-6}$. The maximal number of sweeps used was typically set between $100$ and $200$, and the maximal bond dimension of the MPS was typically gradually increased up to $100$ or $200$.

To determine whether or not the Hamiltonians remain gapped across the chosen range of parameters, we would also at first measure the spectral gaps for many random choices of parameters as the system sizes increase to see if they aren't closing. To do this for larger systems, the DMRG would be run on a Hamiltonian with projection onto the orthogonal space to the ground state to obtain the first excited state.

Following Appendix D of \cite{lewis2023improved}, the machine learning part is based on LASSO, an $l_1$-regularised linear regression model, with feature mapping with built in local-geometric bias. Instead of using the discretisation as was considered in the theoretical guarantees, we will use random Fourier features, as discussed in Section \ref{sec:Equivariance of the random feature model}, which are commonly used in practice \cite{rndFourierFeatures}.

Given an observable $O = \sum\limits_{I\in \mathcal{S}} O_I$, for any $I \in \mathcal{S}$ with $|I|=g$ occurring non-trivially in $O$, one would consider all the parameters $x_J$'s in its $\delta$-neighbourhood for all $J$ such that $d(I,J) \leq \delta$ and $\diam(J) \leq \delta$ to create a vector
$$Z_I^{(\delta)} = (x_J \,|\, J \in S_{I,\delta} )\,.$$
This vector would then be mapped to its randomised Fourier features via
\begin{equation}\phi:Z \mapsto \left( \begin{matrix}
\cos\left(\frac{\gamma}{\sqrt{l}} (\omega_1 \cdot Z)\right)\\
\sin\left(\frac{\gamma}{\sqrt{l}} (\omega_1 \cdot Z)\right)\\
\vdots\\
\cos\left(\frac{\gamma}{\sqrt{l}} (\omega_R \cdot Z)\right)\\
\sin\left(\frac{\gamma}{\sqrt{l}} (\omega_R \cdot Z)\right)\\
\end{matrix} \right),\end{equation}
where $l$ is the length of the vector $Z$, and $\gamma$ and $R$ are hyperparameters of the model, to create vectors
\begin{equation}\phi_I ^{(\delta)} = \phi\left(Z_I^{(\delta)}\right).\label{eq:FeatureVectorOneI}\end{equation} 
The $\omega_i$'s are vectors of length $l$ of independent standard normal random variables, which are the same for all $I$ with $|I|=g$ for a specific $g$, and all of which are set for a given LASSO model and used for all its training and testing samples.
These vectors would then be concatenated over all such $I$ into a $g$-body interaction feature vector \begin{equation}\Phi ^{(g,\delta)} =\left( \phi_I ^{(\delta)} \,|\, I \in \mathcal{S} \wedge |I|=g \right),\label{eq:FeatureVectorAllIOnep}\end{equation}
which would finally be concatenated into the full feature vector \begin{equation}\Phi ^{(\delta)} =\left( \Phi ^{(g,\delta)} \,|\, 1 \leq g \leq p \right),\label{eq:FullFeatureVector}\end{equation}
used as the input for the LASSO model. This means that the full feature dimension for an observable $O$ as defined above would be at most 
$$m_\Phi = 2R \cdot \sum_{g=1}^p {n \choose g} = 2R \cdot  |{\cal P}_p(\Lambda)| = \mathcal{O}(n^p)\,.$$ 
Note that in the experiments that will be presented shortly, we won't actually meet an observable for which we would use parameters acting on different numbers of sites (due to the specific implementation of the Ising chain), and hence the last concatenation step won't be used.

For tuning the hyperparameters $R$, $\gamma$, and also the $l_1$-regularisation parameter $\alpha$, appearing in the LASSO minimisation problem \begin{equation}\min_{\substack{w\in \mathbb{R}^{m_\Phi}\\ y_0 \in \mathbb{R}}} \left\{\frac{1}{2N} \sum\limits_{i=1}^N \left|y^{(i)}-y_0 - w \cdot [\Phi^{(\delta)}]^{(i)} \right|^2 + \alpha \cdot \|w\|_1\right\},\end{equation} we used $5$-fold cross-validation. As per \cite{lewis2023improved}, the possible options for these parameters were taken to be $$\alpha \in \left\{ 2^{-8}, 2^{-7}, 2^{-6}, 2^{-5} \right\},$$ $$\gamma \in \{0.4, 0.5, 0.6, 0.65, 0.7, 0.75 \},$$ $$R \in \{5, 10, 20, 40\}.$$ Note that for a given model, the hyperparameters used are the same everywhere; in particular, when creating the vectors $\Phi^{(g,\delta)}$, they are the same for all $I$ and $g$. The LASSO model and the cross-validation were implemented using the Python library \textsc{scikit-learn} \cite{scikit-learn}.

Having implemented the machine learning algorithm, we wanted to demonstrate the scaling of the number of training samples with respect to the system size. Hence, for any given set of testing samples, we've run the algorithm repeatedly with increasingly more training samples up until a fixed additive root mean squared (RMS) error was achieved. The plots of training samples needed for distinct numbers of qubits will then be used to show the scaling of this algorithm. The corresponding code for these simulations is available at \cite{Smid_Efficient_Learning_of_2024}.

\subsection{Scaling of observables, self-averaging of expectation values, and the central limit theorem}
\label{sec:ShrinkingPhenomenon}

In certain scenarios, for example when evaluating the performance of this algorithm as in the following sections, one needs to be careful when normalising global observables. To illustrate this point, let's consider the nearest-neighbour Heisenberg model on one-dimensional chain with open boundary conditions, as described in Section \ref{sec:Heisenberg}. The Hamiltonian for this model is $$H = \sum\limits_{\langle i j \rangle} J_{ij} (X_i X_j + Y_i Y_j + Z_i Z_j),$$
and we would like to  measure correlations $C_{ij} = \frac{1}{3} (X_i X_j + Y_i Y_j + Z_i Z_j)$ between nearest neighbours. 

When one considers these correlations individually, then for any given chain of length $n$ with the coupling coefficients $J_{ij}$ sampled uniformly at random from $[0,2]$, their expectation values range throughout the whole possible interval $[-1,0]$, being determined accurately by the local parameters $J_{ij}$ in their neighbourhood as per the theoretical guarantees discussed in Section \ref{sec:Dependency of observables on Hamiltonian parameters}. 

But now, let's consider the observable to be the average of these correlations, $$O = \frac{1}{n-1} \sum\limits_{i=0}^{n-2} C_{i,i+1} = \frac{1}{n-1} \sum\limits_{i=0}^{n-2} \frac{1}{3} (X_i X_{i+1} + Y_i Y_{i+1} + Z_i Z_{i+1}).$$ This is clearly correctly normalised to $r(O) = \mathcal{O}(1)$ as per the assumptions of Theorems \ref{thm:sample_complexity_single_obs}, \ref{thm:sample_complexity_classical_shadows}.
But we know from Appendix \ref{sec:CLTGlobalObservables} that in the large $n$ limit, the expectation values will concentrate around their average over the distribution of the parameters, provided that their variance tends to $0$ as $\frac{1}{n}$, which we will check numerically.

Table \ref{tab:ShrinkingObservables} presents the data collected from simulations of Heisenberg chains of different lengths, each being based on 100 samples, when measuring the observable $O$ as defined previously. Even though each individual correlation in the chain ranges from $-1$ to $0$, it takes only $n = 8$ sites to get the possible range of $O$ down to [$-0.642$, $-0.468$], which continues to shrink, and gets down to only [$-0.557$,  $-0.528$] for $n = 128$ sites.  Fitting the results for $\frac{\operatorname{Var}}{\operatorname{Mean}^2}$ from 16 qubits onwards with a function of the form $a+\frac{b}{n}$ yields values $a = -0.000039$ and $b = 0.024$, confirming the applicability of the central limit Theorem as given by Proposition \ref{prop:clt} to this case.

\begin{table}[h]
    \centering
    \begin{tabular}{|c||c|c|c|c|}
        \hline
         System size $n$ & Range of $\Tr(O \rho(x))$ & Size of range & Standard deviation & $\sqrt{n}\cdot $SD \\\hhline{|=#=|=|=|=|}
         4 & [$-0.718,-0.433$] & 0.285 & 0.0662 & 0.1324 \\\hline
         8 & [$-0.642, -0.468$] & 0.174 & 0.0444 & 0.1256 \\\hline
         16 & [$-0.605, -0.500$] & 0.105 &  0.0206 & 0.0824 \\\hline
         32 & [$-0.571, -0.494$] & 0.078 & 0.0162 & 0.0916 \\\hline
         64 & [$-0.563, -0.526$] & 0.038 & 0.0079 & 0.0632 \\\hline
         128 & [$-0.557,  -0.528$] & 0.029 & 0.0063 & 0.0713\\\hline
    \end{tabular}
    \caption{Demonstration of self-averaging of expectation values for the average correlation $O = \frac{1}{n-1}\sum_i C_{i,i+1}$ in the Heisenberg model obtained from 100 samples for each system size. Note that the scaled version of standard deviation is actually asymptotically constant, even though it appears here to be oscillating - this is just a result of the intrinsic randomness of the systems.}
    \label{tab:ShrinkingObservables}
\end{table}

This behaviour means that if one would like to predict the observable to a given absolute additive RMS error $\epsilon$, it will be actually much easier to do so for larger systems; so much so, that for  any sensible choice of the error for smaller systems, there will be a system size whose range of possible expectation values is smaller than this error, so any single sample of this size would be trivially within this error, making the sample complexity at this size be just 1. In other words, we expect that as $n \to \infty$, the expectation values $y$ become independent of the particular parameter choices $x$, and so one can trivially predict $y_*$ for a new parameter choice $x_*$ by simply outputting the value $y$ at a single training data point. Note that this requires using LASSO with a possibly non-zero $y$-intercept $y_0$, as otherwise predicting a constant becomes a non-trivial task.

To remedy this behaviour when demonstrating the scaling complexity of this algorithm, one can utilise few different approaches. One possibility would be based on classical shadows \cite{Huang_2020}, similarly to as what was done in \cite{Huang_2022,lewis2023improved}, to just consider and predict each of the local terms separately, calculate the RMS error for each of them separately, and finally average over all of these errors. This approach indeed avoids the global effects and gives much more information about the whole chain, but is a priori less accurate when predicting observables such as the ground state energy, as its guaranteed prediction error is a constant $\epsilon$, while the size of the range of possible expectation values will decrease with $n$ below this error bound, and hence training on a single sample of the entire observable eventually becomes trivially more accurate. We used a similar approach when considering an equivariant system in Section \ref{sec:HeisenbergPeriodic}, where we trained a single ML model on a single geometrically local observable, but then used it for predicting individual local observables over the whole Heisenberg chain to demonstrate a constant sample complexity.

A different approach, which we used for non-equivariant systems, and which is advantageous when predicting global observables such as the ground state energy directly, is to introduce a scaling of the observable, which is chosen such that the size of the range of sampled expectation values and their standard deviation don't scale with the system size. This would mean considering a new observable $O' \coloneqq \sqrt{n} \cdot O$, where $O$ is a correctly normalised observable $\|O\| = \mathcal{O}(1)$ which represents some type of an average behaviour over the whole system. 

The factor of $\sqrt{n}$ can be argued to be appropriate in the following way: Assuming that $O$ is averaging over a sufficiently large disordered sample, such as the whole system, one could observe the physical phenomenon of self-averaging. Looking at the expectation values of the individual terms $O_I$'s making up the whole observable $O= \sum O_I$, assuming them to be weakly correlated, and hence that we can apply generalised central limit theorem as per the discussion in Appendix \ref{sec:CLTGlobalObservables}, we would find that $y = f(O,x) = \Tr (O\rho(x))$ is distributed according to a normal distribution with mean $\mu$ and standard deviation $\frac{\sigma}{\sqrt{n}}$, denoted as $y \sim \mathcal{N}\left(\mu,\frac{\sigma^2}{n}\right)$, where $\mu = \mathcal{O}(1)$ and $\sigma = \mathcal{O}(1)$ do not scale with $n$. Hence, for $O' \coloneqq \sqrt{n} \cdot O$, we would get that $y' \sim \mathcal{N}(\sqrt{n} \mu ,\sigma^2)$ is indeed a random variable whose standard deviation $\sigma$ does not scale with $n$ as we wanted, though its mean $\sqrt{n}\mu$ does grow with $n$.

For example, in the cases of the Heisenberg and Ising model considered in the following sections, when measuring the ground state energy, this scaling means that we have looked at $O'= \frac{H}{\sqrt{n}}$ instead of $O = \frac{H}{n}$. Note that this means $\| O'\| = \mathcal{O}(\sqrt{n}) > \mathcal{O}(1)$, which implies that the predicted complexity scaling isn't naïvely applicable. But we can further show that the rescaling by $\sqrt{n}$ does not affect the sample complexity for those suitable observables $O$ which exhibit the self-averaging behaviour. Indeed, when looking at the observable $O'_\mu \coloneqq \sqrt{n}(O-\mu)$, we have that the corresponding expectation value $y'_\mu$ is distributed as $y'_\mu \sim \mathcal{N}(0,\sigma^2)$, which does not depend on $n$.
This means that this observable is correctly normalised and hence the sample complexity scaling is indeed applicable in this case. But this observable differs from $O'$ only by a shift of $\sqrt{n}\mu$, which for any given $n$ represents only a constant translation of the outcomes, which does not affect the sample complexity of the machine learning model nor its prediction error, as we can simply determine the $y$-intercept $y_0$ as the average value of the outputs, and the remaining weights of the model would then be the same as for a new data set $\left\{x^{(i)},y^{(i)}-y_0\right\}$, which has zero mean. But that makes the predicted efficient sample complexity also applicable for the case of $O'$, even though it is not correctly normalised. 

This behaviour can also be reworded in terms of dependency of the sample complexity on the error, as predicting $O$ with error $\epsilon$ is analogous to predicting $O'$ with error $\epsilon'= \sqrt{n}\epsilon$. But since the predicted sample complexity with error dependence given by Equation \eqref{eq:ScalingComplexityEpsilonDependence} is applicable to $O'$ with error $\epsilon'=\sqrt{n}\epsilon$, this provides a significant sample complexity reduction for observables $O$ exhibiting the self-averaging behaviour, as the dependence of their sample complexity on $\epsilon$ gets scaled by $\sqrt{n}$.

\subsection{Heisenberg model}
\label{sec:Heisenberg}

The Heisenberg model considered is given by the Hamiltonian
\begin{equation}H = \sum\limits_{\langle i j \rangle} J_{ij} (X_i X_j + Y_i Y_j + Z_i Z_j),\end{equation}
where $\langle i j \rangle$  represents nearest neighbours on the lattice. Here we will use only a 1D chain with either open or periodic boundary conditions.

The coupling parameters $J_{ij}$ were sampled uniformly at random from $[0,2]$ as in \cite{lewis2023improved}. In this section, we consider a local observable $C_{ij} = \frac{1}{3} (X_i X_j + Y_i Y_j + Z_i Z_j)$; depending on the situation, either at some specific edge (i.e. $C_{01}$), or taking some sort of average over all of the nearest neighbours, as will be described in the separate subsections.

\subsubsection{Open boundary conditions}

For the following simulations, we've looked at a fixed RMS error $\epsilon = 0.55$ when predicting the ground state energy $O = \frac{H}{\sqrt{n}} = \frac{3}{\sqrt{n}}\sum\limits_{i=0}^{n-2} J_{i,i+1}C_{i,i+1}$, where the normalisation was chosen such that the standard deviations in this regime would not scale with the system size, as was discussed in Section \ref{sec:ShrinkingPhenomenon}.

The presented data on Figure \ref{fig:HeisenbergOpen}, obtained for the number of training samples needed for a given system size, are illustratively fitted with a function of the form $N = a\cdot \log(n) + b + \frac{c}{n}$ (for $n \geq 8$), showing a good correspondence to the predicted asymptotic sample complexity as given by Theorem \ref{thm:sample_complexity_single_obs}. Each shown data point corresponds to a single trial with $N_\text{test} = 40$ test samples, using $\delta = 4$. Note that the variance in these results is caused primarily by our testing method rather than the algorithm itself, as each set of $N_\text{test}$ samples can have noticeably different standard deviation, even though the standard deviation over all parameter choices is normalised. Hence increasing $N_\text{test}$ would decrease the variance of our results significantly.

\subsubsection{Periodic boundary conditions}
\label{sec:HeisenbergPeriodic}

To demonstrate the absence of scaling in this case, as discussed in Section \ref{sec:SampleComplexityReductionEquivariance}, we look into predicting each $C_{i,i+1}$ individually (where the addition is understood modulo $n$), but we train the model only on $C_{01}$ and then use the same model for all $i$, simply by cycling through the input parameters. 

Note that when training the model for $C_{01}$ with just $\mathcal{O}(1)$ samples, the set $\mathcal{S}$ of non-trivial interactions of $O$ contains a single element, $I = \{0,1\}$, and so the concatenated feature vectors as defined by equations \eqref{eq:FullFeatureVector} and \eqref{eq:FeatureVectorAllIOnep} are the same as the one local feature vector given by Equation \eqref{eq:FeatureVectorOneI} with $I = \{0,1\}$.

Here we fix a number of training samples at $N = 40$, and we measure the RMS error over all of the testing samples for each $C_{i,i+1}$ individually, which we subsequently average over all $i$'s. And given the sample complexity reduction, we would expect the RMS error to be asymptotically constant. These results are presented on Figure \ref{fig:HeisenbergPeriodicRMSE}, which indeed shows that the error did not grow between the $n = 16$ and $n = 128$ site cases. Note that saying that the error does not grow with the system size for a fixed number of training samples is tantamount to saying that the number of training samples needed to obtain a fixed error does not grow with the system size, affirming the predicted sample complexity reduction.

\begin{figure}[H]
\centering

\begin{subfigure}[t]{0.5\textwidth}
\centering
\includegraphics[width = \textwidth]{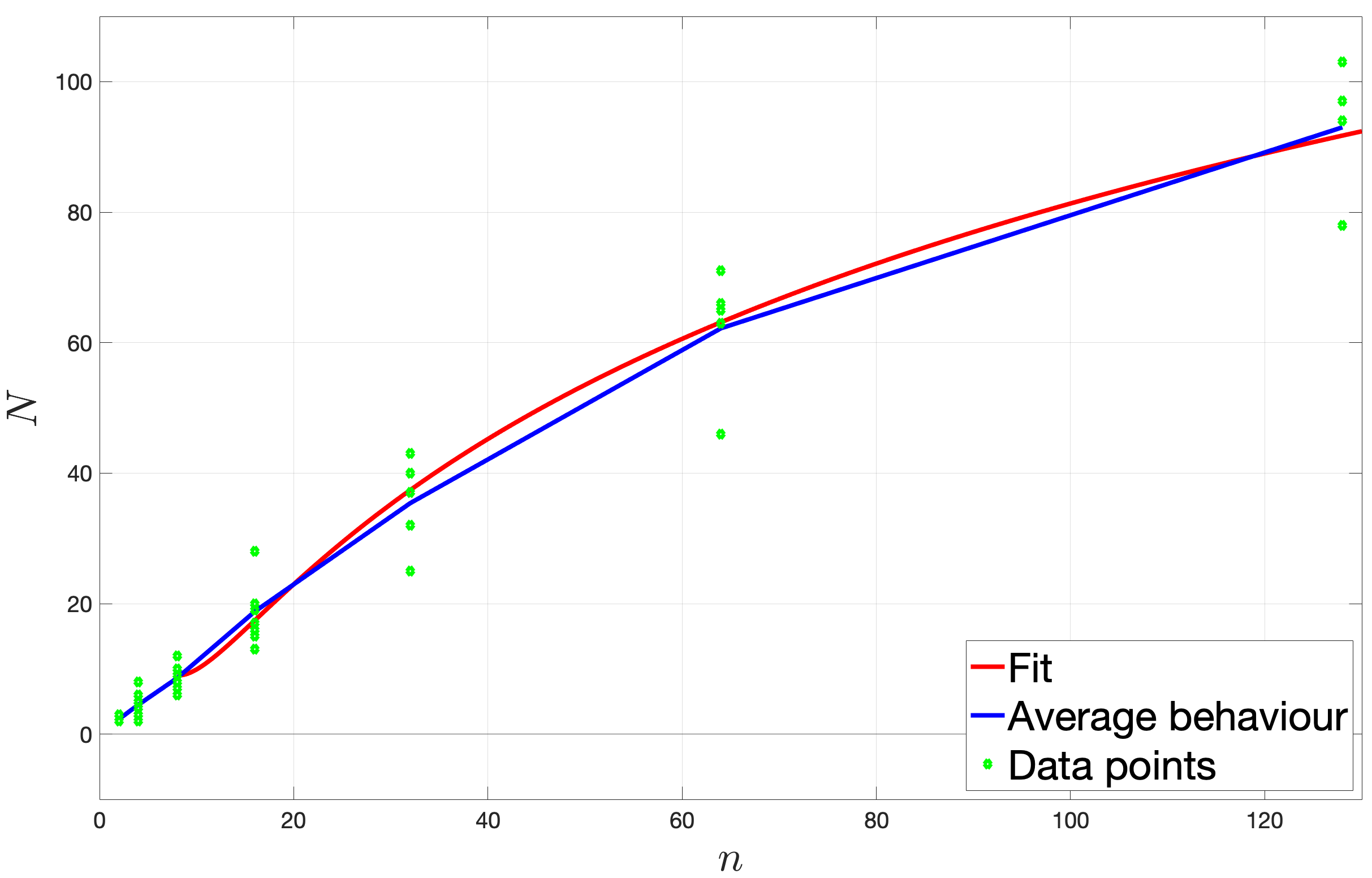}
\caption{Open boundary conditions}\label{fig:HeisenbergOpen}  
\end{subfigure}%
\begin{subfigure}[t]{0.5\textwidth}
\centering
\includegraphics[width = \textwidth]{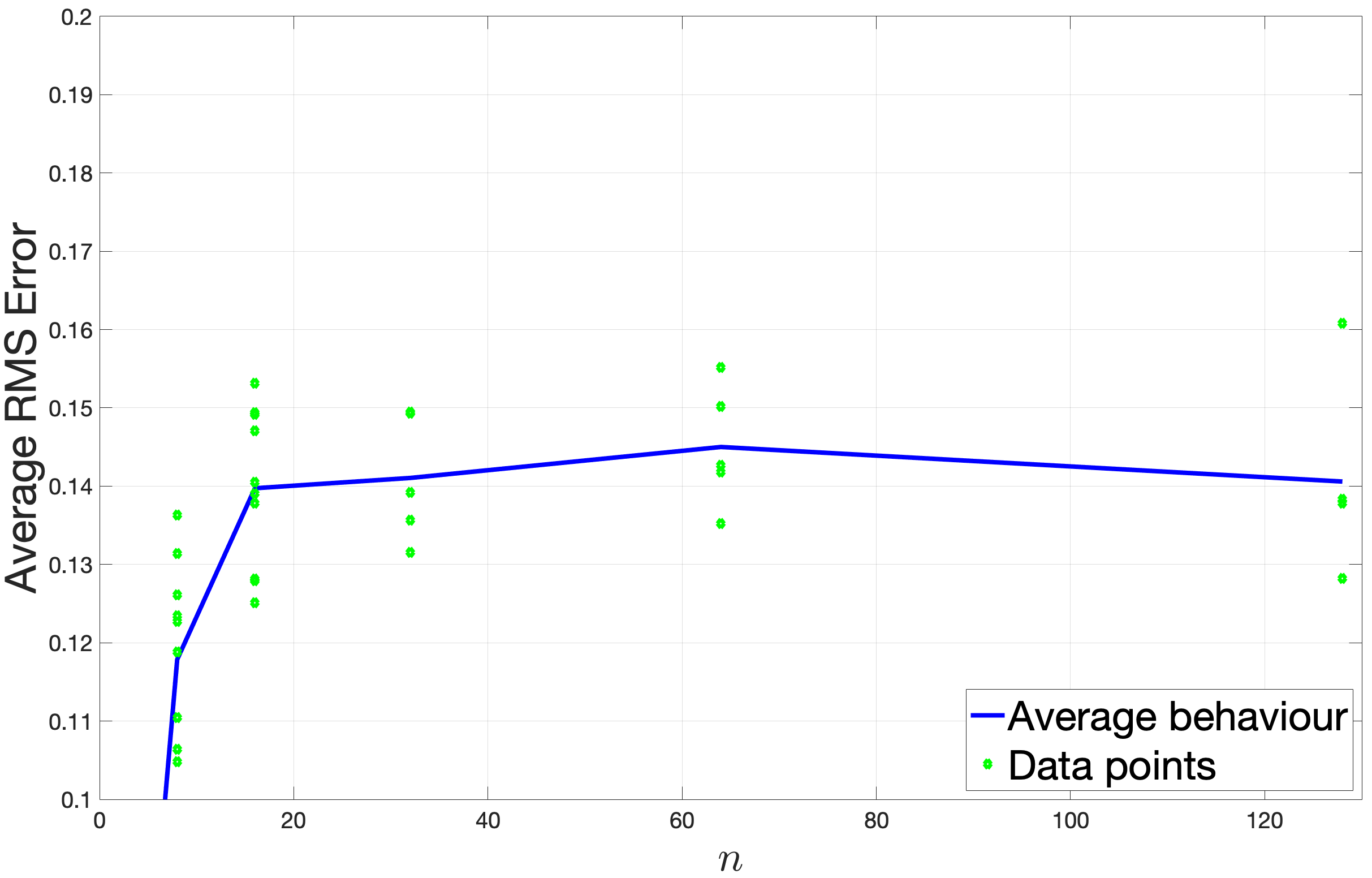}    
\caption{Periodic boundary conditions}\label{fig:HeisenbergPeriodicRMSE}
\end{subfigure}

\caption{\textbf{Results for Heisenberg model.} Each shown data point corresponds to a single trial with $N_\text{test} = 40$ randomly generated test samples. 
\textbf{(a)} Plot of the number of training samples needed to obtain a fixed additive RMS error of $\epsilon = 0.55$ when measuring $\frac{H}{\sqrt{n}}$ in the Heisenberg chain with open boundary conditions, using $\delta = 4$. The logarithmic fit is obtained from all of the data points starting from 8 qubits. \textbf{(b)} Plot of the average RMS error for a fixed number of training samples $N = 40$ when measuring all $C_{i,i+1} = \frac{1}{3} (X_i X_{i+1} + Y_{i} Y_{i+1} + Z_i Z_{i+1})$ individually, but training only on $C_{01}$, in the Heisenberg chain with periodic boundary conditions, using $\delta = 4$.}
\end{figure}

\subsection{Long-range Ising model}

The long-range Ising model considered here is given by the Hamiltonian 
\begin{equation}H = \sum\limits_{i<j} \frac{1+J_i J_j}{d(i,j)^\alpha} Z_i  Z_j + \sum\limits_i h_i  X_i,\end{equation}
where $d(i,j)$ is the distance between sites $i$ and $j$. Here we will only consider a 1D chain with open boundary conditions, so that $d(i,j) = |i-j|$.
This model was studied theoretically in \cite{Juhasz_2014}. It can be realised experimentally, especially in the case of dipolar interactions at $\alpha=3$ studied below, in trapped ions \cite{hauke2010complete} and polar molecules \cite{yan2013observation}.

For the following simulations, the parameters $J_i$ were sampled uniformly at random from $[0,2]$, while all the $h_i$'s were taken to be $h_i = \e$ for simplicity while also maintaining a spectral gap of a reasonable size. The observable was taken to be $O = \frac{H}{\sqrt{n}}$, where the normalisation was again chosen such that the standard deviations in this regime would not scale with the system size, based on the computation of the norm in Section \ref{sec:Approximation of the energy}
and the discussion in Section \ref{sec:ShrinkingPhenomenon}.

\subsubsection{Open boundary conditions}
\label{sec:IsingChainOpenBC}

For the following simulations, we have looked at a fixed RMS error $\epsilon = 0.3$. Data presented on Figure \ref{fig:IsingOpen} are again illustratively fitted with a function of the form $N = a\cdot \log(n) + b + \frac{c}{n}$ (for $n \geq 8$) to demonstrate the expected asymptotic sample complexity of the algorithm given by Theorem \ref{thm:sample_complexity_single_obs}. Each shown data point again corresponds to a single trial with $N_\text{test} = 40$ test samples, using $\delta = 4$.

\subsubsection{The case of $\alpha \leq 2D$}

In this section, we considered the Ising chain with power-law decay with exponent $\alpha = 1.5$. For this regime, we expect the efficient learning guarantees to not hold up as per the discussion in Appendix \ref{sec:power law alpha le 2d}.

Indeed the scaling we observe, plotted on Figure \ref{fig:IsingBadAlpha}, seems to be noticeably worse than in the previous section with $\alpha = 3$, which is in the correct regime $\alpha > 2D$. As the average behaviour appears to scale seemingly linearly, we illustratively fit the data with a function of the form $N = a\cdot n + b + \frac{c}{n}$. But because of high variance in the obtained data, this fit is not significantly better than the logarithmic one, even though the average behaviour seems rather linear. Further, given the finite sample size and the fact that the number of exponentials needed to fit $x^{-1.5}$ for DMRG simulations is not appreciably larger than to fit $x^{-3}$ to the same precision at this size, we can not definitely conclude that this scaling is not logarithmic.

\begin{figure}[H]
\centering

\begin{subfigure}[t]{0.5\textwidth}
\centering
\includegraphics[width = \textwidth]{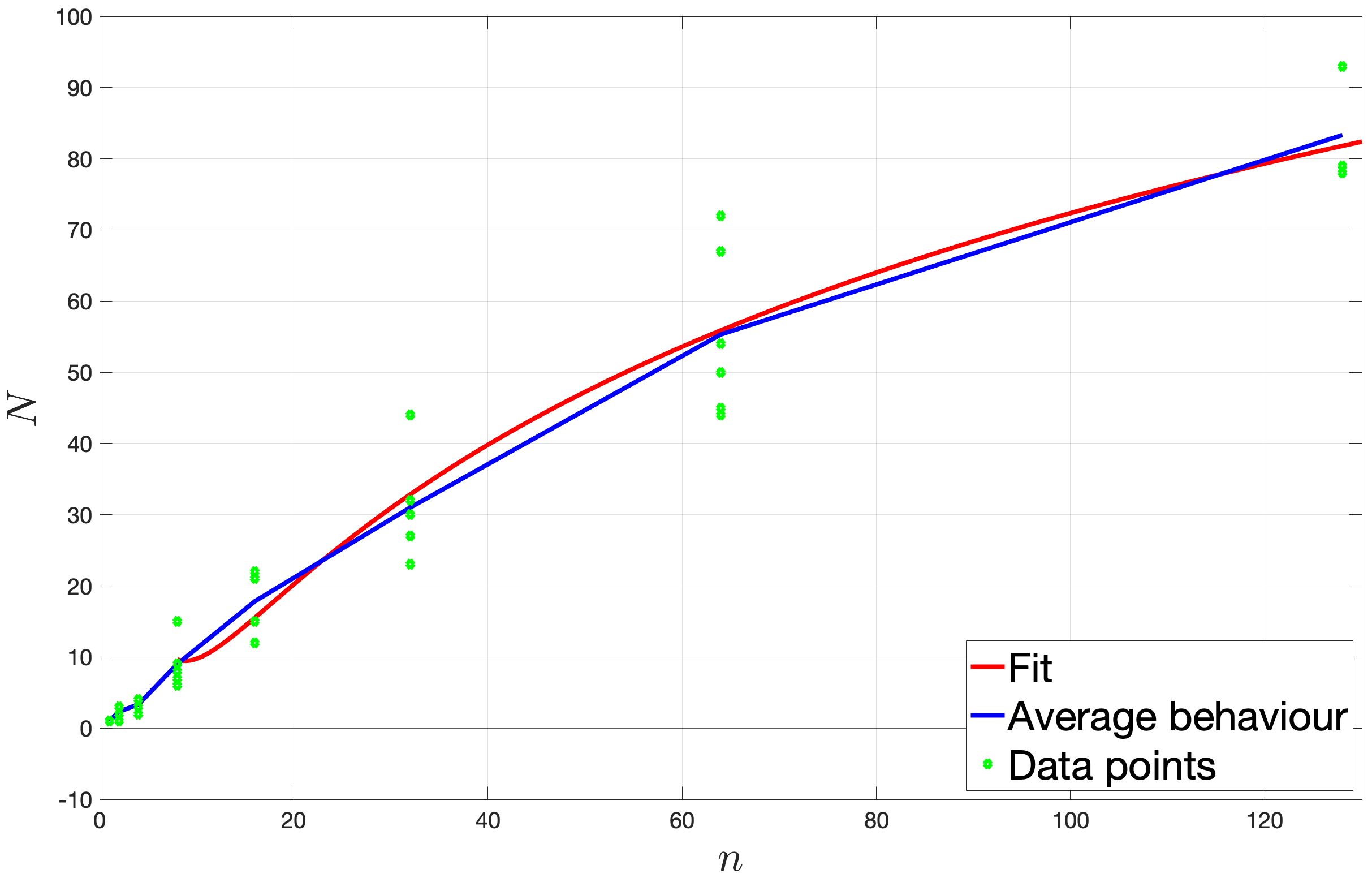}
\caption{$\alpha = 3$}\label{fig:IsingOpen}
\end{subfigure}%
\begin{subfigure}[t]{0.5\textwidth}
\centering
\includegraphics[width = \textwidth]{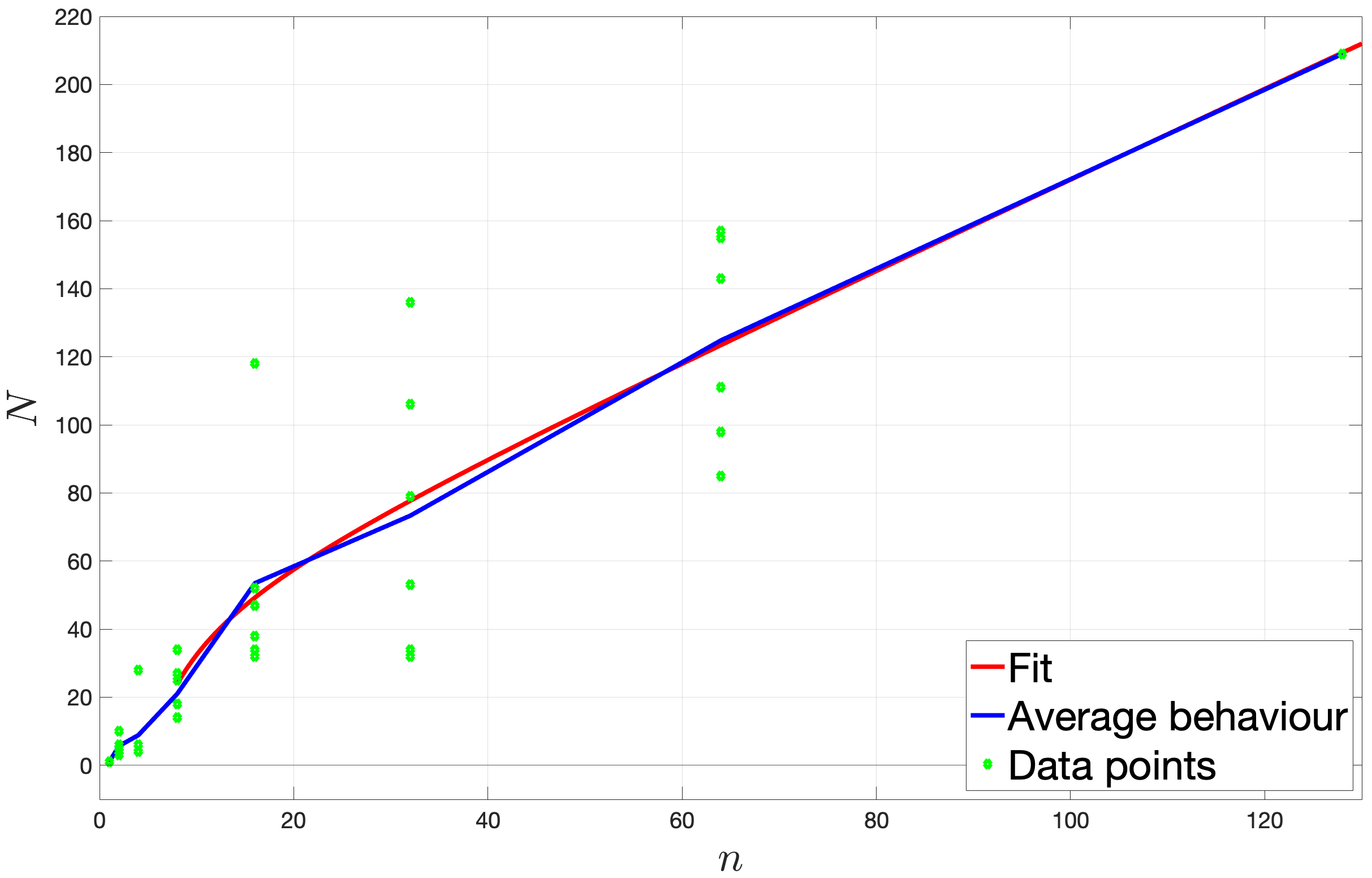}
\caption{$\alpha = 1.5$}\label{fig:IsingBadAlpha}
\end{subfigure}

\caption{\textbf{Results for Ising model.} Each shown data point corresponds to a single trial with $N_\text{test} = 40$ randomly generated test samples. The fitting curves are obtained from all of the data points starting from 8 qubits. \textbf{(a)} Plot of the number of training samples needed to obtain a fixed additive RMS error of $\epsilon = 0.3$ when measuring $O = \frac{H}{\sqrt{n}}$ in the Ising chain with $\alpha = 3$, using $\delta = 4$. \textbf{(b)} Plot of the number of training samples needed to obtain a fixed additive RMS error of $\epsilon = 0.15$ when measuring $O = \frac{H}{\sqrt{n}}$ in the Ising chain with $\alpha = 1.5$, using $\delta = 4$.}
\end{figure}

\subsection{Rydberg atom chain with position disorder}

The Hamiltonian representing 1D chain with open boundary conditions of interacting Rydberg atoms at displaced positions $i + \delta_i$, i.e. with the metric $d(i,j) = |i+\delta_i - j - \delta_j|$, is given by \begin{equation}
   H =  \sum\limits_{i<j} \frac{V}{|i+\delta_i - j - \delta_j|^6} N_i N_j + \sum\limits_i \left( \frac{\Omega}{2} X_i + \Delta N_i \right),
\end{equation} where $N = \frac{I-Z}{2} = |1\rangle\langle 1|$ is the Rydberg number operator. Such a system has been previously studied in e.g. \cite{Marcuzzi2017}. This disordered geometry is illustrated on Figure \ref{fig:Rydberg-Positions}. 
\begin{figure}[H]
    \centering
    \includegraphics[width=\textwidth]{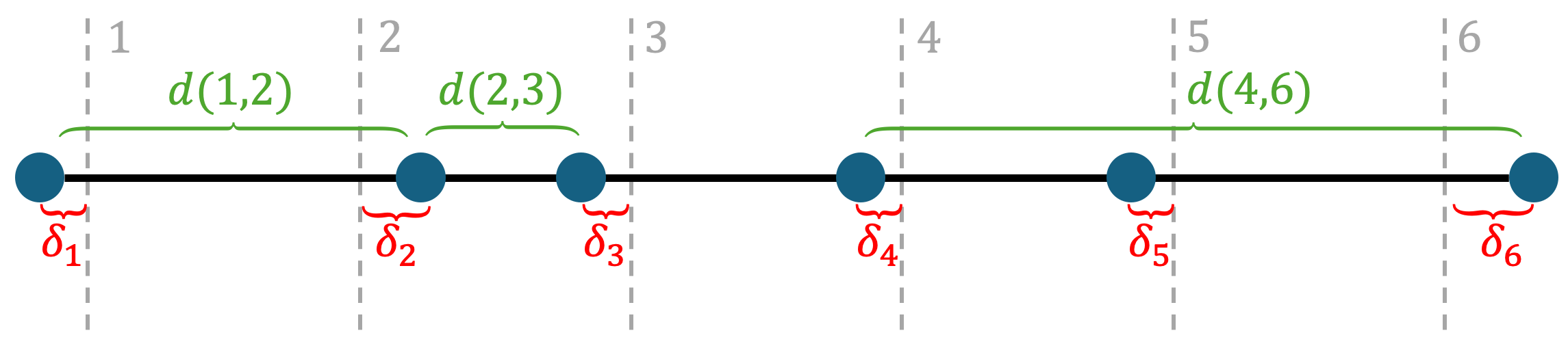}
    \caption{Illustration of the position disorder in the 1D Rydberg chain with 6 sites.}
    \label{fig:Rydberg-Positions}
\end{figure}

For these simulations, the atom displacements $\delta_i$ were sampled uniformly at random from $[-0.25 , 0.25]$. Note that if we were to allow displacements such that $d(i,j) \to 0$, then the magnitude of the energy of such configurations wouldn't be bounded. The repulsive coupling coefficient was taken to be $V = -1$, the detuning $\Delta = 2$, and the Rabi frequency $\Omega = 1$. Data presented on Figure \ref{fig:Plot-Rydberg} were obtained for a fixed error $\epsilon = 2$, and are again illustratively fitted with a function of the form $N = a\cdot \log(n) + b + \frac{c}{n}$ (for $n \geq 8$). Each shown data point again corresponds to a single trial with $N_\text{test} = 40$ test samples, using truncation $\delta = 2$.

\begin{figure}[H]
\centering

\includegraphics[width = \textwidth]{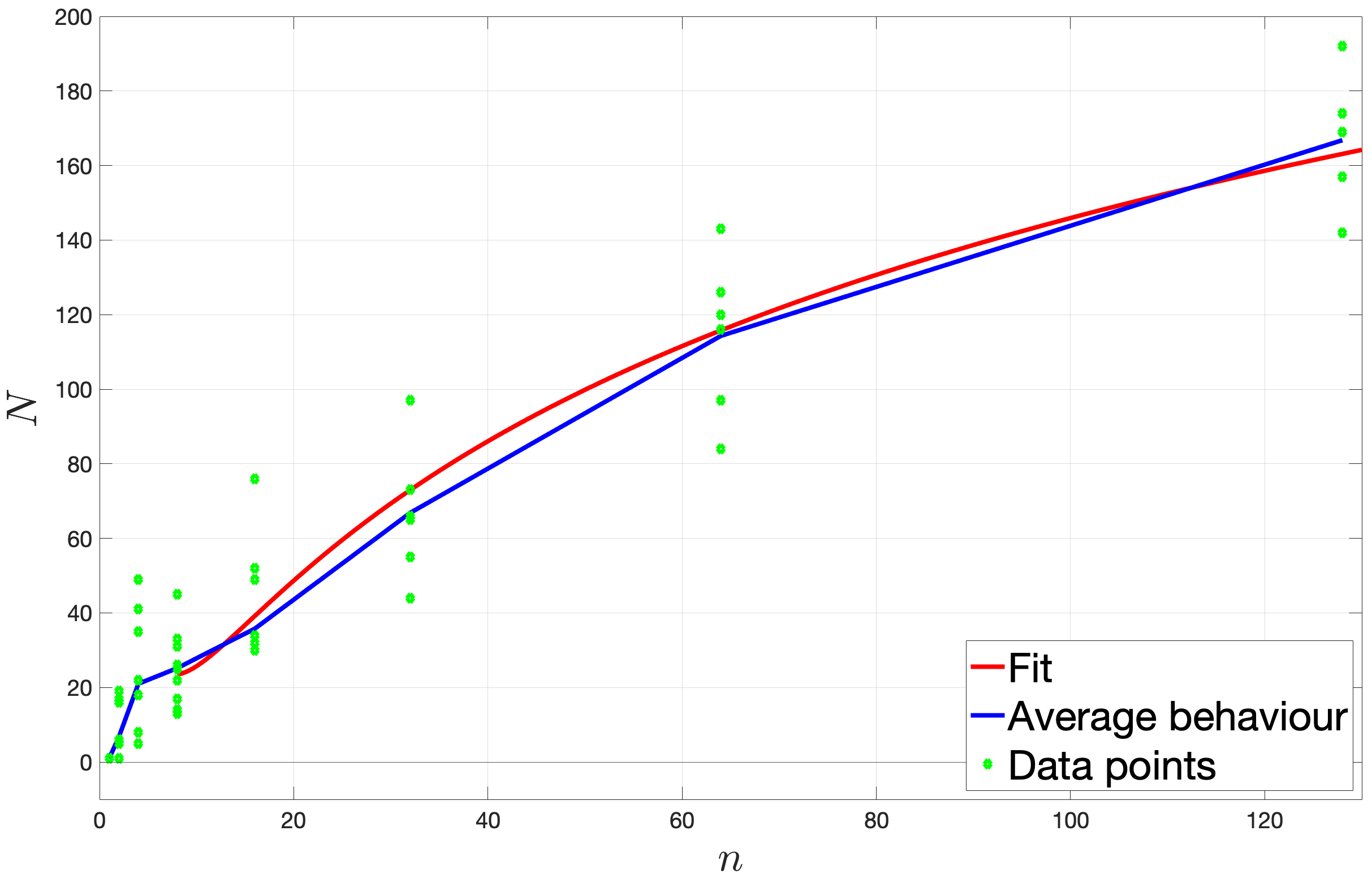}

\caption{\textbf{Results for Rydberg model.} Each shown data point corresponds to a single trial with $N_\text{test} = 40$ randomly generated test samples. The fitting curves are obtained from all of the data points starting from 8 qubits. Plotting the number of training samples needed to obtain a fixed additive RMS error of $\epsilon = 2$ when measuring $O = \frac{H}{\sqrt{n}}$ in the Rydberg chain, using $\delta = 2$.}\label{fig:Plot-Rydberg}
\end{figure}

\section{Conclusions and outlook}

We have shown rigorous guarantees when using a classical machine learning algorithm for predicting ground state properties of Hamiltonians with long range interactions decaying at least polynomially in separation of sites, with the decay exponent $\alpha$ being greater than twice the dimension of the system. Albeit relevant to many systems of interest, this still leaves out important examples with smaller $\alpha$'s, such as the Coulomb interaction. Though our bounds are not applicable for this case, and our simulations do not suggest the efficient logarithmic scaling, it still remains unknown if one can come up with some theoretical guarantees for these stronger interactions, and if the logarithmic scaling is achievable. 

There are also still many other questions of interest when one leaves ground states of gapped Hamiltonians. Is it possible to extend this theory to thermal states? Can one get similar guarantees when working with gapless Hamiltonians, training across different topological phases?
We expect that the results implying  dependence of the expectation value of a $p$-local observable on only the parameters $x$ close to it fail if we 
drop the assumption of a gapped phase.
Indeed if it were true, then we could
efficiently compute ground state properties of NP-hard Hamiltonians, such as the 3SAT one considered in \cite[App. H.1]{Huang_2022}, by reparametrising the 
Hamiltonian in such a way that there exists a parameter choice that would separate the Hamiltonian into $k$-local subsystems, allowing for efficient simulation.

In our work, we have explored the importance of self-averaging effects on the concentration of measure for the expectation values of global observables. We have observed this phenomenon for every correctly normalised global observable we have considered. As this is a property of the systems and observables themselves, it will affect any algorithm learning from those data, not only the specific one we have considered. Is this a universal property for global observables within gapped phases, or would some specific distributions of the parameters with long-range correlations not exhibit this behaviour?

\appendix

\section{Properties of observables and expectation values}
\label{sec:Theoretical guarantees for approximating observables}

\subsection{Dependency of observables on Hamiltonian parameters}

\subsubsection{Exponential decay}
\label{sec:Exponential decay}

We shall use the following result:
\begin{lemma}[\cite{hastings2010locality}]\label{lemma:lr_exp}
Assume that  the interactions are such that
\begin{align}
\label{eq:assumption_lr_exp}
    \sum_{I\ni i,j}
    \|h_I\| 
    \le 
    \lambda 
    \e^{-\mu d(i,j)}
\end{align}
for some positive constants $\lambda, \mu$.
Then, if $d(I,J)>0$, we have the Lieb-Robinson bound, with $s>0$:
\begin{align}
\|[\tau_t(A_J), B_I] \|
\le
2\|A_J\|\,\|B_I\|\,|J|
\e^{2s|t|-\mu d(I,J)}
\,.
\end{align}
\end{lemma}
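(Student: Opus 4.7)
The plan is to carry out the classical Hastings derivation of Lieb--Robinson bounds and specialise to the decay assumption~\eqref{eq:assumption_lr_exp}. For any operator $X$, define $C_{B_I}(X,t) := [\tau_t(X), B_I]$ and $f_X(t) := \|C_{B_I}(X,t)\|$. Differentiating in $t$ and using $\partial_t \tau_t(X) = i\tau_t([H(x),X])$ gives
$$\partial_t C_{B_I}(X,t) = i\sum_{K \in {\cal P}_k(\Lambda)} [\tau_t([h_K, X]), B_I]\,,$$
with only the terms satisfying $K\cap \mathrm{supp}(X)\neq \emptyset$ contributing.

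The key algebraic step is the Jacobi identity
$$[\tau_t([h_K, X]), B_I] = [\tau_t(h_K), C_{B_I}(X, t)] + [C_{B_I}(h_K, t), \tau_t(X)]\,,$$
which after taking norms yields the differential inequality
$$\partial_t f_X(t) \le 2\sum_{K\cap\mathrm{supp}(X)\neq\emptyset} \bigl(\|h_K\|\, f_X(t) + \|X\| f_{h_K}(t)\bigr)\,.$$
Since $d(I,J)>0$ implies $f_{A_J}(0) = \|[A_J, B_I]\| = 0$, Duhamel's principle turns this into an integral inequality that can be iterated. The $n$-fold iterate bounds $f_{A_J}(t)$ by $2\|A_J\|\|B_I\|\sum_{n\ge 1}\tfrac{(2|t|)^n}{n!}a_n(J,I)$, where $a_n(J,I)$ sums over length-$n$ chains $J\to K_1\to K_2\to\cdots\to K_n$ of pairwise-overlapping supports ending at a set that overlaps $I$, weighted by $\prod_i\|h_{K_i}\|$.

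I would then invoke~\eqref{eq:assumption_lr_exp}. For a chain to connect $J$ to $I$, the successive hops must accumulate a total distance of at least $d(I,J)$. Applying $\sum_{K\ni i,j}\|h_K\|\le \lambda e^{-\mu d(i,j)}$ at each step and repeatedly using the triangle inequality for $d$, one extracts an overall factor $e^{-\mu d(I,J)}$, while summing the starting point over $J$ contributes the factor $|J|$. The remaining series in $|t|$ is absorbed into $e^{2s|t|}$ by the standard Lieb--Robinson velocity estimate, with $s>0$ a free parameter controlling the tradeoff in the resummation.

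The main obstacle is the combinatorial bookkeeping for the $a_n(J,I)$ coefficients: careful use of the exponential decay together with the triangle inequality for $d$ is needed to ensure that summing over an $n$-step chain produces a \emph{single} $e^{-\mu d(I,J)}$ factor, rather than an uncontrolled polynomial blowup in $n$, and that the resulting series in $|t|$ collapses into the promised exponential $e^{2s|t|}$. Once this bookkeeping is handled, the stated bound follows by collecting the $\|A_J\|$, $\|B_I\|$ and $|J|$ prefactors from the iteration.
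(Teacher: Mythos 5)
The paper does not prove this lemma at all: it is imported verbatim from Hastings' locality review \cite{hastings2010locality}, so there is no in-paper argument to compare against. Your sketch follows the standard route (commutator iteration, Jacobi identity, Duhamel, resummation of the chain sums $a_n(J,I)$), which is indeed how the cited result is proved, so the overall strategy is right.

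However, the proposal stops precisely at the step that carries all of the content, and the way you describe that step would not go through as stated. After extracting a factor $\e^{-\mu d(I,J)}$ from an $n$-hop chain via the triangle inequality, the leftover per-hop sums have no decay remaining, and on a lattice of unbounded size they diverge; so the plan ``apply $\sum_{K\ni i,j}\|h_K\|\le\lambda\e^{-\mu d(i,j)}$ at each step, pull out a single $\e^{-\mu d(I,J)}$, and bound the rest by a constant per hop'' does not produce $a_n\le |J|\,s^n\,\e^{-\mu d(I,J)}$ without further input. The known resolutions are: (i) split the exponential and use only part of the decay for the extraction, which yields the bound with a strictly smaller rate $\mu'<\mu$; (ii) work with a reproducible decay function $F(d)=\e^{-\mu d}(1+d)^{-D-\epsilon}$ and its convolution constant; or (iii) Hastings' own bookkeeping, which replaces the two-site hypothesis by $\sum_{Z\ni i}\|h_Z\|\,|Z|\,\e^{\mu\,\diam(Z)}\le s$, weights each hop by $\e^{\mu\,\diam(Z)}$, and uses that the accumulated diameters along any connecting chain dominate $d(I,J)$ — this is what fixes the constant $s$ (it is determined by the interaction, not a free resummation parameter as you suggest) and yields the factor $\e^{2s|t|}-1\le\e^{2s|t|}$. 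You correctly flag this combinatorial obstacle as ``the main obstacle,'' but naming it is not the same as overcoming it; as written the proposal is a correct outline of the classical proof with its central estimate left undone.
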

We start by noting that the assumptions of this Theorem are satisfied if the interactions have exponential decay times a polynomial of the diameter.
\begin{lemma}\label{lemma:hI_exp_decay}
Let $\nu=\mu+\epsilon, \epsilon>0$. If
\begin{align}
    \|h_I\|
    \le
    \e^{-\nu \diam(I)}
    P(\diam(I))
    \,,
\end{align}
with $P$ a positive polynomial, then \eqref{eq:assumption_lr_exp} holds.
    \end{lemma}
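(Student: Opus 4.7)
The plan is to control the sum $\sum_{I \ni i,j} \|h_I\|$ by bookkeeping on $\diam(I)$, using that (a) any $I$ containing both $i$ and $j$ must have large diameter, and (b) the number of such $I$ with a given diameter grows only polynomially, which is easily absorbed by the extra exponential margin $\epsilon = \nu - \mu$.

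First I would observe that whenever $I \ni i, j$, the definition of diameter in Section \ref{sec:Setup and notation} gives $\diam(I) \ge 1 + d(i,j)$, so the hypothesis supplies a factor $\e^{-\nu(1+d(i,j))}$ inside every term of the sum. Stratifying by the value of the diameter,
\[
\sum_{I \ni i,j} \|h_I\| \;=\; \sum_{r \,\ge\, 1+d(i,j)} \ \sum_{\substack{I \in {\cal P}_k(\Lambda)\\ I \ni i,j,\ \diam(I) = r}} \|h_I\|.
\]

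For the inner sum I would count as follows: any $I$ with $\diam(I) = r$ and $I \ni i$ is contained in the ball of radius $r$ around $i$, which has $\mathcal{O}(r^D)$ lattice sites by polynomial growth of the $D$-dimensional lattice, so the number of ways to choose the remaining $|I|-2 \le k-2$ vertices is $\mathcal{O}(r^{D(k-2)})$. Combined with the polynomial $P(r)$ from the hypothesis, the inner sum is bounded by $Q(r)\,\e^{-\nu r}$ for some polynomial $Q$ of degree at most $D(k-2) + \deg P$, depending only on $k, D, P$.

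Finally, since $\nu - \mu = \epsilon > 0$, I write $Q(r)\,\e^{-\nu r} = \bigl[Q(r)\,\e^{-\epsilon r}\bigr]\,\e^{-\mu r}$ and note $C := \sup_{r \ge 0} Q(r)\,\e^{-\epsilon r} < \infty$. Summing the geometric series over $r \ge 1+d(i,j)$ then yields the desired bound of the form $\lambda\,\e^{-\mu d(i,j)}$, with $\lambda$ depending on $\mu, \epsilon, P, k, D$ but not on $i,j$. There is no real obstacle here — this is a routine estimate — and the only subtlety is the implicit use of polynomial volume growth of the $D$-dimensional lattice, which is consistent with the setup in Section \ref{sec:Setup and notation}.
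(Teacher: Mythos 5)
Your proof is correct and follows essentially the same route as the paper's: stratify the sum by $\diam(I)=r\ge 1+d(i,j)$, bound the number of contributing sets $I$ by a polynomial in $r$, and absorb that polynomial into the exponential margin $\epsilon=\nu-\mu$ before summing the resulting geometric series. The only difference is that you replace the paper's sharper combinatorial count (Lemma~\ref{lemma:Mtilde}, which separates a Kronecker-delta piece at $r=1+d(i,j)$ from an indicator piece with an extra factor of $r^{-1}$) with a crude ball-volume estimate $\mathcal{O}(r^{D(k-2)})$; this is immaterial here since any polynomial is absorbed by $\e^{-\epsilon r}$, though the finer count is what the paper needs in the power-law setting.
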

\begin{proof}    
We first rewrite:
\begin{align}
    \sum_{I\ni i,j}
    \|h_I\| 
    &\le 
    \sum_{R\ge 1}\e^{-\nu R} P(R)
    \sum_{\ell=2}^k \widetilde{M}_\ell(ij,R)\,,
    \\
        \widetilde{M}_\ell(ij,R)
        &=
        \sum_{i_1,\dots,i_\ell} \delta\left(\max_{1\le a<b\le \ell}(d_{ab}),R-1\right) \id((i_1,\dots,i_\ell)\ni ij)\\
        &\le
        C_\ell
        \left( 
        \delta(d(i,j), R-1)
        +
        [\tfrac{1}{2}\ell(\ell-1)-1]
        \id(d(i,j)\le R-1)
        R^{- 1}  \right)
        R^{(\ell-2) D}  
        \,,
\end{align}
where the inequality is proved in Lemma \ref{lemma:Mtilde} for some $C_\ell>0$.
Defining
\begin{align}
    C_{\text{max}}
    =
    \sum_{\ell=2}^k C_\ell
    \max\left(1,
    \tfrac{1}{2}\ell(\ell-1)-1
    \right)\,,
\end{align}
we have
\begin{align}
    \sum_{\ell=2}^k \widetilde{M}_\ell(ij,R)
    \le 
    R^{(k-2) D}  
    C_{\text{max}}
        \left( 
        \delta(d(i,j), R-1)
        +
        \id(d(i,j)\le R-1)
        \right)
    \,,
\end{align}
and get
\begin{align}
    \sum_{I\ni i,j}
    \|h_I\| 
    &\le 
     \sum_{R\ge 1}\e^{-\nu R} P'(R)
    \left( 
        \delta(d(i,j), R-1)
        +
        \id(d(i,j)+1\le R)
        \right)\\
        &=
     \e^{-\nu (d(i,j)+1)} 
    \left( 
        P'(d(i,j)+1)
        +
        \sum_{R\ge 1+d(i,j)}\e^{-\nu (R-d(i,j)-1)} P'(R)
        \right)
        \,,
\end{align}
for a new positive polynomial $P'(R)$.
The second sum can be performed by changing variables to $r=R-1-d(i,j)$, to give another positive polynomial $P''(d(i,j)+1)$.
The Lemma follows by noting that there exists a constant $C_P$ such that $\e^{-\epsilon (d(i,j)+1)}P(d(i,j)+1)\le C_P$, so that
\begin{align}
    \sum_{I\ni i,j}
    \|h_I\| 
    &\le 
    \e^{-\mu d(i,j)} (C_{P'}+C_{P''})\,.
\end{align}
\end{proof}

We are going to prove the following result:
\begin{prop}\label{prop:approx_exp_decay}
    Assume that the hypothesis of Lemma \ref{lemma:hI_exp_decay} is satisfied, and that
    \begin{align}
    \|\nabla_I h_I\|
    \le
    \e^{-\nu \diam(I)}
    \,,    
    \end{align}    
    and let $\nu,\mu,s$ be as in Lemmas \ref{lemma:hI_exp_decay} and \ref{lemma:lr_exp}.
    Then for 
    $0<a<\frac{1}{2s}(\mu - \log k)$, define
    \begin{align}
    b=\gamma a\,,\quad 
    \mu' = \mu-2sa \,.
    \end{align}
    Also, assume that 
    \begin{align}
        \delta > \max(b^{-1}\xi_*, b^{-1}\eta_*)
    \end{align}
    with $\xi_*$ as in Lemma \ref{lemma:2.6} and $\eta_*$ such that the function $x^{D+9}u_{2/7}(x)$ is decreasing for $x>\eta_*$ and satisfies the hypothesis of Lemma \ref{lemma:extension_2.5} for $s=D+9$.
    Then, there exist positive constants $c_0,c_1',c_2'$ so that:
    \begin{align}
    \label{eq:diff_OI_exp}
    &\frac{|f(O_I, x')-f(O_I, x)|}{\|O_I\|}
    \le
    c_0\delta^{Dk+9}\e^{-\nu \delta}
    +
    c_1'
    \delta^{D-1}
    \e^{-\mu'\delta}
    +
    c_2' (b\delta)^{2D+20}u_{2/7}(b\delta)\,.
\end{align}
\end{prop}
\begin{proof}  
Define
\begin{align}
    t_* = a d(I, J)
    \,,
\end{align}
and note that the Lieb-Robinson bound holds for $|J|
\e^{2s|t|-\mu d(I,J)} < 1$, which holds for $|t|<t_*$ for all $I,J$ with $d(I,J)>0$ since we chose 
\begin{align}
    a<\min_{I,J}
    \frac{1}{2s}\left(\mu -\frac{\log |J|}{d(I, J)}\right)
    =
    \frac{1}{2s}(\mu - \log k)
    \,.
\end{align}
Then we can compute ${\cal I}_1$ appearing in \eqref{eq:I1_I2} as:
\begin{align}
    {\cal I}_1
    \le
    2|J|
    \e^{-\mu d(I,J)}
    \int_0^{t_*} \e^{2s t}\dd t
    =
    s^{-1}
    |J|
    \e^{-\mu d(I,J)}
    (\e^{2s t_*}-1)
    \le
    s^{-1}
    |J|
    \e^{-\mu' d(I,J)}
    \,.
\end{align}
We have that, denoting $R=d(I,J)$,
\begin{align}
    &\|\nabla_J
    f(O_I,x)
    \|
    \le
    \|\nabla_Jh_J\|\,\|O_I\|
    \left(
    c_1/q_*
    \e^{-\mu' R}
    +
    4 G(b R)
    \right)
    \,, \quad c_1 = s^{-1}|J|q_*\,,
\end{align}
with $G$ as in Lemma \ref{lemma:2.6}.
Next, under the assumption
(all the calculations below can be easily adopted to the case where we have a decay of $\e^{-\nu \diam(I)}$ times a polynomial of $\diam(I)$):
\begin{align}
    \delta>b^{-1}\xi_*\,,\quad 
    \|\nabla_I h_I\|
    \le
    \e^{-\nu \diam(I)}
    \,,    
\end{align}
we have that, with $c_2 = \tfrac{4}{\gamma}
    130\e^2 q_*$,
\begin{align}
    &\frac{|f(O_I, x')-f(O_I, x)|}{\|O_I\|}
    \le S_1 + S_2\,,\\
    &S_1=
    \sum_{J\in {\cal P}_k(\Lambda)}
    \id\left(d(I,J)\le \delta
    \wedge
    \diam(J)> \delta
    \right)
    \e^{-\nu \diam(J)}
    \left(
    c_1
    \e^{-\mu' d(I,J)}
    +
   4G(b d(I,J))
    \right)\,,
    \\
    &S_2=
    \sum_{J\in {\cal P}_k(\Lambda)}
    \id\left(
    d(I,J)>\delta\right)
    \e^{-\nu \diam(J)}
    \left(
    c_1
    \e^{-\mu' d(I,J)}
    +
    c_2 (b d(I,J))^{10}u_{2/7}(b d(I,J))
    \right)\,.
\end{align}
Denoting $R_*=b^{-1}\xi_*$,
we can rewrite these as follows:
\begin{align}
 &S_1=
    \sum_{r>\delta}
    \e^{-\nu r}
    \left( 
    c_1
    \sum_{R\le \delta}
    \e^{-\mu' R}
    +    
    \sum_{R\le R_*}
    2\frac{Kq_*}{\gamma}
    +
    c_2
    \sum_{R_*<R\le \delta}
    (b R)^{10}u_{2/7}(b R)
    \right)
    M(I, r, R)\,,
    \\
&S_2=
\sum_{R>\delta}\sum_{r \ge 1}
    \e^{-\nu r}
    \left(
    c_1
    \e^{-\mu' R}
    +
    c_2 (b R)^{10}u_{2/7}(b R)
    \right)
    M(I, r, R)\,,
    \\
&M(I, r, R)=
    \sum_{J\in {\cal P}_k(\Lambda)}
    \delta(r, \diam(J))
    \delta(R, d_{IJ})
    \le 
    C_{\text{sum}}
    r^{(k-1)D-1}
    R^{D-1}
    \,,
\end{align}
where the bound is according to Corollary \ref{corollary:M}.
Putting things together 
we have
\begin{align}
&S_1\le \Phi
    \left(
    c_1
    \sum_{R\le \delta}
    R^{D-1}
    \e^{-\mu' R}
    +
    \sum_{R\le R_*}
    2\frac{Kq_*}{\gamma} R^{D-1}
    +
    c_2 
    \sum_{R_*<R\le \delta}
    R^{D-1}(b R)^{10}u_{2/7}(b R)
    \right)
    \,.
\end{align}
Setting $d=D(k-1)-1$, $\Phi$ 
can be upper bounded by
\begin{align}
    \label{eq:sum_r_exp}
    \Phi/C_{\text{sum}}
    =
    \sum_{r>\delta}
    r^{d}
    \e^{-\nu r}   
    =
    \e^{-\nu \delta}   
    \sum_{y>0}
    (y+\delta)^{d}
    \e^{-\nu y}   
    =
    \e^{-\nu \delta}   
    \sum_{\ell=0}^{d}
    \delta^\ell
    \binom{d}{\ell}
    \sum_{y>0}
    y^{d-\ell}    
    \e^{-\nu y}   
    \le 
    C' 
    \e^{-\nu \delta}   
    \delta^{d}\,.
\end{align}
Now we bound each term in parenthesis in $S_1$.
We use that $\e^{-\mu' R},u_{2/7}(bR)\le 1$, to see that the term in parenthesis is bounded by
\begin{align}
c_1 \delta^D + c_1' + c_2 b^{10} \delta^{D+10}
\le \max(c_1, c_1', c_2 b^{10})
\delta^{D+10}
\,,
\end{align}
for a constant $c_1'>0$.
Thus we get:
\begin{align}
S_1\le
    c_0
    \delta^{D(k-1)-1+D+10}
    \e^{-\nu\delta}
    =
    c_0
    \delta^{Dk+9}
    \e^{-\nu\delta}
    \,,\quad
    c_0
    =
    C_{\text{sum}} C' \max(c_1, c_1', c_2 b^{10})
    \,.
\end{align}
Note that the case $R=0$, which was excluded when deriving the Lieb-Robinson bound is included in the sum, and just contributes as an additive constant.

Now we get back to $S_2$.
Defining
\begin{align}
    g(R) 
    =
    R^{D-1}
    \left(
    c_1
    \e^{-\mu' R}
    +
    c_2 (b R)^{10}u_{2/7}(b R)
    \right)\,,
\end{align}
it is equal to
\begin{align}
    S_2=
    C_{\text{sum}}
\sum_{r \ge 0}
r^{D(k-1)-1}
    \e^{-\nu r}
\sum_{R >\delta}
g(R)\,.
\end{align}
The sum over $r$ gives the constant
\begin{align}
    \Phi
    =
    \sum_{r \ge 1}
r^{D(k-1)-1}
    \e^{-\nu r}\,.
\end{align}
The first sum over $R$ can be bounded as in \eqref{eq:sum_r_exp}:
\begin{align}
    \sum_{R>\delta}
    R^{D-1}
    \e^{-\mu' r}   
    \le 
    C'
    \delta^{D-1}
    \e^{-\mu'\delta}\,.
\end{align}
The remaining sum is
\begin{align}
\sum_{R >\delta}
(bR)^{D+9}u_{2/7}(b R)\,.
\end{align}
We are going to use Lemma \ref{lemma:extension_2.5}.
Since $b\delta>\eta_*$, we can bound the sum as
\begin{align}
    \sum_{R >\delta}
(bR)^{D+9}u_{2/7}(b R)
    &\le
    \int_{\delta}^\infty
    \dd x
    (bx)^{D+9}u_{2/7}(bx)
    =
    b^{-1}
    \int_{b\delta}^\infty
    \dd \eta
    \eta^{D+9}u_{2/7}(\eta)\\
    &\le 
    \frac{7}{2b}
    (2D+21) (b\delta)^{2D+20}u_{2/7}(b\delta)\,.
\end{align}
Putting things together we have
\begin{align}
    &S_2\le    
    c_1'
    \delta^{D-1}
    \e^{-\mu'\delta}
    +
    c_2' (b\delta)^{2D+20}u_{2/7}(b\delta)\,,
    \\
    &c_1'=C_{\text{sum}}
    \Phi C' c_1
    \,,\quad
    c_2'=
    C_{\text{sum}}
    \Phi \frac{7}{2b}
    (2D+21)
    c_2\,,
\end{align}
which proves the result.
\end{proof}

We have the following corollary:

\begin{corollary}\label{cor:delta_eps_exp}
    Under the same assumptions of Proposition \ref{prop:approx_exp_decay}, we have 
    \begin{align}
    |f(O_I, x')-f(O_I, x)|
    \le
    \epsilon \|O_I\|
    \end{align}
    if $0<\epsilon\le \e^{-1}$ and
    \begin{align}
    \label{eq:delta_eps_exp}
    \delta \ge c \log^2(\tilde{c}/\epsilon)\,,
    \quad 
    \tilde{c} = 3\max(c_0,c_1',c_2')
    \,.
\end{align}
\end{corollary}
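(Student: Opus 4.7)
The plan is to take the three-term bound of Proposition \ref{prop:approx_exp_decay} as the starting point and show that, under the stated choice of $\delta$, each of the three summands is bounded by $\epsilon/3$, so that their sum is at most $\epsilon$ (after dividing by $\|O_I\|$). Writing $\tilde c = 3\max(c_0,c_1',c_2')$, it suffices to establish
\begin{align}
    c_0 \delta^{Dk+9} \e^{-\nu\delta} \le \tfrac{\epsilon}{3}\,,\quad
    c_1' \delta^{D-1}\e^{-\mu'\delta} \le \tfrac{\epsilon}{3}\,,\quad
    c_2' (b\delta)^{2D+20} u_{2/7}(b\delta) \le \tfrac{\epsilon}{3}\,,
\end{align}
so the hypothesis $0<\epsilon\le \e^{-1}$ (ensuring $\log(\tilde c/\epsilon)\ge 1$) guarantees that $\log^2(\tilde c/\epsilon)$ is a genuine monotone quantity in $\log(1/\epsilon)$.

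For the first two summands the analysis is standard: they are a polynomial times a pure exponential in $\delta$, so taking logarithms they become $(\text{const})\log \delta - \nu\delta \le \log(\epsilon/(3c_0))$ (and analogously for the second), which is satisfied by any $\delta$ that grows at least linearly in $\log(1/\epsilon)$. Since $c\log^2(\tilde c/\epsilon) \gg \log(1/\epsilon)$ for $\epsilon$ small, these two constraints are automatically satisfied for any sufficiently large absolute constant $c$.

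The main obstacle is the third summand, because $u_{2/7}(\xi)=\exp(-\tfrac{2}{7}\xi/\log^2\xi)$ decays strictly slower than a pure exponential; this term sets the rate. Taking logarithms, the requirement becomes
\begin{align}
    \tfrac{2}{7}\,\frac{b\delta}{\log^2(b\delta)} \;\ge\; (2D+20)\log(b\delta) + \log(3c_2'/\epsilon)\,.
\end{align}
The trick is that if we try $\delta = c\log^2(\tilde c/\epsilon)$, then $\log(b\delta) = O(\log\log(\tilde c/\epsilon))$, so the left-hand side scales as
\begin{align}
    \Omega\left(\frac{\log^2(1/\epsilon)}{(\log\log(1/\epsilon))^2}\right)\,,
\end{align}
which dominates the right-hand side $O(\log(1/\epsilon))$ by an arbitrarily large factor as $\epsilon\to 0$. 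Thus one can choose the absolute constant $c$ large enough so that this inequality, together with the two exponential bounds above, holds for all $\epsilon\in (0,\e^{-1}]$.

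To finish, I would collect the three constant requirements on $c$ arising from the three terms and take $c$ to be their maximum. The specific factor of $3$ in $\tilde c=3\max(c_0,c_1',c_2')$ appears precisely because each term is required to be bounded by $\epsilon/3$; the $\max$ absorbs the three constants into a single threshold, so the inequality $\delta \ge c\log^2(\tilde c/\epsilon)$ simultaneously controls all three bounds with one common $c$.
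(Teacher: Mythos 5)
Your proposal is correct and follows essentially the same route as the paper: the paper's proof is a one-line appeal to Lemmas \ref{lemma:6} and \ref{lemma:7} to bound each of the three terms in \eqref{eq:diff_OI_exp} by $\epsilon/3$, which is exactly the splitting you use, with your inline asymptotic analysis of the $u_{2/7}$ term reproducing the content of Lemma \ref{lemma:7} (where the paper makes the "choose $c$ large enough, valid for all $\epsilon\le\e^{-1}$" step rigorous via a monotonicity argument).
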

The proof follows from Lemmas \ref{lemma:6} and \ref{lemma:7} to bound each term in \eqref{eq:diff_OI_exp} by $\epsilon/3$. 

\subsubsection{Power law with $\alpha\in (2D,2D+1)$}
\label{sec:power law alpha gt 2d}

We start to discuss the case of power law interactions with exponent $\alpha\in (2D,2D+1)$ and then show how the results obtained in this case can be extended to  $\alpha\ge 2D+1$.
We shall use the following Lieb-Robinson bound:

\begin{lemma}\label{lemma:lr_alpha_ge_2d}
Assume that
\begin{align}
    \label{eq:H_decay_power_law}
    H=\sum_{I\in {\cal P}_k(\Lambda)} h_I\,,\quad 
    \sum_{I\ni i,j}\|h_I\|\le g d(i,j)^{-\alpha}\,,
\end{align}
for a positive constant $g$.
    For any $\alpha\in (2D,2D+1)$,  
    let
\begin{align}
    \epsilon_*= \frac{(\alpha-2D)^2}{(\alpha-2D)^2+\alpha-D}\,,\quad 
    \alpha' = \alpha-2D-\epsilon
    \,,\quad
    \beta = \frac{\alpha-D}{\alpha-2D}-\frac{\epsilon}{2}\,,
\end{align}
with $\epsilon\in (0, \epsilon_*)$.
Then there exist constants $c,C_1,C_2\ge 0$ s.t. for all $0\le t\le c r^{\alpha'}$ and $|I|, |J|=\mathcal{O}(1)$
    \begin{align}
    \|[A_I(t),B_J]\|
&\le
\|A_I\|\|B_J\|
f(t, d(I,J)) \,,\quad
f(t,r)=
    C_1 \left( 
    \frac{t}{r^{\alpha'}}
    \right)^{\beta}
    +
    C_2
    \frac{t}{r^{\alpha-d}}
    \,.
\end{align}
\end{lemma}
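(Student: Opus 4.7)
The plan is to derive this bound by combining the finite-range Lieb-Robinson estimate of Lemma \ref{lemma:lr_exp} with a Hastings-Koma style cutoff decomposition, and then bootstrapping the exponent via iteration. First, I would fix a diameter cutoff $\chi$ and split $H = H_{\le \chi} + H_{>\chi}$, where $H_{\le \chi}$ collects the terms $h_K$ with $\diam(K) \le \chi$. Since $H_{\le \chi}$ is strictly finite-range, its reproducing sum $\sum_{K \ni i,j}\|h_K\|$ decays trivially beyond distance $\chi$ (and a fortiori satisfies exponential decay), so Lemma \ref{lemma:lr_exp} applied to $H_{\le \chi}$ gives $\|[A_I^{\le \chi}(t), B_J]\| \le 2|I|\|A_I\|\|B_J\|\exp(2s|t| - \mu d(I,J))$, with effective velocity $v(\chi)$ polynomial in $\chi$; in particular this commutator is negligible once $d(I,J) \gg v(\chi)t$.

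Second, the deviation between the full and truncated evolutions is controlled by the Duhamel identity written in the interaction picture,
\begin{equation}
A_I(t) - A_I^{\le \chi}(t) = i\int_0^t ds\, U(s-t)^{\dagger}\bigl[H_{>\chi},\, A_I^{\le \chi}(s-t)\bigr] U(s-t),
\end{equation}
so that $\|[A_I(t) - A_I^{\le\chi}(t), B_J]\|$ is bounded by $\int_0^t ds$ times a sum over $K$ with $\diam(K) > \chi$ of nested commutators, each further paired with $B_J$. The hypothesis $\sum_{K \ni i,j}\|h_K\|\le g\, d(i,j)^{-\alpha}$, combined with the vertex-counting estimates of Corollary \ref{corollary:M}, bounds the relevant tail over $K$ intersecting a shell of radius $r = d(I,J)$ by something of order $r^{-(\alpha - D)}$. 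Summed against $\int_0^t ds$ this produces the linear-in-$t$ contribution $C_2\, t/r^{\alpha-D}$ to $f(t,r)$.

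Third, to upgrade the resulting linear-in-$t$ estimate to the sharper $(t/r^{\alpha'})^{\beta}$ term with $\beta > 1$, I would feed this bound back into the Duhamel step as a better input for $\|[h_K, A_I^{\le \chi}(\cdot)]\|$ and iterate. Each pass of this bootstrap, due to Foss-Feig-Michalakis-Gorshkov, raises the effective power of $t/r^{\alpha'}$ by a factor of $(\alpha-D)/(\alpha-2D)$, while the cutoff $\chi$ is re-optimized between rounds as a power of $r$. The parameter $\epsilon > 0$ (hence $\alpha' = \alpha - 2D - \epsilon$ and the $-\epsilon/2$ correction in $\beta$) absorbs the polynomial prefactors accumulated along the way, namely the $\chi$-dependent velocity in the short-range bound and the polynomial factors in Corollary \ref{corollary:M}; the constraint $\epsilon < \epsilon_*$ and $t \le c r^{\alpha'}$ is precisely what is needed for the iteration to close. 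The main obstacle I anticipate is keeping the combinatorics of the $k$-body (hyperedge) setting under control throughout the iteration, since each commutator $[h_K,\cdot]$ recruits hyperedges $K$ of size up to $k$ rather than edges, and one must verify that the counting bounds of Appendix \ref{sec:Technical Lemmas} absorb the multiplicities uniformly in the recursion depth so that the final constants $C_1, C_2$ remain finite.
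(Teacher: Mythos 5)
Your proposal sets out to re-derive the power-law Lieb--Robinson bound from scratch via a cutoff decomposition and an iterated Duhamel bootstrap. This is not what the paper does, and as written it has a genuine gap: the entire difficulty of the statement is concentrated in your third step, which you assert rather than prove. The claim that ``each pass of the bootstrap raises the effective power of $t/r^{\alpha'}$ by a factor of $(\alpha-D)/(\alpha-2D)$,'' that the cutoff $\chi$ can be re-optimized between rounds, and that the constraint $\epsilon<\epsilon_*$ with $\epsilon_*=\frac{(\alpha-2D)^2}{(\alpha-2D)^2+\alpha-D}$ is ``precisely what is needed for the iteration to close'' is exactly the content of Theorem 1 (S1) of \cite{Tran_2021}, whose proof is a substantial, multi-stage induction; nothing in your sketch verifies that the error terms compound in the stated way, that the exponent $\beta=\frac{\alpha-D}{\alpha-2D}-\frac{\epsilon}{2}$ actually emerges, or that the accumulated prefactors are absorbed by the loss of $\epsilon$ in $\alpha'$. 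Your first two steps (the short-range bound for $H_{\le\chi}$ with $\chi$-dependent velocity, and the Duhamel estimate giving the linear-in-$t$ tail $C_2\,t/r^{\alpha-D}$) are standard and plausible, but they only yield the trivial-exponent bound; without the bootstrap actually carried out you have not obtained the $(t/r^{\alpha'})^{\beta}$ term with $\beta>1$, which is the whole point of the lemma.

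For comparison, the paper does not attempt this derivation at all. It imports Theorem S1 of \cite{Tran_2021} as a black box --- explicitly acknowledging that for $k$-body interactions with $k>2$ the needed input is only a conjecture of \cite{Tran_2021}, proved there for $k=2$ --- and the actual proof (Corollary \ref{corollary:lr_alpha_ge_2d}) consists solely of upgrading the single-site statement $\|\mathbb{P}_r O(t)\|\le f(t,r)$ to a commutator bound for operators supported on sets $I,J$ of size $\mathcal{O}(1)$: one writes $O_I(t)=\mathbb{P}^I_r O_I(t)+(\id-\mathbb{P}^I_r)O_I(t)$, notes the second piece commutes with $O_J$ when $r=d(I,J)$, bounds $\mathbb{P}^I_r\le\sum_{i\in I}\mathbb{P}^i_r$, and invokes a lemma of \cite{Tran_2020} to control $\|\mathbb{P}^i_r O_I(t)\|$ by $|I|f(t,r)$. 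Note also that your program, if completed, would have to resolve the $k$-body conjecture that the paper itself leaves open --- you flag the hyperedge combinatorics as ``the main obstacle'' and claim the counting lemmas absorb it uniformly in the recursion depth, but this is precisely the unproved part, so the claim cannot simply be asserted.
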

This is proven in Corollary \ref{corollary:lr_alpha_ge_2d}
based on results in the literature.
The proof relies on a conjecture that was made in \cite{Tran_2021}, which applies to the case at hand of a Hamiltonian given by a sum of terms with at most $k$-body interactions, while only the case $k=2$ was proved in \cite{Tran_2021}.

The assumptions of the Lieb-Robinson bound are satisfied if the interactions decay as in the following lemma:

\begin{lemma}\label{lemma:beta_ell}
    If for all $I$ such that $|I|>1$,
    \begin{align}
        \|h_I\|\le g_{|I|} \diam(I)^{-\gamma(|I|)}\,,\quad 
        \gamma(\ell) = (\ell-2)D + \alpha\,,
    \end{align}
    for some constant $g_\ell>0$,
    then \eqref{eq:H_decay_power_law} holds.
\end{lemma}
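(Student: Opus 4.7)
\textbf{Proof plan for Lemma \ref{lemma:beta_ell}.}

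The plan is to mirror the combinatorial argument used in the proof of Lemma \ref{lemma:hI_exp_decay}, but with the exponential weight $\e^{-\nu R}$ replaced by the power-law weight $R^{-\gamma(\ell)}$. The key algebraic observation is that $\gamma(\ell)=(\ell-2)D+\alpha$ is chosen precisely so that it cancels against the $R^{(\ell-2)D}$ growth in the count of $\ell$-subsets of fixed diameter $R$ passing through a prescribed pair of vertices; after this cancellation only sums of $R^{-\alpha}$ and $R^{-\alpha-1}$ survive, both of which are controlled by $d(i,j)^{-\alpha}$ on the tail starting at $R=d(i,j)+1$.

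First I would decompose the sum by the cardinality $\ell=|I|$. Since $I\ni i,j$ with $i\ne j$ forces $|I|\ge 2$ and $\diam(I)\ge d(i,j)+1$, this gives
\begin{align}
\sum_{I\ni i,j}\|h_I\|
\;\le\;
\sum_{\ell=2}^{k} g_\ell
\sum_{R\ge d(i,j)+1} R^{-\gamma(\ell)}\, M_\ell(ij,R),
\end{align}
where $M_\ell(ij,R)$ counts the $\ell$-subsets containing $\{i,j\}$ with diameter exactly $R$. Each such set is counted at most $\ell!$ times by the ordered tuple count $\widetilde{M}_\ell(ij,R)$ appearing in Lemma \ref{lemma:Mtilde}, so $M_\ell(ij,R)\le \widetilde{M}_\ell(ij,R)$ up to an $\ell$-dependent factor that I would absorb into the $g_\ell$'s.

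Second, I would invoke the bound
\begin{align}
\widetilde{M}_\ell(ij,R)\le
C_\ell\!\left(\delta(d(i,j),R-1)+\tfrac{1}{2}\ell(\ell-1)\,\id(d(i,j)\le R-1)\,R^{-1}\right)R^{(\ell-2)D}
\end{align}
from Lemma \ref{lemma:Mtilde}. Substituting this into the previous display and using the exponent matching $\gamma(\ell)-(\ell-2)D=\alpha$, the Kronecker-$\delta$ term collapses to a single value $R=d(i,j)+1$ and contributes $(d(i,j)+1)^{-\alpha}\le d(i,j)^{-\alpha}$, while the indicator term produces $\sum_{R>d(i,j)}R^{-\alpha-1}$, which by integral comparison is at most $\alpha^{-1}d(i,j)^{-\alpha}$ (finite because $\alpha>2D>0$).

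Finally, summing over $\ell\in\{2,\dots,k\}$ collects the resulting bounds into a single constant of the form $g=\sum_\ell g_\ell C_\ell(1+\tfrac{1}{2}\ell(\ell-1)/\alpha)$ multiplying $d(i,j)^{-\alpha}$. Since $k=\mathcal{O}(1)$ this is finite and independent of $n$, yielding the claimed bound. The main obstacle is really just the geometric counting encapsulated by Lemma \ref{lemma:Mtilde}; once that lemma is invoked, the present statement is essentially an arithmetic corollary, and the only thing to watch carefully is that the exponent matching is exact so that the surviving $R$-sum is summable at infinity and its tail scales as $d(i,j)^{-\alpha}$ rather than something weaker.
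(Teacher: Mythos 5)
Your proposal is correct and follows essentially the same route as the paper's proof: decompose by cardinality $\ell$, rewrite the sum over sets as a sum over the diameter $R$ weighted by the count $\widetilde{M}_\ell(ij,R)$ from Lemma \ref{lemma:Mtilde}, use the exact exponent matching $\gamma(\ell)-(\ell-2)D=\alpha$ so that the Kronecker-delta term yields $(1+d(i,j))^{-\alpha}$ and the indicator term yields a tail sum $\sum_{R\ge 1+d(i,j)}R^{-1-\alpha}\le \alpha^{-1}d(i,j)^{-\alpha}$ by integral comparison, then sum over $\ell\le k$. The only (harmless) difference is that you explicitly track the set-versus-ordered-tuple overcounting factor, which the paper absorbs implicitly.
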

\begin{proof}
    Let us fix $|I|=\ell$ and consider the sum over only those sets:
    \begin{align}
        f_\ell(i,j) 
        &\coloneqq \sum_{I,|I|=\ell} \|h_I\| \id(i,j\in I)
        \le  g_{\ell} \sum_{I,|I|=\ell} \diam(I)^{-\gamma(\ell)}\id(i,j\in I)\\
        &\,= g_{\ell} 
        \sum_{R\ge 1} R^{-\gamma(\ell)} \widetilde{M}_\ell(ij,R)
        \,,\quad 
        \widetilde{M}_\ell(ij,R)=
        \sum_{i_1,\dots,i_\ell} \delta\left(\max_{1\le a<b\le \ell}(d_{ab}),R-1\right) \id(i,j\in I)\,.
    \end{align}
    Here we denoted $d_{ab}=d(i_a,i_b)$. 
    From Lemma \ref{lemma:Mtilde}, we know \begin{align}
        \widetilde{M}_\ell(ij,R)
        \le 
        C_\ell
        \left( 
        \delta(d(i,j), R-1)
        R^{(\ell-2) D}  
        +
        [\tfrac{1}{2}\ell(\ell-1)-1]
        \id(d(i,j)\le R-1)
        R^{(\ell-2) D - 1}  \right)
        \,,
    \end{align}
    and, denoting $C_\ell' = g_\ell C_\ell$,
    \begin{align}
        f_\ell(i,j) 
        = C_{\ell}' 
        \sum_{R\ge 1} R^{-\gamma(\ell)+(\ell-2) D} \left( 
        \delta(d(i,j), R-1)
        +
        [\tfrac{1}{2}\ell(\ell-1)-1]
        \id(d(i,j)\le R-1)
        R^{- 1}  \right)\,.
    \end{align}
    Now if we plug in the value $\gamma(\ell)=(\ell-2)D+\alpha$, we get for the first term
    \begin{align}
        \sum_{R\ge 1} R^{-\alpha}
        \delta(d(i,j), R-1)
        =
        (1+d(i,j))^{-\alpha}
    \end{align}
    and the second term
    \begin{align}
        \sum_{R\ge 1} R^{-1-\alpha}
        \id(d(i,j)\le R-1)
        = 
        \sum_{R\ge 1+d(i,j)} R^{-1-\alpha}
        \le \int_{d(i,j)}^\infty x^{-1-\alpha}\dd x 
        = \frac{1}{\alpha} d(i,j)^{-\alpha}\,.
    \end{align}
    The Lemma follows since this holds for all $\ell$.
\end{proof}

\begin{prop}\label{prop:approx_alpha_gt_2d}
Let $\alpha\in (2D,2D+1)$ and assume that the hypothesis of Lemma \ref{lemma:beta_ell} are satisfied and that for all $I$ such that $|I|>1$,
    \begin{align}
    \|\nabla_I h_I\|
    \le
    \diam(I)^{-\gamma(|I|)}
    \,,    \quad 
    \gamma(\ell) = (\ell-2)D+\alpha\,.
    \end{align}     
    With the definitions of Lemma \ref{lemma:lr_alpha_ge_2d}, let 
    \begin{align}
    \label{eq:mu}
    b=ac\gamma\,,\quad 
    \eta = \frac{1}{2}\frac{\alpha-2D}{2\alpha-2D-1}\,,\quad 
    \epsilon=\frac{\epsilon_*}{2}\,,\quad 
    \mu = \alpha'\eta\,,\quad 
    \nu = \alpha' \beta - \mu (\beta+1)\,.
    \end{align}
    Also, assume that 
    \begin{align}
        \delta > \left(b^{-1}\max(\xi_*,\eta_*)\right)^{1/\mu}
    \,,
    \end{align}
    with $\xi_*$ as in Lemma \ref{lemma:2.6} and $\eta_*$ is such that the function $t^{9+\frac{D}{\mu}}u_{2/7}(t)$ is monotonically decreasing for $t>\eta_*$ and the hypothesis of Lemma \ref{lemma:extension_2.5} are satisfied.
    Then, there exist positive constants $C_1,C_2$ such that:
    \begin{align}
\frac{|f(O_I, x')-f(O_I, x)|}{\|O_I\|}
\le   
C_1
    \delta^{-\nu+D}
+
C_2
(b\delta^\mu)^{21 + 2\frac{D}{\mu}}
    u_{2/7}(b\delta^\mu)\,.
\end{align}
\end{prop}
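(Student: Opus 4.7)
The argument mirrors Proposition \ref{prop:approx_exp_decay}, but the exponential Lieb-Robinson bound is replaced by the power-law version of Lemma \ref{lemma:lr_alpha_ge_2d}. The central novelty is that the crossover time in \eqref{eq:I1_I2} must now scale as a \emph{sublinear} power of $R=d(I,J)$; the plan is to set $t_*(R)=ac\,R^\mu$ with $\mu=\alpha'\eta<\alpha'$, ensuring $t_*\le cR^{\alpha'}$ so Lemma \ref{lemma:lr_alpha_ge_2d} applies on $[0,t_*]$ for every $R\ge 1$. The hypothesis $\delta>(b^{-1}\xi_*)^{1/\mu}$ then forces $\gamma t_*(R)>\xi_*$ for all $R>\delta$, which is precisely what unlocks the fast-decay regime of Lemma \ref{lemma:2.6}.

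\textbf{Bounding ${\cal I}_1$ and ${\cal I}_2$.} First I would integrate the LR bound of Lemma \ref{lemma:lr_alpha_ge_2d}:
\begin{align}
{\cal I}_1 \;\le\; \int_0^{t_*}\!\left(C_1 (t/R^{\alpha'})^\beta+C_2\,t\,R^{-(\alpha-D)}\right)\dd t \;=\; \frac{C_1(ac)^{\beta+1}}{\beta+1}\,R^{-\nu}+\frac{C_2(ac)^2}{2}\,R^{2\mu-(\alpha-D)}.
\end{align}
The definition $\nu=\alpha'\beta-\mu(\beta+1)$, together with the specific $\eta=\tfrac12(\alpha-2D)/(2\alpha-2D-1)$ and the assumption $\alpha>2D$, makes both exponents negative and, crucially, $\nu>D$ (this is what the ensuing $R$-summation needs). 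For $|t|>t_*$ the condition $\gamma t_*>\xi_*$ and Lemma \ref{lemma:2.6} yield ${\cal I}_2\le (520\e^2/\gamma)(bR^\mu)^{10}u_{2/7}(bR^\mu)$. Combining with the assumed polynomial decay $\|\nabla_J h_J\|\le\diam(J)^{-\gamma(|J|)}$ gives a clean pointwise bound on $\|\nabla_J f(O_I,x)\|$ to plug into \eqref{eq:I1_I2}.

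\textbf{Decomposition and summation.} Next I would reuse the splitting $|f(O_I,x')-f(O_I,x)|/\|O_I\|\le S_1+S_2$ from Proposition \ref{prop:approx_exp_decay} ($S_1$ sums over $J$ with $d(I,J)\le\delta$ and $\diam(J)>\delta$; $S_2$ over $d(I,J)>\delta$). Corollary \ref{corollary:M} bounds the count of $J$ with $|J|=\ell$, $\diam(J)=r$, $d(I,J)=R$ by $C_\ell\,r^{(\ell-1)D-1}R^{D-1}$; multiplied by $r^{-\gamma(\ell)}=r^{-(\ell-2)D-\alpha}$, the $r$-factor collapses to $r^{D-1-\alpha}$ uniformly in $\ell$, which is absolutely summable because $\alpha>2D\ge D+1$. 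The $R$-sum of $R^{D-1-\nu}$ is estimated by an integral to produce the first term $C_1\,\delta^{-\nu+D}$ in the statement (the secondary power $R^{2\mu-(\alpha-D)}$ is, by the choice of $\eta$, of strictly smaller order and absorbed into $C_1$). For the subexponential contribution, the substitution $\eta=bR^\mu$ converts $\sum_{R>\delta}R^{D-1}(bR^\mu)^{10}u_{2/7}(bR^\mu)$ into an integral of $\eta^{9+D/\mu}u_{2/7}(\eta)$ on $[b\delta^\mu,\infty)$ up to constants, and Lemma \ref{lemma:extension_2.5} delivers $C_2(b\delta^\mu)^{21+2D/\mu}u_{2/7}(b\delta^\mu)$.

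\textbf{Main obstacle.} The delicate point, and where the argument genuinely departs from the exponential case, is the joint optimization of $\mu$. The exponent $\nu=\alpha'\beta-\mu(\beta+1)$ must be strictly greater than $D$ for the tail $R$-sum to collapse to $\delta^{D-\nu}$; keeping $\nu-D>0$ \emph{while} staying inside the LR window ($\mu<\alpha'$) and ensuring the secondary power $R^{2\mu-(\alpha-D)}$ is dominated by $R^{-\nu}$ is exactly what forces the specific value of $\eta$. The same constraint explains why the method degenerates as $\alpha\to 2D^+$: in that limit $\nu-D\to 0$ and the statement breaks down, consistent with the threshold behavior discussed in the introduction.
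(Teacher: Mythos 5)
Your proposal follows essentially the same route as the paper's proof: the same choice $t_*=ac\,R^{\mu}$ with $\mu=\alpha'\eta$, the same integration of the two-term Lieb-Robinson bound giving ${\cal I}_1\le C R^{-\nu}+C'R^{2\mu-(\alpha-D)}$, the same $S_1/S_2$ split with the counting bound $C_\ell r^{(\ell-1)D-1}R^{D-1}$, and the same substitution $\eta=bR^\mu$ feeding Lemma \ref{lemma:extension_2.5}; you also correctly identify that the whole construction hinges on choosing $\eta$ so that $\nu>D$ while keeping $\mu<\alpha'$. The only piece you leave implicit is the final disposal of $S_1$, which the paper bounds by $C'\delta^{D-\alpha+1}$ and then absorbs into the $\delta^{D-\nu}$ term using $\nu\le\alpha-1$ (a consequence of $\nu<\alpha'\beta<\alpha-D$), but this is a bookkeeping detail rather than a gap in the approach.
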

\begin{proof}

We first note that by assumption we can use the Lieb-Robinson bound of Lemma \ref{lemma:lr_alpha_ge_2d}, and discuss the bounds on the various exponents used there.
For $\alpha\in (2D,2D+1)$ we have 
\begin{align}
    0<\epsilon_*<\frac{1}{D+2}
\end{align}
since $\epsilon_*$ is an increasing function of $\alpha$ with $\lim_{\alpha\to 2D}\epsilon_*=0$, $\lim_{\alpha\to 2D+1}\epsilon_*=1/(D+2)$.
Next, note that $\alpha'$ is a decreasing function of $\epsilon$, so we can bound its range by first using the bounds on $\epsilon$ and then on $\alpha$:
\begin{align}
    \alpha'>\alpha'|_{\epsilon=\epsilon_*}=
    \alpha-2D-\epsilon_*>\alpha-2D>0
    \,,\quad 
    \alpha'<\alpha'|_{\epsilon=0}=
    \alpha-2D<1\,.
\end{align}
Similarly for $\beta$:
\begin{align}
    \beta>\beta|_{\epsilon=\epsilon_*}>\lim_{\alpha\to 2D+1}\beta|_{\epsilon=\epsilon_*}=D+1-\frac{1}{2(D+1)}\,,\quad 
    \beta<\beta|_{\epsilon=0}=\frac{\alpha-D}{\alpha-2D}<\infty\,.
\end{align}
where we used that $\beta|_{\epsilon=\epsilon_*}$ is the sum of two decreasing functions of $\alpha$, so their minimum is at $2D+1$.
We can do the same analysis for $\alpha'\beta$ using bounds for each:
\begin{align}
    \alpha'\beta>\alpha'\beta|_{\epsilon=\epsilon_*}>\lim_{\alpha\to 2D}
    \alpha'\beta|_{\epsilon=\epsilon_*}=D
    \,,\quad 
\alpha'\beta<\alpha'\beta|_{\epsilon=0}=\alpha-D<D+1\,,
\end{align}
where to obtain the lower bound we noticed that $\alpha'\beta|_{\epsilon=\epsilon_*}$ is an increasing function of $\alpha$.

Now we evaluate the integrals ${\cal I}_1, {\cal I}_2$ of \eqref{eq:I1_I2}. The Lieb-Robinson bound holds for $|t|\le c R^{\alpha'}$, 
with $R=d(I,J)$.
So we define
\begin{align}
    t_* = a c R^{\alpha'\eta}\,, \quad 
    a\le 1\,,\quad
    0\le \eta\le 1\,,
\end{align}
so that the bound holds for $|t|\le t_*$.
Then, 
\begin{align}
    &{\cal I}_1
    \le
    \int_0^{t_*} \dd t\left(
    C_1 t^\beta R^{-\alpha'\beta}
    +
    C_2 t R^{D-\alpha}
    \right)
    =
    C_1 R^{-\alpha'\beta} 
    \frac{t_*^{\beta+1}}{\beta+1} 
    +
    C_2 R^{D-\alpha}
    \frac{t_*^2}{2} 
    \\
    &=
    C_1' R^{-\alpha'\beta + (\beta+1)\alpha'\eta}
    +
    C_2'R^{D-\alpha+2\alpha'\eta}
    \,,\quad 
    C_1'=C_1 \frac{(ac)^{\beta+1}}{\beta+1}
    \,,\quad 
    C_2'=C_2 \frac{(ac)^{2}}{2}\,.
\end{align}
Now from the bounds on $\alpha'\beta, \beta$ discussed above, we know that
\begin{align}
    -\alpha'\beta > D-\alpha
    \,,\quad
    \beta+1> D+2-\frac{1}{2(D+2)}>2\,,
\end{align}
so that the first power law in ${\cal I}_1$ dominates the sum:
\begin{align}
    {\cal I}_1
    \le
    C
    R^{-\nu}
    \,,\quad
\nu = \alpha'(\beta - (\beta+1)\eta)\,,\quad 
C=C_1'+C_2'\,.
\end{align}
Next we want so show that we can choose $\eta$ such that $\nu > D$ for all $\alpha\in (2D,2D+1)$.
This is required for the sum in $|f(O_I,x)-f(O_I,x')|$ to converge.
We set $\epsilon=\epsilon_*/2$ to simplify the discussion. We now set
\begin{align}
    \eta = \frac{1}{2}\frac{\alpha-2D}{2\alpha-2D-1}
    \in \left(0, \frac{1}{2}\frac{1}{2D+1}\right)\,.
\end{align}
The range of $\eta$ can be computed by noting that it is an increasing function of $\alpha$ and $\lim\limits_{\alpha\to 2D}\eta=0$,
$\lim\limits_{\alpha\to 2D+1}\eta=1/(2(2D+1))$.
Now we claim that with this choice we have $\nu > D$.
Indeed after some algebra we have that $\nu-D$ is the ratio of two polynomials of $x=\alpha-2D$ with positive coefficients for $D=1,2,\dots$,
\begin{align}
    \label{eq:nu-D}
    \nu=
    D+
    \frac{8 (D-1) D^2 x+
    4D (5 D-4) x^2+
    \left(24 D^2-8\right) x^3
    +(42 D-14) x^4+
    (16 D+9) x^5+
    10 x^6}{16 (2 D-1+2 x)
   \left(D+x^2+x\right)^2}
\end{align}
showing that $\nu-D>0$.
$\mu=\alpha'\eta$ is also an increasing function of $\alpha$ with range
\begin{align}
    0<\mu<\left(1-\frac{1}{2(D+2)} \right)\frac{1}{2(2D+1)} \,.
\end{align}

Thus, from Lemma \ref{lemma:2.6} we have
\begin{align}
    &\|\nabla_J
    f(O_I,x)
    \|
    \le
    \|\nabla_Jh_J\|\,\|O_I\|
    \left(
    c_1
    R^{-\nu}
    +
    4G(bR^\mu)
    \right)
    \,.
\end{align}
Note the difference with respect to the case of exponential decay or short range Hamiltonians: the exponential in their first term has been replaced by a power law with exponent $\nu$, while in the second term the distance is shrunk with a power $\mu$.
We then have:
\begin{align}
    &\frac{|f(O_I, x')-f(O_I, x)|}{\|O_I\|}
    \le S_1 + S_2\,,\\
    &S_1=
    \sum_{J\in {\cal P}_k(\Lambda)}
    \id\left(R\le \delta
    \wedge
    \diam(J)> \delta
    \right)
    \|\nabla_Jh_J\|
    \left(
    c_1
    R^{-\nu}
    +
    4G(bR^\mu)
    \right)\,,
    \\
    &S_2=
    \sum_{J\in {\cal P}_k(\Lambda)}
    \id\left(
    R>\delta\right)
    \|\nabla_Jh_J\|
    \left(
    c_1
    R^{-\nu}
    +
    c_2 
    (bR^\mu)^{10}u_{2/7}(bR^\mu)
    \right)\,.
\end{align}
Here the constants are $c_1=q_* C$ and $c_2=4/\gamma\times 130 \e^2 q_*$. 
We can rewrite these, since $\|\nabla_Jh_J\|\le \diam(J)^{-\gamma(|J|)}$ and denoting $R_* = (b^{-1}\xi_*)^{1/\mu}$, as follows:
\begin{align}
&S_1=
    \sum_{r>\delta}
    \sum_{\ell=1}^k
    r^{-\gamma(\ell)}
    \left(
    c_1
    \sum_{R\le \delta}
    R^{-\nu}
    +
    \sum_{R\le R_*}
    2\frac{Kq_*}{\gamma}
    +
    c_2
    \sum_{R_*<R\le \delta}
    (b R^\mu)^{10}u_{2/7}(b R^\mu)
    \right)
    M_\ell(I, r, R)
    \\
&S_2=
\sum_{r > 0}
\sum_{R>\delta}
    \sum_{\ell=1}^k
    r^{-\gamma(\ell)}
    M_\ell(I, r, R)
    \left(
    c_1
    R^{-\nu}
    +
    c_2 
    (bR^\mu)^{10}u_{2/7}(bR^\mu)
    \right)\,,
    \\
&M_\ell(I, r, R)=
    \sum_{J, |J|=\ell}
    \delta(r, \diam(J))
    \delta(R, d_{IJ})
    \,.
\end{align}
Note that the sum over $r$ starts from $1$ since the $\diam(I)>0$.
Now use Lemma \ref{lemma:Mell}, which proves that for $\ell>1$,
\begin{align}
    M_\ell(I, r, R)
    \le 
    C_{\ell} r^{(\ell-1)D-1} R^{D-1}\,,\quad 
    M_1(I,r,R)\le \delta(r,1) C R^{D-1}\,,
\end{align}
to bound $S_1$. The sum over $r$ is, using that $\delta>1$:
\begin{align}
    &
    \sum_{r>\delta}
    (
    C \delta(r,1) 
    +
    \sum_{\ell=2}^k
    r^{-\gamma(\ell)}
    C_{\ell} r^{(\ell-1)D-1}
    )
    =
    \sum_{\ell=2}^k
    C_{\ell}
    \sum_{r>\delta}
    r^{(\ell-1)D-1-(\ell-2)D-\alpha}\\
    &=
    \sum_{\ell=2}^k
    C_{\ell}
    \sum_{r>\delta}
    r^{D-1-\alpha}
    \le C_{\text{sum}}
    \delta^{D-\alpha}\,,
\end{align}
where in the bound we used
that $D-\alpha-1<-1$  
so that
\begin{align}
    \sum_{r>\delta}
    r^{D-\alpha-1}
    &\le 
    \int_{\delta}^\infty x^{D-\alpha-1}\dd x 
    = \frac{1}{D-\alpha} \delta^{D-\alpha}\,.
\end{align}
Then we perform the sums over $R$ in $S_1$:
\begin{align}
    &\left(
    c_1
    \sum_{R\le \delta}
    R^{-\nu}
    +
    \sum_{R\le R_*}
    2\frac{Kq_*}{\gamma}
    +
    c_2
    \sum_{R_*<R\le \delta}
    (b R^\mu)^{10}u_{2/7}(b R^\mu)
    \right)
    R^{D-1}\\
    &\le 
    c_1
    \delta^{-\nu+D}
    +
    c_1'
    +
    c_2
    \sum_{1\le R\le \delta}
    (b R^\mu)^{10}u_{2/7}(b R^\mu)
    \,.
\end{align}
Then we use the assumption that $b\delta^{\mu} > \eta_*$, so that the second term in $g(R)$ attains its maximum at a value $\tilde{R} < \delta$:
\begin{align}
    c_2 \sum_{1\le R\le \delta} (bR^\mu)^{D-1+10\mu}u_{2/7}(bR^\mu)
    \le 
    c_2 \tilde{C} \delta\,,\quad 
    \tilde{C} = 
    (b\tilde{R}^\mu)^{D-1+10\mu}u_{2/7}(b\tilde{R}^\mu)\,.
\end{align}
Putting things together we can bound $S_1$ as follows, since $\nu>D$,
\begin{align}
    S_1
    \le 
    C_{\text{sum}}
    \delta^{D-\alpha}
    (c_1
    \delta^{-\nu+D}
    +
    c_1'
    +
    c_2 \tilde{C} \delta)
    \le 
    C'
    \delta^{D-\alpha+1}
    \,,\quad 
    C'=\max(c_1,c_1',c_2\tilde{C})\,.
\end{align}
Now we turn to $S_2$.
Defining
\begin{align}
    g(R)
    =
    R^{D-1}
    \left(
    c_1
    R^{-\nu}
    +
    c_2 
    (bR^\mu)^{10}u_{2/7}(bR^\mu)
    \right)
    \,,
\end{align}
we get
\begin{align}
S_2\le 
\sum_{R>\delta}
g(R)
\sum_{r > 0}
\left(C \delta(r,1) + 
    \sum_{\ell=2}^k
    r^{-(\ell-2)D-\alpha}
    C_\ell  r^{(\ell-1)D-1}    
    \right)=
\sum_{R>\delta}
g(R)
\left(C 
+ 
C_{\text{sum}}
\sum_{r > 0}
    r^{D-\alpha-1}
    \right)
    \,.
\end{align}
Using
\begin{align}
    \sum_{r > 0}
    r^{D-\alpha-1}&=1+
    \sum_{r > 1}
    r^{D-\alpha-1}
    \le 
    1+
    \int_{1}^\infty x^{D-\alpha-1}\dd x 
    =1+\frac{1}{\alpha-D}\,,
\end{align}
we have
\begin{align}
    S_2 \le C'' 
    \sum_{R>\delta}
g(R)\,.
\end{align}
To estimate $S_2$ we use Lemma \ref{lemma:extension_2.5}.
We assumed that $b\delta^{\mu} > \eta_*$, with $\eta_*$ is such that the function $t^{9+\frac{D}{\mu}}u_{2/7}(t)$ is monotonically decreasing for $t>\eta_*$ and the hypotheses of Lemma \ref{lemma:extension_2.5} are satisfied.
Therefore, we first bound the sum with an integral -- recall also that $\nu>D$ so that the first integral converges:
\begin{align}
    &\sum_{R > \delta}
    g(R)
    \le
    \int_{\delta}^\infty 
    \dd R
    \left(
    c_1
    R^{-\nu+D-1}
    +
    c_2 
    R^{D-1}
    (bR^\mu)^{10}u_{2/7}(bR^\mu)
    \right)\\
    &=
    \frac{c_1}{\nu-D}
    \delta^{-\nu+D}
    +
    c_2 b^{-\frac{D-1}{\mu}}
    \int_{\delta}^\infty 
    \dd R
    (bR^\mu)^{10+\frac{D-1}{\mu}}u_{2/7}(bR^\mu)
\,.
\end{align}
Then we change variables to $\eta=bR^\mu$, so that $\dd R=
(\mu b^{1/\mu})^{-1} \eta^{1/\mu-1} \dd \eta$ and
\begin{align}
    \int_{\delta}^\infty 
    \dd R
    (bR^\mu)^{10+\frac{D-1}{\mu}}u_{2/7}(bR^\mu)
    =
(\mu b^{1/\mu})^{-1}
    \int_{b\delta^\mu}^\infty 
\dd \eta\,
    \eta^{9+\frac{D}{\mu}}u_{2/7}(\eta)\,.
\end{align}
Finally we apply Lemma \ref{lemma:extension_2.5} to get
\begin{align}
    S_2
    \le 
    \frac{c_1 C''}{\nu-D}
    \delta^{-\nu+D}
    +
    C_2'
    (b\delta^\mu)^{21 + 2\frac{D}{\mu}}
    u_{2/7}(b\delta^\mu)
    \,,\quad 
    C_2'=
    C'' (\mu b^{1/\mu})^{-1} c_2 b^{-\frac{D-1}{\mu}}\,.
\end{align}
Putting things together, we get
\begin{align}
\frac{|f(O_I, x')-f(O_I, x)|}{\|O_I\|}
\le    
C'
\delta^{-\alpha+D+1}
+
\frac{c_1C''}{\nu-D}
    \delta^{-\nu+D}
+
C_2'
(b\delta^\mu)^{21 + 2\frac{D}{\mu}}
    u_{2/7}(b\delta^\mu)\,.
\end{align}
Finally we claim that
\begin{align}
-\alpha+1\le 
    -\nu \,.
\end{align}
Indeed, after some algebra we have that, if $x=\alpha-2D$:
\begin{align}
    &(\nu-\alpha+1)16 (2 D-1+2 x) \left(D+x^2+x\right)^2
    =
    -16 (D-1) D^2 (2 D-1)-(48 D+7) x^5\\
    &\quad -(2 D (16 D+51)-34) x^4-8(D (21 D-8)-1) x^3-4 (D+2) (D (16 D-13)+2) x^2\\
    &\quad -8 D (3 D-1) (5
   D-4) x-22 x^6\,.
\end{align}
This shows that $\nu-\alpha+1\le 0$ since it is the ratio of a polynomial of $x$ with negative coefficients for all $D=1,2,\dots$ and a polynomial with positive coefficients.
Then, since $\delta>1$ we have that $\delta^{-\alpha+D+1} \le \delta^{-\nu+D}$, and we get the result of the Lemma:
\begin{align}
\frac{|f(O_I, x')-f(O_I, x)|}{\|O_I\|}
\le    
C_1'
    \delta^{-\nu+D}
+
C_2'
(b\delta^\mu)^{21 + 2\frac{D}{\mu}}
    u_{2/7}(b\delta^\mu)\,,\quad 
    C_1'=
\max(\tfrac{c_1C''}{\nu-D}, C')\,.
\end{align}
\end{proof}

We see that compared to the case of exponentially decaying interactions of Proposition \ref{prop:approx_exp_decay}, in the second term we have replaced $\delta$ with $\delta^{\mu}$ and a different power that depends on $\mu$. Also, the first term which was exponentially decaying is now a power law.
We have the following Corollary whose proof again follows from Lemma \ref{lemma:7}:

\begin{corollary}\label{cor:delta_eps_alpha_gt_2D}
    Under the same assumptions of Proposition \ref{prop:approx_alpha_gt_2d}, we have 
    \begin{align}
    \label{eq:delta_eps_alpha_gt_2d}
    \delta &\ge \tilde{c}\max(
    \epsilon^{-1/(\nu-D)},
    \log^{2/\mu}(3C_2/\epsilon))\,,\\
    \tilde{c}&=\max((3C_1)^{1/(\nu-D)},(c/b)^{1/\mu})\,.
\end{align}
\end{corollary}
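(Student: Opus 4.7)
The plan is to start from the bound established in Proposition \ref{prop:approx_alpha_gt_2d},
\begin{align}
\frac{|f(O_I, x')-f(O_I, x)|}{\|O_I\|}
\le
C_1 \delta^{-(\nu-D)}
+
C_2 (b\delta^\mu)^{21 + 2D/\mu} u_{2/7}(b\delta^\mu)\,,
\end{align}
and to guarantee that each of the two terms on the right-hand side is bounded by $\epsilon/2$. Since the right-hand side decomposes into a pure power-law piece and an "almost exponential" piece modulated by $u_{2/7}$, the two conditions on $\delta$ can be treated independently and then combined via a maximum.

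For the first term, I would simply invert the power law: requiring $C_1 \delta^{-(\nu-D)} \le \epsilon/2$ (or, to match the $\tilde{c}$ in the statement, $\epsilon/3$ with $3C_1$) gives the explicit condition $\delta \ge (3 C_1/\epsilon)^{1/(\nu-D)} = (3C_1)^{1/(\nu-D)}\,\epsilon^{-1/(\nu-D)}$. Note that $\nu-D>0$ by \eqref{eq:nu-D}, so the exponent is well-defined and positive.

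For the second term I would appeal to Lemma \ref{lemma:7} from the appendix, which controls the inverse of functions of the form $y \mapsto y^s u_{2/7}(y)$ up to logarithmic factors. Setting $y = b\delta^\mu$ and $s = 21 + 2D/\mu$, the lemma yields that $C_2\, y^s u_{2/7}(y) \le \epsilon/3$ (say) is guaranteed once $y \ge c \log^2(3 C_2/\epsilon)$ for a suitable constant $c>0$, which translates back into the condition $\delta \ge (c/b)^{1/\mu} \log^{2/\mu}(3 C_2/\epsilon)$. Finally I would take the maximum of the two lower bounds on $\delta$, which factors cleanly as $\tilde{c}\,\max(\epsilon^{-1/(\nu-D)}, \log^{2/\mu}(3 C_2/\epsilon))$ with $\tilde{c} = \max((3C_1)^{1/(\nu-D)}, (c/b)^{1/\mu})$, matching the stated form.

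The only real subtlety is the second term: one must verify that the hypothesis for $b\delta^\mu$ to exceed the threshold $\eta_*$ appearing in Proposition \ref{prop:approx_alpha_gt_2d} is compatible with the lower bound produced by Lemma \ref{lemma:7}, and that the constants absorbed into $c$ indeed dominate the polynomial prefactor $(21 + 2D/\mu)$ so that the logarithmic inversion is valid for all $\epsilon$ in the relevant range (e.g.\ $\epsilon\le \e^{-1}$ as in Corollary \ref{cor:delta_eps_exp}). Once this compatibility is checked, the rest is bookkeeping of constants, and the Corollary follows by the union of the two conditions.
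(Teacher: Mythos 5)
Your proposal is correct and follows essentially the same route as the paper: the paper's proof is exactly to bound each of the two terms of Proposition \ref{prop:approx_alpha_gt_2d} by $\epsilon/3$, inverting the power law directly for the first term and invoking Lemma \ref{lemma:7} with $y=b\delta^\mu$ for the second, then combining the two lower bounds on $\delta$ via the maximum. Your remark about checking compatibility with the $\eta_*$ threshold and the range $\epsilon\le \e^{-1}$ is the right caveat and is consistent with how the paper handles the analogous Corollary \ref{cor:delta_eps_exp}.
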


\subsubsection{Failure of bounding the local approximation for power law decay with $\alpha\le 2D$}
\label{sec:power law alpha le 2d}

We have seen in Corollary \ref{cor:delta_eps_alpha_gt_2D} that $\delta\to \infty$ as $\alpha\to 2D$. In this section we study this phenomenon by using a Lieb-Robinson bound that applies for all $\alpha\in (D,2D)$ and implies a logarithmic light cone for operator spreading.
We start by recalling the following Lieb-Robinson bound.
\begin{lemma}[\cite{Hastings_2006}]\label{lemma:lr_alpha_lt_2d}
    Assume that
\begin{align}
    H=\sum_{I\in {\cal P}_k(\Lambda)} h_I\,,\quad 
    \sum_{I\ni i,j}\|h_I\|\le g d(i,j)^{-\alpha}\,,
\end{align}
for a positive constant $g$.  For any $\alpha\in (D,2D)$,  we have
    \begin{align}
        \|[A_I,B_J]\|
        \le C \|A_I\|\,\|B_J\|\,
        |I|\,|J|\, \frac{\e^{v|t|}-1}{(1+d(I,J))^\alpha}
    \end{align}
    with $C,v$ depending only on the Hamiltonian and the metric of the lattice.
\end{lemma}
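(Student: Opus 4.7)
The plan is to follow the Hastings--Koma iteration strategy for deriving Lieb--Robinson bounds for polynomially decaying interactions, reading the left-hand side as $\|[\tau_t(A_I),B_J]\|$ where $\tau_t(\cdot)=\e^{itH}(\cdot)\e^{-itH}$ (the $t$-dependence is implicit in the statement).

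First, I would derive the fundamental integral inequality. Setting $C(t)\coloneqq[\tau_t(A_I),B_J]$ and differentiating gives $\dot C(t)=i[[H,\tau_t(A_I)],B_J]$; the Jacobi identity splits this into $i[H,C(t)]+i[[H,B_J],\tau_t(A_I)]$. The first piece is an infinitesimal unitary conjugation that preserves operator norm, so passing to the interaction picture $\tilde C(t)=\tau_{-t}(C(t))$ eliminates it and only the second term contributes to $\|C(t)\|$. Integrating and decomposing $[H,B_J]=\sum_{Z\cap J\ne\emptyset}[h_Z,B_J]$ yields
\begin{align}
\|[\tau_t(A_I),B_J]\|\le \|[A_I,B_J]\|+2\int_0^{|t|}\!\dd s\!\sum_{Z\cap J\ne\emptyset}\!\|h_Z\|\,\|[\tau_s(A_I),B_J^{(Z)}]\|\,,
\end{align}
where $B_J^{(Z)}$ is an auxiliary operator supported on $Z\cup J$ with $\|B_J^{(Z)}\|\le \|B_J\|$.

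Next, I would iterate the inequality. Substituting it into itself repeatedly produces a Dyson-like series
\begin{align}
\|[\tau_t(A_I),B_J]\|\le 2\|A_I\|\|B_J\|\sum_{n\ge 1}\frac{(2|t|)^n}{n!}\sum_{Z_1,\dots,Z_n}\|h_{Z_1}\|\cdots\|h_{Z_n}\|\,,
\end{align}
where the inner sum runs over chains of $k$-body supports with $Z_1\cap I\ne\emptyset$, $Z_n\cap J\ne\emptyset$, and $Z_\ell\cap Z_{\ell+1}\ne\emptyset$ for every $\ell$. The hypothesis $\sum_{I\ni i,j}\|h_I\|\le g\, d(i,j)^{-\alpha}$ reduces this to a sum over two-body chains in $F(i,j)=(1+d(i,j))^{-\alpha}$ with effective coupling $g F$, while $|I|,|J|$ come out as counting factors for the endpoints.

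Then I would close the argument via the \emph{reproducibility} property of $F$: for $\alpha>D$ there exists $\lambda=\lambda(\alpha,D)$ such that $\sum_k F(i,k)F(k,j)\le \lambda F(i,j)$, proved by splitting the sum at $d(i,k)=d(i,j)/2$ and using $\alpha>D$ on the resulting two tail sums. Applying reproducibility $n-1$ times collapses each chain to $\lambda^{n-1}F(I,J)$, and summing
\begin{align}
\sum_{n\ge 1}\frac{(2g|t|)^n\lambda^{n-1}}{n!}=\lambda^{-1}(\e^{2g\lambda|t|}-1)
\end{align}
yields the stated bound with $v=2g\lambda$ and $C=\lambda^{-1}$. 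The hypothesis $\alpha\in(D,2D)$ is comfortably inside the regime $\alpha>D$ where reproducibility holds; the upper limit $\alpha<2D$ is not used in this bound and merely marks the regime where a linear light cone fails, which is the context in which the lemma will be invoked.

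The main obstacle is the reproducibility step. Establishing the constant $\lambda$ requires a careful lattice sum whose value depends on the metric and dimension, and reducing the $k$-body chain sum to a two-body one via the summed-norm hypothesis requires tracking the $|I|,|J|$ combinatorics so that no factorial blow-up in $n$ spoils the convergence of the Dyson series into an exponential. All of these steps are standard but numerically delicate and implicitly fix the constants $C,v$ whose dependence on $g,\alpha,D,k$ is suppressed in the statement.
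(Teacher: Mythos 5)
Your reconstruction is correct: the paper does not prove this lemma but imports it verbatim from the cited reference (Hastings--Koma), and your argument --- the Dyson-type iteration of the commutator integral inequality combined with the reproducibility property $\sum_k F(i,k)F(k,j)\le \lambda F(i,j)$ of $F(i,j)=(1+d(i,j))^{-\alpha}$ for $\alpha>D$, yielding $v=2g\lambda$, $C=\lambda^{-1}$ and the $|I|\,|J|$ endpoint-counting factors --- is precisely the standard proof of that result. Your observation that the upper restriction $\alpha<2D$ plays no role in the bound itself and only delimits the regime in which the paper invokes it is also accurate.
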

We have the following result:
\begin{prop}
    Assume the hypotheses of Lemma \ref{lemma:lr_alpha_lt_2d} are satisfied.
    Then the bound of the quantity $|f(O_I,x)-f(O_I,x')|$ introduced in Section \ref{sec:Dependency of observables on Hamiltonian parameters} becomes vacuous in this regime.
\end{prop}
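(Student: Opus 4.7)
The plan is to follow the same machinery of Section \ref{sec:Dependency of observables on Hamiltonian parameters} used in Propositions \ref{prop:approx_exp_decay} and \ref{prop:approx_alpha_gt_2d}, but applied with the only available Lieb--Robinson bound for $\alpha\in (D,2D)$, namely Lemma \ref{lemma:lr_alpha_lt_2d} with its logarithmic light cone. I would decompose $\|\nabla_J f(O_I,x)\|\le \|\nabla_J h_J(x_J)\|\,\|O_I\|\,({\cal I}_1+{\cal I}_2)$ as in \eqref{eq:I1_I2} with $R=d(I,J)$, and then show that for any choice of the splitting time $t_*=t_*(R)$ the resulting upper bound on $|f(O_I,x)-f(O_I,x')|$ cannot be made finite.

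Substituting the Hastings bound into ${\cal I}_1$ gives $\int_0^{t_*}(\e^{vt}-1)/(1+R)^\alpha\,\dd t \le C\,\e^{vt_*}/(1+R)^\alpha$; to obtain any polynomial decay in $R$ one must take $t_*(R)\le (a/v)\log(1+R)$ with $a<\alpha$, giving ${\cal I}_1\le C'(1+R)^{a-\alpha}$. The companion term ${\cal I}_2\le 4G(\gamma t_*)$, via Lemma \ref{lemma:2.6}, is then forced to operate at the logarithmic scale $\gamma t_*\sim \log(1+R)$, so that
$$
{\cal I}_2\le C''(\log R)^{10}\exp\!\left(-\frac{2c\,\log(1+R)}{7\log^2(c\log(1+R))}\right)
$$
with $c=a\gamma/v$, and its effective decay rate in $R$ is only $1/(\log\log R)^2$, which vanishes as $R\to\infty$. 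Summing over $J\notin S_{I,\delta}$ using $\|\nabla_J h_J\|\le \diam(J)^{-\gamma(|J|)}$ and the combinatorial bound $M_\ell(I,r,R)\le C_\ell r^{(\ell-1)D-1}R^{D-1}$ from Appendix \ref{sec:Technical Lemmas} factorises the result into a convergent $r$-sum $\sum_r r^{D-\alpha-1}$ (finite for $\alpha>D$) multiplied by
$$
\sum_{R>\delta} R^{D-1}\bigl({\cal I}_1(R)+{\cal I}_2(R)\bigr).
$$
The ${\cal I}_1$ contribution converges for $a\in(0,\alpha-D)$, but the ${\cal I}_2$ contribution behaves as $\sum_R R^{D-1-\beta(R)}(\log R)^{10}$ with $\beta(R)=2c/(7(\log\log R)^2)\to 0$, which diverges for $D\ge 1$ because $\beta(R)$ eventually falls below any fixed $\eta>0$.

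The main obstacle is the irreducible tension between the two regimes: enlarging $t_*$ to improve the tail ${\cal I}_2$ forces $\e^{vt_*}$ to outgrow $(1+R)^\alpha$, so that ${\cal I}_1$ blows up, while shrinking $t_*$ to control ${\cal I}_1$ keeps $\gamma t_*$ at logarithmic scale, for which $G(\gamma t_*)$ decays too slowly. This is unlike the exponentially-decaying or $\alpha>2D$ cases, where the polynomial (or even linear) light cone permits $t_*$ to scale polynomially with $R$ and thus $G(\gamma t_*)$ to decay super-polynomially. Consequently, the combinatorial factor $R^{D-1}$ from summation over lattice shells cannot be beaten by any achievable decay of the integrand under the logarithmic light cone, so the bound is vacuous and a fundamentally different approach is required to handle $\alpha\in(D,2D)$.
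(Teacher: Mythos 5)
Your proposal is correct and follows essentially the same route as the paper: take $t_*(R)$ logarithmic in $R$ (forced by the $\e^{vt}$ growth in Lemma \ref{lemma:lr_alpha_lt_2d}), observe that $G(\gamma t_*)$ then decays only as $(1+R)^{-a(R)}$ with $a(R)\sim 1/(\log\log R)^2\to 0$, and conclude that the shell sum $\sum_R R^{D-1}\,{\cal I}_2(R)$ diverges, making $S_2=\infty$. The only difference is cosmetic: you argue explicitly that no other choice of $t_*$ can help, whereas the paper simply fixes the logarithmic choice and exhibits the divergence.
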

\begin{proof}
We use the bound in Lemma \ref{lemma:lr_alpha_lt_2d} to compute ${\cal I}_1$ appearing in \eqref{eq:I1_I2}:
\begin{align}
    &\frac{{\cal I}_1}{
    |I\|J|\, }
    \le
    C \frac{1}{(1+d(I,J))^\alpha}
    \int_0^{t_*}
    (\e^{vt}-1)
    \le
    \frac{C}{v} \frac{\e^{vt_*}}{(1+d(I,J))^\alpha}
    \,.
\end{align}
We now take 
\begin{align}
    t_*(R)
    =
    \frac{\beta}{v}
    \log(1+R)\,,
\end{align}
where we will fix $\beta$ later. If 
\begin{align}
    \delta >\gamma^{-1}
    t_*^{-1}(\xi_*)
    =
    \gamma^{-1}
    (\e^{\frac{v}{\beta}\xi_*}-1)
\end{align}
with $\xi_*$ as in Lemma \ref{lemma:2.6},
we have that, from \eqref{eq:I_2_gamma_t_star} with $R=d(I,J)$ and $b=\beta \gamma v^{-1}$:
\begin{align}
    \|\nabla_J
    f(O_I,x)
    \|
    \le 
    \|
    \nabla_{J}h_J(x_J)
    \|
    \,
    \|
    O_I 
    \|
    (
    c_1
    (1+R)^{\beta-\alpha}
    +
    4 G(b\log(1+R))
    )\,,
\end{align}
with $G$ as in Lemma \ref{lemma:2.6}.
As in the proofs of propositions \ref{prop:approx_exp_decay} and \ref{prop:approx_alpha_gt_2d} we can write
\begin{align}
    &\frac{|f(O_I, x')-f(O_I, x)|}{\|O_I\|}
    \le S_1 + S_2\,,\\
    &S_1=
    \sum_{J\in {\cal P}_k(\Lambda)}
    \id\left(r\le \delta
    \wedge
    \diam(J)> \delta
    \right)
    \|\nabla_Jh_J\|
    \left(
    c_1
    R^{\beta-\alpha}
    +
    4 G(b\log(1+R))
    \right)
    \\
    &S_2=
    \sum_{J\in {\cal P}_k(\Lambda)}
    \id\left(
    R>\delta\right)
    \|\nabla_Jh_J\|
    \left(
    c_1
    R^{\beta-\alpha}
    +
    c_2
    (b\log(1+R))^{10}u_{2/7}(b\log(1+R))
    \right)\,.
\end{align}
We will show that $S_2=\infty$.
If $\|\nabla_Jh_J\|=h(\diam(J))$:
\begin{align}
&S_2=
\sum_{r > 0}
    h(r)
\sum_{R>\delta}
    \left(
    c_1
    R^{\beta-\alpha}
    +
    c_2
    (b\log(1+R))^{10}u_{2/7}(b\log(1+R))
    \right)
    M(I, r, R)\,,
    \\
&M(I, r, R)=
    \sum_{J\in {\cal P}_k(\Lambda)}
    \delta(r, \diam(J))
    \delta(R, d_{IJ})
    \,.
\end{align}
To simplify the discussion and notation we will consider below the case of Hamiltonians with terms $h_J$ with $|J|=2$ only.
Then Lemma \ref{lemma:Mell} gives:
\begin{align}
    M(I, r, R)
    \le
    C
    r^{D-1} R^{D-1}
    \,.
\end{align}
Since for $|J|=2$ we have
\begin{align}
    \|\nabla_Jh_J\|\le \diam(J)^{-\alpha}\,,
\end{align}
defining
\begin{align}
    g(R)
    =
    R^{D-1}
    \left(
    c_1
    R^{\beta-\alpha}
    +
    c_2
    (b\log(1+R))^{10}u_{2/7}(b\log(1+R))
    \right)
    \,,
\end{align}
we get the following result by replacing the sum over $r$ with an integral, since the summand is a decreasing function and $\alpha>D$ so $-\alpha+D-1< -1$ and the integral converges:
\begin{align}
S_2=
\sum_{r > 0}
    r^{-\alpha+D-1}
\sum_{R > \delta}
    g(R)
    \le 
    C
\sum_{R > \delta}
    g(R)
    \,.
\end{align}
Now we consider the  sum:
\begin{align}    
    \sum_{R > \delta}
    \left(
    c_1
    R^{D-1+\beta-\alpha}
    +
    c_2
    R^{D-1}
    (b\log(1+R))^{10}u_{2/7}(b\log(1+R))
    \right)    
    \,.
\end{align}
We need $D-1+\beta-\alpha<-1$ for the first term to converge, so $0<\beta<\alpha-D$ which can be satisfied since $\alpha-D>0$.
For the second term, we have
\begin{align}
    u_{2/7}(b\log(1+R))
    =
    \exp\left(-\frac{2}{7}\frac{b\log(1+R)}{\log^2(b\log(1+R))}\right)
    =
    (1+R)^{-a(R)}\,,
    \quad 
    a(R)
    =
    \frac{2}{7}b \log^{-2}(b\log(1+R))
\end{align}
For the sum to converge, we want $a(R) > D$ for large $R$, so, defining $z=\exp(\sqrt{\frac{2b}{7D}})$,
\begin{align}
  \log^{2}(b\log(1+R)) < \frac{2b}{7D}
  \Rightarrow
  b\log(1+R) < z
  \Rightarrow
  R < \e^{z/b}-1\,.
\end{align}
This means that for $R>\e^{z/b}-1$ the summand will decay as a non-summable power of $R$, leading to a divergent bound: $S_2=\infty$.
\end{proof}
This shows that for $\alpha\le 2D$ the observable $f(O_I,x)$ does not only depend on $x_J$ with $J\in S_{I,\delta}$, but also on parameters outside this region.

\subsubsection{Bound on the gradient}

In Section \ref{sec:Dependency of observables on Hamiltonian parameters} we bounded $\|\nabla_J \Tr(O_I,\rho(x))\|$ and its sum over $S_{\delta,I}^c$. It will be useful for deriving the machine learning model to have a bound on the norm of the gradient itself.
We now generalize \cite[Lemma 4]{Huang_2022} to exponentially decaying and power law interactions with $\alpha>2D$:
\begin{lemma}\label{lemma:bound_gradient}
    The gradient of $\Tr(O\rho(x))$ for $O=\sum_I O_I$ is bounded by
    \begin{align}
        \|\nabla \Tr(O\rho(x))\|
        \le 
        C'
        \sum_{I}\|O_I\| 
    \end{align}
    with $C'>0$ a constant for both exponentially decaying interactions and power law interactions with $\alpha>2D$.
\end{lemma}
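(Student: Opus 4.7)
The plan is to reduce the claim to a uniform per-observable estimate via a Minkowski-style triangle inequality and then bound each $\|\nabla \Tr(O_I\rho(x))\|$ using the pointwise per-$J$ gradient bounds that already appear as intermediate steps inside the proofs of Propositions \ref{prop:approx_exp_decay} and \ref{prop:approx_alpha_gt_2d}.

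First, by linearity of the trace and the triangle inequality for the Euclidean norm on $\mathbb{R}^m$,
\begin{align}
\|\nabla \Tr(O\rho(x))\|
\le
\sum_I \|\nabla \Tr(O_I\rho(x))\|
=
\sum_I \sqrt{\sum_{J\in{\cal P}_k(\Lambda)} \|\nabla_J f(O_I,x)\|^2}\,,
\end{align}
so it suffices to establish an $I$-independent bound $\|\nabla \Tr(O_I\rho(x))\|\le C''\|O_I\|$, after which the lemma follows with $C'=C''$.

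For the uniform bound I would insert the pointwise per-$J$ estimate produced in Section \ref{sec:Dependency of observables on Hamiltonian parameters}: for every $J$ with $d(I,J)>0$,
\begin{align}
\|\nabla_J f(O_I,x)\| \le \|\nabla_J h_J\|\,\|O_I\|\bigl(c_1\,h(d(I,J)) + 4\,G(b\,\varphi(d(I,J)))\bigr)\,,
\end{align}
with $h(R)=\e^{-\mu' R}, \varphi(R)=R$ in the exponentially decaying case and $h(R)=R^{-\nu}, \varphi(R)=R^\mu$ in the power-law case. The finitely many $J$ intersecting $I$ contribute only $\mathcal{O}(\|O_I\|^2)$ via the trivial bound $\|\nabla_J f(O_I,x)\|\le 2\|O_I\|\|\nabla_J h_J\|\int|W_\gamma(t)|\dd t$, using $|I|=\mathcal{O}(1)$ together with the fact that $\sum_{J\ni i}\|\nabla_J h_J\|^2=\mathcal{O}(1)$ (finite in the power-law case because $\alpha>2D$ gives $(3-\ell)D-1-2\alpha<-1$ for all $\ell\ge 1$). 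Squaring the pointwise estimate and summing over $J$ via the counting $M_\ell(I,r,R)\le C_\ell r^{(\ell-1)D-1}R^{D-1}$ of Lemma \ref{lemma:Mell}, organised over $\ell=|J|$, $r=\diam(J)$, $R=d(I,J)$, reduces the claim to the convergence of explicit $r$- and $R$-sums. In the exponential case the exponential decays dominate the polynomial counts trivially. In the power-law case, $\|\nabla_J h_J\|^2\le r^{-2\gamma(\ell)}$ with $\gamma(\ell)=(\ell-2)D+\alpha$, so the $r$-exponent is $(3-\ell)D-1-2\alpha<-1$ under $\alpha>2D$, while the leading $R$-exponent is $D-1-2\nu<-1$ because Proposition \ref{prop:approx_alpha_gt_2d} established $\nu>D$ (see Eq.~\eqref{eq:nu-D}); the $G$-tail is controlled by Lemma \ref{lemma:extension_2.5}. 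The resulting constant is independent of $I$ and of $n$, finishing the argument.

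The main obstacle is bookkeeping rather than conceptual. The pointwise per-$J$ bounds are not stated as standalone lemmas in Section \ref{sec:Dependency of observables on Hamiltonian parameters} but are embedded inside the proofs of the approximation propositions, where they are immediately restricted to $J\notin S_{I,\delta}$ in order to derive the $\delta$-truncated approximation; I would need to re-extract them in a form valid for every $J$ with $d(I,J)>0$. This is essentially what the Lieb-Robinson step already produces, since $\delta$ plays no role in that step and enters only in the subsequent truncation of the $J$-sum. Once this re-extraction is carried out, the remaining work is a direct adaptation of the convergence estimates already executed for the $\delta$-truncated sums in Propositions \ref{prop:approx_exp_decay} and \ref{prop:approx_alpha_gt_2d}, but now with the $R$-sum extending to infinity.
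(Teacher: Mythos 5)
Your proposal is correct and follows essentially the same route as the paper: both reduce to the per-$J$ gradient bounds $\|\nabla_J f(O_I,x)\|\le\|\nabla_J h_J\|\,\|O_I\|({\cal I}_1+{\cal I}_2)$ extracted from the proofs of Propositions \ref{prop:approx_exp_decay} and \ref{prop:approx_alpha_gt_2d}, then sum over all $J$ (with the $R$-sum now untruncated) via Lemma \ref{lemma:Mell} and the convergence facts $\nu>D$, $\alpha>2D$, and Lemma \ref{lemma:extension_2.5}. The only cosmetic difference is that the paper aggregates over $J$ in $\ell_1$ via the directional-derivative trick $\|\nabla g\|=u\cdot\nabla g$ with $|u_J|\le 1$ (following \cite[Lemma 4]{Huang_2022}), whereas you evaluate the Euclidean norm directly by summing squares, which is implied by the same estimates.
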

\begin{proof}
    The computation is very similar to the one done above for $\|\nabla f(O_I,\rho(x))\|$.
    As in \cite[Lemma 4]{Huang_2022}, we use that
    \begin{align}
        \label{eq:u_nabla}
        \|\nabla \Tr(O\rho(x))\|
        =
        \|u\cdot \nabla \Tr(O\rho(x))\|
        \,,\quad u = \frac{\nabla \Tr(O\rho(x))}{\|\nabla \Tr(O\rho(x))\|}\,,
    \end{align}
    and bound the quantity $\|\partial_u \Tr(O\rho(x))\|$ for a generic unit vector $u$. $u$ has the same size as $x$ and we denote its $J$-th component as $u_J$. Since $\partial_u = \sum_{J}u_{J}\cdot  \nabla_{J}$, we have
    \begin{align}
        |\partial_u f(O_I,x)|
        \le 
        \sum_J |u_J| |\nabla_J f(O_I,x)|
        \le 
        \|O_I\|
        \sum_J \|\nabla_J h_J\|
        ({\cal I}_1+{\cal I}_2)\,,
    \end{align}
where in the second inequality we used $|u_J|\le 1$ (since $u$ is a unit vector) and \eqref{eq:I1_I2}. We now discuss the case of exponentially decaying interactions. From the proof of Proposition \ref{prop:approx_exp_decay} we know that, with $G$ as in Lemma \ref{lemma:2.6},
\begin{align}
\sum_J \|\nabla_J h_J\|
        ({\cal I}_1+{\cal I}_2)
    \le 
    C_{\text{sum}} 
    \sum_{R\ge 0}\sum_{r \ge 1}
    \e^{-\nu r}
    r^{(k-1)D-1}R^{D-1}
    \left(
    c_1
    \e^{-\mu' R}
    +
    4G(bR)
    \right)
    \,.
\end{align}
The sum over $r$ as well as the sum over $R$ of the term multiplying $c_1$ produce a constant. We then claim that the remaining term, which is proportional to
\begin{align}
    \sum_{R\ge 0}
    G(bR)
    \,,
\end{align}
is also constant. This is because the sum is convergent, as used in the computation of $S_2$ in Proposition \ref{prop:approx_exp_decay}, and so it gives a finite constant.

Similarly, we can deal with power law interactions with $\alpha>2D$. We know from Proposition \ref{prop:approx_alpha_gt_2d} that 
\begin{align}
    &\sum_J \|\nabla_J h_J\|
    ({\cal I}_1+{\cal I}_2)
    \le 
    g
    \sum_{R\ge 0}
    R^{D-1}
    \left(
    c_1
    R^{-\nu}
    +
    4G(bR^\mu)
    \right)
    \,,\\
    g &= 
    C + C\sum_{\ell=2}^k
    \sum_{r>0}
    r^{-\gamma(\ell)+(\ell-1)D-1} 
    =
    C + C(k-1)
    \sum_{r>0}
    r^{-\alpha + D-1} 
    < \infty\,,
\end{align}
since $-\alpha+D-1<-1$. The term proportional to $c_1$ is a constant since $-\nu+D-1<-1$ because $\nu>D$. The second term proportional to $c_2$ is also a constant because the sum is convergent, as used in the computation of $S_2$ in Proposition \ref{prop:approx_exp_decay}.
\end{proof}

\subsection{Approximation by discretisation}
\label{sec:Approximation by discretisation}

It is straightforward to adapt the discretisation of the function $f(O_I,\chi_{S_{I,\delta}}(x))$ discussed in \cite{lewis2023improved} to our case, as we detail next.
For given $\epsilon>0$ we define the fraction
\begin{align}
    \label{eq:delta_2}
    \delta_2 = 
    \Big\lceil \sqrt{C' |S_{I,\delta}|}\epsilon^{-1}
    \Big\rceil^{-1}
\end{align}
with $C'$ as in Lemma \ref{lemma:bound_gradient} and $S_{I,\delta}$ as in \eqref{eq:S_I_delta}. Next we define the discretised space:
\begin{align}
    \label{eq:XIdelta}
    X_{I,\delta}
    &=
    \left\{ x\in [-1,1]^m \,\left|\,
    x_J = 0 \text{ if }J\not\in S_{I,\delta}\,,\quad
    x_J\in W_{\delta_2}^{|J|} \text{ if }J\in S_{I,\delta}
    \right.\right\}\,,\\
    W_{\delta_2}
    &=\{0, \pm \delta_2, \pm 2 \delta_2,  \dots, \pm 1\}
    \,,
\end{align}
and the space of points close to an $x\in X_{I,\delta}$:
\begin{align}
    \label{eq:TxIdelta}
    T_{x,I}
    =
    \left\{
    x'\in [-1,1]^m \,\left|\,
    -\frac{\delta_2}{2}
    < x_{J} - x'_{J}<
    +\frac{\delta_2}{2}
    \,,\quad J\in S_{I,\delta}
    \right.\right\}\,,
\end{align}
where for a vector $x$, $x>0$ means that every component is greater than zero.
We now adapt \cite[Lemma 5]{lewis2023improved} to our notation. The proof relies on the bound on the gradient of Lemma \ref{lemma:bound_gradient}.
\begin{lemma}[Lemma 5 of \cite{lewis2023improved}]
    Given $\epsilon>0$ and defining the discretised version of $f$:
    \begin{align}
    g(O_I,x)
    =
    \sum_{x'\in X_{I,\delta}} f(O_I, x')\id(x\in T_{x',I})
    \,,
    \end{align}
    if we choose $\delta$ in terms of $\epsilon$ as in \eqref{eq:delta_eps_exp} or
    \eqref{eq:delta_eps_alpha_gt_2d}, with   
    our choice of $\delta_2$ in \eqref{eq:delta_2}, we have that
    \begin{align}
        \label{eq:g_minus_f}
    |f(O_I,x)-g(O_I,x)|<\epsilon \|O_I\|\,.
    \end{align}
\end{lemma}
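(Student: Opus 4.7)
The plan is to reduce the bound to two ingredients already in hand: the locality estimate (Corollaries \ref{cor:delta_eps_exp} and \ref{cor:delta_eps_alpha_gt_2D}), which shows that $f(O_I,x)$ is well approximated by $f(O_I,\chi_{S_{I,\delta}}(x))$, and the uniform gradient bound of Lemma \ref{lemma:bound_gradient}, which controls how $f$ changes under small perturbations of $x$. The first step is to observe that by construction of $X_{I,\delta}$ in \eqref{eq:XIdelta} and $T_{x',I}$ in \eqref{eq:TxIdelta}, for every $x\in[-1,1]^m$ there is a unique grid point $x_*\in X_{I,\delta}$ with $x\in T_{x_*,I}$, so the sum defining $g(O_I,x)$ collapses to $g(O_I,x)=f(O_I,x_*)$. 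I would then split
\begin{align}
|f(O_I,x)-g(O_I,x)| \le |f(O_I,x)-f(O_I,\chi_{S_{I,\delta}}(x))| + |f(O_I,\chi_{S_{I,\delta}}(x))-f(O_I,x_*)|
\end{align}
and aim to bound each piece by $\tfrac{\epsilon}{2}\|O_I\|$, absorbing factors of $2$ into the $\mathcal{O}(1)$ constants in \eqref{eq:delta_eps_exp}, \eqref{eq:delta_eps_alpha_gt_2d}, and \eqref{eq:delta_2}.

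The first term is immediately controlled by Corollary \ref{cor:delta_eps_exp} in the exponentially decaying case and by Corollary \ref{cor:delta_eps_alpha_gt_2D} in the power-law case with $\alpha>2D$: the choice of $\delta$ in \eqref{eq:delta_eps_exp}/\eqref{eq:delta_eps_alpha_gt_2d} already guarantees the required bound (up to a harmless rescaling of $\epsilon$). For the second term, note that $\chi_{S_{I,\delta}}(x)$ and $x_*$ both vanish on components $J\notin S_{I,\delta}$ and, by the definition of $T_{x_*,I}$, differ component-wise by at most $\delta_2/2$ on $S_{I,\delta}$. Hence their Euclidean distance is bounded by $\tfrac{\delta_2}{2}\sqrt{m_{I,\delta}}$ with $m_{I,\delta}=\sum_{J\in S_{I,\delta}}q_J\le q_*|S_{I,\delta}|$. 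Since the straight segment joining them stays within $[-1,1]^m$, the fundamental theorem of calculus combined with Lemma \ref{lemma:bound_gradient} applied to the observable $O_I$ yields
\begin{align}
|f(O_I,\chi_{S_{I,\delta}}(x))-f(O_I,x_*)| \le \sup_y \|\nabla f(O_I,y)\|\cdot\|\chi_{S_{I,\delta}}(x)-x_*\| \le C'\|O_I\|\cdot\tfrac{\delta_2}{2}\sqrt{m_{I,\delta}}.
\end{align}
Substituting the value $\delta_2\le \epsilon/\sqrt{C'|S_{I,\delta}|}$ from \eqref{eq:delta_2} then gives the desired $\tfrac{\epsilon}{2}\|O_I\|$ bound (with $m_{I,\delta}$ replaced by a $q_*$-rescaling of $|S_{I,\delta}|$).

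I do not foresee a genuine obstacle in executing this plan: the analytic heavy lifting, namely the locality bound and the gradient bound, is already done, and what remains is a triangle-inequality combination together with the geometry of the discretisation grid. The main points of care are (i) ensuring that the $q_J$ internal components of each $x_J$ are counted additively in the Euclidean displacement (so that $\sqrt{m_{I,\delta}}$, and not merely $\sqrt{|S_{I,\delta}|}$, appears), and (ii) using that the ceiling function in \eqref{eq:delta_2} only makes $\delta_2$ smaller, so the inequality $\delta_2\sqrt{|S_{I,\delta}|}\le \epsilon/\sqrt{C'}$ is never violated.
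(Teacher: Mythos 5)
Your proposal is correct and is essentially the argument the paper intends: the paper does not write out this proof but cites Lemma 5 of \cite{lewis2023improved} and notes that it "relies on the bound on the gradient of Lemma \ref{lemma:bound_gradient}", which is exactly your decomposition into the locality error (Corollaries \ref{cor:delta_eps_exp}/\ref{cor:delta_eps_alpha_gt_2D}) plus a mean-value/gradient bound over one grid cell. The only loose end is bookkeeping of constants (your final bound carries a residual $\sqrt{C'q_*}$ factor rather than exactly $\tfrac{1}{2}$), which, as you note, is absorbed into the unspecified $\mathcal{O}(1)$ constants in \eqref{eq:delta_2} and the choice of $\delta$.
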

We see that $g$ has the form of a linear function of the parameters
\begin{align}
    \label{eq:target_w}
    (w')_{I,x'}=f(O_I, x')\,.
\end{align}
Then for any observable 
\begin{align}
    \label{eq:O}
    O = \sum_{I\in {\cal S}} O_I\,,\quad {\cal S}\subseteq {\cal P}_p(\Lambda)\,,
\end{align}
we get
\begin{align}
    \label{eq:diff_f_target}
    |f(O, x) - g(O,x)|\le 
    \sum_{I\in {\cal S}}
    |f(O_I, x) - g(O_I,x)|\le 
    \epsilon
    r(O)\,,
\end{align}
with $r(O)$ as in \eqref{eq:rO} and
\begin{align}
    g(O,x)
    =
    w'\cdot \phi(x)
    =
    \sum_{I\in {\cal S}}\sum_{x'\in X_{I,\delta}} (w')_{I,x'}\phi(x)_{I,x'}
    \,,\quad \phi(x)_{I,x'} = \id(x\in T_{x',I})\,.
\end{align}
While the true weights are unknown, we show below they can be learned efficiently. Before moving on, we note the following bounds that will be useful in the derivation of the sample complexity of the machine learning model:
\begin{lemma}\label{lemma:bounds_m_phi_B}
Denote by $m_\phi$ the size of the weight vector $w'$ of \eqref{eq:target_w} associated to the observable \eqref{eq:O}. 
With $\nu$ as in Proposition \ref{prop:approx_alpha_gt_2d} define
\begin{align}
    \label{eq:n_eps}
    \omega = \frac{kD}{\nu-D}\,,\quad
    {\cal N}(\epsilon)
    =
    \begin{cases}
    2^{\mathcal{O}(\operatorname{polylog}(1/\epsilon))} & \textup{ exp decay},\\
    2^{\mathcal{O}(\epsilon^{-\omega} \log(1/\epsilon))} & \textup{ power law }\alpha>2D\,.
\end{cases}
\end{align}
We have the following bounds:
\begin{align}
m_\phi \le 
|{\cal S}| {\cal N}(\epsilon)\,,
\quad 
\|w'\|_1 \le 
r(O)
{\cal N}(\epsilon)
\equiv B
\,.
\end{align}
\end{lemma}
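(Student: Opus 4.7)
The plan is to reduce both bounds to a single counting estimate on the size of the discretised space $X_{I,\delta}$, and then plug in the choices of $\delta$ from Corollaries \ref{cor:delta_eps_exp} and \ref{cor:delta_eps_alpha_gt_2D} and of $\delta_2$ from \eqref{eq:delta_2}.

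First I would observe that by the definition of $w'$ in \eqref{eq:target_w} we have
\begin{align}
m_\phi = \sum_{I\in\mathcal{S}} |X_{I,\delta}|\,,\qquad
\|w'\|_1 = \sum_{I\in\mathcal{S}}\sum_{x'\in X_{I,\delta}} |f(O_I,x')|\,,
\end{align}
and that $|f(O_I,x')|=|\Tr(O_I\rho(x'))|\le \|O_I\|$ by H\"older. So it suffices to prove a uniform bound $|X_{I,\delta}|\le \mathcal{N}(\epsilon)$; then $m_\phi\le |\mathcal{S}|\mathcal{N}(\epsilon)$ and $\|w'\|_1\le \mathcal{N}(\epsilon)\sum_{I\in\mathcal{S}}\|O_I\| = r(O)\mathcal{N}(\epsilon)$.

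Second, I would bound $|X_{I,\delta}|$ by the product structure in \eqref{eq:XIdelta}: the grid $W_{\delta_2}$ has size $2\lceil 1/\delta_2\rceil+1$, and $X_{I,\delta}$ lives in $\prod_{J\in S_{I,\delta}}W_{\delta_2}^{q_J}$, so
\begin{align}
|X_{I,\delta}| \le (2\lceil 1/\delta_2\rceil+1)^{m_{I,\delta}}\,,\qquad m_{I,\delta} = \sum_{J\in S_{I,\delta}} q_J \le q_* |S_{I,\delta}|\,,
\end{align}
hence $\log|X_{I,\delta}|\le m_{I,\delta}\log(2\lceil 1/\delta_2\rceil+1)$. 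The counting of $|S_{I,\delta}|$ is a standard combinatorial estimate: the constraints $d(I,J)\le\delta$ and $\diam(J)\le\delta$ with $|J|\le k$ on a $D$-dimensional lattice give $|S_{I,\delta}|=\mathcal{O}(\delta^{kD})$ (the technical lemmas in Appendix \ref{sec:Technical Lemmas}, in particular the counting bounds used in Propositions \ref{prop:approx_exp_decay} and \ref{prop:approx_alpha_gt_2d}, make this quantitative). Combined with $\delta_2=\Theta(\epsilon/\sqrt{m_{I,\delta}})$ from \eqref{eq:delta_2}, we obtain
\begin{align}
\log|X_{I,\delta}| = \mathcal{O}\bigl(\delta^{kD}\log(\delta/\epsilon)\bigr)\,.
\end{align}

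Third, I would substitute the scaling of $\delta$ from the two corollaries. For exponential decay, Corollary \ref{cor:delta_eps_exp} gives $\delta=\mathcal{O}(\log^2(1/\epsilon))$, so $\delta^{kD}\log(\delta/\epsilon) = \mathcal{O}(\operatorname{polylog}(1/\epsilon))$, yielding $|X_{I,\delta}|\le 2^{\mathcal{O}(\operatorname{polylog}(1/\epsilon))}$. For power law with $\alpha>2D$, Corollary \ref{cor:delta_eps_alpha_gt_2D} gives $\delta=\mathcal{O}(\epsilon^{-1/(\nu-D)})$ up to polylog factors, so $\delta^{kD} = \mathcal{O}(\epsilon^{-kD/(\nu-D)}) = \mathcal{O}(\epsilon^{-\omega})$ with $\omega$ as in \eqref{eq:n_eps}, and the factor $\log(\delta/\epsilon) = \mathcal{O}(\log(1/\epsilon))$, yielding $|X_{I,\delta}|\le 2^{\mathcal{O}(\epsilon^{-\omega}\log(1/\epsilon))}$. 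These are exactly the two cases in the definition of $\mathcal{N}(\epsilon)$, and both bounds follow.

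The only mildly delicate point is to ensure that the counting bound on $|S_{I,\delta}|$ does not pick up an extra factor of $|I|$ that would spoil the uniformity in $I\in\mathcal{S}$; this is fine since $|I|\le p=\mathcal{O}(1)$, so $|S_{I,\delta}|=\mathcal{O}(\delta^{kD})$ with a constant independent of $I$. Everything else is direct substitution.
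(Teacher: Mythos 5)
Your proposal is correct and follows essentially the same route as the paper's proof: bound $|X_{I,\delta}|$ by the grid cardinality raised to the number of relevant parameters, use the appendix counting lemmas to get $|S_{I,\delta}|=\mathcal{O}(\delta^{kD})$, substitute the scaling of $\delta$ and $\delta_2$ in $\epsilon$ for the two decay regimes, and bound $\|w'\|_1$ via $|f(O_I,x')|\le\|O_I\|$. Your extra remarks (tracking $m_{I,\delta}=\sum_J q_J$ rather than $|S_{I,\delta}|$, and noting the $|I|\le p=\mathcal{O}(1)$ uniformity) are harmless refinements of the same argument.
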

\begin{proof}
We note that the size of weights $w$ and the feature map $\phi$ is
\begin{align}
    m_\phi
    =
    \sum_{I\in {\cal S}} |X_{I,\delta}|\,.
\end{align}
We can bound $|X_{I,\delta}|$ as
\begin{align}
    |X_{I,\delta}|\le 
    (2\delta_2^{-1} + 1)^{|S_{I,\delta}|}
    \le 
    \left( 2 \frac{\sqrt{C' |S_{I,\delta}|}}{\epsilon}
     + 3\right)^{|S_{I,\delta}|}
\end{align}
and 
\begin{align}
    |S_{I,\delta}|
    =
    \sum_{J\in {\cal P}_k(\Lambda)}
    \id\left(d(I,J)\le \delta
    \wedge
    \diam(J)\le  \delta
    \right)
    =
    \sum_{r,R\le \delta}    
    M(I,r,R)
    \,,
\end{align}
where $M$ is defined in Corollary \ref{corollary:M}. Then
\begin{align}
    |S_{I,\delta}|\le C
    \sum_{r,R\le \delta}
    r^{(k-1)D-1} R^{D-1}
    \le 
    C \delta^{kD}\,.
\end{align}
Note that compared to \cite{lewis2023improved}, we have $\delta^{kD}$ instead of $\delta^{D}$ because of the requirement that $\diam(J)\le \delta$ which is non-trivial for general $k$-body interactions.
Then
\begin{align}
    |X_{I,\delta}|\le 
    \left(2\sqrt{C'}  \delta^{kD/2}\epsilon^{-1}
     + 3\right)^{C \delta^{kD}}
     =
     2^{C\delta^{kD} \log_2(2\sqrt{C'} \delta^{kD/2}\epsilon^{-1}
     + 3)}\,.
\end{align}
To proceed, we assume exponentially decaying interactions first.
Then we know from Proposition \ref{prop:approx_exp_decay}
that $\delta = C_{\text{max}}\log^2(1/\epsilon)$ for small $\epsilon$, which implies
\begin{align}
    |X_{I,\delta}|\le 2^{{\cal O}(\text{polylog}(1/\epsilon))}\,.
\end{align}
So the number of features is
\begin{align}
    m_\phi
    \le 
    \sum_{I\in {\cal S}} 2^{{\cal O}(\text{polylog}(1/\epsilon))}
    = |{\cal S}| 2^{{\cal O}(\text{polylog}(1/\epsilon))}
    \,.
\end{align}
We can also bound the $\ell_1$ norm of the true weights:
\begin{align}
    \|w'\|_1
    =
    \sum_{I\in {\cal S}}\sum_{x'\in X_{I,\delta}}|f(O_I,x')|
    \le 
    \sum_{I\in {\cal S}}\sum_{x'\in X_{I,\delta}}\|O_I\|
    \le 
    2^{{\cal O}(\text{polylog}(1/\epsilon))}
    r(O)
    \,.
\end{align}
Then we consider power law decay with $\alpha>2D$. 
From Proposition \ref{prop:approx_alpha_gt_2d} we have $\delta = C_{\text{max}}\epsilon^{-1/(\nu-D)}$ for a fixed $\alpha$ and small $\epsilon$, which implies
\begin{align}
    |X_{I,\delta}|\le 
    2^{\mathcal{O}(\epsilon^{-\omega} \log(\epsilon^{-1-\omega/2}))}
    =
    2^{\mathcal{O}(\epsilon^{-\omega} \log(1/\epsilon))}
    \,,
    \quad 
    \omega = \frac{kD}{\nu-D}\,.
\end{align}
So the number of features is
\begin{align}
    m_\phi
    \le 
    \sum_{I\in {\cal S}} 2^{\mathcal{O}(\epsilon^{-\omega} \log(1/\epsilon))}
    = |{\cal S}| 2^{\mathcal{O}(\epsilon^{-\omega} \log(1/\epsilon))}
    \,.
\end{align}
and the $\ell_1$ norm of the true weights:
\begin{align}
    \|w'\|_1
    =
    \sum_{I\in {\cal S}}\sum_{x'\in X_{I,\delta}}|f(O_I,x')|
    \le 
    \sum_{I\in {\cal S}}
    \|O_I\|
    \sum_{x'\in X_{I,\delta}}1
    \le 
    r(O)
    2^{\mathcal{O}(\epsilon^{-\omega} \log(1/\epsilon))}
    \,.
\end{align}
\end{proof}

Note that if we used a polynomial approximation of $f(O_I,x)$ rather than discretisation, we would have a similar growth of the number of parameters with the error $\epsilon$, as can be derived from classical approximation theory, see e.g.~\cite[Theorem 4]{Newman1964}.

\subsection{Limit theorem for global observables}\label{sec:CLTGlobalObservables}

We now recall some facts about random fields \cite{gaetan2009spatial}.
A random field is a collection of random variables $X=(X_i)_{i\in S}$, $S\subseteq \mathbb{Z}^D$.
We call a random field $R$-dependent if for any pair $i,j\in S$ such that $d(i,j)>R$, $X_i$ and $X_j$ are independent: $\mathbb{E}(X_iX_j)=\mathbb{E}(X_i)\mathbb{E}(X_j)$.
We then have the following central limit theorem.
\begin{lemma}[Prop. B1 in \cite{gaetan2009spatial} for $R$-dependent random fields]
    Suppose $D_n$ is a strictly increasing sequence of finite subsets of $\mathbb{Z}^d$, 
    $X$ a random field with zero mean that is $R$-dependent, and define $S_n = \sum_{i\in D_n}X_i$, $\sigma_n^2=\text{Var}(S_n)$.
    Then $\sigma_n^{-1}S_n\to {\cal N}(0,1)$, the standard normal distribution, 
    provided that
    \begin{enumerate}
        \item There exists 
        $\delta>0$ such that 
        $\sup_i \mathbb{E}(|X_i|^{2+\delta})<\infty$  
    \item $\liminf_n |D_n|^{-1}\sigma_n^2>0$.
    \end{enumerate}    
\end{lemma}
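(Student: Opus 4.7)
The plan is to use Bernstein's classical big-block/small-block partitioning to reduce the problem to a CLT for independent random variables and then invoke the Lyapunov condition, which is a sufficient form of Lindeberg–Feller. The key point is that $R$-dependence lets us carve $D_n$ into blocks that are far enough apart to be independent, while the $(2+\delta)$-moment bound and the nondegeneracy condition 2 together control the relative contributions of large blocks.

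Concretely, I would fix an integer $L=L_n\to\infty$ (with $L_n$ growing much more slowly than $|D_n|^{1/D}$) and tile $\mathbb{Z}^D$ with cubes of side $L+R$. Inside each cube, let the ``big block'' be a sub-cube of side $L$ and the ``small block'' be the surrounding corridor of width $R$. Intersecting with $D_n$ gives sets $B_{n,k}$ and $C_{n,k}$, and I define
\begin{equation}
Y_{n,k}=\sum_{i\in B_{n,k}}X_i,\qquad Z_n=\sum_k\sum_{i\in C_{n,k}}X_i,
\end{equation}
so that $S_n=\sum_{k}Y_{n,k}+Z_n$. Because any two distinct big blocks are separated by distance at least $R$, $R$-dependence forces $\{Y_{n,k}\}_k$ to be mutually independent with zero mean. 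The total volume of the corridors is of order $|D_n|R/L$, and an $R$-dependent variance computation gives $\mathrm{Cov}(X_i,X_j)\neq 0$ only when $d(i,j)\le R$, hence $\mathrm{Var}(Z_n)\le C R^{D}M^2\,|D_n|R/L$ with $M=\sup_i\mathbb{E}(|X_i|^{2+\delta})^{1/(2+\delta)}$. Combined with condition 2, $\mathrm{Var}(Z_n)/\sigma_n^2=O(R^{D+1}/L)\to 0$, so $Z_n/\sigma_n\to 0$ in probability by Chebyshev; Slutsky's theorem then reduces the problem to proving a CLT for $\tilde S_n=\sigma_n^{-1}\sum_k Y_{n,k}$, whose variance tends to $1$.

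For the independent sum I would apply Lyapunov's CLT. Minkowski's inequality gives $\mathbb{E}|Y_{n,k}|^{2+\delta}\le (L^D M)^{2+\delta}$, and the number of blocks is $K_n\asymp |D_n|/L^D$, so
\begin{equation}
\frac{1}{\sigma_n^{2+\delta}}\sum_{k=1}^{K_n}\mathbb{E}|Y_{n,k}|^{2+\delta}
\le \frac{C}{|D_n|^{1+\delta/2}}\cdot\frac{|D_n|}{L^D}\cdot L^{D(2+\delta)}M^{2+\delta}
= C'|D_n|^{-\delta/2}L^{D(1+\delta)}.
\end{equation}
Choosing $L_n$ to grow to infinity but with $L_n^{D(1+\delta)}=o(|D_n|^{\delta/2})$ (any sufficiently slow growth works), the Lyapunov ratio vanishes and $\tilde S_n\Rightarrow\mathcal{N}(0,1)$. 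Combining with the Slutsky reduction above yields the claim.

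The main obstacle is really the bookkeeping around the block geometry: one must check that $L_n$ can simultaneously be taken large enough for the corridor variance to be negligible and small enough for the Lyapunov ratio to vanish, and that the crude Minkowski moment bound $\|Y_{n,k}\|_{2+\delta}\le L^D M$ is compatible with the normalisation $\sigma_n^2\gtrsim|D_n|$ from condition 2. Both compatibilities hold for any $\delta>0$, but the tightness of the block-size window is the only genuinely quantitative step. The rest of the argument — independence from $R$-dependence, triangle inequality on moments, and Lyapunov's classical CLT — is standard.
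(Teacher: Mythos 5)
Your proof is correct in substance, but note that the paper does not prove this lemma at all: it is imported verbatim as Prop.~B1 of the cited reference on spatial statistics, so there is no in-paper argument to compare against. What you have written is the classical Bernstein big-block/small-block proof of the CLT for $m$-dependent (here $R$-dependent) fields, which is indeed how the cited result is established, and all the quantitative steps check out: the corridor variance is $O(R^{D+1}|D_n|/L_n)$, condition 2 makes it $o(\sigma_n^2)$, and the Lyapunov ratio is $O(L_n^{D(1+\delta)}|D_n|^{-\delta/2})$, so the window $L_n\to\infty$ with $L_n^{D(1+\delta)}=o(|D_n|^{\delta/2})$ is nonempty for every $\delta>0$. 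Two small points deserve care. First, your count $K_n\asymp|D_n|/L^D$ fails for sparse $D_n$ (many blocks, few points each); the bound survives because $\sum_k|B_{n,k}|^{2+\delta}\le(\max_k|B_{n,k}|)^{1+\delta}\sum_k|B_{n,k}|\le L^{D(1+\delta)}|D_n|$, so you should phrase the Lyapunov estimate that way rather than via a block count. Second, and more importantly, the paper's definition of $R$-dependence is only \emph{pairwise} independence of $X_i,X_j$ for $d(i,j)>R$, whereas your step ``$R$-dependence forces $\{Y_{n,k}\}_k$ to be mutually independent'' requires the stronger, standard notion that the $\sigma$-fields generated over sets at distance greater than $R$ are independent. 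That stronger notion is what the cited Prop.~B1 assumes and what actually holds in the paper's application (the $X_i$ there are deterministic functions of disjoint groups of the independent parameters $x_J$), so your proof is valid for the intended statement, but you are implicitly strengthening the hypothesis as literally written.
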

We shall apply this result to 
\begin{align}
    X_i = \Tr(O_i \rho(\chi_{S_{i,\delta}}(x)))
    -
    \mu_i\,,\quad \mu_i=
    \mathbb{E}(\Tr(O_i \rho(\chi_{S_{i,\delta}}(x))))
    \,,
\end{align}
for $i\in\Lambda$.
We recall that the random fields $X_i$, whose randomness is derived from the random parameters $x\in [-1,1]^m$, 
are functions of $x_J$ with $d(J,i)\le \delta$ and $\diam(J)\le \delta$, so for any $j_a\in J$, $J\in S_{i,\delta}$:
\begin{align}
    d(j_a,i)\le d(j_b,i)+d(j_b,j_a)
    \le 2\delta\,,
\end{align}
where $j_b$ is the closest point to $i$.
Therefore if $d(i,j)>4\delta$, $X_i,X_j$ depend on different $x$ variables and thus are independent, if we assume that the $x_J$'s are independent. We thus have the following Proposition as a Corollary of the central limit theorem for random fields above.
\begin{prop}\label{prop:clt}
Assume the $x_J$'s are independent random variables. Then $$ X_i = \Tr(O_i \rho(\chi_{S_{i,\delta}}(x)))-
    \mu_i$$ are $4\delta$-dependent, and defining
\begin{align}
    S_n = \sum_{i=1}^n X_i
    =\sum_{i=1}^n \Tr(O_i \rho(\chi_{S_{i,\delta}}(x)))
    -\sum_{i=1}^n\mu_i
    \,,\quad 
    \sigma_n^2 = \text{Var}(S_n)\,,
\end{align}
we have 
$\sigma_n^{-1}S_n\to {\cal N}(0,1)$, 
    provided that
    \begin{enumerate}
        \item There exists 
        $\delta>0$ such that 
        $\sup_i \mathbb{E}(|X_i|^{2+\delta})<\infty$  
    \item $\liminf_n \frac{\sigma_n^2}{n}>0$.
    \end{enumerate}  
\end{prop}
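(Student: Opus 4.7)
The plan is to reduce the claim to the cited central limit theorem for $R$-dependent random fields by verifying the $4\delta$-dependence of the family $\{X_i\}_{i\in\Lambda}$; once this is in hand, conditions (1) and (2) are exactly the hypotheses of the cited result and the conclusion $\sigma_n^{-1}S_n\to{\cal N}(0,1)$ follows directly.

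First I would make explicit the dependence structure of $X_i$. By construction, $X_i=\Tr(O_i\rho(\chi_{S_{i,\delta}}(x)))-\mu_i$ is a deterministic function of the subcollection $\{x_J : J\in S_{i,\delta}\}$, and nothing else, because $\chi_{S_{i,\delta}}(x)$ sets all other components to zero. Thus to check independence of $X_i$ and $X_{i'}$ it suffices, using the assumed independence of the $x_J$'s, to show that $S_{i,\delta}\cap S_{i',\delta}=\emptyset$ whenever $d(i,i')>4\delta$.

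The second step is to prove exactly that disjointness. Suppose $J\in S_{i,\delta}\cap S_{i',\delta}$. By definition of $S_{i,\delta}$ (see \eqref{eq:S_I_delta}) there exist $j_a,j_b\in J$ with $d(j_a,i)\le\delta$ and $d(j_b,i')\le\delta$, and also $\diam(J)\le\delta$, so $d(j_a,j_b)\le\delta$. The triangle inequality then gives
\begin{align}
d(i,i')\le d(i,j_a)+d(j_a,j_b)+d(j_b,i')\le 3\delta,
\end{align}
contradicting $d(i,i')>4\delta$. Hence $S_{i,\delta}\cap S_{i',\delta}=\emptyset$, and since $X_i$ and $X_{i'}$ are functions of disjoint collections of independent variables, $\mathbb{E}(X_iX_{i'})=\mathbb{E}(X_i)\mathbb{E}(X_{i'})$. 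This is the definition of $4\delta$-dependence used in the cited CLT. (The short argument sketched just before the proposition is essentially this, phrased via the intermediate estimate $d(j_a,i)\le 2\delta$; either form works.)

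Finally, the field $X_i$ has zero mean by construction. Invoking the cited Proposition B1 of \cite{gaetan2009spatial} with $R=4\delta$ and $D_n$ any strictly increasing exhausting sequence, under hypotheses (1) and (2) of Proposition \ref{prop:clt} we obtain $\sigma_n^{-1}S_n\to{\cal N}(0,1)$ in distribution, which is the desired conclusion. The only non-routine step is the combinatorial verification of $4\delta$-dependence above; everything else is a direct application of the external CLT, and no additional analytic estimates on $\rho(x)$ are needed.
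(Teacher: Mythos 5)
Your proposal is correct and follows essentially the same route as the paper: verify that $X_i$ and $X_{i'}$ depend on disjoint collections of the independent $x_J$'s when $d(i,i')>4\delta$ (the paper does this by showing every $J\in S_{i,\delta}$ lies within distance $2\delta$ of $i$, while you derive the slightly sharper bound $d(i,i')\le 3\delta$ for any shared $J$; both establish $4\delta$-dependence), and then invoke the cited CLT for $R$-dependent random fields. No gaps.
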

Note that
\begin{align}
    |X_i|
    \le
    |\Tr(O_i \rho(\chi_{S_{i,\delta}}(x)))|+
    |\mu_i|
    \le 
    2\|O_i\|\,.
\end{align}
So the first condition is 
\begin{align}
    \mathbb{E}(|X_i|^{2+\delta})
    \le 
    (2\|O_i\|)^{2+\delta}
     \,,
\end{align}
and $\sup_i \mathbb{E}(|X_i|^{2+\delta})<\infty$ as long as $\sup_i \|O_i\|$ is finite, which we assume.
We can bound the variance of $S_n$ as
\begin{align}
    \sigma_n^2
    =
     \mathbb{E}\left( 
     \sum_{i=1}^n X_i
     \right)^2
     \le 
     \mathbb{E}\left( 
     \sum_{i=1}^n |X_i|
     \right)^2
     \le 
     4
     \left( 
     \sum_{i=1}^n \| O_i\|
     \right)^2\,.
\end{align}
The second condition means that  $\sigma_n^2 = \sigma^2 n+o(n)$. 
Even if we normalise $\|O_i\|$ so that  $\sum_{i=1}^n \| O_i\|={\cal O}(\sqrt{n})$, and we have a bound of $cn$, we cannot however conclude that the second condition is satisfied.
We shall discuss this further in Section \ref{sec:Experiments}, where we will check it numerically for specific examples.

We note that under the assumption of Proposition \ref{prop:clt} for the random variable $S_n/n$ we would get $\text{Var}(S_n/n)=\sigma_n^2/n^2\to 0$
so that $S_n/n$ tends to the average mean and fluctuations are suppressed for large $n$: 
\begin{align}
    \frac{1}{n}
    \sum_{i=1}^n \Tr(O_i \rho(\chi_{S_{\delta,i}}(x)))
    =
    \frac{1}{n}
    \sum_{i=1}^n\mu_i
    +
    \frac{\sigma}{\sqrt{n}}
    \xi
    +
    o(n^{-1/2})
    \,.
\end{align}
Here $\xi$ is a random variable that depends on $x$ and is distributed as a standard normal.
If $\|O_i\|={\cal O}(1)$ and want a bounded operator norm,
we shall consider $S_n/n$, since $\|S_n/n\|\le \sum_i \|O_i\|/n = {\cal O}(1)$.

Now we recall that if we choose $\delta$ as in \eqref{eq:delta_eps_exp} for exponential decay and \eqref{eq:delta_eps_alpha_gt_2d} for power law decay with $\alpha>2D$, we have
\begin{align}
    |
    \Tr(O_i \rho(x))
    -
    \Tr(O_i \rho(\chi_{S_{i,\delta}}(x)))
    |\le \epsilon \|O_i\|\,.
\end{align}
Thus, 
\begin{align}
    \left|
    \frac{1}{n}\sum_{i=1}^n
    \Tr( O_i \rho(x))
    -
    \frac{1}{n}
    \sum_i\mu_i
    -
    \frac{\sigma}{\sqrt{n}}
    \xi
    +
    o(n^{-1/2})
    \right|
    \le \epsilon 
    \frac{1}{n}
    \sum_{i=1}^n
    \|O_i\|
    \equiv \epsilon'
    \,.
\end{align}
So if $\|O_i\|={\cal O}(1)$, 
we can approximate $n^{-1}\sum_{i=1}^n
    \Tr( O_i \rho(x))$ by the $x$-independent number 
$n^{-1} \sum_i\mu_i$ with constant error $\epsilon'$, meaning that for large $n$ the observable concentrates around its mean.

So far we have considered observables of the type $O_I$ with $|I|=1$. We could not find a central limit theorem for the fields supported on sets rather than single sites in the literature, but we conjecture that the same behavior holds for the more general class of observables of the form $\sum_{I\in {\cal P}_p(\Lambda)} O_I$ with $p={\cal O}(1)$.

\section{Machine learning model for classical shadows}
\label{sec:Machine learning model and sample complexity bounds}

\subsection{Classical shadows and prediction of many observables}
\label{sec:Classical shadows and prediction of many observables}

This section extends \cite[Corollary 5]{lewis2023improved} to our setting, and shows how to predict many observables from classical shadows data.
We start by recalling that a classical shadow is an approximation of the density matrix $\rho$ obtained by repeated random measurements \cite{Huang_2020}. For each copy of $\rho$ we select  uniformly at random whether to measure $X,Y,Z$ for each qubit and store the associated measurement results as classical data $s_i^t\in \{ 0_X,1_X,0_Y,1_Y,0_Z,1_Z\}$ for the measurement outcome of qubit $i$ at time $t$. 
Here we denote by $0_A,1_A$ the possible states after measurement of the Pauli $A$.
After $T$ measurements, the classical shadow is 
\begin{align}
    \sigma_T(\rho)
    =
    \frac{1}{T}
    \sum_{t=1}^T
    \left(3\ket{s_1^t}\bra{s_1^t}-\id_2 \right)\otimes \cdots \otimes
    \left(3\ket{s_n^t}\bra{s_n^t}-\id_2 \right)\,.
\end{align}
Note that $\Tr\left(3\ket{s_i^t}\bra{s_i^t}-\id_2 \right)=1$, so the reduced density matrix for subsystem $I=\{i_1,\dots,i_\ell\}$ is
simply
\begin{align}
    \Tr_{I^c}(\sigma_T(\rho))
    =
    \frac{1}{T}
    \sum_{t=1}^T
    \left(3\ket{s_{i_1}^t}\bra{s_{i_1}^t}-\id_2 \right)\otimes \cdots \otimes
    \left(3\ket{s_{i_\ell}^t}\bra{s_{i_\ell}^t}-\id_2 \right)\,.
\end{align}
We are going to use the following result.
\begin{lemma}[Lemma 1 in \cite{Huang_2022}]
    Given $\epsilon,\gamma\in (0,1)$, we have with probability at least $1-\delta$,
    \begin{align}
        \|\Tr_{I^c}(\rho)-\Tr_{I^c}(\sigma_T(\rho))\|_1
        \le 
        \epsilon
    \end{align}
    for all $I\in {\cal P}_k(\Lambda)$, if
    \begin{align}
        T = \mathcal{O}(k 12^k \log(n/\gamma)/\epsilon^2)\,.
    \end{align}
\end{lemma}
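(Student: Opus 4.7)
The plan is to follow the classical shadow tomography approach underlying \cite{Huang_2022}, which rests on four ingredients: unbiasedness of the single-round estimator, a shadow-norm bound for $k$-local operators, median-of-means concentration, and a union bound over $\mathcal{P}_k(\Lambda)$.

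First I would verify that on a single site, averaging uniformly over the choice of Pauli basis and over the measurement outcome gives $\mathbb{E}[3|s_i^t\rangle\langle s_i^t|-\id_2]=\rho_i$. By the tensor-product structure of random Pauli measurements this lifts to $\mathbb{E}[\sigma_T(\rho)]=\rho$, hence $\mathbb{E}[\Tr_{I^c}\sigma_T(\rho)]=\Tr_{I^c}\rho$ for every $I\in\mathcal{P}_k(\Lambda)$. Then I would dualise the trace norm, $\|A\|_1=\sup_{\|O\|_\infty\le 1}|\Tr(OA)|$, and reduce the problem to uniform control of $\Tr((\rho-\sigma_T(\rho))O)$ over the unit ball of Hermitian observables supported in $I$.

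Next I would invoke the shadow-norm bound $\|O\|_{\rm shadow}^2\le 4^{|I|}\|O\|_\infty^2$ from \cite{Huang_2020}, which is the key technical input: for random Pauli measurements, the variance of the single-round estimator for any $k$-local observable is bounded by $4^k$. A median-of-means estimator built from $T$ samples split into $B=\mathcal{O}(\log(1/\gamma'))$ groups then yields the sub-Gaussian tail
\begin{equation}
\Pr\!\big[|\Tr((\rho-\sigma_T(\rho))O)|>\epsilon'\big]\le \gamma'\quad\text{whenever}\quad T=\mathcal{O}(4^k\log(1/\gamma')/\epsilon'^2).
\end{equation}
I would then pass from this pointwise bound to a uniform one via an $\eta$-net of the operator-norm unit ball on $2^k$ dimensions, whose cardinality is $(\mathcal{O}(1/\eta))^{4^k}$, choosing $\eta$ so that its discretisation error is absorbed into $\epsilon$. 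Finally a union bound over the $|\mathcal{P}_k(\Lambda)|=\mathcal{O}(n^k)$ subsets, with $\gamma'=\gamma/(n^k\,|\text{net}|)$, gives $\log(1/\gamma')=\mathcal{O}(k\log(n/\gamma)+4^k\log(1/\epsilon))$; plugging in produces the stated $T=\mathcal{O}(k\cdot 12^k\log(n/\gamma)/\epsilon^2)$.

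The main obstacle is tracking the combinatorial constants tightly enough to recover the $12^k$ exponent rather than a looser $16^k$ or $48^k$. The naive route of bounding $\|A\|_1\le \sum_P|\Tr(PA)|$ over the $4^k$ Paulis supported on $I$ and concentrating each Pauli separately (shadow norm $3^{|I|}$) costs an extra multiplicative $4^k$ through Cauchy–Schwarz and yields a loose scaling. The cleaner path is to apply the refined shadow-norm calculation of \cite{Huang_2020} directly on the Hermitian unit ball, where the per-qubit variance factor $3$ of the single-qubit Pauli channel combines with the net cardinality exponent $4^k$ so that the two contributions multiply to the claimed $12^k$; this bookkeeping, rather than any new probabilistic ingredient, is the delicate step.
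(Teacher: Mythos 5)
Your overall architecture (unbiasedness of the snapshot, concentration for a fixed $I$, union bound over the $\mathcal{O}(n^k)$ subsets) matches the paper's, but the concentration step has two genuine problems. First, $\sigma_T(\rho)$ as defined here is the plain empirical average of the $T$ single-round snapshots, so median-of-means concentration does not apply to it; you would be proving a statement about a different estimator than the one in the lemma. Second, and more seriously, the route through dualising $\|\cdot\|_1$ and taking an $\eta$-net of the Hermitian unit ball cannot recover the stated $12^k$: the net over a ball of real dimension $4^k$ has $\log|\mathrm{net}|=\Theta(4^k\log(1/\eta))$, and this enters $T$ multiplicatively against the per-observable variance $4^k$, yielding $T=\mathcal{O}\!\left(4^k\left(k\log(n/\gamma)+4^k\log(1/\epsilon)\right)/\epsilon^2\right)$ --- a $16^k$ term plus a spurious $\log(1/\epsilon)$. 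Your closing claim that the per-qubit factor $3$ ``combines with the net cardinality exponent $4^k$'' to give $12^k$ is an assertion, not a calculation; $12^k=3^k\cdot 4^k$ arises from a different mechanism entirely.

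The paper (following Lemma 1 of \cite{Huang_2022}) avoids the net altogether by applying the matrix Bernstein inequality directly to the $2^k\times 2^k$ random matrices $X_t=\bigotimes_{i\in I}\left(3\ket{s_i^t}\bra{s_i^t}-\id_2\right)-\Tr_{I^c}(\rho)$. These are mean zero, have operator norm $\mathcal{O}(2^k)$, and matrix variance bounded by $3^k$, since $\mathbb{E}\left[(3\ket{s}\bra{s}-\id_2)^2\right]\preceq 3\,\id_2$ per qubit and the bound tensorises. Matrix Bernstein then controls $\|\tfrac{1}{T}\sum_t X_t\|_\infty$ at the cost of only an additive dimensional prefactor $2^{k+1}$ in front of the exponential, and the conversion $\|A\|_1\le 2^k\|A\|_\infty$ rescales $\epsilon\to\epsilon/2^k$ inside the exponent, producing exactly the tail $2^{k+1}\exp\!\left(-3T\epsilon^2/(8\cdot 12^k)\right)$ with $12^k=3^k\cdot(2^k)^2$. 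The union bound over $|{\cal P}_k(\Lambda)|=\mathcal{O}(n^k)$ then gives the stated $T$. To salvage your approach you would need to replace the net-plus-scalar-concentration step with this matrix-valued concentration argument.
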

\begin{proof}
We here sketch some steps of the proof that are going to be useful below.
The Lemma follows from the Bernstein matrix inequality, which gives \cite{Huang_2022}:
\begin{align}
    \text{Pr}(\|\Tr_{I^c}(\rho)-\Tr_{I^c}(\sigma_T(\rho))\|_1\ge \epsilon)
    \le 
    2^{k+1}\exp\left(-\frac{3T\epsilon^2}{8 \times 12^k}\right)
    \,.
\end{align}
Now we want to bound 
\begin{align}
    \text{Pr}(\|\Tr_{I^c}(\rho)-\Tr_{I^c}(\sigma_T(\rho))\|_1
        \le 
        \epsilon\,, \forall I
        \in {\cal P}_k(\Lambda)
    )
    =
    \text{Pr}(
    \bigcap_{I
        \in {\cal P}_k(\Lambda)}
    \|\Tr_{I^c}(\rho)-\Tr_{I^c}(\sigma_T(\rho))\|_1
        \le 
        \epsilon
    )\,,
\end{align}
and we are going to use that
\begin{align}
   \text{Pr}(A\cap B)
   =
   1-\text{Pr}(\overline{A\cap B})
   =
   1-\text{Pr}(\overline{A}\cup \overline{B})
   \,,
\end{align}
and then compute the r.h.s.~with the union bound
$\text{Pr}(\cup_{i}A_i)\le\sum_i \text{Pr}(A_i)$.
We have
\begin{align}
    \text{Pr}\left(\bigcup_{I\in {\cal P}_k(\Lambda)}
    \|\Tr_{I^c}(\rho)-\Tr_{I^c}(\sigma_T(\rho))\|_1\ge \epsilon\right)
    &\le
    \sum_{I\in {\cal P}_k(\Lambda)}
    \text{Pr}\left(
    \|\Tr_{I^c}(\rho)-\Tr_{I^c}(\sigma_T(\rho))\|_1\ge \epsilon\right)
    \\
    &\le 
    C_k
    n^k 2^{k+1}
    \exp\left(-\frac{3T\epsilon^2}{8 \times 12^k}\right)
    \,.
\end{align}
Setting this equal to $\gamma$, gives the value of $T$ in the Lemma.
\end{proof}

In our case, $k=\mathcal{O}(1)$ and so we can accurately predict the reduced density matrix on any subsystems of size $k$ with only $\mathcal{O}(\log(n/\gamma)/\epsilon^2)$ randomized measurements.

The following result shows that if we have access to classical shadow data we can predict all observables of the form    $\sum_{I\in {\cal P}_k}O_I$ 
with the same sample complexity of predicting a single one of Theorem \ref{thm:sample_complexity_single_obs}.
\begin{theorem}
    Suppose we have data $\{x^{(i)}, \sigma_T(\rho(x^{(i)}) \}_{i=1}^N$ with
\begin{align}
    T &= \mathcal{O}(\log(nN/(\gamma/2))/(\epsilon_2')^2)
    \,,\quad
    \epsilon_2'=\epsilon_2/C\,,\\
    N &=
    (\epsilon_3/C^2)^{-2}
    {\cal N}(\epsilon_1/C)
    \mathcal{O}(\log(n/\gamma)) 
    \,,
\end{align}
with $\epsilon_1,\epsilon_2,\epsilon_2,C>0$ and $\gamma\in(0,1)$.
    Then we can learn a predictor $\hat{\rho}(x)$ that achieves
    \begin{align}
        \mathbb{E}_{x\sim {\cal D}}|\Tr(O\rho(x))-\Tr(O\hat{\rho}(x))|^2
    \le 
    (\epsilon_1+\epsilon_2)^2+\epsilon_3\,,
    \end{align}
    with probability at least $1-\gamma$ for all observables $O$ such that 
    \begin{align}
    O = \sum_{P\in P_k} \alpha_P P
    \,,\quad 
    \sum_{P\in P_k}|\alpha_P|\le C\,,
\end{align}
where $P_k$ is the set of Pauli strings with weight at most $k$. 
\end{theorem}
\begin{proof}
We start to note that 
\begin{align}
    |P_k| = \sum_{\ell=0}^k
    \binom{n}{\ell} 3^\ell
    \le 
    (k+1)\binom{n}{k} 3^k 
    \le 
    (k+1)(3 \e n/ k)^k
    =
    \mathcal{O}(n^k)\,.
\end{align}
Then by the union bound
\begin{align}
    &\text{Pr}\left(
    \bigcup_{i=1}^N
    \bigcup_{I\in {\cal P}_k(\Lambda)}
    \|\Tr_{I^c}(\rho(x^{(i)}))-\Tr_{I^c}(\sigma_T(\rho(x^{(i)})))\|_1\ge \epsilon\right)\\
    &\le
    \sum_{i=1}^N
    \sum_{I\in {\cal P}_k(\Lambda)}
    \text{Pr}\left(
    \|\Tr_{I^c}(\rho(x^{(i)}))-\Tr_{I^c}(\sigma_T(\rho(x^{(i)})))\|_1\ge \epsilon\right)
    \\
    &\le 
    N
    C_k
    n^k 2^{k+1}
    \exp\left(-\frac{3T\epsilon^2}{8 \times 12^k}\right)\,.
\end{align}
If we set this equal to $\gamma/2$ we get that with $T$ as in the assumptions
we can achieve
with probability at least $1-\gamma/2$,
\begin{align}
    \|\Tr_{I^c}(\rho(x^{(i)})-\Tr_{I^c} (\sigma_T(\rho(x^{(i)})))\|_1
    \le 
    \frac{
    \epsilon_2}{C}\,,
\end{align}
for all $i=1,\dots,N$ and subsets $I\in {\cal P}_k$.
Then we can compute for each $P\in P_k$, 
\begin{align}
\label{eq:yp_x}
y^{(P)}(x^{(i)})=\Tr(P\sigma_T(\rho(x^{(i)})))\,,
\end{align}
which, with probability at least $1-\gamma/2$, satisfies, since $\|P\|=1$,
\begin{align}
    \label{eq:eps_2_shadows}
    |y^{(P)}(x^{(i)}) - \Tr(P\rho(x^{(i)}))|
    \le
    \|\Tr_{I^c}(\rho)-\Tr_{I^c} (\sigma_T(\rho))\|_1
    \le
    \frac{\epsilon_2}{C}\,,
\end{align}
thus producing a dataset $\{ 
x^{(i)}, y^{(P)}(x^{(i)}) \}_{i=1}^N$ for each $P\in P_k$.
Note that we have $N$ data points for each $P$, but this counts as $N$ samples only since they can all be produced by preparing $N$ different quantum states $\rho(x_\ell)$.

Now we learn a model $h^P_*$ for each of those $P$  using this dataset.
Called ${\cal D}$ the distribution over $x$, we know from Theorem \ref{thm:sample_complexity_single_obs} that if we choose
\begin{align}
    N 
    = 
    (\epsilon_3/C^2)^{-2}
    {\cal N}(\epsilon_1/C)
    \mathcal{O}(\log(1/\gamma')) 
    \,,\quad \gamma'=\gamma/(2 |P_k|)
    \,,
\end{align}
we have, denoting  $\epsilon=\tfrac{1}{C^2}((\epsilon_1+\epsilon_2)^2+\epsilon_3)$, that
\begin{align}
    \text{Pr}\left(
    \mathbb{E}_{x\sim {\cal D}}
    |
    h^P_*(x) - y^{(P)}(x)
    |\ge \epsilon
    \right)
    \le 
    \gamma'\,,
\end{align}
and this holds for any $P\in P_k$, since we have used the generalisation bound for each dataset $\{ 
x_\ell$, $y^{(P)}(x_\ell) \}_{\ell=1}^N$,  $P\in P_k$. 
Then, again by using the union bound:
\begin{align}
    \label{eq:union_bound_gen_error}
    \text{Pr}\left(
    \bigcup_{P\in P_k}
    \mathbb{E}_{x\sim {\cal D}}
    |
    h^P_*(x) - y^{(P)}(x)
    |\ge \epsilon
    \right)
    \le 
    \sum_{P\in P_k}
    \text{Pr}\left(
    \mathbb{E}_{x\sim {\cal D}}
    |
    h^P_*(x) - y^{(P)}(x)
    |\ge \epsilon
    \right)
    \le 
    |P_k|
    \gamma'
    =
    \frac{\gamma}{2}
\end{align}
conditioned on \eqref{eq:eps_2_shadows} to occur. The probability for $\mathbb{E}_{x\sim {\cal D}}
    |
    h^P_*(x) - y^{(P)}(x)
    |\le \epsilon$ for all $P\in P_k$
    and \eqref{eq:eps_2_shadows} is thus at least $1-\gamma$ again by the union bound.

We can then construct a ground state representation for all observables
\begin{align}
    \hat{\rho}(x)
    =
    \frac{1}{2^n}
    \sum_{P\in {P_k}}
    h_*^P(x) P
    \,,
\end{align}
which attains for any $O$ as in \eqref{eq:O_alpha_P} the bound:
\begin{align}
    &\mathbb{E}_{x\sim {\cal D}}|\Tr(O\rho(x))-\Tr(O\hat{\rho}(x))|^2
    = 
    \sum_{P,P'\in P_k}|\alpha_P\|\alpha_{P'}|
    \mathbb{E}_{x\sim {\cal D}}
    |
    h^P_*(x) - \Tr(P\rho(x))
    |
    |
    h^{P'}_*(x) - \Tr(P'\rho(x))
    |\\
    &\le 
    \sum_{P,P'\in P_k}|\alpha_P\|\alpha_{P'}|
    \sqrt{\mathbb{E}
    |
    h^P_*(x) - \Tr(P\rho(x))
    |^2}
    \sqrt{\mathbb{E}
    |
    h^{P'}_*(x) - \Tr(P'\rho(x))
    |^2}
    \\
    &\le 
    \left(\sum_{P\in P_k}|\alpha_P|\right)^2
    \tfrac{1}{C^2}((\epsilon_1+\epsilon_2)^2+\epsilon_3)
    =
    (\epsilon_1+\epsilon_2)^2+\epsilon_3\,.
\end{align}
In the first equality we used 
$\Tr(PP') = 2^n \delta_{P,P'}$, and in the second inequality we used H\"{o}lder's inequality.
\end{proof}

\subsection{Equivariant classical shadows}
\label{sec:Equivariant classical shadows}

In this section, we'll show how equivariance reduces the number of samples needed for predicting any observable of the form \eqref{eq:O_alpha_P}.
We will repeat the derivation of Theorem \ref{thm:sample_complexity_classical_shadows} in the case we have a non-trivial automorphism group of the interaction hypergraph ${\cal I}$.
Using the notation of Appendix \ref{sec:Classical shadows and prediction of many observables}, we first show the following result.
\begin{lemma}\label{lemma:equi_gen_error}
    If ${\cal D}$ is a probability distribution over $x$ that is invariant under $\text{Aut}({\cal I})$, and $h^P_*(x)$ is an equivariant model as in Proposition \ref{prop:equi_ml_model}, then
    \begin{align}
        \mathbb{E}_{x\sim {\cal D}}
        \left( |h^P_*(x)-y^P(x)|^2 \right)
        =
        \mathbb{E}_{x\sim {\cal D}}
        \left( |h^{P'}_*(x)-y^{P'}(x)|^2 \right)
        \,,\quad P'=\hat{g}P\hat{g}^{-1}\,,
    \end{align}
where $y^{(P)}(x)$ is the expectation value of $P$ in the classical shadow as in \eqref{eq:yp_x}.
\end{lemma}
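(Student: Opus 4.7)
The plan is to unfold both sides of the desired identity using the equivariance built into the learned model and into the ground state map, and then close the loop with a change of variables that exploits the invariance of $\mathcal{D}$.

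First, I would fix $g \in \text{Aut}(\mathcal{I})$ and set $P' = \hat{g}P\hat{g}^{-1}$. If $P$ is supported on qubits $I$, then $P'$ is supported on $gI$, so Proposition \ref{prop:equi_ml_model} applied to the operator $P$ (taking $O_I = P$ and $O_{gI} = P'$) yields the pointwise equivariance of the trained model,
\begin{align}
h^{P'}_*(x) \;=\; h^P_*(g \cdot x).
\end{align}

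Second, for the target, Lemma \ref{lemma:equi_obs} applied to $P$ gives, after one use of cyclicity of the trace and equation \eqref{eq:g_rho_g},
\begin{align}
\Tr(P'\rho(x)) \;=\; \Tr(\hat{g}P\hat{g}^{-1}\rho(x)) \;=\; \Tr(P\,\hat{g}^{-1}\rho(x)\hat{g}) \;=\; \Tr(P\rho(g\cdot x)),
\end{align}
so that $y^{P'}(x) = y^P(g\cdot x)$ as a function of $x$. Substituting both identities into the right-hand side of the claim gives
\begin{align}
\mathbb{E}_{x\sim \mathcal{D}}\!\left(|h^{P'}_*(x) - y^{P'}(x)|^2\right)
\;=\; \mathbb{E}_{x\sim \mathcal{D}}\!\left(|h^P_*(g\cdot x) - y^P(g\cdot x)|^2\right),
\end{align}
and the change of variables $x' = g\cdot x$, together with the assumed invariance of $\mathcal{D}$ under $\text{Aut}(\mathcal{I})$, delivers the left-hand side.

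The main subtlety, and therefore the step that deserves the most care, is when $y^P(x)$ refers to the classical shadow estimate $\Tr(P\sigma_T(\rho(x)))$ rather than the exact expectation. One then has to argue that the shadow protocol is itself $\text{Aut}(\mathcal{I})$-equivariant in distribution: since the local Pauli basis at each qubit is chosen uniformly and independently, permuting qubits by $g$ and conjugating $\rho(x)$ by $\hat{g}$ produce the same law of shadow outcomes, so $y^{P'}(x)$ and $y^P(g\cdot x)$ agree in distribution. Once the outer expectation is taken jointly over $x$ and the measurement randomness, the same change-of-variables argument closes the proof.
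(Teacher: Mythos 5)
Your proposal is correct and follows essentially the same route as the paper: model equivariance from Proposition \ref{prop:equi_ml_model}, equivariance of the target via conjugation of $P$ and \eqref{eq:g_rho_g}, and a change of variables using the invariance of ${\cal D}$. The "main subtlety" you flag — that $y^P$ is the classical-shadow estimate and one must check the shadow protocol is equivariant in distribution — is exactly what the paper makes rigorous by computing the push-forward measure $\mu^P=F^P\#q$ and showing $\mu^{\hat{g}P\hat{g}^{-1}}(y|x)=\mu^P(y|g\cdot x)$, so your sketch matches the paper's argument in substance.
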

\begin{proof}
Explicitly, we have
\begin{align}
\mathbb{E}_{x\sim {\cal D}}
\left( |h^P_*(x)-y^P(x)|^2 \right)
=    
\mathbb{E}_{x\sim {\cal D}}
\mathbb{E}_{y\sim \mu^P(\cdot|x)}
\left( |h^P_*(x)-y|^2 \right)\,,
\end{align}
with
\begin{align}
    \mu^P
    =
    F^P\# q
\end{align}
namely the push forward of the measure of the randomized measurements 
\begin{align}
    q(s|x)
    =
    \prod_{t=1}^T
    q_1(s^t|\rho(x))
    \,,\quad 
    q_1(s|\rho)=
    \frac{1}{3^n}
    \Tr(\ket{s}\bra{s} \rho)
\end{align}
under the map
\begin{align}
    F^P(s)
    =
    \frac{1}{T}
    \sum_{t=1}^T
    \Tr(P 
    \bigotimes_{i=1}^n (3\ket{s_i^t}\bra{s_i^t}-\id_2)
    )
    \,.
\end{align}
Note that the factor $3^n$ in $q_1(s|\rho)$ ensures the correct normalisation
\begin{align}    
    \sum_{s\in {\cal S}^n}
    q_1(s|\rho)
    =
    \Tr(
    \left(\frac{1}{3}\sum_{s\in {\cal S}} \ket{s}\bra{s}\right)^{\otimes n}
    \rho)
    =\Tr(\rho)=1
    \,,\quad
    {\cal S} = \{ 0_X,1_X,0_Y,1_Y,0_Z,1_Z\}\,,
\end{align}
which in turns ensures the normalisation of $q(s|x)$ which is a product distribution over $t$.
We can write $\mu^P$ explicitly as
\begin{align}
    \mu^P(y|x)
    =
    \sum_{s\in {\cal S}^{nT}}
    \delta(y-F^P(s))q(s|x)\,.
\end{align}
Now we show how these pieces transform under $P\to \hat{g}P\hat{g}^{-1}$. 
Since $\hat{g}\bigotimes_{i=1}^n \ket{s_i}=\bigotimes_{i=1}^n \ket{s_{gi}}$,
we have
\begin{align}
    F^{\hat{g}P\hat{g}^{-1}}(s)
    &=
    \frac{1}{T}
    \sum_{t=1}^T
    \Tr(\hat{g}P\hat{g}^{-1} 
    \bigotimes_{i=1}^n (3\ket{s_i^t}\bra{s_i^t}-\id_2)
    )
    =
    \frac{1}{T}
    \sum_{t=1}^T
    \Tr(P
    \bigotimes_{i=1}^n (3\ket{r_{i}^t}\bra{r_{i}^t}-\id_2)
    )\\
    &=
    F^{P}(r)
    \,,\quad
    r_i^t=
    (g^{-1}\cdot s)_i^t = s_{g^{-1}i}^t
    \,.
\end{align}
Also, from \eqref{eq:g_rho_g}
\begin{align}
    q_1(g\cdot s|\rho(x))
    =
    \frac{1}{3^n}
    \Tr(\hat{g}\ket{s}\bra{s}\hat{g}^{-1} \rho(x))
    =
    q_1(s|\rho(g\cdot x))
    \,,
\end{align}
and so
\begin{align}
    q(g\cdot s|x)
    =
    \prod_{t=1}^T
    q_1(g\cdot s^t|\rho(x))
    =
    \prod_{t=1}^T
    q_1(s^t|\rho(g\cdot x))
    =
    q(s|g\cdot x)\,.
\end{align}
Putting these together, with $s'=g^{-1}\cdot s$ and since $g \cdot {\cal S}^{nT}={\cal S}^{nT}$,
\begin{align}
    \mu^{\hat{g}P\hat{g}^{-1}}(y|x)
    &=
    \sum_{s\in {\cal S}^{nT}}
    \delta(y-F^{\hat{g}P\hat{g}^{-1}}(s))q(s|x)
    =
    \sum_{s\in {\cal S}^{nT}}
    \delta(y-F^{P}(g^{-1}\cdot s))q(s|x)
    \\
    &=
    \sum_{s'\in {\cal S}^{nT}}
    \delta(y-F^{P}(s'))q(g\cdot s'|x)
    =
    \sum_{s'\in {\cal S}^{nT}}
    \delta(y-F^{P}(s'))q(s'|g\cdot x)
    =\mu^P(y|g\cdot x)
    \,.
\end{align}
We also know from Proposition \ref{prop:equi_ml_model} that an equivariant model has weights such that:
\begin{align}
    h^{\hat{g}P\hat{g}^{-1}}_*(x)=h^P_*(g\cdot x)\,.
\end{align}
Then, with $x'=g\cdot x$,
\begin{align}
&\mathbb{E}_{x\sim {\cal D}}
\mathbb{E}_{y\sim \mu^{\hat{g}P\hat{g}^{-1}}(\cdot|x)}
\left( |h^{\hat{g}P\hat{g}^{-1}}_*(x)-y|^2 \right)
=
\mathbb{E}_{x\sim {\cal D}}
\mathbb{E}_{y\sim \mu^{P}(\cdot|g\cdot x)}
\left( |h^P_*(g\cdot x)-y|^2 \right)
\\
&=
\mathbb{E}_{x'\sim {\cal D}'}
\mathbb{E}_{y\sim \mu^{P}(\cdot|x')}
\left( |h^P_*(x')-y|^2 \right)
\,,\quad 
{\cal D}'=g\# {\cal D}
\,.
\end{align}
The result follows since we assume ${\cal D}'={\cal D}$.
\end{proof}
This gives the following corollary:
\begin{corollary}\label{cor:equiv_classical_shadows}
    Under the hypothesis of Theorem \ref{thm:sample_complexity_classical_shadows} and Lemma \ref{lemma:equi_gen_error}, we can predict all observables of the form \eqref{eq:O_alpha_P} with
    \begin{align}
    N 
    =
    (\epsilon_3/C^2)^{-2}
    {\cal N}(\epsilon_1/C)
    \mathcal{O}(\log(|P_k/\text{Aut}({\cal I})|/\gamma)) 
    \,.
\end{align}
\end{corollary}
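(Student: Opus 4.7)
The plan is to mirror the proof of Theorem \ref{thm:sample_complexity_classical_shadows}, substituting the union bound over all of $P_k$ with a union bound over orbit representatives of $P_k/\text{Aut}({\cal I})$. The classical shadow preprocessing is unchanged: using the same $T$ as in \eqref{eq:T_shadow}, we obtain, with probability at least $1-\gamma/2$, a dataset $\{x^{(i)}, y^{(P)}(x^{(i)})\}_{i=1}^N$ for each $P\in P_k$ satisfying $|y^{(P)}(x^{(i)})-\Tr(P\rho(x^{(i)}))|\le \epsilon_2/C$, so the difficulty concentrates on the Pauli-wise learning step.

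First I would partition $P_k$ into orbits under the action $P\mapsto \hat{g}P\hat{g}^{-1}$ of $\text{Aut}({\cal I})$ and pick one representative from each orbit. For each representative $P$, I train an equivariant LASSO model $h^P_*$ as in Proposition \ref{prop:equi_ml_model}; the weights for all non-representative Paulis are then fixed by the equivariance constraint. By Theorem \ref{thm:sample_complexity_single_obs} (applied with $r(P)=1$), each representative achieves $\mathbb{E}_{x\sim {\cal D}}|h^P_*(x)-y^{(P)}(x)|^2\le \tfrac{1}{C^2}((\epsilon_1+\epsilon_2)^2+\epsilon_3)$ with probability at least $1-\gamma'$ once $N$ matches the single-observable bound with parameter $\gamma'$.

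Next I would take a union bound over representatives only. Because of Lemma \ref{lemma:equi_gen_error}, the generalisation error for any non-representative $\hat{g}P\hat{g}^{-1}$ equals that of its orbit representative $P$, so controlling the error for all representatives simultaneously controls it for all of $P_k$. Setting $\gamma'=\gamma/(2|P_k/\text{Aut}({\cal I})|)$ and substituting into the $\mathcal{O}(\log(1/\gamma'))$ factor from Theorem \ref{thm:sample_complexity_single_obs} yields the stated scaling $N = (\epsilon_3/C^2)^{-2} {\cal N}(\epsilon_1/C)\,\mathcal{O}(\log(|P_k/\text{Aut}({\cal I})|/\gamma))$. Combining the two high-probability events with another union bound gives overall probability at least $1-\gamma$.

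Finally, I would assemble $\hat{\rho}(x) = 2^{-n}\sum_{P\in P_k} h^P_*(x) P$ and, for any $O$ as in \eqref{eq:O_alpha_P}, bound $\mathbb{E}_{x\sim {\cal D}}|\Tr(O\rho(x))-\Tr(O\hat{\rho}(x))|^2$ via H\"older's inequality and $\sum_P|\alpha_P|\le C$ exactly as in the final step of Theorem \ref{thm:sample_complexity_classical_shadows}. The main obstacle is simply confirming that Lemma \ref{lemma:equi_gen_error}'s hypothesis -- invariance of ${\cal D}$ under $\text{Aut}({\cal I})$ -- is a clean standing assumption that pairs with the symmetric randomised-measurement protocol underlying the classical shadow, so that the push-forward measure on $(x,y^{(P)})$ inherits the required equivariance; once that is in place, the reduction from $|P_k|$ to $|P_k/\text{Aut}({\cal I})|$ in the union bound is immediate and no new estimates are needed.
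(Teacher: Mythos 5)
Your proposal is correct and follows essentially the same route as the paper: the classical-shadow step and the final H\"older bound are reused verbatim from Theorem \ref{thm:sample_complexity_classical_shadows}, and the only modification is that Lemma \ref{lemma:equi_gen_error} makes the failure events for Paulis in the same $\text{Aut}({\cal I})$-orbit coincide, so the union bound runs over $|P_k/\text{Aut}({\cal I})|$ representatives and $\gamma'=\gamma/(2|P_k/\text{Aut}({\cal I})|)$ gives the stated sample complexity. Your closing concern about the invariance of ${\cal D}$ is already handled by the corollary's explicit assumption of the hypotheses of Lemma \ref{lemma:equi_gen_error}.
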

\begin{proof}
    If we denote $P'_k = P_k/\text{Aut}({\cal I})$,
    the proof follows that of Theorem \ref{thm:sample_complexity_classical_shadows} and by noting that we can replace \eqref{eq:union_bound_gen_error}  with
    \begin{align}
    \text{Pr}\left(
    \bigcup_{P\in P_k}
    \mathbb{E}_{x\sim {\cal D}}
    |
    h^P_*(x) - y^{(P)}(x)
    |\ge \epsilon
    \right)
    =
    \text{Pr}\left(
    \bigcup_{P\in P'_k}
    \mathbb{E}_{x\sim {\cal D}}
    |
    h^P_*(x) - y^{(P)}(x)
    |\ge \epsilon
    \right)
    \le 
    |P'_k|
    \gamma'\,,
\end{align}
which equals $\gamma/2$ 
if we choose a $\gamma'=\gamma/(2|P_k'|)$ that leads to the result of the corollary.
\end{proof}

If $\mathcal{O}(|P_k/\text{Aut}({\cal I})|)=\mathcal{O}(1)$, then this reduces the complexity from $\mathcal{O}(\log(n))$ to $\mathcal{O}(1)$.
This is the case for example if we consider observables that are sum of geometrically local terms (which is the setting of \cite{lewis2023improved,onorati2023efficient,onorati2023provably}) on a lattice with periodic boundary conditions, which is a typical setting for numerical experiments, so that $\text{Aut}({\cal I})$ constrains the $D$-dimensional translation group.

\section{Technical Lemmas}
\label{sec:Technical Lemmas}

\subsection{Counting Lemmas}

\begin{lemma}\label{lemma:vol_area_sphere_latt}
    Let $\Lambda=\mathbb{Z}^D$ be the $D$-dimensional lattice. Then if $R>D$, there exists $C>0$ such that
    \begin{align}
        \sum_{i\in \Lambda} \delta(d(i,j), R)\le C R^{D-1}\,,\quad
        \sum_{i\in \Lambda} \id(d(i,j)\le R)\le C R^{D}\,,\quad \forall j\in \Lambda \,.
    \end{align}
\end{lemma}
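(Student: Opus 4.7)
The plan is to carry out a standard volume comparison between $\Lambda = \mathbb{Z}^D$ and $\mathbb{R}^D$. To each lattice point $i \in \Lambda$ I would associate the closed unit hypercube $Q_i = i + [-1/2, 1/2]^D$, noting that the collection $\{Q_i\}_{i \in \Lambda}$ tiles $\mathbb{R}^D$ (with overlap only on a null set). Since any point in $Q_i$ lies within Euclidean distance at most $\sqrt{D}/2$ from $i$, and the natural lattice metrics ($\ell_1$, $\ell_2$, $\ell_\infty$, or graph distance on the hypercubic lattice) are equivalent to each other up to multiplicative constants depending only on $D$, there is a constant $c_1 = c_1(D) > 0$ with the property that $Q_i \subseteq B_{\mathbb{R}^D}(i, c_1)$ in the relevant ambient metric.

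For the ball bound, this inclusion gives $\bigcup_{i : d(i,j) \leq R} Q_i \subseteq B_{\mathbb{R}^D}(j, R + c_1)$, and since the $Q_i$ are essentially disjoint with unit volume, taking Lebesgue measure yields
\begin{equation*}
\#\{i \in \Lambda : d(i,j) \leq R\} \leq \operatorname{vol}\bigl(B_{\mathbb{R}^D}(j, R + c_1)\bigr) = \omega_D (R + c_1)^D \leq C R^D,
\end{equation*}
where the final inequality holds once $R$ is large compared to $c_1$, e.g., when $R > D$.

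For the sphere bound I would use the same trick, but now with an annulus: the set $\{i \in \Lambda : d(i,j) = R\}$ is contained in $\{x \in \mathbb{R}^D : R - c_1 \leq \|x - j\| \leq R + c_1\}$, whose Euclidean volume is $\omega_D \bigl[(R + c_1)^D - (R - c_1)^D\bigr]$. Expanding by the binomial theorem, the leading $R^D$ terms cancel, leaving $2 D \omega_D c_1 R^{D-1} + O(R^{D-2})$, which is bounded by $C R^{D-1}$ for $R > D$. This gives the first inequality with a constant $C$ absorbing all the lower-order corrections.

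I do not anticipate any real obstacle: this is a one-page volumetric estimate and the only mild subtlety is confirming that $c_1$ can indeed be chosen uniformly in $D$ for whichever lattice metric $d$ is being used. Since only constants depending on $D$ appear and $D$ is fixed throughout the paper, this is harmless. The hypothesis $R > D$ is used purely to absorb the $c_1$ shift and the lower-order binomial terms into the single constant $C$.
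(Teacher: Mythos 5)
Your proof is correct in substance but takes a genuinely different route from the paper's. The paper invokes an exact combinatorial formula for the number of lattice points at graph distance exactly $R$ (resp.\ at most $R$) from a fixed site, citing \cite{janjic2013counting}, and then bounds the resulting finite sum of binomial products term by term; your argument replaces this with an elementary volume comparison against $\mathbb{R}^D$, which is self-contained, avoids the external identity, and would generalise to other lattices with bounded fundamental domain. The one step you should tighten is the annulus containment for the sphere bound: norm equivalence up to multiplicative constants is enough for the ball estimate but \emph{not} for the annulus. If $d$ is the graph ($\ell_1$) distance and $\|\cdot\|$ is Euclidean, the $\ell_1$-sphere of radius $R$ is only contained in the Euclidean annulus $R/\sqrt{D}\le\|x-j\|_2\le R$, whose volume is of order $R^D$, not $R^{D-1}$, so the cancellation of leading terms you rely on does not occur. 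The fix is immediate: carry out the annulus computation in the \emph{same} norm as $d$ (each of $\ell_1$, $\ell_2$, $\ell_\infty$ has ball volume $c_D r^D$, and the unit cube $Q_i$ has $O(1)$ diameter in each), so that
\begin{align}
\#\{i : d(i,j)=R\}\le c_D\bigl[(R+c_1)^D-(R-c_1)^D\bigr]\le C R^{D-1}
\end{align}
as you intended. With that adjustment the argument is complete, and the hypothesis $R>D$ plays exactly the absorbing role you describe.
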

\begin{proof}
    By translation invariance of the distance and $\Lambda$, we can evaluate each term at $j=0$, and then use the exact formulas for both quantities, see e.g.~\cite[Prop. 31]{janjic2013counting}:
    \begin{align}
        \sum_{i\in \Lambda} \delta(d(i,0), R)=
        \binom{R-1,D}{D-1,2}\,,\quad 
        \sum_{i\in \Lambda} \id(d(i,0)\le R)=
        \binom{R,D}{D,2}\,,\quad 
    \end{align}
    with
    \begin{align}
        \binom{m,n}{k,2}
        =
        2^{n-k}
        \sum_{i=0}^m 2^i \binom{m}{i}\binom{n}{k-i}\,.
    \end{align}
    So, assuming $R>D$ and recalling that $\binom{n}{k}=0$ if $k<0$ or $n-k<0$, there exist $C_1,C_2>0$ such that:
    \begin{align}
        \sum_{i\in \Lambda} \delta(d(i,j), R)&\le
        2
        \sum_{i=0}^{R-1}2^i \binom{R-1}{i}\binom{D}{D-1-i}
        =
        2
        \sum_{i=0}^{D-1}2^i \binom{R-1}{i}\binom{D}{D-1-i}
        \le C_1 (R-1)^{D-1}
        \,,
        \\ 
        \sum_{i\in \Lambda} \delta(d(i,j), R)&\le
        2
        \sum_{i=0}^{R}2^i \binom{R}{i}\binom{D}{D-i}
        =
        2
        \sum_{i=0}^{D}2^i \binom{R}{i}\binom{D}{D-i}
        \le C_2 (R-1)^{D}\,.
    \end{align}
    The Lemma follows by taking $C=\max(C_1,C_2)$ and noting that $(R-1)^d< R^d$ for $R,d>0$.
\end{proof}

\begin{lemma}\label{lemma:Mtilde}
    We have
    \begin{align}
        \widetilde{M}_\ell(ij,R)
        &=
        \sum_{i_1,\dots,i_\ell} \delta\left(\max_{1\le a<b\le \ell}(d_{ab}),R-1\right) \id(ij\in (i_1,\dots,i_\ell))\\
        &\le
        C_\ell
        \left( 
        \delta(d(i,j), R-1)
        +
        [\tfrac{1}{2}\ell(\ell-1)-1]
        \id(d(i,j)\le R-1)
        R^{- 1}  \right)
        R^{(\ell-2) D}  
        \,,
    \end{align}
    where we denoted $d_{ab}=d(i_a,i_b)$.
\end{lemma}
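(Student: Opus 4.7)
The plan is to bound $\widetilde M_\ell(ij,R)$ by (i) choosing two distinct positions in the tuple to contain the marked points $i$ and $j$, (ii) enumerating which pair of positions $\{a,b\}$ realises the maximum pairwise distance $R-1$, and (iii) bounding the number of choices for the remaining lattice points using Lemma~\ref{lemma:vol_area_sphere_latt}. First, since both $i$ and $j$ must appear among $i_1,\dots,i_\ell$, I upper-bound the sum by fixing distinct positions $p_1\neq p_2$ with $i_{p_1}=i$ and $i_{p_2}=j$ and then summing over $(p_1,p_2)$, at an overall combinatorial cost of at most $\ell(\ell-1)$ that will be absorbed into $C_\ell$. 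Next, the condition $\max_{a<b}d_{ab}=R-1$ means at least one pair $\{a,b\}$ achieves the maximum, so I take a union bound over the $\binom{\ell}{2}$ candidate pairs and split according to how $\{a,b\}$ overlaps with $\{p_1,p_2\}$.

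The three cases of this split produce exactly the two terms in the claimed bound. In Case (i), $\{a,b\}=\{p_1,p_2\}$, which forces $d(i,j)=R-1$ and hence the factor $\delta(d(i,j),R-1)$; the remaining $\ell-2$ free positions lie in a ball of radius $R-1$ around $i$, so Lemma~\ref{lemma:vol_area_sphere_latt} gives at most $(CR^D)^{\ell-2}$ choices. In Case (ii), exactly one of $\{a,b\}$ equals $p_1$ or $p_2$ (WLOG $a=p_1$, $b\notin\{p_1,p_2\}$): the constraint $d(i,i_b)=R-1$ places $i_b$ on a sphere of radius $R-1$ and costs only $CR^{D-1}$ by Lemma~\ref{lemma:vol_area_sphere_latt}, while the other $\ell-3$ free positions cost $\le CR^D$ each, yielding a total $\le C^{\ell-2}R^{(\ell-2)D-1}$; the constraint $d(i,j)\le R-1$ gives the indicator. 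Case (iii) has $\{a,b\}\cap\{p_1,p_2\}=\emptyset$: placing $i_a$ in a ball ($\le CR^D$), then $i_b$ on a sphere around $i_a$ ($\le CR^{D-1}$), then $\ell-4$ points in balls ($\le (CR^D)^{\ell-4}$) again gives $\le C^{\ell-2}R^{(\ell-2)D-1}$ with the same indicator. The numbers of pairs $\{a,b\}$ in Cases (ii) and (iii) are $2(\ell-2)$ and $\binom{\ell-2}{2}$ respectively.

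To assemble, I combine the cases using the elementary identity
\[
2(\ell-2)+\binom{\ell-2}{2}=\binom{\ell}{2}-1=\tfrac{1}{2}\ell(\ell-1)-1,
\]
which exactly matches the prefactor appearing in front of the $R^{(\ell-2)D-1}$ term in the statement, while Case (i) contributes the $\delta(d(i,j),R-1)R^{(\ell-2)D}$ term. Absorbing the factor $\ell(\ell-1)$ from the choice of $(p_1,p_2)$ and all powers of the constant from Lemma~\ref{lemma:vol_area_sphere_latt} into a single $C_\ell$ completes the proof. I do not foresee a serious obstacle: the argument is routine counting once the case split is arranged correctly, the only care being the small-$R$ regime $R\le D$ where Lemma~\ref{lemma:vol_area_sphere_latt} is vacuous — there only finitely many tuples exist, and those contributions are absorbed into the constant $C_\ell$.
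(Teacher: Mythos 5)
Your proposal is correct and follows essentially the same route as the paper: exploit the symmetry of the summand to fix the positions of $i,j$ at a cost of $\mathcal{O}(\ell^2)$, decompose over which pair attains the maximum distance, and bound the three overlap cases via the sphere/ball counts of Lemma \ref{lemma:vol_area_sphere_latt}, with the identity $2(\ell-2)+\binom{\ell-2}{2}=\tfrac12\ell(\ell-1)-1$ producing the stated prefactor. Your explicit remark about the small-$R$ regime where Lemma \ref{lemma:vol_area_sphere_latt} requires $R>D$ is a point the paper glosses over, but it is handled exactly as you say, by absorbing finitely many tuples into $C_\ell$.
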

\begin{proof}
    Note that the summand in $\widetilde{M}_\ell(ij,R)$ is symmetric in $i_1,\dots,i_\ell$. Writing $\id(ij\in (i_1,\dots,i_\ell))$ as a sum where $ij$ can be any of the $\ell(\ell-1)/2$ pairs, we then have that each summand is the same, and we can write:
    \begin{align}
        \widetilde{M}_\ell(ij,R)=
        \tfrac{1}{2}\ell(\ell-1)
        \sum_{i_3,\dots,i_\ell} \delta\left(\max_{1\le a<b\le \ell}(d_{ab}),R-1\right) \,,\quad i_1=i,i_2=j\,.
    \end{align}
    We are further going to write this as a sum over the $\ell(\ell-1)/2$ pairs where the $a,b$ term corresponds to the case when ther maximum is attained at $i_a,i_b$:
    \begin{align}
        \widetilde{M}_\ell(ij,R)&=
        \tfrac{1}{2}\ell(\ell-1)
        \sum_{1\le a<b\le \ell} M_{\ell,ab}(ij,R)
        \,,\\ 
        \widetilde{M}_{\ell,ab}(ij,R)&=
        \sum_{i_3,\dots,i_\ell} \delta(d_{ab},R-1) 
        \prod_{\substack{1\le c<d\le \ell \\ cd\neq ab}} \id(d_{cd}\le R-1)
        \,.
    \end{align}
    Let us start by looking at $a,b=1,2$:
    \begin{align}
        \widetilde{M}_{\ell,12}(ij,R)
        &\le 
        \delta(d_{12},R-1) 
        \sum_{i_3,\dots,i_\ell} 
        \prod_{2\le c<d\le \ell } \id(d_{cd}\le R-1)\\
        &=
        \delta(d_{12},R-1) 
        \sum_{i_3,\dots,i_\ell} 
        \prod_{b=3}^\ell \id(d_{2,b}\le R-1)
        \prod_{b=4}^\ell \id(d_{3,b}\le R-1)
        \cdots
        \id(d_{\ell-1,\ell}\le R-1)
        \,.
    \end{align}
    In the first inequality we have used $\id(d_{1b}\le R-1)\le 1$.
    Then we can perform each summation in the order $i_\ell, i_{\ell-1}, i_{\ell-2}, \dots, i_3$. At step $i_\ell$ 
    \begin{align}
        \sum_{i_\ell}
        \prod_{a=2}^{\ell-1} \id(d_{a,\ell}\le R-1)
        \le 
        \sum_{i_\ell}
        \id(d_{\ell-1,\ell}\le R-1)
        \le
        C R^D\,,
    \end{align}
    since we can use the bound $\id(d_{a-1,b}\le R-1)\le 1$ and we used the result from Lemma \ref{lemma:vol_area_sphere_latt}. 
    Then, similarly, for $b=\ell-1, \dots, 3$
    \begin{align}
        \label{eq:sum_i_b}
        \sum_{i_b}
        \prod_{a=2}^{b-1} \id(d_{a,b}\le R-1)
        \le 
        \sum_{i_b}
        \id(d_{b-1,b}\le R-1)
        \le
        C R^D\,.
    \end{align}
    Thus we have shown
    \begin{align}
        \widetilde{M}_{\ell,12}(ij,R)
        &\le 
        \delta(d_{12},R-1) 
        (C R^D)^{\ell-2}
        \,.
    \end{align}
    Now we note that the sum in $\widetilde{M}_\ell(ij,R)$ is symmetric under permutations of $i_3,\dots,i_\ell$. So if $b>a>2,d>c>2$, then $\widetilde{M}_{\ell,ab}=\widetilde{M}_{\ell,cd}$.
    In fact, we can simply relabel the dummy indices to show this equivalence.  So to deal with the cases $b>a>2$, we can set $a=\ell-1,b=\ell$:
    \begin{align}
        \widetilde{M}_{\ell,ab}(ij,R)
        \le 
        \sum_{i_3,\dots,i_\ell} \delta(d_{\ell-1,\ell},R-1) 
        \id(d_{12}\le R-1)
        \prod_{d=2}^{\ell} \id(d_{1d}\le R-1)                
        \prod_{c=2}^{\ell-2}
        \prod_{d=c+1}^{\ell} \id(d_{cd}\le R-1)
    \end{align}
    Now we perform the sum over $i_\ell$:
    \begin{align}
        \sum_{i_\ell}
        \prod_{a=1}^{\ell-2} \id(d_{a,\ell}\le R-1)
        \delta(d_{\ell-1,\ell},R-1) 
        \le 
        \sum_{i_\ell}
        \delta(d_{\ell-1,\ell},R-1) 
        \le
        C R^{D-1}\,,
    \end{align}
    and for $i_{\ell-1}, i_{\ell-2}, \dots, i_3$ we have like in \eqref{eq:sum_i_b}.
    Thus if $b>a>2$,
    \begin{align}
        \widetilde{M}_{\ell,ab}(ij,R)
        \le 
        \id(d_{12}\le R-1)
        C^{\ell-2} R^{(\ell-2) D - 1}\,.
    \end{align}
    Then, assume that $a=1,b>2$. Again, any $b>2$ will give the same result due to symmetry and we choose $b=\ell$. Then
    \begin{align}
        \widetilde{M}_{\ell,1b}(ij,R)
        \le 
        \sum_{i_3,\dots,i_\ell} \delta(d_{1,\ell},R-1) 
        \id(d_{12}\le R-1)
        \prod_{d=2}^{\ell-1} \id(d_{1d}\le R-1)                
        \prod_{2\le c<d\le \ell}\id(d_{cd}\le R-1)\,.
    \end{align}
    Again, we perform the sum over $i_\ell$ first. We have
    \begin{align}
        \sum_{i_\ell}
        \delta(d_{1,\ell},R-1) 
        \prod_{a=2}^{\ell-1} \id(d_{a,\ell}\le R-1)
        \le 
        \sum_{i_\ell}
        \delta(d_{1,\ell},R-1) 
        \le
        C R^{D-1}\,,
    \end{align}
    and the other sums follow like in \eqref{eq:sum_i_b}.
    Then for $b>2$:
    \begin{align}
        \widetilde{M}_{\ell,1b}(ij,R)
        \le 
        \id(d_{12}\le R-1)
        C^{\ell-2} R^{(\ell-2) D - 1}\,.
    \end{align}
    Finally, since $i_1,i_2$ are symmetric, the same result holds for $a=2, b>a$. Putting everything together, we have
    \begin{align}
        \widetilde{M}_\ell(ij,R)&=
        C_\ell
        \left( 
        \delta(d_{12}, R-1)
        R^{(\ell-2) D}  
        +
        [\tfrac{1}{2}\ell(\ell-1)-1]
        \id(d_{12}\le R-1)
        R^{(\ell-2) D - 1}  \right)
        \,,
    \end{align}
    where we denoted $C_\ell = \tfrac{1}{2}\ell(\ell-1)C^{\ell-2}$.
\end{proof}

\begin{lemma}\label{lemma:Mell}
    For $\ell>1$, there exists a positive constant $C_\ell$ such that
    \begin{align}
    M_\ell(I, r, R)
    =
    \sum_{J, |J|=\ell}
    \delta(r, \diam(J))
    \delta(R, d_{IJ})
    \le 
    C_{\ell} r^{(\ell-1)D-1} R^{D-1}\,.
    \end{align}
\end{lemma}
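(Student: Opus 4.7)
The strategy is to pick a distinguished element $j_0 \in J$ realising the minimum distance to $I$, count its possible positions, and then bound the placements of the remaining $\ell-1$ points of $J$ under the constraints that they all lie within distance $r-1$ of $j_0$ and that some pair in $J$ attains distance exactly $r-1$.

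First I would observe that $d(I,J)=R$ implies the existence of $j_0\in J$ with $d(j_0,I)=R$, and the number of candidate positions for $j_0$ is bounded by $\sum_{i\in I}|\{j:d(j,i)=R\}|\le |I|\,C R^{D-1}$ by Lemma~\ref{lemma:vol_area_sphere_latt}; since $|I|\le p$ is bounded, this yields the desired factor $R^{D-1}$. Next, fix such a $j_0$. Since $\diam(J)=r$, every other element lies within distance $r-1$ of $j_0$, and a naive volume bound from Lemma~\ref{lemma:vol_area_sphere_latt} would give $(Cr^D)^{\ell-1}$ placements for the remaining $\ell-1$ elements, which is one power of $r$ too large.

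To recover this missing factor I would exploit that the diameter is attained: some pair $\{j_a,j_b\}\subset J$ satisfies $d(j_a,j_b)=r-1$. I split into two subcases according to whether this pair contains $j_0$. If $j_a=j_0$, then $j_b$ lies on the sphere of radius $r-1$ around $j_0$ (at most $Cr^{D-1}$ positions by Lemma~\ref{lemma:vol_area_sphere_latt}) while the remaining $\ell-2$ elements each lie in the ball of radius $r-1$ around $j_0$ ($\le Cr^D$ positions each), yielding a product of order $r^{(\ell-1)D-1}$. If instead $j_0\notin\{j_a,j_b\}$, I place $j_a$ in the ball of radius $r-1$ around $j_0$ ($\le Cr^D$ positions), then $j_b$ on the sphere of radius $r-1$ around $j_a$ ($\le Cr^{D-1}$ positions), and each of the remaining $\ell-3$ elements in the ball around $j_0$ ($\le Cr^D$ positions each), again giving $O(r^{(\ell-1)D-1})$. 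Summing over the at most $\binom{\ell}{2}$ combinatorial choices of the diameter-realising pair (and absorbing the over-counting caused by multiple such pairs into constants) and all other $\ell$- and $k$-dependent factors into $C_\ell$ completes the bound; the base case $\ell=2$ is trivial as $J=\{j_0,j_1\}$ with $d(j_0,j_1)=r-1$ contributes at most $Cr^{D-1}$ choices for $j_1$.

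The only subtle point is the case split used to extract the extra $r^{-1}$ factor from the diameter-attaining condition; without it the volume bound alone is off by one power of $r$. This trick is essentially the same as the one already used in the proof of Lemma~\ref{lemma:Mtilde}, only simpler because here the minimum-distance structure is fully absorbed into the single choice of $j_0$ rather than having to be distributed across the tuple. Otherwise the argument reduces to straightforward applications of the sphere and ball counts in Lemma~\ref{lemma:vol_area_sphere_latt}.
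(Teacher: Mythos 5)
Your proposal is correct and follows essentially the same route as the paper's proof: single out the point of $J$ realising $d(I,J)=R$ to harvest the factor $R^{D-1}$ via the sphere count of Lemma~\ref{lemma:vol_area_sphere_latt}, then case-split on which pair attains the diameter so that exactly one of the remaining placements is a sphere count $r^{D-1}$ and the rest are ball counts $r^{D}$, absorbing all combinatorial multiplicities into $C_\ell$. The paper phrases this as a sequential summation over $j_2,\dots,j_\ell,j_1$ with Kronecker deltas rather than a placement argument, but the underlying decomposition is identical.
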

\begin{proof}
We want to compute, defining $r'=r-1$ and $d_{ab}\equiv d(j_a,j_b)$, the value of
\begin{align}
    M_\ell(I, r, R)=
    \sum_{j_1,\dots,j_\ell}
    \delta\left(r', \max_{1\le a<b\le \ell}(d_{ab})\right)
    \delta\left(R, \min_{i\in I, j\in J}d(i,j)\right)\,.
\end{align}
First of all, we note that
\begin{align}
    \delta\left(R, \min_{i\in I, j\in J}d(i,j)\right)
    \le 
    \sum_{a=1}^{|I|}
    \sum_{b=1}^{|J|}
    \delta(R, d(i_a,j_b))
    \,.
\end{align}
This is because if the $\delta$ on the l.h.s.~is $1$, then the sum on the r.h.s.~must be $\ge 1$. Then
\begin{align}
    M_\ell(I, r, R)
    \le 
    \sum_{j_1,\dots,j_\ell}
    \sum_{c=1}^{|I|}
    \sum_{d=1}^{|J|}
    \delta\left(r', \max_{1\le a<b\le \ell}(d_{ab})\right)
    \delta(R, d(i_c,j_d))
    \,.
\end{align}
Now note that the function we are summing over is invariant under permutation of the $j$'s.
Therefore, each choice of $j_d$ will give the same result and we can set $d=1$:
\begin{align}
    M_\ell(I, r, R)
    \le 
    \ell
    \sum_{c=1}^{|I|}
    \sum_{j_1,\dots,j_\ell}
    \delta\left(r', \max_{1\le a<b\le \ell}(d_{ab})\right)
    \delta(R, d(i_c,j_1))
    \,.
\end{align}
Then we consider each of the cases where the  maximum is attained at each of  the $\tfrac{1}{2}\ell(\ell-1)$ pairs $j_a,j_b$:
\begin{align}
    M_\ell(I, r, R)
    &\le 
    \ell
    \sum_{c=1}^{|I|}
    \sum_{1\le a<b\le \ell}
    M_{\ell}(a,b,c,r',R)
    \\
    M_{\ell}(a,b,c,r',R)&=
    \sum_{j_1,\dots,j_\ell}
    \delta(r', d_{ab})
    \prod_{\substack{1\le e<f\le \ell \\ ef\neq ab}} \id(d_{ef}\le r')    
    \delta(R, d(i_c,j_1))
    \,.
\end{align}
We consider first $a,b=1,2$:
\begin{align}
    M_{\ell}(1,2,c,r',R)
    =
    \sum_{j_1,\dots,j_\ell}
    \delta(r', d_{12})
    \prod_{f=3}^\ell \id(d_{1f}\le r')    
    \prod_{2\le e<f\le \ell} 
    \id(d_{ef}\le r')    
    \delta(R, d(i_c,j_1))
    \,.
\end{align}
We now perform the sum over $j_2$. Using $\id(\cdot )\le 1$ and Lemma \ref{lemma:vol_area_sphere_latt}:
\begin{align}
    \sum_{j_2} 
    \delta(r', d_{12})
    \prod_{f=3}^\ell
    \id(d_{2f}\le r')   
    \le 
    \sum_{j_2} 
    \delta(r', d_{12})
    \le 
    C (r')^{D-1}\,.
\end{align}
Now summing sequentially over $j_3,j_4,\dots,j_\ell$, gives at $a$-th step:
\begin{align}
    \sum_{j_a} 
    \prod_{f=a+1}^\ell
    \id(d_{af}\le r')   
    \le 
    \sum_{j_a} 
    \id(d_{a,a+1}\le r')   
    \le 
    C (r')^{D}\,.
\end{align}
Finally, we sum over $j_1$:
\begin{align}
    \sum_{j_1} 
    \prod_{f=3}^\ell \id(d_{1f}\le r')    
    \delta(R, d(i_c,j_1))    
    \le 
    \sum_{j_1} 
    \delta(R, d(i_c,j_1))    
    \le 
    C R^{D-1}\,,
\end{align}
which gives
\begin{align}
    M_{\ell}(1,2,c,r',R)
    \le 
    C^\ell 
    R^{D-1}
    (r')^{(\ell-1)D-1}
    \,.
\end{align}
Now note that $M_{\ell}(1,2,c,r',R)=M_{\ell}(1,b,c,r',R)$ for any $b\ge 2$ since the summand is symmetric under exchanging $j_b$ and $j_2$.
Then consider $M_{\ell}(a,b,c,r',R)$, $b>a>1$. Again by symmetry these are all equal since we can relabel any $a,b$ with $b>a>1$ as $\ell-1,\ell$, and we can consider for definiteness 
\begin{align}
    M_{\ell}(\ell-1,\ell,c,r',R)
    =
    \sum_{j_1,\dots,j_\ell}
    \delta(r', d_{\ell-1,\ell})
    \prod_{e=1}^{\ell-2}
    \prod_{f=e+1}^\ell
    \id(d_{ef}\le r')    
    \delta(R, d(i_c,j_1))\,.
\end{align}
We start with summing over $j_{\ell-1}$:
\begin{align}
    \sum_{j_{\ell-1}}
    \delta(r', d_{\ell-1,\ell})
    \prod_{e=1}^{\ell-2} \id(d_{e,\ell-1}\le r')
    \le 
    \sum_{j_{\ell-1}}
    \delta(r', d_{\ell-1,\ell})
    \le 
    C (r')^{D-1}\,.
\end{align}
Now, any other sum apart from $j_1$ can be bounded by $C (r')^D$ as done before. The sum over $j_1$ gives:
\begin{align}
    \sum_{j_1}
    \prod_{f=2}^\ell
    \id(d_{1f}\le r')    
    \delta(R, d(i_c,j_1))
    \le 
    \sum_{j_1}
    \delta(R, d(i_c,j_1))
    \le 
    C R^{D-1}\,, 
\end{align}
so that
\begin{align}
    M_{\ell}(\ell-1,\ell,c,r',R)
    \le C^\ell R^{D-1} (r')^{(\ell-1)D-1}\,,
\end{align}
and thus
\begin{align}
    M_\ell(I, r, R)
    \le 
    \ell C^\ell |I| \tfrac{1}{2}\ell(\ell-1)
        R^{D-1} (r')^{(\ell-1)D-1}
\end{align}
\end{proof}

We have the following immediate consequence:
\begin{corollary}\label{corollary:M}
We have
    \begin{align}
    M(I, r, R)
    =
    \sum_{J\in {\cal P}_k(\Lambda)}
    \delta(r, \diam(J))
    \delta(R, d_{IJ})
    =
    \sum_{\ell=1}^k
    M_\ell(I, r, R)
    \le 
    C_{\text{sum}}
    r^{(k-1)D-1}
    R^{D-1}
    \,.    
\end{align}
\end{corollary}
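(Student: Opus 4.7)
The plan is to obtain the bound by partitioning ${\cal P}_k(\Lambda)$ according to the cardinality $\ell=|J|$ and applying Lemma \ref{lemma:Mell} to each piece with $\ell\ge 2$, handling the $\ell=1$ case separately since Lemma \ref{lemma:Mell} is only stated for $\ell>1$. The identity
\begin{align}
M(I,r,R) = \sum_{\ell=1}^{k} M_\ell(I,r,R)
\end{align}
follows immediately from the disjoint decomposition ${\cal P}_k(\Lambda) = \bigsqcup_{\ell=1}^{k} \{J\in{\cal P}_k(\Lambda):|J|=\ell\}$.

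For the $\ell=1$ case, note that a singleton $\{j\}$ satisfies $\diam(\{j\})=1$ under the paper's convention, so $M_1(I,r,R)=\delta(r,1)\sum_{j\in\Lambda}\delta(R,\min_{i\in I}d(i,j))$. Bounding $\delta(R,\min_{i\in I}d(i,j))\le \sum_{i\in I}\delta(R,d(i,j))$ and applying Lemma \ref{lemma:vol_area_sphere_latt} once for each $i\in I$ gives $M_1(I,r,R)\le \delta(r,1)\,|I|\,C\,R^{D-1}$, which in particular is at most $C_1\,r^{(k-1)D-1}R^{D-1}$ for a suitable $C_1>0$, since $r\ge 1$ and $\delta(r,1)\le 1$.

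For $2\le \ell\le k$, Lemma \ref{lemma:Mell} directly gives $M_\ell(I,r,R)\le C_\ell\,r^{(\ell-1)D-1}R^{D-1}$. Using $r\ge 1$, the exponent $(\ell-1)D-1$ is maximised at $\ell=k$, so each term is bounded by $C_\ell\,r^{(k-1)D-1}R^{D-1}$. Summing over $\ell$ and setting $C_{\text{sum}}=\sum_{\ell=1}^{k}C_\ell$ yields the claim.

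The proof involves no real obstacle: the substantive work has already been carried out in Lemmas \ref{lemma:vol_area_sphere_latt} and \ref{lemma:Mell}, and the corollary is essentially a bookkeeping step that uniformises the $r$-exponent across cardinalities. The only minor care needed is ensuring that the $\ell=1$ contribution, whose natural bound carries a factor $\delta(r,1)$ rather than a power of $r$, can be absorbed into the $r^{(k-1)D-1}$ scaling; this is immediate since $\delta(r,1)\le 1\le r^{(k-1)D-1}$ whenever $r\ge 1$ and $(k-1)D-1\ge 0$ (and the bound trivially holds when the exponent is negative, since then the right-hand side is at least the value at $r=1$).
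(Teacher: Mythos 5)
Your proof is correct and follows essentially the same route as the paper, which presents the corollary as an immediate consequence of Lemma \ref{lemma:Mell}: partition by cardinality, apply the lemma for $\ell\ge 2$, bound the singleton case via Lemma \ref{lemma:vol_area_sphere_latt} (consistent with the paper's own use of $M_1(I,r,R)\le \delta(r,1)\,C\,R^{D-1}$ elsewhere), and uniformise the $r$-exponent using $r\ge 1$. Your care with the $\delta(r,1)$ factor and the convention $\diam(\{j\})=1$ is a correct filling-in of details the paper leaves implicit.
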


\subsection{Solutions to inequalities}

\begin{lemma}[Lemma 6 in \cite{lewis2023improved}]\label{lemma:6}
    Given $a>0,s>1$, $0<\epsilon < \e^{-1}$, there exists a constant $c>0$ such that for all $y\ge c\log^2(1/\epsilon)$,
    \begin{align}
        y^s \e^{-a y} \le \epsilon\,.
    \end{align}
\end{lemma}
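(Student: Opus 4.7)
The plan is to take logarithms and reduce the target inequality to a linear estimate. Setting $L = \log(1/\epsilon)$, note that $L>1$ by the assumption $\epsilon < \mathrm{e}^{-1}$, and the desired bound $y^s \mathrm{e}^{-ay}\le \epsilon$ is equivalent to
\begin{align}
a y - s \log y \ge L.
\end{align}
So it suffices to show that this holds whenever $y \ge c\log^2(1/\epsilon) = cL^2$ for a suitably chosen constant $c=c(a,s)>0$.

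First I would dispose of the $\log y$ term using the elementary fact that $\log y / y \to 0$. Concretely, there exists $y_0 = y_0(a,s)>0$ such that $s\log y \le \tfrac{a}{2} y$ for all $y\ge y_0$ (one can take, e.g., $y_0$ to be twice the unique solution of $s \log y = (a/2)y$ in the regime $y \ge 2s/a$). On this range, the inequality simplifies to
\begin{align}
a y - s\log y \ge \tfrac{a}{2}y,
\end{align}
so it remains only to ensure $\tfrac{a}{2} y \ge L$, i.e.\ $y \ge 2L/a$.

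Now I would pick
\begin{align}
c = \max\bigl(y_0,\, 2/a\bigr),
\end{align}
and verify that $y \ge cL^2$ implies both (a) $y \ge y_0$, since $L\ge 1$ gives $cL^2 \ge c \ge y_0$, and (b) $y \ge 2L/a$, since $cL^2 \ge (2/a)L^2 \ge 2L/a$ again using $L\ge 1$. Combining (a) and (b) gives $ay - s\log y \ge \tfrac{a}{2}y \ge L$, as required.

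There is really no substantive obstacle here; the only mild subtlety is being careful that the condition $\epsilon < \mathrm{e}^{-1}$ is actually used (it ensures $L\ge 1$, which in turn allows us to bound $cL \ge c$ and $cL^2 \ge cL$ in the two places above). The lemma uses $\log^2(1/\epsilon)$ rather than the tighter $\log(1/\epsilon)$ because the quadratic scaling gives plenty of slack to absorb both the linear shift from $2L/a$ and the constant $y_0$ simultaneously into a single threshold $cL^2$.
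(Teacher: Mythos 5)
Your proof is correct. It begins exactly as the paper does, by taking logarithms to reduce the claim to $ay - s\log y \ge \log(1/\epsilon)$, but then the routes diverge: the paper simply invokes Lemma 6 of \cite{lewis2023improved} at that point and records the explicit constant $c = (2s+\sqrt{4s^2+a})^2/(4a^2)$ that the cited lemma provides, whereas you prove the reduced inequality from scratch by absorbing the $s\log y$ term into $\tfrac{a}{2}y$ beyond a threshold $y_0(a,s)$ and then choosing $c=\max(y_0,2/a)$. What your argument buys is self-containedness and transparency about where $\epsilon<\e^{-1}$ enters (namely $L\ge 1$, so that $cL^2\ge c$ and $cL^2\ge cL$); what it gives up is an explicit closed form for $c$, since your $y_0$ is defined only implicitly. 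One cosmetic remark: your parenthetical description of $y_0$ as ``twice the unique solution of $s\log y = (a/2)y$'' presupposes that such a solution exists in the stated regime, which need not be the case if $\tfrac{a}{2}y - s\log y$ is already positive at its minimum $y=2s/a$; but since all you need is that $\tfrac{a}{2}y - s\log y\to\infty$, the existence of \emph{some} admissible $y_0$ is immediate and the argument is unaffected. Your closing observation that the quadratic $\log^2(1/\epsilon)$ is slack for this lemma is also accurate; the exponent $2$ is needed in the companion Lemma \ref{lemma:7} for the function $u_a$, and Lemma \ref{lemma:6} is stated with the same threshold for uniformity.
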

\begin{proof}
    We can rewrite the condition we want to prove taking $-\log$ of  both sides as
    \begin{align}
        ay - s \log(y)\ge \log(1/\epsilon)\,.
    \end{align}
    We can then apply \cite[Lemma 6]{lewis2023improved} to show that the Lemma holds with 
    \begin{align}
        c = \frac{(2s+\sqrt{4s^2+a})^2}{4a^2}\,.
    \end{align}
\end{proof}

\begin{lemma}[Adapting Lemma 7 in \cite{lewis2023improved}]\label{lemma:7}
    Given $a,s>0$, $0<\epsilon \le \e^{-1}$, there exists a constant $c>0$ such that for all $y\ge c\log^2(1/\epsilon)$,
    \begin{align}
        y^s u_a(y) \le \epsilon\,.
    \end{align}
\end{lemma}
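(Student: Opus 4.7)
The plan is to take logarithms and reduce the claim $y^s u_a(y) \le \epsilon$ to the equivalent statement
\begin{align}
\Phi(y) \coloneqq \frac{a y}{\log^2 y} - s \log y \ge L, \qquad L \coloneqq \log(1/\epsilon) \ge 1,
\end{align}
where the bound $L \ge 1$ uses the hypothesis $\epsilon \le e^{-1}$. I would then split the argument into two steps: (i) verify the threshold inequality $\Phi(cL^2) \ge L$ for a single constant $c = c(a,s) > 0$, uniformly over $L \ge 1$; and (ii) show that $\Phi$ is nondecreasing on $[cL^2, \infty)$, so the inequality propagates from the threshold to all $y \ge cL^2$.

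For step (i), set $y_0 = cL^2$ so that $\log y_0 = \log c + 2 \log L$. For $c \ge e$ and $L \ge 1$,
\begin{align}
\frac{a y_0}{\log^2 y_0} \ge \frac{a c L^2}{(\log c + 2\log L)^2},
\end{align}
and the right-hand side grows faster than $L + s(\log c + 2\log L)$ as $L \to \infty$. Choosing $c$ large in terms of $a, s$ makes the leading contribution dominant for $L \ge L_*(a,s)$, while the compact range $1 \le L \le L_*$ imposes only a finite extra requirement on $c$ that is absorbed by enlarging the constant. For step (ii), differentiating $\log(y^s u_a(y))$ gives
\begin{align}
\frac{s}{y} - \frac{a}{\log^2 y}\left(1 - \frac{2}{\log y}\right),
\end{align}
which is $\le 0$ once $y/\log^2 y \ge 2s/a$ and $\log y \ge 4$; both are satisfied on $[cL^2, \infty)$ after enlarging $c$ if necessary, since $L \ge 1$ forces $cL^2 \ge c$.

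The main obstacle is the uniform choice of $c$, because $\log y_0$ itself depends on $c$; one must carefully balance the quadratic growth $cL^2$ in the numerator against the logarithmic penalty $(\log c + 2\log L)^2$ in the denominator of $ay/\log^2 y$. This is exactly the scheme behind \cite[Lemma 7]{lewis2023improved}, with the sole modification that the original exponential $e^{-ay}$ is replaced by $u_a(y) = e^{-ay/\log^2 y}$; the extra $\log^2 y$ factor only weakens the decay by a polylogarithmic amount and is harmlessly absorbed into the bookkeeping above.
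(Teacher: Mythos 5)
Your proposal is correct and follows essentially the same route as the paper's proof: reduce to $f(y)=\frac{ay}{\log^2 y}-s\log y\ge \log(1/\epsilon)$, establish monotonicity of $f$ past a threshold via the same derivative computation, and then verify the inequality at the point $y=c\log^2(1/\epsilon)$. The only substantive difference is presentational: the paper resolves the ``uniform choice of $c$'' obstacle you flag by invoking the explicit bounds $\log z\le 3z^{1/3}$ and $\log z\le 4z^{1/4}$, which make the dependence on $c$ monotone and yield concrete admissible constants, whereas you argue asymptotically in $L$ and absorb the compact range by enlarging $c$.
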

\begin{proof}
    We assume $y\ge 0$. We can rewrite the condition to prove as
    \begin{align}
        f(y)\coloneqq a\frac{y}{\log^2(y)}-s\log(y)\ge x\,,\quad x = \log(1/\epsilon)\,.
    \end{align}
    We compute 
    \begin{align}
    f'(y) = \frac{a}{\log^3(y)}\left( 
    \log(y)- 2\right)  
    -
    \frac{s}{y}\,,
    \end{align}
    and find $y_0$ such that $f'(y)\ge 0$ for $y\ge y_0$.
    We use $\log(z)\le 3 z^{1/3}$, or $1/\log^3(z)\ge 1/(3^3 z)$ for $z>0$ to get
    \begin{align}
    f'(y)\ge     
    \frac{1}{y}(a3^{-3}(\log(y)-2)-s)
    \end{align}
    so we need $a3^{-3}(\log(y)-2)-s\ge 0$ or $\log(y)-2-\frac{s}{a} 3^3\ge 0$, namely
    \begin{align}
        y\ge y_0 \equiv \exp(2+\frac{s}{a}3^3)\,.
    \end{align}
    We note that numerically we can see that this bound is quite loose but it will serve our purposes here.
    So for $y\ge y_0$, $f$ is monotonically increasing.
    Next we will show that for $x>1$ ($\epsilon<1/\e$), there
    is a $c\ge y_0$ such that $f(cx^2)\ge x$. This will imply that $f(y)\ge x$ for all $y\ge cx^2$ due to the monotonicity of $f$ for $cx^2\ge y_0$ that is ensured for $x\ge 1$ as long as $c\ge y_0$. 
    To show 
    we can get the desired bound:
    \begin{align}
        f(cx^2) = a\frac{cx^2}{\log^2(cx^2)}-s\log(cx^2)
        \ge x\,,
    \end{align}
    we will show the two inequalities
    \begin{align}
        \frac{a}{2}\frac{cx^2}{\log^2(cx^2)}\ge x
        \,,\quad 
        \frac{a}{2}\frac{cx^2}{\log^2(cx^2)}
        \ge s\log(cx^2)
        \,.
    \end{align}
    For the first, 
    we have, using $\log(z)\le 4z^{1/4}$ for $z\ge 0$,
    \begin{align}
        \frac{a}{2}\frac{cx^2}{\log^2(cx^2)}
        \ge 
        \frac{a}{32}
        \frac{cx^2}{\sqrt{cx^2}}
        =
        \frac{a}{32}\sqrt{c}x
        \ge x\,\quad 
        \text{if }c\ge \frac{32^2}{a^2}\,.
    \end{align}
    For the second, since $x\ge 1$, $a cx^2\ge a c x^{3/2}$. Using again $\log(z)\le 4z^{1/4}$ with $z = cx^2$, we have
    \begin{align}
        a cx^2
        \ge a c^{1/4} ((cx^2)^{1/4})^3 
        \ge \frac{a c^{1/4}}{4^3} \log^3(cx^2)\,,
    \end{align}
    so
    \begin{align}
        \frac{a}{2}\frac{cx^2}{\log^2(cx^2)}
        \ge 
        \frac{a c^{1/4}}{2\times 4^3} \log(cx^2)
        \,,
    \end{align}
    and this is greater or equal to $s\log(cx^2)$ if
    \begin{align}
        \frac{a c^{1/4}}{2\times 4^3} \ge s
        \Rightarrow
        c\ge 2^4 4^{12}\frac{s^4}{a^4} \,.
    \end{align}
    Finally, under the assumption
    \begin{align}
        c \ge \max\left(y_0, \frac{32^2}{a^2}, 2^44^{12}\frac{s^4}{a^4} \right)\,,
    \end{align}
    we can get the desired bound.
\end{proof}

\subsection{Integral bounds}

We use 
\begin{lemma}[\cite{pinelis2020exact}]
For any $a\ge  2, x>0$:
\begin{align}
    \Gamma(a,x)
    \le 
    \frac{(x+b_a)^a-x^a}{ab_a}\e^{-x}
    \,,\quad
    b_a
    =
    \Gamma(a+1)^{1/(a-1)}\,.
\end{align}
\end{lemma}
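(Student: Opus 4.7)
The plan is to reformulate the claim as a comparison of integral averages of $u\mapsto (u+x)^{a-1}$ against two different measures on $[0,\infty)$, with $b_a$ chosen precisely to equate them at $x=0$. Substituting $t=u+x$ in $\Gamma(a,x)=\int_x^\infty t^{a-1}e^{-t}\,dt$ yields $\Gamma(a,x)=e^{-x}\int_0^\infty(u+x)^{a-1}e^{-u}\,du$, while elementary integration gives $\frac{(x+b_a)^a-x^a}{ab_a}=\frac{1}{b_a}\int_0^{b_a}(u+x)^{a-1}\,du$. Cancelling $e^{-x}$, the claim becomes
\begin{align}
    \int_0^\infty(u+x)^{a-1}e^{-u}\,du \;\le\; \frac{1}{b_a}\int_0^{b_a}(u+x)^{a-1}\,du.
\end{align}
At $x=0$ both sides equal $\Gamma(a)$, using $b_a^{a-1}=\Gamma(a+1)=a\Gamma(a)$; the definition of $b_a$ is exactly what calibrates this boundary equality.

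For integer $a\ge 2$ I would finish by binomial expansion of $(u+x)^{a-1}$, which reduces the inequality to showing that the coefficient of each $x^{a-1-k}$ is non-negative, i.e.\ $b_a^k\ge(k+1)!$ for $0\le k\le a-1$. The endpoints are equalities (trivial at $k=0$, and by definition of $b_a$ at $k=a-1$). For intermediate $k$, the inequality is equivalent to monotonicity of the map $k\mapsto \log((k+1)!)/k$ on $\mathbb{Z}_{\ge 1}$, which in turn follows from $(k+2)^k\ge(k+1)!$ (each factor $2,3,\dots,k+1$ of $(k+1)!$ is bounded by $k+2$).

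For real $a\ge 2$ I would analyze the difference $\psi(x):=\mathrm{RHS}-\mathrm{LHS}=\int_0^\infty(u+x)^{a-1}\,d\mu(u)$, where $d\mu = b_a^{-1}\mathbf{1}_{[0,b_a]}\,du - e^{-u}\,du$ is a signed measure whose total mass and $(a-1)$-th moment both vanish by construction. Integration by parts gives $\psi(x) = (a-1)\int_0^\infty(u+x)^{a-2}I_0(u)\,du$ with $I_0(u)=\int_u^\infty d\mu$. The function $I_0$ is explicit on $[0,b_a]$, equals $-e^{-u}$ on $[b_a,\infty)$, and changes sign at most once before becoming negative, so one can establish positivity of $\psi$ by combining this sign-change structure with the increasing convexity of $(u+x)^{a-2}$.

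The main obstacle is the real-$a$ extension: the integer binomial proof is clean and self-contained, but for non-integer $a$ the coefficient-wise reduction fails, and one must instead control the sign structure of $I_0$ rigorously (potentially iterating integration by parts and invoking the vanishing of the $(a-1)$-th moment of $d\mu$ as the key algebraic identity) to conclude positivity of $\psi$ uniformly in $x\ge 0$.
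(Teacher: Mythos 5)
First, a point of comparison: the paper does not prove this statement at all — it is quoted directly from \cite{pinelis2020exact} — so there is no in-paper argument to measure yours against, and I am judging the proposal on its own. Your reduction is correct and clean: writing $\Gamma(a,x)=\e^{-x}\int_0^\infty(u+x)^{a-1}\e^{-u}\,\dd u$ and $\frac{(x+b_a)^a-x^a}{ab_a}=\frac{1}{b_a}\int_0^{b_a}(u+x)^{a-1}\,\dd u$ turns the claim into a comparison of two averages of the same increasing function, calibrated to agree at $x=0$ precisely because $b_a^{a-1}=\Gamma(a+1)$. For integer $a\ge 2$ the binomial expansion reduces everything to $b_a^{k}\ge(k+1)!$ for $0\le k\le a-1$, and your verification via the monotonicity of $k\mapsto\log((k+1)!)/k$, itself a consequence of $(k+2)^k\ge(k+1)!$, is correct. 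That part is a complete and genuinely elementary proof.

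The gap is the real-$a$ case, which you flag but do not close — and it is the case the paper actually needs, since Lemma \ref{lemma:extension_2.5} invokes the bound with $a=2s+3$, $s=9+D/\mu$, which is generically non-integer. Your proposed route via $\psi(x)=(a-1)\int_0^\infty(u+x)^{a-2}I_0(u)\,\dd u$ hits a concrete sign obstruction: one computes $I_0(u)=1-u/b_a-\e^{-u}$ on $[0,b_a]$ and $I_0(u)=-\e^{-u}$ beyond, so $I_0$ is positive on an initial interval $(0,u_*)$ and negative thereafter; but the weight $(u+x)^{a-2}$ is \emph{increasing} in $u$ for $a\ge2$, so the standard single-crossing estimate yields $\int wI_0\le w(u_*)\int I_0$, an \emph{upper} bound on $\psi$ rather than the lower bound you need. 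Iterating the integration by parts does not repair this for general $a$: the next primitive $I_1$ again changes sign from positive to negative, and the weight $(u+x)^{a-3}$ is still increasing once $a\ge3$. Nor can you salvage the coefficient-wise argument, since the binomial series for $(u+x)^{a-1}$ with non-integer exponent fails to converge (and has alternating coefficients) for $u\ge x$. So a different mechanism is required for non-integer $a$ — Pinelis obtains it from l'H\^{o}pital-type monotonicity rules applied to the ratio of the two sides — and as written your argument establishes the lemma only for integer $a\ge2$.
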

We can simplify this bound for $x\ge 1$, so that $(1+b_a)x = x + b_ax \ge x+b_a$, as follows
\begin{align}
\label{eq:gamma_bound}
    \Gamma(a,x)
    \le 
    \frac{(1+b_a)^a-1}{ab_a}
    x^a\e^{-x}
    \,.
\end{align}

We are going to use this bound on the incomplete Gamma function to prove an extension of Lemma 2.5 \cite{Bachmann_2011} to non-integral powers $s$.
\begin{lemma}\label{lemma:extension_2.5}
For all $t>1,s>3$ such that $\log^4(t)\le t$ and
    \begin{align}
    \tau(t)\coloneqq a\frac{t}{\log^2(t)}\ge b_s = \Gamma(s+1)^{1/(s-1)}\,,
    \end{align}
there exists a $t$-independent positive number $C$ such that:
    \begin{align}
        F(s,t)=
        \int_t^\infty \dd \eta\, \eta^s u_a(\eta)
        \le 
        C t^{2s+3}u_a(t)\,.
    \end{align}
\end{lemma}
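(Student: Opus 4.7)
The plan is to perform a change of variable $\xi = \tau(\eta) = a\eta/\log^2(\eta)$ in the integral, which turns the super-polynomial factor $u_a(\eta) = e^{-\tau(\eta)}$ into a simple exponential $e^{-\xi}$, so that the tail integral can be compared with an upper incomplete Gamma function and then estimated via the bound \eqref{eq:gamma_bound}.

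First I would compute
\begin{align}
\frac{d\xi}{d\eta} = \frac{a}{\log^2(\eta)}\left(1 - \frac{2}{\log(\eta)}\right),
\end{align}
which is at least $\tfrac{a}{2\log^2(\eta)}$ as soon as $\log(\eta) \ge 4$ (a condition guaranteed by the assumption $\log^4(t)\le t$ for $t$ large, and by monotonicity for all $\eta\ge t$). Hence $d\eta \le 2\log^2(\eta)/a\, d\xi$. Next, to express $\eta^s$ and $\log^2(\eta)$ in terms of $\xi$, I use $\eta = \xi\log^2(\eta)/a$ together with the assumption $\log^4(\eta)\le \eta$ (which follows from $\log^4(t)\le t$ for $\eta\ge t$). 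The latter gives $\log^2(\eta)\le \eta^{1/2}$, so $\eta \le \xi\eta^{1/2}/a$, i.e.\ $\eta \le \xi^2/a^2$. Taking logs, $\log(\eta) \le 2\log(\xi/a)$, and consequently $\eta^s \le (\xi/a)^{2s}$ and $\log^2(\eta)\le 4\log^2(\xi/a)$.

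Substituting these bounds into the integral yields
\begin{align}
F(s,t) \le \int_{\tau(t)}^\infty \frac{\xi^{2s}}{a^{2s}}\cdot \frac{8\log^2(\xi/a)}{a} e^{-\xi}\, d\xi \le \frac{C_0}{a^{2s+1}}\int_{\tau(t)}^\infty \xi^{2s+1} e^{-\xi}\, d\xi,
\end{align}
where in the second step I would use $\log^2(\xi/a)\le C\xi$ for $\xi$ larger than some absolute constant (which is where the hypothesis $\tau(t)\ge b_s$ comes in — it ensures $\tau(t)$ is bounded below by a constant depending only on $s$). The remaining integral is exactly the incomplete Gamma function $\Gamma(2s+2,\tau(t))$. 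Since $2s+2\ge 2$ and $\tau(t)\ge b_s \ge 1$, the bound \eqref{eq:gamma_bound} gives
\begin{align}
\Gamma(2s+2,\tau(t)) \le C_1\, \tau(t)^{2s+2} e^{-\tau(t)}.
\end{align}

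Finally, I use $\tau(t) = at/\log^2(t) \le at$ to bound $\tau(t)^{2s+2} \le (at)^{2s+2}$, giving
\begin{align}
F(s,t) \le C_2\, t^{2s+2}\, u_a(t) \le C\, t^{2s+3}\, u_a(t),
\end{align}
since $t\ge 1$. The only real technical point — the main obstacle — is to verify the change-of-variable inequalities carefully and to check that the threshold on $\xi$ (so that $\log^2(\xi/a)\le C\xi$) is comfortably implied by $\tau(t)\ge b_s$; everything else is routine bookkeeping of polynomial prefactors which are absorbed into the constant $C$.
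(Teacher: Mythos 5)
Your proposal is correct and follows essentially the same route as the paper's proof: change variables to $\tau(\eta)=a\eta/\log^2(\eta)$, use $\log^4(\eta)\le\eta$ to bound the Jacobian and $\eta$ by powers of $\tau$, reduce the tail to an upper incomplete Gamma function, and apply the bound \eqref{eq:gamma_bound} before converting $\tau(t)$ back to $t$. The paper absorbs the $\log^2(\eta)$ factor directly into the cruder bound $\dd\eta/\dd\tau\le\eta/a$ and lands on $\Gamma(2s+3,\tau(t))$ instead of your $\Gamma(2s+2,\tau(t))$, but the difference is cosmetic and both arguments share the same implicit largeness assumptions on $t$.
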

\begin{proof}
We have
\begin{align}
    \frac{\dd \eta}{\dd \tau(\eta)}
    =
    \left( a\frac{1}{\log^2(\eta)}-2a\frac{1}{\log^3(\eta)} \right)^{-1}
    =
    \frac{1}{a}\frac{\log^2(\eta)}{1-\frac{2}{\log(\eta)}}\le \frac{\eta}{a}
    \,,
\end{align}
and from assumption, $\log^4(\eta)\le \eta$, so
\begin{align}
    \eta\le \frac{\eta^2}{\log^4(\eta)} = \frac{\tau^2}{a^2}\,.
\end{align}
Then we change the variable from $\eta$ to $\tau$:
\begin{align}
    F(s,t)
    \le
    \frac{1}{a^{2s+3}}
        \int_{\tau(t)}^\infty 
        \dd\tau \tau^{2(s+1)} \e^{-\tau}
        =
\frac{1}{a^{2s+3}}
    \Gamma(2s+3,\tau(t))\,.
    \end{align}
    Now we use the bound \eqref{eq:gamma_bound} to get
    \begin{align}
    F(s,t)
    \le
    C'
    \tau(t)^{2s+3}u_a(t)
    \le 
a^{2s+3}    C' t^{2s+3}u_a(t)
    \,,
    \quad
    C'=
    \frac{(1+b_{2s+3})^{2s+3}-1}{(2s+3)b_{2s+3}}
    \,.
    \end{align}
\end{proof}

\section{Lieb-Robinson bounds for power law interactions}
\label{sec:Lieb Robinson bounds for power law interactions}

The following result covers the case of two body interactions.
\begin{theorem}[Thm 1, S1 \cite{Tran_2021}]\label{thm:S1Tran2021}
    Assume that $H=\sum_{ij}H_{ij}$ with $\|H_{ij}\|\le 1/d(i,j)^\alpha$.
    Then let $O$ be a norm $1$ operator supported at the origin, and define $\mathbb{P}_rO$ the projection of $O$ onto sites that are at least of distance $r$ from the origin.
    For any $\alpha\in (2D,2D+1)$, $\epsilon\in (0, \epsilon_*)$, $\epsilon_* = \frac{(\alpha-2D)^2}{(\alpha-2D)^2+\alpha-D}$, there exist constants $c,C_1,C_2\ge 0$ s.t.
    \begin{align}
    \|\mathbb{P}_rO(t)\|
    \le
    f(t,r)\,,\quad
f(t,r)=
    C_1 \left( 
    \frac{t}{r^{\alpha-2D-\epsilon}}
    \right)^{\frac{\alpha-D}{\alpha-2D}-\frac{\epsilon}{2}}
    +
    C_2
    \frac{t}{r^{\alpha-D}}
    \,,
    \end{align}
    for all $0\le t\le c r^{\alpha-2D-\epsilon}$.
\end{theorem}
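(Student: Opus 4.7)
The plan is to follow the strategy of Tran et al., which couples a Hamiltonian decomposition with an inductive (bootstrap) improvement of the exponent. The starting point is the Hadamard--Duhamel identity: if we split $H = H^{<} + H^{>}$ into a short-range piece and a long-range tail with cutoff distance $\ell$, and write $O(t) = e^{iHt}O e^{-iHt}$, $O^{<}(t) = e^{iH^{<}t}O e^{-iH^{<}t}$, then
\begin{equation}
O(t) - O^{<}(t) = i \int_0^t \dd s\, e^{iH s}\,[H^{>}, O^{<}(t-s)]\, e^{-iHs}\,.
\end{equation}
After projecting with $\mathbb{P}_r$, the first term is controlled by a Lieb--Robinson bound for $H^{<}$ (effectively a short-range system with interactions of range $\ell$), and the second by summing $\|[H_{ij}, O^{<}(t-s)]\|$ over $i,j$ with $d(i,j)\ge \ell$ against the decay $1/d(i,j)^\alpha$. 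The cutoff $\ell$ will be chosen as a power of $r$ so that the two contributions balance.

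First I would establish a ``base case'' bound, e.g.\ the Hastings--Koma type bound giving $\|\mathbb{P}_r O(t)\| \lesssim t/r^{\alpha-D}$: this already produces the second summand of $f(t,r)$ and survives to the final estimate. Then, using this as input to bound the commutators inside the Duhamel integral, I would estimate
\begin{equation}
\left\|\mathbb{P}_r\int_0^t ds\,[H^{>}, O^{<}(t-s)]\right\|
\,\lesssim\, \sum_{d(i,j)\ge\ell} \frac{1}{d(i,j)^\alpha}\int_0^t \|[H_{ij}, O^{<}(t-s)]\|\,ds\,,
\end{equation}
and then decompose the sum over pairs $(i,j)$ according to which of $i,j$ lies inside the ball of radius $r$. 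This reduces the problem to (i) a short-range Lieb--Robinson estimate with velocity $v(\ell)$ depending on $\ell$, and (ii) an integral of the previous bound for $O^{<}$ along the spatial ``surface'' at distance $\ge r$. Balancing $\ell$ against $r$ by choosing $\ell \sim r^{1-\theta}$ for a parameter $\theta\in(0,1)$ yields an improved bound of the form $\|\mathbb{P}_rO(t)\| \lesssim (t/r^{\alpha'})^{\beta_1}$ for some $\alpha'=\alpha-2D-\epsilon$ and an exponent $\beta_1$ strictly larger than $1$.

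The heart of the argument is then the bootstrap: feed the improved bound back into the Duhamel estimate as the control on $O^{<}(t-s)$, rerun the split/sum/balance step, and obtain an even better exponent $\beta_2$. Iterating and tracking how the exponents and the constants evolve under the recursion produces a fixed point $\beta_* = (\alpha-D)/(\alpha-2D)$, from which one reads off the bound with exponent $\beta=\beta_*-\epsilon/2$. The range $0\le t\le c r^{\alpha-2D-\epsilon}$ is the regime in which $\ell < r$ (and so all the splits above are admissible), and the definition of $\epsilon_*$ comes from requiring the fixed-point iteration to converge with a positive gap $\epsilon/2$.

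The main obstacle is the bootstrap step: one must verify that the exponent map $\beta\mapsto\beta'$ induced by one round of Duhamel/cutoff/balance is a contraction with fixed point $\beta_*$, and simultaneously that the multiplicative constants $C_1,C_2$ do not blow up along the iteration. This requires a careful combinatorial count of pairs $(i,j)$ at distance $\in[\ell,r]$ versus $\ge r$, an absorption of polylogarithmic prefactors into the small loss $\epsilon/2$ in the exponent, and the subtle choice of the cutoff schedule $\ell_k$ across iterations. Once this bookkeeping is done, Theorem~\ref{thm:S1Tran2021} follows, and Corollary~\ref{corollary:lr_alpha_ge_2d} (used earlier in the paper) is obtained by upgrading from two-body to $k$-body interactions via the conjecture stated there, which is a routine adaptation of the summation bounds once $\widetilde M_\ell$ (Lemma~\ref{lemma:Mtilde}) is in hand.
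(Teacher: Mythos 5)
The paper does not prove Theorem~\ref{thm:S1Tran2021} at all: it is imported verbatim from \cite{Tran_2021} as an external result, and the only proof the paper supplies in this neighbourhood is that of Corollary~\ref{corollary:lr_alpha_ge_2d}, which merely upgrades the single-site projection bound to a commutator bound for operators on sets $I,J$ via the projectors $\mathbb{P}^{I}_r$ and the triangle inequality. So there is no in-paper proof to compare yours against; what you have written is a sketch of the argument in the cited reference.

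As a sketch of that external argument, your outline of the strategy (Duhamel identity for the short-range/long-range split $H = H^{<}+H^{>}$, a Hastings--Koma-type base case producing the $t/r^{\alpha-D}$ summand, and a bootstrap that iterates the exponent toward the fixed point $\beta_* = (\alpha-D)/(\alpha-2D)$) is broadly faithful to how such bounds are obtained. However, it is not a proof. You explicitly defer the entire content of the theorem to the bootstrap step --- verifying that the exponent map is a contraction, that the constants $C_1,C_2$ stay bounded along the iteration, that the polylogarithmic losses can be absorbed into the $\epsilon/2$ deficit, and that the admissible time window $t \le c r^{\alpha-2D-\epsilon}$ and the threshold $\epsilon_*$ emerge from the convergence condition. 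Naming these as ``the main obstacle'' and asserting that ``once this bookkeeping is done'' the theorem follows is precisely the gap: none of the quantitative claims in the statement (the specific exponents $\alpha-2D-\epsilon$ and $\frac{\alpha-D}{\alpha-2D}-\frac{\epsilon}{2}$, the formula for $\epsilon_*$) are actually derived.

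One further point: your closing remark that the upgrade from two-body to $k$-body interactions is ``a routine adaptation'' contradicts the paper's own caveat. The paper states explicitly that the extension of Theorem~\ref{thm:S1Tran2021} to Hamiltonians with $\sum_{I\ni i,j}\|h_I\|\le g\, d(i,j)^{-\alpha}$ rests on a \emph{conjecture} made in \cite{Tran_2021}, proved there only for $k=2$. The combinatorial bounds of Lemma~\ref{lemma:Mtilde} are used elsewhere (to verify the decay hypothesis in Lemma~\ref{lemma:beta_ell}), not to promote the Lieb--Robinson bound itself to $k$-body interactions, so they do not close that gap.
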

In \cite{Tran_2021} this result was also conjectured to be valid for the more general class of Hamiltonians 
\begin{align}
    H=\sum_{I\in {\cal P}(\Lambda)} h_I\,,\quad 
    \sum_{I\ni i,j}\|h_I\|\le g d(i,j)^{-\alpha}\,,
\end{align}
for a constant $g$, and this is the form that we are going to use.
We derive the following corollary.
\begin{corollary}\label{corollary:lr_alpha_ge_2d}
    For any $\alpha\in (2D,2D+1)$, $\epsilon\in (0, \epsilon_*)$, $\epsilon_* = \frac{(\alpha-2D)^2}{(\alpha-2D)^2+\alpha-D}$, there exist constants $c,C_1,C_2\ge 0$ s.t. for all $0\le t\le c r^{\alpha-2D-\epsilon}$ and $|I|, |J|=\mathcal{O}(1)$
    \begin{align}
    \|[O_I(t),O_J]\|
\le
\|O_I\|\,\|O_J\|\,
f(t, d(I,J)) 
    \,.
\end{align}
\end{corollary}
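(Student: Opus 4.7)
The plan is to deduce the commutator bound from the projection estimate in Theorem \ref{thm:S1Tran2021} by a standard ``light-cone'' truncation argument. Since the Theorem assumes an operator supported at a single site (the origin), the first order of business is to translate the claim for $O_I$ with $|I|=\mathcal{O}(1)$. I would write $O_I = \sum_{P} \alpha_P P$ in the Pauli basis supported on $I$, with at most $4^{|I|}=\mathcal{O}(1)$ terms and $\sum_P |\alpha_P| \le 4^{|I|} \|O_I\|$. Each Pauli string $P$ can itself be factored into single-site operators; the conjecture in \cite{Tran_2021} that was invoked before the statement of the corollary extends Theorem \ref{thm:S1Tran2021} to $k$-body Hamiltonians, and the same reasoning (running the Schuch--Harris--Verstraete block-decomposition argument ``anchored'' at the sites in $I$ rather than at a single origin) yields a bound of the form $\|\mathbb{P}_r O_I(t)\| \le \|O_I\|\, c_{|I|}\, f(t,r)$ for $r = d(I,J)$ and $t \le c r^{\alpha'}$, where $c_{|I|}$ absorbs the $\mathcal{O}(1)$ combinatorial factors coming from summing over the $|I|$ anchor sites.

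The second step is the standard truncation. Let $B$ be the union of balls of radius $r-1$ around the points of $I$, so that $B \cap J = \emptyset$ by construction (this uses $r = d(I,J)$). Let $[O_I(t)]_B$ denote the best approximation of $O_I(t)$ supported on $B$, obtained by partial trace against the maximally mixed state on $B^c$. Because $[O_I(t)]_B$ commutes with anything supported outside $B$, in particular with $O_J$, we have
\begin{align}
[O_I(t), O_J] = [O_I(t) - [O_I(t)]_B,\, O_J],
\end{align}
and hence $\|[O_I(t),O_J]\| \le 2\|O_J\|\,\|O_I(t) - [O_I(t)]_B\|$. The quantity $\|O_I(t) - [O_I(t)]_B\|$ is precisely the content of the projection bound from Step~1 (up to inflating constants by a factor depending on $|I|$), so we obtain
\begin{align}
\|[O_I(t), O_J]\| \le 2 c_{|I|} \|O_I\|\,\|O_J\|\, f(t, d(I,J)),
\end{align}
which is the stated inequality once the absolute constants $C_1, C_2$ in $f$ are redefined to absorb $2 c_{|I|}$.

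The main obstacle is the first step: justifying that the projection bound of Theorem \ref{thm:S1Tran2021}, which is stated for a single-site operator under a two-body Hamiltonian, continues to hold (with a constant depending only on $|I|$) when the operator is supported on a bounded set and the Hamiltonian has up-to-$k$-body terms of the form \eqref{eq:H_decay_power_law}. This requires the conjectural $k$-body extension of \cite{Tran_2021} that we flagged before stating the corollary; the reduction of the $O_I$ case to the origin case is then routine (decompose in a Pauli basis, union-bound over the finitely many anchor sites, and note that $d(i,J) \ge d(I,J)$ for every $i \in I$ so the single-site bound at each anchor is at least as good as the one we need). The remaining ingredients---the truncation identity and the two-operator norm bound on the commutator---are elementary.
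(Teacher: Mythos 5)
Your proposal is correct and follows essentially the same route as the paper: reduce the commutator to the norm of the part of $O_I(t)$ that has leaked beyond distance $r=d(I,J)$ from $I$, then control that leakage by extending the single-site projection bound of Theorem \ref{thm:S1Tran2021} to operators with $\mathcal{O}(1)$ support, accepting the conjectural $k$-body form of the Hamiltonian assumption from \cite{Tran_2021}. The only differences are cosmetic: the paper implements your truncation step with the Pauli-string projector $\mathbb{P}^{I}_r\le\sum_{i\in I}\mathbb{P}^{i}_r$ (equivalent to your partial-trace approximation $[O_I(t)]_B$), and it handles the single-site-to-set extension that you sketch via an anchored rerun of the block-decomposition argument by instead citing Lemma 4 of \cite{Tran_2020}, yielding the explicit constant $2K|I|^2$.
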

\begin{proof}
We assume unit norm operators since for general operators we can apply the result to the normalized operator.
For a set $I$, let $P_I$ be the projector onto the non-trivial part of a Pauli string on $I$, i.e.~it is zero unless there is a non-trivial operator in $I$:
\begin{align}
    P_I
    \sigma_1^{a_1}\cdots
    \sigma_n^{a_n}
    = 
    \id(
    \exists j \in I\,|\,
    a_j\neq 0)
    \sigma_1^{a_1}\cdots
    \sigma_n^{a_n}
    =
    \id\Big(
    \bigvee_{j\in I}(a_j\neq 0)
    \Big)
    \sigma_1^{a_1}\cdots
    \sigma_n^{a_n}
    \,.
\end{align}
Then define $\mathbb{P}^{I}_r$ 
to be the projector onto the set $B_{r}^I=\{ j\,|\, d(I,j)\ge r\}$. 
$\mathbb{P}_r$ of the Theorem \ref{thm:S1Tran2021} coincides with $\mathbb{P}^{0}_r$. 
Since $B_{r}^I = \bigcap_{i\in I}B_r^i$ and $I(\vee_{i\in {I\cap J}}(a_i\neq 0))\le I(\vee_{i\in {I}}(a_i\neq 0))+I(\vee_{i\in {J}}(a_i\neq 0))$  because there are more configurations of the $a_i$'s that make the r.h.s.~non-zero, we have
\begin{align}
    \mathbb{P}^{I}_r\le
    \sum_{i\in I}\mathbb{P}^{i}_r\,.
\end{align}
Then note that 
$(\id - \mathbb{P}^{I}_r)O$ for any operator $O$ is supported on sites of distance smaller than $r$ to $I$.
Then if $r=d(I,J)$, we have that $[(\id - \mathbb{P}^{I}_r)O,O_J]=0$ for any $O$ since the support of the two operators in the commutator has no overlap.
Then, with $r=d(I,J)$, we have
\begin{align}
    \|[O_I(t),O_J]\|
    &=
    \|[\mathbb{P}^{I}_rO_I(t)+(\id-\mathbb{P}^{I}_r)O_I(t),O_J]\|
    =
    \|[\mathbb{P}^{I}_rO_I(t),O_J]\|
    \le 
    \sum_{i\in I}
    \|[\mathbb{P}^{i}_rO_I(t),O_J]\|
    \\    
    &\le 2 
    \sum_{i\in I}    
    \|\mathbb{P}^{i}_rO_I(t)\|
    \,.    
\end{align}
Then we use formula \cite[Lemma 4]{Tran_2020}: if $\|P_J O_i(t)\|\le f(t, d(i,J))$ for all unit norm operators $O_i$ with $i\in I$, then there exists a positive constant $K$ such that
\begin{align}
\|P_J O_I(t)\|\le 
K \sum_{i\in I}f(t, d(i,J))\,.
\end{align}
From Theorem \ref{thm:S1Tran2021} 
we know
\begin{align}
    \|\mathbb{P}^{i}_rO_i(t)\|
    \le f(t,r)\,,
\end{align}
so we have that
\begin{align}
\|\mathbb{P}^{i}_rO_I(t)\|
\le
K
|I| f(t,r)\,,
\end{align}
and putting things together:
\begin{align}
    \|[O_I(t),O_J]\|
    \le 2 K 
    \, |I|^2     f(t,r)
    \,.    
\end{align}
\end{proof}

\section{Implementation of Matrix Product Operators}
\label{sec: Implementation MPO}

\subsection{Heisenberg model}
\label{sec: Implementing Heisenberg}

Implementing the nearest neighbour Heisenberg model \begin{equation}H = \sum\limits_{\langle i j \rangle} J_{ij} (X_i X_j + Y_i Y_j + Z_i Z_j)\end{equation} with open boundary conditions is straightforward. Following the discussion in Appendix \ref{sec:PeriodicBC}, the Hamiltonian on a periodic chain may be represented as an MPO in the following way: 
\begin{equation}H = W_L \cdot \prod_{i=1}^{n-2} W[i] \cdot  W_R,\end{equation}
where the grid matrices are given by
$$W[i] = \left(\begin{matrix}
I & 0 & 0 & 0 & J_{i,i+1} X_i & J_{i,i+1} Y_i & J_{i,i+1} Z_i & 0\\
0 & I & 0 & 0 &0 &0 & 0& 0\\
0 & 0 & I & 0 &0 &0 & 0&0\\
0 & 0& 0 & I &0 &0 & 0& 0\\
0 & 0& 0 & 0 &0 & 0& 0& X_i\\
0 & 0& 0 & 0 &0 & 0& 0& Y_i\\
0 & 0& 0 & 0 &0 & 0& 0& Z_i\\
0 & 0& 0 & 0 &0 & 0& 0& I\\
\end{matrix} \right)$$
together with
$$W_L = (\begin{matrix} I & J_{0,n-1} X_0& J_{0,n-1} Y_0& J_{0,n-1} Z_0 & J_{01} X_0 & J_{01} Y_0 & J_{01} Z_0 & 0 \end{matrix}),$$
$$W_R = (\begin{matrix} 0 & X_{n-1}& Y_{n-1}& Z_{n-1} & X_{n-1} & Y_{n-1} & Z_{n-1} & I \end{matrix})^T.$$

\subsection{Ising model}
\label{sec: Implementing Ising}

The  Hamiltonian 
\begin{equation}H = \sum\limits_{i<j} \frac{1+J_i J_j}{d(i,j)^\alpha} Z_i  Z_j + \sum\limits_i h_i  X_i,\end{equation} may be represented as a matrix product operator using matrices of the form 
$$W[i] = \left(\begin{matrix}
I & \lambda_1 J_i Z_i & \lambda_2 J_i Z_i &\dots & \lambda_K J_i Z_i & \lambda_1 Z_i & \dots & \lambda_K Z_i & h_i X_i\\
0 & \lambda_1 I & 0 & \dots &0 &0 &\dots & 0 & a_1 J_i Z_i\\
0 & 0 & \lambda_2 I &  \dots & 0 & 0 & \dots & 0 & a_2 J_i Z_i\\
\vdots & \vdots & \vdots & \ddots & \vdots & \vdots & & \vdots & \vdots\\
0 & 0 & 0& \dots & \lambda_K I & 0&\dots & 0 & a_k J_i Z_i\\
0 & 0 & 0 & \dots & 0 & \lambda_1 I &\dots & 0 & a_1 Z_i\\
\vdots & \vdots & \vdots &  & \vdots &\vdots  &\ddots & \vdots & \vdots\\
0 & 0 & 0 & \dots & 0 &0 &\dots & \lambda_K I & a_K Z_i\\
0 & 0 & 0 & \dots & 0& 0& \dots & 0 & I\\
\end{matrix}\right)$$
together with
$$W_L  = \left(\begin{matrix}
I & \lambda_1 J_0 Z_0 & \lambda_2 J_0 Z_0 & \dots & \lambda_K J_0 Z_0 & \lambda_1 Z_0 & \dots & \lambda_K Z_0 & h_0 X_0
\end{matrix} \right),$$
$$W_R  = \left(\begin{matrix}
h_{n-1} X_{n-1} & a_1 J_{n-1} Z_{n-1} & a_2 J_{n-1} Z_{n-1} & \dots & a_K J_{n-1} Z_{n-1} & a_1 Z_{n-1} & \dots & a_K Z_{n-1} & I
\end{matrix} \right)^T$$
like 
\begin{equation}H = W_L \cdot \prod_{i=1}^{n-2} W[i] \cdot  W_R,\end{equation}
where we used exponential-sum fitting to approximate the decay of interactions as
$$\frac{1}{d(i,j)^\alpha} \approx \sum_{l=1}^K a_l \lambda_l ^{|i-j|},$$
using an algorithm described in the Appendix of \cite{Pirvu_2010}. For periodic boundary conditions, see the approach described in Appendix \ref{sec:PeriodicBC}.

\subsection{Rydberg model}

The Hamiltonian \begin{equation}
   H =  \sum\limits_{i<j} \frac{V}{|i+\delta_i - j - \delta_j|^6} N_i N_j + \sum\limits_i \left( \frac{\Omega}{2} X_i + \Delta N_i \right),
\end{equation} 
 may be represented as a matrix product operator using matrices of the form 
$$W[i] = \left(\begin{matrix}
I & V \lambda_1^{1-\delta_i} N_i & V\lambda_2^{1-\delta_i} N_i &\dots & V \lambda_K^{1-\delta_i} N_i & \frac{\Omega}{2} X_i + \Delta N_i\\
0 & \lambda_1 I & 0 & \dots &0  & a_1 \lambda_1^{\delta_i} N_i\\
0 & 0 & \lambda_2 I &  \dots & 0 & a_2 \lambda_2^{\delta_i} N_i\\
\vdots & \vdots & \vdots & \ddots & \vdots & \vdots\\
0 & 0 & 0& \dots & \lambda_K I  & a_k \lambda_K^{\delta_i} N_i\\
0 & 0 & 0 & \dots & 0 & I\\
\end{matrix}\right)$$
together with
$$W_L = \left(\begin{matrix}
    I & V \lambda_1^{1-\delta_0} N_0 & V\lambda_2^{1-\delta_0} N_0 &\dots & V \lambda_K^{1-\delta_0} N_0 & \frac{\Omega}{2} X_0 + \Delta N_0
\end{matrix} \right),$$
$$W_R  = \left(\begin{matrix}
\frac{\Omega}{2} X_{n-1} + \Delta N_{n-1} & a_1 \lambda_1^{\delta_{n-1}} N_{n-1} & a_2 \lambda_2^{\delta_{n-1}} N_{n-1} & \dots & a_k \lambda_K^{\delta_{n-1}} N_{n-1} & I
\end{matrix} \right)^T$$
like 
\begin{equation}H = W_L \cdot \prod_{i=1}^{n-2} W[i] \cdot  W_R,\end{equation}
where we again used exponential-sum fitting to approximate the decay of interactions as
$$\frac{1}{|i-j|^6} \approx \sum_{l=1}^K a_l \lambda_l ^{|i-j|}$$ like previously. This exponential fit further allowed us to separate the site displacements from the overall decay of interactions and hence implement this nontrivial modified geometry using tensor networks.

\subsection{Implementation of periodic boundary conditions}
\label{sec:PeriodicBC}

All particular details regarding specific implementations for both open and periodic boundary conditions for specific models are discussed in their corresponding sections. In this appendix, we look at some general ideas behind implementing periodic boundary conditions. For practical reasons, the simulations of periodic boundary conditions in larger systems were implemented using open boundary tensor networks. 

When considering only a nearest-neighbour coupling on a ring, it is a common practice to use a so-called folded ordering of the MPO, which transforms the model from having only nearest-neighbour interactions with one additional interaction over the length of the whole chain, into a model with nearest-neighbour and next-nearest-neighbour interactions. However, when carrying out the simulations for the periodic Heisenberg model, we found it to be faster to implement this long-range interaction using a non-uniform exponentially decaying coupling, where the base would be equal to 1 (so that there wouldn't actually be any decay), and the only non-zero interaction would be between the first and last site. The implementation for this case is shown in Appendix \ref{sec: Implementing Heisenberg}, while the general non-uniform exponential coupling is discussed in more detail in Appendix \ref{sec: Implementing Ising}.

When considering general long-range interactions, the periodic boundary conditions were implemented by appropriately changing the metric in the interaction terms. The metric specifying one-dimensional chain with open boundary conditions is simply $d(i,j)=|i-j|$, while the corresponding metric for periodic boundary conditions is $d(i,j) = \min\{|i-j|,n-|i-j|\}$. The newly obtained symmetric decay rate of interactions was then  fitted using correspondingly symmetric sum of exponentials $$\frac{1}{d(i,j)^\alpha} \approx \sum_{l=1}^K c_l \cdot \left(b_l ^ {|i-j|} + b_l ^{n-|i-j|}\right).$$

As this task proved to be numerically problematic, we've devised the following approach: denote $\sum_{l=1}^k c_l \cdot b_l ^ x$ by $g(x)$, so that the desired polynomial decay rate is given by $\frac{1}{d(x)^\alpha} \approx g(x)+g(n-x)$, where $x=|i-j|$. If we can find a function $g(x)$ obeying this relation, which is decaying and smooth enough, we could use a robust non-iterative algorithm to fit it with a sum of exponentials \cite{Pirvu_2010}. A suitable function for this task was found to be
$$g(x) = \begin{cases}\frac{1}{x^\alpha} - \frac{1/2}{(n-x)^\alpha} & \text{for }x \leq \frac{n}{2},\\ \frac{1/2}{x^\alpha} & \text{for }x \geq \frac{n}{2}. \end{cases}$$
Hence, after obtaining the fitting parameters $\{c_l, b_l\}_{l=1}^K$, we would create the MPO using $\{a_l, \lambda_l\}_{l=1}^{2K} = \{c_l, b_l\}_{l=1}^K \cup \{c_l \cdot b_l ^n, \frac{1}{b_l}\}_{l=1}^K$ as the parameters for the exponential fit to correctly account for the symmetry.

\pagebreak
\nocite{*}

\bibliographystyle{quantum}
\bibliography{bibliography}

\end{document}